\documentclass[pra,aps,twocolumn,groupedaddress,nofootinbib,superscriptaddress,preprintnumbers]{revtex4-2}

\usepackage{graphicx}
\usepackage[nointegrals]{wasysym} %
\usepackage[export]{adjustbox}
\usepackage{amsmath,amsfonts,amssymb,latexsym}
\usepackage{hhline}
\usepackage{bm}
\usepackage{verbatim}
\usepackage{enumitem}
\hyphenation{PYTHIA}
\usepackage{mathrsfs}
\usepackage{slashed}
\usepackage{empheq}

\newcommand{\Tr}{\text{Tr}}

\newcommand{\End}{\text{End}}

\newcommand{\lan}{\langle}
\newcommand{\ran}{\rangle}

 \newcommand{\unit}{\mathbf{1}}

\newcommand{\da}{{\dagger}}

\newcommand{\ob}[1]{\mkern 1.5mu\overline{\mkern-1.5mu#1\mkern-1.5mu}\mkern 1.5mu}

\newcommand{\ra}{\rightarrow}

\newcommand{\wt}{\widetilde}

\newcommand{\uvx}{{\mathbf{\hat x}}}
\newcommand{\uvy}{{\mathbf{\hat y}}}

\newcommand{\bfzero}{{\mathbf{0}}}

\renewcommand{\(}{\left(}
\renewcommand{\)}{\right)}
\renewcommand{\[}{\left[}
\renewcommand{\]}{\right]}
\newcommand{\mt}{\mapsto}
\newcommand{\xra}{\xrightarrow}
\newcommand{\tp}{\otimes}

\newcommand{\twp}{{2\pi}}

\newcommand\bpm            {\begin{pmatrix}}
	\newcommand\epm           {\end{pmatrix}}

\newcommand{\ms}{\medskip}
\newcommand{\bs}{\bigskip}

\def\app#1#2{%
	\mathrel{%
		\setbox0=\hbox{$#1\sim$}%
		\setbox2=\hbox{%
			\rlap{\hbox{$#1\propto$}}%
			\lower1.1\ht0\box0%
		}%
		\raise0.25\ht2\box2%
	}%
}

\newcommand{\tw}{\textwidth}

\newcommand{\vs}{\varsigma}

\newcommand{\inv}{^{-1}}

\newcommand{\ope}\odot

\usepackage{manfnt}

\newcommand{\bi}{\begin{itemize}}
	\newcommand{\ei}{\end{itemize}}

\newcommand{\igpfc}[1]{\vcenter{\hbox{\includegraphics[width=.5\textwidth]{#1}}}}
\newcommand{\igptc}[1]{\vcenter{\hbox{\includegraphics[width=.3\textwidth]{#1}}}}
\newcommand{\igptfc}[1]{\vcenter{\hbox{\includegraphics[width=.35\textwidth]{#1}}}}
\newcommand{\igptwoc}[1]{\vcenter{\hbox{\includegraphics[width=.2\textwidth]{#1}}}}
\newcommand{\igpfoc}[1]{\vcenter{\hbox{\includegraphics[width=.45\textwidth]{#1}}}}

\usepackage{amsthm}

\newcommand\bd            {\begin{definition}}
	\newcommand\ed            {\end{definition}}
\newcommand\bt            {\begin{theorem}}
	\newcommand\et            {\end{theorem}}
\newcommand\be            {\begin{equation}}
	\newcommand\ee            {\end{equation}}
\newcommand\ba            {\begin{aligned}}
	\newcommand\ea            {\end{aligned}}
\newcommand\bea{\begin{equation}\begin{aligned}}
		\newcommand\eea{\end{aligned}\end{equation}}

\usepackage[dvipsnames]{xcolor}

\usepackage{hyperref} 
\hypersetup{final}
\hypersetup{colorlinks, citecolor=red, linkcolor=red, urlcolor=red} 

\newcommand{\sss}{\subsubsection}
\renewcommand{\ss}{\subsection}

\renewcommand{\a}{\alpha}

\renewcommand{\d}{\delta}

\newcommand{\g}{\gamma}
\newcommand{\G}{\Gamma}
\newcommand{\s}{\sigma}

\newcommand{\ep}{\varepsilon} %
\renewcommand{\l}{\lambda}
\renewcommand{\L}{\Lambda}
\renewcommand{\t}{\theta}

\renewcommand{\o}{\omega}

\renewcommand{\r}{\rho}

\renewcommand{\c}{\chi}
\newcommand{\z}{\zeta}

\newcommand{\bfg}{\mathbf{g}}
\newcommand{\bfh}{\mathbf{h}}

\newcommand{\zt}{\mathbb{Z}_2}
\newcommand{\zn}{\mathbb{Z}_N}

\newcommand{\cc}{\mathbb{C}}

\newcommand{\EE}{\mathbb{E}}

\newcommand{\nn}{\mathbb{N}}

\newcommand{\qq}{\qquad}

\newcommand{\zz}{\mathbb{Z}}

\newcommand{\mcc}{\mathcal{C}}

\newcommand{\mcf}{\mathcal{F}}
\newcommand{\mco}{\mathcal{O}}
\newcommand{\mcu}{\mathcal{U}}

\newcommand{\mcg}{\mathcal{G}}
\newcommand{\mcs}{\mathcal{S}}

\newcommand{\mch}{\mathcal{H}}

\newcommand{\mcp}{\mathcal{P}}
\newcommand{\mcx}{\mathcal{X}}

\newcommand{\mcm}{\mathcal{M}}

\newcommand{\mcw}{\mathcal{W}}

\newcommand{\mcr}{\mathcal{R}}
\newcommand{\mcq}{\mathcal{Q}}

\newcommand{\sfP}{\mathsf{P}}

\usepackage[mathscr]{eucal} %

\newcommand{\scd}{\mathscr{D}}

\newcommand{\kb}[2]{|{#1}\rangle\langle{#2}|}
\renewcommand{\k}[1]{|#1\rangle}
\newcommand{\ket}[1]{|#1\rangle}
\newcommand{\proj}[1]{|#1\rangle\langle#1|}
\newcommand{\bra}[1]{{\langle #1}|}

\usepackage{dcolumn}

\usepackage{pifont}
\usepackage{ulem}

\usepackage{enumitem}

\usepackage{amsthm}
\usepackage{bm}
\renewcommand{\bot}{\bigotimes}

\DeclarePairedDelimiterX{\infdivx}[2]{(}{)}{%
	#1\;\delimsize\|\;#2%
}

\newcommand{\spt}{\mathsf{SPT}}
\newcommand{\ssb}{\mathsf{SSB}}
\newcommand{\para}{\mathsf{PARA}}

\newtheorem{theorem}{Theorem} %
\newtheorem*{theorem*}{Theorem}
\newtheorem{prop}{Proposition}
\newtheorem{corollary}{Corollary}
\newtheorem{claim}{Claim}
\newtheorem{lemma}[theorem]{Lemma}
\newtheorem{definition}{Definition}

\renewcommand\qq{\qquad}

\renewcommand{\bot}{\bigotimes} 

\newcommand{\maj}{\mathsf{maj}}

\usepackage{accents}
\newcommand{\ub}[1]{\underline{#1}}
\newcommand{\ut}[1]{\underaccent{\sim}{#1}} 
\newcommand{\rem}{\mathsf{rem}}

\begin{document}

	\title{Exact Quantum Algorithms for Quantum Phase Recognition:\\ Renormalization Group and Error Correction }
	
	\author{Ethan Lake}
	\affiliation{Department of Physics, Massachusetts Institute of Technology, Massachusetts Institute of Technology, Cambridge, MA 02139, USA}
	
	\author{Shankar Balasubramanian}
	\affiliation{Department of Physics, Massachusetts Institute of Technology, Massachusetts Institute of Technology, Cambridge, MA 02139, USA}
	\affiliation{Center for Theoretical Physics, Massachusetts Institute of Technology, Massachusetts Institute of Technology, Cambridge, MA 02139, USA}
	
	\author{Soonwon Choi}
	\affiliation{Department of Physics, Massachusetts Institute of Technology, Massachusetts Institute of Technology, Cambridge, MA 02139, USA}
	\affiliation{Center for Theoretical Physics, Massachusetts Institute of Technology, Massachusetts Institute of Technology, Cambridge, MA 02139, USA}

    \preprint{MIT-CTP/5498}
	\begin{abstract}

		We explore the relationship between renormalization group (RG) flow and error correction by constructing quantum algorithms that exactly recognize 1D symmetry-protected topological (SPT) phases protected by finite internal Abelian symmetries.
		For each SPT phase, our algorithm runs a quantum circuit which emulates RG flow: an arbitrary input ground state wavefunction in the phase is mapped to a unique minimally-entangled reference state, thereby allowing for efficient phase identification. This construction is enabled by viewing a generic input state in the phase as a collection of coherent `errors' applied to the reference state, and engineering a quantum circuit to efficiently detect and correct such errors. Importantly, the error correction threshold is proven to coincide exactly with the phase boundary. We discuss the implications of our results in the context of condensed matter physics, machine learning, 
		and near-term quantum algorithms.
	\end{abstract}
	
	\maketitle
	
	\begin{figure}
		\centering
		\includegraphics[width=.48\tw]{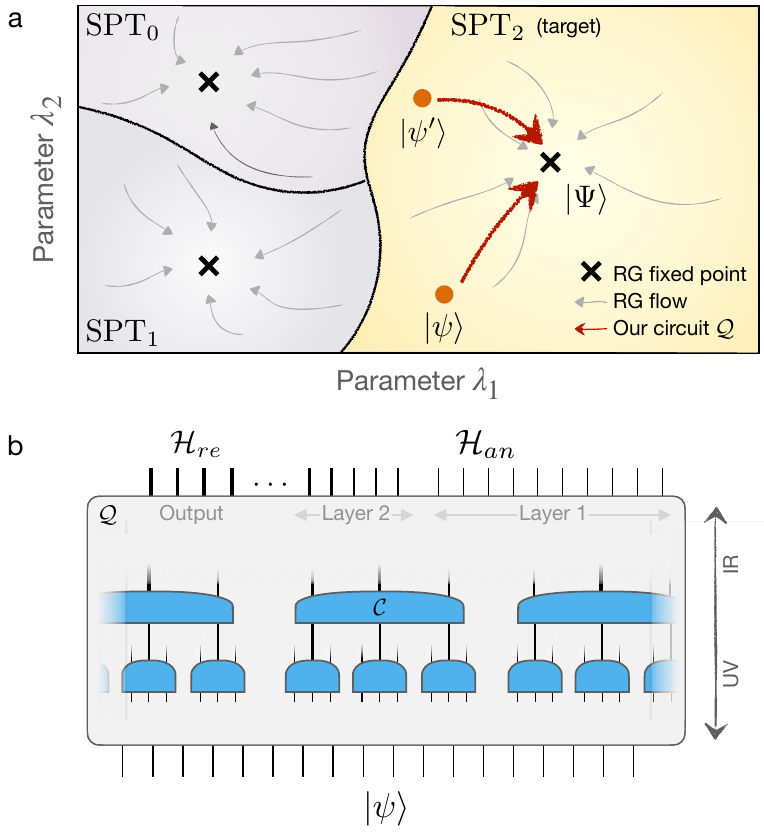} 
		\caption{\label{fig:qpr_and_rg}
			{\bf Phase recognition and RG flow}. {\bf a}.
			A schematic of a phase diagram drawn as a function of two arbitrary parameters $\l_{1,2}$. 
			Our goal is to determine whether or not a state $\k{\psi}$ drawn from the phase diagram belongs to a given target phase, e.g. $\spt_2$ above.
			Each $\k{\psi}$ can be written as a finite-depth symmetric quantum circuit $\mcu$ acting on a universal state $\k{\Psi}$, which serves as the representative wavefunction for the target phase. 
            We design a quantum circuit $\mcq$ which `undoes' the action of $\mcu$ for all $\k{\psi}$ in the target phase, by inducing an RG flow from $\k{\psi}$ to $\k{\Psi} \tp \k{\phi_{an}}$, where $\k{\Psi}$ is an RG fixed-point reference state, and $\k{\phi_{an}}$ encodes the non-universal correlations in $\k{\psi}$. {\bf b}. $\mcq$ is built from a hierarchical array of unitary operators $\mcc$, chained together in a way reminiscent of MERA. Each $\mcc$ acts on three sites: the center output leg of $\mcc$ is a `renormalized' degree of freedom that is passed on to the next layer of the circuit, while the remaining `ancilla' legs store the non-universal information in $\k{\psi}$, and do not participate in subsequent layers. This divides the output of each layer of $\mcq$ into two subspaces: renormalized legs $\mch_{re}$ on which $\mcq$ imprints the result of the phase identification algorithm, and ancilla legs $\mch_{an}$, which contain the data that distinguishes different wavefunctions in the target phase.  } 
	\end{figure}
	
	\section{Introduction} 
	
	The renormalization group (RG) is one of the most important concepts to arise in theoretical physics over the past century \cite{kadanoff1966scaling,wilson1972critical,wilson1975renormalization,gell1954quantum}.  Given a microscopic description of a many-body system, the framework of RG provides a systematic way of extracting the universal features capturing the system's low-energy physics. %
	RG is an indispensable tool in condensed matter physics, where it is used to classify phases of matter. 

	It is interesting to note that there is an important concept in information theory which shares many parallels with RG, namely that of error correction (EC). Writ large, EC is a class of techniques developed to protect information (both classical and quantum) from the effects of noise. The key idea of EC is to hide a small amount of important logical information redundantly in a larger number of physical degrees of freedom, and it shares with RG the ability to separate universal information from unimportant details. Due to this conceptual similarity, there have been several works drawing connections between RG and EC from a variety of perspectives, ranging from the theory of quantum channels to the physics of black holes~\cite{furuya2022real,furuya2021renormalization,gomez2021wilsonian,almheiri2015bulk,pastawski2015holographic,kim2017entanglement,yang2016bidirectional,chen2022exact,cong2019quantum,cong2022enhancing}. 
	However, these connections are  often heuristic, based on empirical findings, or else appear in highly formalized settings. 

	In this work, we establish an explicit connection between RG and EC by presenting a class of quantum circuit algorithms that solve quantum phase recognition (QPR) problems. Given a ground state wavefunction $\ket{\psi}$ and a phase of matter $\sfP$, the QPR problem asks whether or not $\ket{\psi}$ belongs to $\sfP$.
	Our quantum circuits efficiently solve this problem for a large class of gapped 1D symmetry-protected topological (SPT) and symmetry-breaking phases. We do this by implementing a simple form of error correction in a hierarchical multi-scale way, in a manner which emulates RG flow~(Fig.~\ref{fig:qpr_and_rg}a). We show that the convergence of this RG flow coincides {\it exactly} with the phase boundary of $\sfP$, 
	implying that our algorithm faithfully recognizes input wavefunctions even very close to phase boundaries. 
	Our circuit is built from 3-local gates with a depth of at most $O(\log(L))$, and has a substantially smaller sample complexity compared to more conventional approaches that directly measure order parameters, making it a practical solution for near-term quantum hardware.

	Our RG quantum circuits are inspired by the quantum convolutional neural network (QCNN) architecture introduced in \cite{cong2019quantum} and experimentally demonstrated in \cite{herrmann2021realizing}, and have some conceptual similarities with the measurement-based approach of Refs. \cite{bartlett2010quantum,miller2015resource}. For our purposes, a QCNN is a unitary circuit $\mcq$ built from layers of local unitaries chained together in a hierarchical fashion, similar to the multiscale entanglement renormalization ansatz (MERA) \cite{vidal2008class,aguado2008entanglement} and tree tensor networks.
	The hierarchical nature of $\mcq$ originates from an iterative filtering of the Hilbert space which occurs at each layer of the circuit. At a given layer, $\mcq$
	effectively separates the system's Hilbert space into two subspaces, $\mch_{re}$ and $\mch_{an}$. The {\it renormalized subspace} $\mch_{re}$ contains error-corrected degrees of freedom that are further renormalized in subsequent layers, while the {\it ancilla subspace} $\mch_{an}$ stores detailed microscopic information about the detected errors, and the degrees of freedom therein do not participate in further layers (Fig.~\ref{fig:qpr_and_rg}b).
	
	The nontrivial task we address in this paper is to design a single universal circuit $\mcq$ that performs phase recognition on {\it all} input states $\k\psi$.
	We accomplish this by addressing three challenges:
	(i) constructing a local EC protocol that isolates non-universal information present in $\k\psi$ and deposits it in $\mch_{an}$,
	(ii) ensuring any input state in $\sfP$ flows under our circuit to a unique fixed point $\k{\Psi_\sfP}$, while simultaneously ensuring that any state outside of $\sfP$ does {\it not} flow to $\k{\Psi_\sfP}$, and 
	(iii) efficiently performing phase recognition given the converged fixed point wavefunction.
	As we show, these desiderata can be all satisfied by a simple analytic construction that does not require any numerical optimization or training. 
	
	We emphasize that in the scenario under consideration, the input to the QPR problem is a quantum wavefunction $\k\psi$ --- which may be handed to us from a physical experiment --- rather than a tensor network or other efficient classical description thereof. 
	This means that we are not allowed to perform any nonlinear data processing that requires a classical description of an input state, such as probing the state's entanglement structure, or performing tensor-network renormalization~\cite{pollmann2010entanglement,pollmann2012detection,gu2009tensor,levin2007tensor,evenbly2015tensor}.  
	While a classical description can always be obtained using quantum state tomography, such an approach will generically require a large sample complexity, and is conceptually distinct from the RG-inspired treatment we propose.
	
	\section{Recognizing 1D Symmetry-Protected Topological phases}
	
	\ss{Overall strategy}
	
	Before presenting technical results, we briefly review the definition of gapped phases in 1D, establish notation, and summarize the overall strategy of our approach.
	
	Two gapped local Hamiltonians are said to be in the same phase if and only if their ground states $\k{\psi}$ and $\k{\psi'}$ are related by a constant-depth local circuit $\mcu$ satisfying $\k{\psi'} = \mcu\k{\psi}$ up to negligible corrections~\cite{albash2018adiabatic}.
	Each gapped phase $\sfP$ can be thus defined as an equivalence class with a representative element $\k{\Psi_\sfP}$, where all other ground states in $\sfP$ can be written as $\k{\psi} = \mcu_\psi \k{\Psi_\sfP}$ for some constant-depth circuit $\mcu_\psi$ that depends on the microscopic details contained in $\ket{\psi}$. 
	Here we focus on a class of Hamiltonians possessing an internal unitary Abelian symmetry $G$, which is unbroken in their ground states~\footnote{A similar approach is possible for cases in which $G$ is spontaneously broken; see App. \ref{app:symm_breaking} for the details.}.
	In this restricted setting, {\it symmetric} local circuits similarly define distinct gapped phases of matter, called symmetry-protected topological (SPT) phases~\cite{chen2013symmetry,huang2015quantum,schuch2011classifying,senthil2015symmetry}.
	
	The classification of SPT phases in 1D has by now been fully worked out, and can be understood from the entanglement structure of the ground state wavefunction. 
	Consider acting with a symmetry transformation on some subregion $A$. 
	In non-trivial SPT phases, the wavefunction remains unchanged on the interior of $A$, but the quantum correlations between $A$ and its complement become `twisted' near the boundaries of $A$, creating a pair of excitations that cannot be eliminated locally in a symmetry-preserving way. 
	Each of these excitations individually transform under a {\it projective} representation $V^{(\o)}_g$ of $G$. While $V^{(\o)}_g$ is in general not a linear representation of $G$, the combined symmetry action on both excitations, i.e. $V^{(\o)*}_g \tp V^{(\o)}_g$, is always a linear representation.
	Thus, the excitation associated with $V^{(\o)}_g$ can be loosely thought of as carrying a `fractionalized' symmetry charge.
	The distinct choices of $V^{(\o)}_g$ are determined by algebraic objects $\o$ known as second cohomology classes of $G$, and these in turn enumerate the distinct SPT phases, which we denote by $\spt_\o$.
	
	Given a symmetry $G$ and a target phase $\spt_\o$, we construct our phase recognition algorithm in three steps:
	
	{\it Step 1}: we choose a specific reference state $\k{\Psi_\o}$ for $\spt_\o$. This reference state can be constructed directly using the matrices $V^{(\o)}_g$ and satisfies a number of desired conditions, to be elaborated on below.
	
	{\it Step 2}: we identify a set of \textit{indicator observables} $\mcs^{(\o)}_g$ that exactly recognize the desired reference state $\ket{\Psi_\o}$, while rejecting reference states for other phases:
	\be \label{string_stab} \bra{\Psi_{\o}} \mcs^{(\o')}_g \k{\Psi_{\o}} = \delta_{\o,\o'}.\ee 
	In the present setting, the indicator observables can be chosen to be an appropriate set of string order parameters~ \cite{pollmann2012detection}.
	
	{\it Step 3}: we construct a hierarchical unitary circuit $\mcq_\o$ 
	(Fig.~1b) such that after a sufficiently large number of layers, applying $\mcq_\o$ to any input state in the target phase $\ket{\psi} \in \spt_\o$ yields $\ket{\Psi_\o} \tp \k{\phi_{an}} \in \mch_{re} \tp \mch_{an},$ where $\k{\phi_{an}}$ stores the non-universal information present in $\k{\psi}$. In essence, the role of $\mcq_\o$ is to approximately `undo' the action of the circuit $\mcu_\psi$ appearing in $\ket{\psi} = \mcu_\psi \ket{\Psi_\o}$, and to do so without explicit knowledge of $\mcu_\psi$ (Fig.~1a, red arrows).
	Note that under our circuit, all such $\k{\psi} \in \spt_\o$ flow to the fixed point $\k{\Psi_\o}$ without violating unitarity: this is made possible by the presence of $\mch_{an}$, which serves as a bath that stores all non-universal information about the input state.
	At the same time, we design $\mcq_\o$ such that for all input states in distinct phases $\k{\psi_{\o'}}\in \spt_{\o'}$ with $\o'\neq\o$, the output state on $\mch_{re}$ never flows to $\k{\Psi_{\o}}$, thus preventing `false positives' from occurring. 
	
	Combined together, our goal is to ensure that $\lan\psi | \mcq_\o^\da (\mcs^{(\o)}_g \tp \unit_{an}) \mcq_\o |\psi\ran$ is asymptotically close to $1$ if $\k\psi\in \spt_\o$, and asymptotically close to zero otherwise.
	If this is achieved, the sharp difference in the value of the indicator observable in the two cases enables us to efficiently answer the QPR problem with high confidence, using only a small number of measurements.
	In other words, running the circuit $\mcq_\o$ and subsequently measuring $\mcs^{(\o)}_g$ constitutes an exact quantum algorithm to recognize the phase $\spt_\o$, and it does so with low sample complexity.

	We should note that one could always simply directly measure the indicator observables in the input state, without first acting with the circuit $\mcq_\o$.
	While in principle such an approach will generically work, the number of measurements required to get sufficiently good statistics can be very large, especially when input states lie close to phase boundaries (where the expectation values of $\mcs^{(\o)}_g$ often become vanishingly small). 
	Our algorithm (namely the application of the RG circuit $\mcq_\o$) eliminates this problem.

	The success of our approach hinges on the construction of 
	$\mcq_\o$, which we perform using ideas inspired by error correction.
	Specifically, we take the viewpoint that any input state in $\spt_\o$ 
	should be thought of as a collection of coherent `errors' applied to the reference state.
	The precise types of errors that arise are identified in Theorem \ref{maintext_theorem} below, which provides the foundation needed to design $\mcq_\o$.
	Intuitively, our Theorem states that for ground states of SPT phases, the errors can be represented as collections of operators that create pairs of (virtual) fractionalized excitations (this perspective is closely related to the operational viewpoint on topological phases adopted in Refs. \cite{jamadagni2022operational,jamadagni2022error}).
	As a single local error results in a pair of such excitations, 
	one can leverage this redundancy as a primitive for error correction.
	Specifically, by detecting fractionalized excitations supported on parts of a system, one can identify occurrences of local errors, and correct for their effects in the remaining system.
	Our RG circuit $\mcq_\o$ is constructed based on this inuition while still evolving the wavefunction in a unitary way. 
	
	{}

	\subsection{Reference states and indicator observables }
	
	Our main technical idea is to utilize tensor network methods to construct both $\k{\Psi_\o}$ and $\mcq_\o$.
	For concreteness, let us consider a periodic 1D chain consisting of $L$ qudits.
	As inputs to the QPR problem, we are given the symmetry group $G$, its unitary representation $R_g$, $g\in G$, a target phase $\spt_\o$, and the phase's associated projective representation $V^{(\o)}_g$.
	To simplify our presentation, we will focus on a subclass of phases whose projective representations are {\it maximally non-commutative}, meaning that for all $g \neq e \in G$, $V_g^{(\o)}$ fails to commute with at least one $V_h^{(\o)}$~\cite{else2012symmetry,stephen2017computational,stephen2017computational2}.
	Most familiar SPT phases satisfy this property (such as the Haldane phase); 
	an example of one which does not is the nontrivial SPT phase protected by $G = \zz_4 \times \zt$ \cite{stephen2017computational}. 
	The generalization of our construction to arbitrary Abelian $G$-SPT phases requires adding an extra step prior to running our circuit but does not involve any qualitatively new ideas; the details are given in App. \ref{app:non_mnc}.
	Finally, to simplify notation we will also assume that the local Hilbert space $\mch$ has dimension $\dim(\mch) = |G|$ and is enumerated by a set of basis states $\k{g}, g\in G$.
	Under these assumptions, $R_g$ is always isomorphic to the tensor product $V^{(\o)*}_g \tp V^{(\o)}_g$, which makes the pattern of symmetry fractionalization manifest. 
	
	\begin{figure} 
		\includegraphics[width=.49\tw]{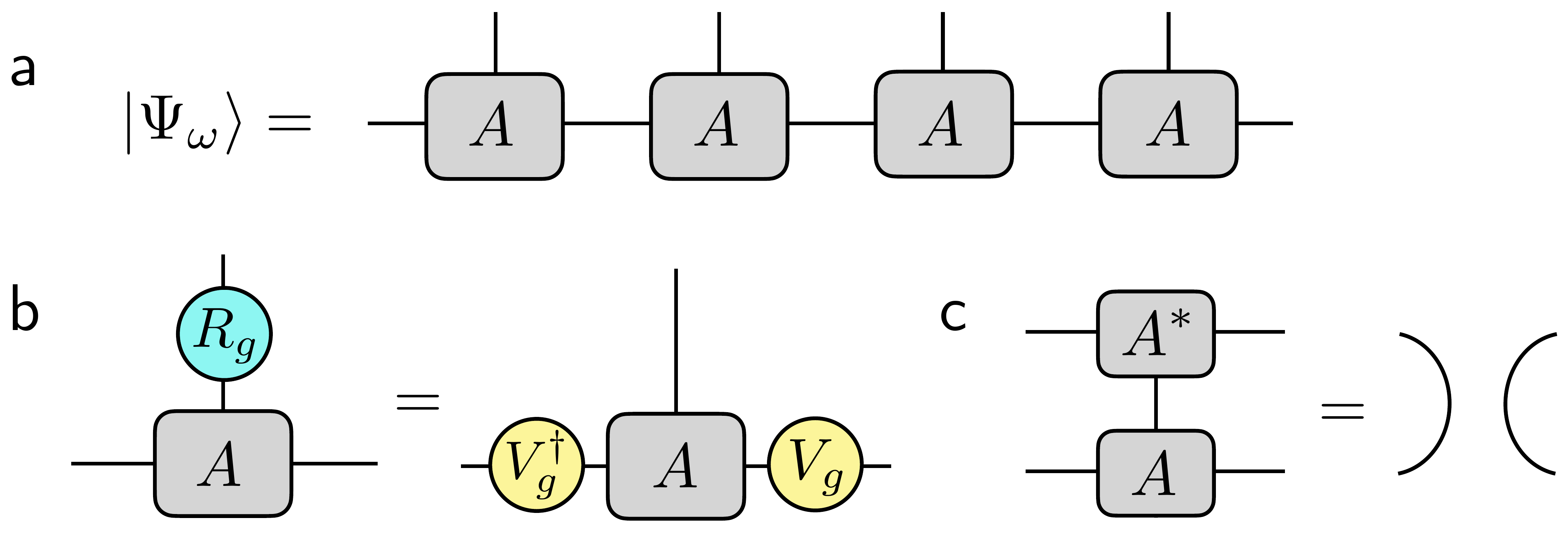}
		\caption{\label{fig:symfrac_and_iso}
			{\bf Diagrammatic representations of the reference state and tensor identities.} 
			{\bf a}. The reference state $|\Psi_\omega\ran$ can be written as an infinite MPS formed by contracting tensors $A$ (gray boxes). 
			{\bf b}. Symmetry fractionalization: a local action of the symmetry $R_g$ `pushes through' the MPS tensor to become two projective actions at the virtual level. 
			{\bf c}. Zero correlation length: the transfer matrix of the reference MPS is a trivial rank-1 projector, i.e. the MPS is both left- and right-canonical.} 
	\end{figure} 
	
	A particularly nice choice of reference states $\k{\Psi_\o}$ can be written down in a matrix product state (MPS) form as  
	\be
	\label{psi_canonical}
	\k{\Psi_\o} =  \sum_{\{g_i\}}  \Tr[A^{[g_1]}\cdots  A^{[g_L]}] \k{g_1, \cdots, g_L},
	\ee 
	where each $g_i$ runs over all $|G|$ basis states, and $A^{[g]}$ is a 
	$\sqrt{|G|}\times \sqrt{|G|}$ matrix which implements the isomorphism between $R_g$ and $V^{(\o)*}_g \tp V^{(\o)}_g$. $A^{[g]}$ is fully 
	determined by $R_g$ and $V_g^{(\omega)}$:
	\be A^{[g]} = V^\da_{\s(g)},\ee 
	where $\s$ is a permutation on the group elements of $G$ defined through the relation $\lan h| R_{\s(g)} |h \ran  =   \Tr[V_h V_g V_h^\da V_g^\da]/\sqrt{|G|}$ for all $g,h\in G$.

	This choice of $A$ has the merit of satisfying two important properties: (i) zero correlation length: $\sum_g A^{[g]}_{\alpha \beta} (A^{[g]}_{\alpha'\beta'})^* = \delta_{\alpha \alpha'} \delta_{\beta \beta'}$; and 
	(ii) symmetry fractionalization: $\sum_{k} (R_g)_{hk} A^{[k]} = (V_g^{(\omega)})^\dagger A^{[h]} V_g^{(\omega)}$ as proven in App~\ref{app:mps_technology}.
	These two identities are illustrated graphically in Fig. \ref{fig:symfrac_and_iso}.

	$\k{\Psi_\o}$ satisfies a number of additional properties that make it a natural choice for serving as a reference wavefunction.
	First, the conditions (i) and (ii) imply that $\ket{\Psi_\omega}$
	can be understood as a minimally-entangled fixed-point wavefunction in the sense of Ref.~\cite{verstraete2005renormalization}.
	Second, one can find a frustration-free parent Hamiltonian of $\ket{\Psi_\omega}$ consisting only of mutually commuting nearest neighboring interactions~\cite{schuch2011classifying}.
	Finally, $\k{\Psi_\o}$ is a simultaneous eigenstate of a collection of the aforementioned {\it string operators} $\mcs^{(\o)}_{g;i\ra j}$, which implement the action of $R_g$ between the sites $i,j$ and perform a projective action of $g$ at sites $i,j$:
	\be \label{string_def} \mcs^{(\o)}_{g;i\ra j} = S^R_{g,i} \tp \bot_{l=i+1}^{j-1} R_{g,l} \tp S^L_{g,j},\ee
	where the explicit form of the matrices $S^{L/R}_{g,i}$ --- which decorate the string ends with projective symmetry actions --- depends on $\o$, and can be found in App. \ref{app:mps_technology}.
	This allows us draw a connection with stabilizer quantum error correcting codes: $\ket{\Psi_\omega}$ is the unique logical state stabilized by the string operators, which therefore serve as a natural choice of indicator observables. 
	With these choices, the reference states for distinct SPT phases satisfy the desired selection rule Eq.~\eqref{string_stab}~\cite{pollmann2012detection,else2013hidden,de2021symmetry}.

	\begin{figure}
		\centering
		\includegraphics[width=.49\tw]{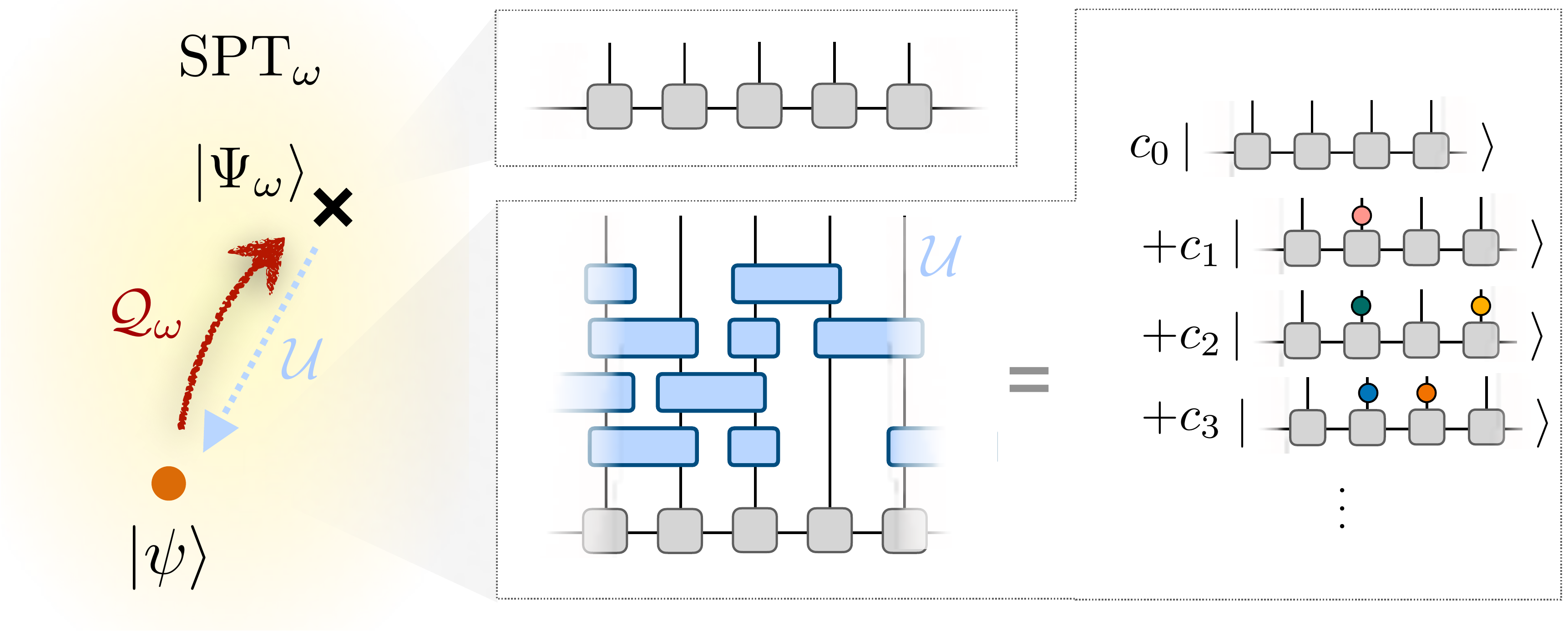}	
		\caption{\label{fig:wfs_and_ec} 		
			{\bf SPT ground states: }
			Any input state $\k\psi \in \spt_\o$ can be written as $\mcu \k{\Psi_\o}$ for a constant-depth symmetric local circuit $\mcu$. The action of $\mcu$ on $\k{\Psi_\o}$ can be decomposed as a linear combination of local symmetry actions (colored circles; inset), which we view as `errors' acting on $\k{\Psi_\o}$. }
	\end{figure}
	
	\subsection{Constructing the RG circuits} 
	
	Having determined the representative wavefunction and indicator observables, we turn to the most non-trivial step in our construction: finding the RG circuit $\mcq_\o$.
	To do so, we first show that any $\ket{\psi}\in \spt_\o$ can be understood as $\ket{\Psi_\o}$ locally deformed by a collection of `errors', which take the form of products of local symmetry actions applied at each site (see Fig.~\ref{fig:wfs_and_ec}):
	\begin{theorem} \label{maintext_theorem} 
		A generic ground state wavefunction $\k{\psi}\in \spt_\o$ may be written as 
		\be \label{psio} \k{\psi} = \sum_{\bfg \in G^L} C_\bfg R_\bfg \k{\Psi_\o},\qq R_\bfg \equiv \bot_i R_{g_i},\ee 
		where the complex coefficients $C_\bfg$ characterize the  amplitude for a collection of errors labeled by the string of group elements $\bfg = (g_1,\dots,g_L)$ to occur,
		with the errors being correlated over a finite distance set by the correlation length $\xi$  of $\k{\psi}$.  
	\end{theorem}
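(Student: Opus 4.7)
The plan is to expand $\k\psi$ in the overcomplete set $\{R_\bfg \k{\Psi_\o} : \bfg \in G^L\}$, breaking the proof into two parts: (i) showing that this set spans the symmetric subspace of $\mch^{\otimes L}$ so the expansion exists, and (ii) using the finite-depth symmetric circuit $\mcu_\psi$ with $\k\psi = \mcu_\psi \k{\Psi_\o}$ to control the correlation structure of the coefficients $C_\bfg$. The starting point is that $\k\psi$, being a symmetric ground state of an unbroken-$G$ SPT Hamiltonian, lies in the global $G$-singlet sector of $\mch^{\otimes L}$, which for Abelian $G$ has dimension $|G|^{L-1}$.

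For part (i), the key calculation is the overlap $\lan\Psi_\o|R_\bfg^\da R_\bfh\k{\Psi_\o} = \lan\Psi_\o|R_{\bfg\inv\bfh}\k{\Psi_\o}$. Symmetry fractionalization (Fig.~\ref{fig:symfrac_and_iso}b) converts each local $R_{g_i}$ acting on $\k{\Psi_\o}$ into bond insertions $(V^{(\o)}_{g_i})^\da$ and $V^{(\o)}_{g_i}$ on the bonds flanking site $i$; collecting these, the bond between sites $i$ and $i+1$ hosts $V^{(\o)}_{g_i}(V^{(\o)}_{g_{i+1}})^\da \propto V^{(\o)}_{f_i}$ with $f_i = g_i g_{i+1}\inv$. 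The zero-correlation-length identity (Fig.~\ref{fig:symfrac_and_iso}c) then collapses the overlap into a product over bonds of $\Tr[V^{(\o)}_{f_i}]/n$. Maximum non-commutativity of $V^{(\o)}$ forces $\Tr[V^{(\o)}_f] = 0$ for $f \neq e$ (for each such $f$ there is an $h$ with $V^{(\o)}_h V^{(\o)}_f (V^{(\o)}_h)^\da = c\, V^{(\o)}_f$ for some $c \neq 1$, which forces the trace to vanish). Thus the overlap is nonzero only when every $f_i = e$, i.e.\ when $\bfg$ and $\bfh$ differ by a constant global string $(h,\ldots,h)$. After gauge-fixing this $|G|$-fold redundancy (e.g.\ setting $g_L = e$), one obtains $|G|^{L-1}$ pairwise orthogonal states --- precisely the dimension of the symmetric subspace --- so these form an orthonormal basis of that sector, and $\k\psi$ admits the stated expansion with $C_\bfg = \lan\Psi_\o|R_\bfg^\da\k\psi$.

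For part (ii), substituting $\k\psi = \mcu_\psi \k{\Psi_\o}$ gives $C_\bfg = \lan\Psi_\o|R_\bfg^\da \mcu_\psi \k{\Psi_\o}$. Because $\mcu_\psi$ has constant depth, it factorizes as a product of local unitaries each supported on a window of size $r \sim \xi$; combining this lightcone structure with the cluster-decomposition of expectations in the zero-correlation-length state $\k{\Psi_\o}$ yields $C_\bfg \approx \prod_I C_{\bfg|_I}$ whenever the non-identity support of $\bfg$ splits into groups $\{I\}$ separated by distances much larger than $r$, with residual corrections suppressed exponentially on scale $\xi$. This is the precise sense in which the errors $g_i \neq e$ are correlated only over distance $\xi$.

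The main obstacle I anticipate is the overlap computation of part (i): conceptually it reduces to three standard identities (symmetry fractionalization, zero correlation length, tracelessness of nontrivial $V^{(\o)}_f$), but one must carefully track the projective cocycle phases generated when rewriting $V^{(\o)}_{g_i}(V^{(\o)}_{g_{i+1}})^\da$ as $V^{(\o)}_{g_i g_{i+1}\inv}$ and verify that they cancel when the bond trace is closed around the periodic chain. One clean sanity check is that on the diagonal $\bfg = \bfh$ the overlap must equal $1$, so the full tower of projective phases is forced to organize into a trivial product. Once the basis property is established, part (ii) reduces to a routine Lieb--Robinson-style locality argument and requires no additional ideas.
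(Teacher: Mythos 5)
Your proposal is correct, but it reaches the result by a genuinely different route than the paper. The paper's proof works at the operator level: it first shows (via trace orthogonality of $\{S^R_g R_h\}$) that every symmetric local gate in $\mcu_\psi$ decomposes into products of on-site symmetry actions $R_{h}$ and string operators $\mcs_{g;i\ra j}$, then pushes all string operators down onto $\k{\Psi_\o}$, where they act as stabilizers and disappear, leaving only the $R_\bfg$ layer. Your proof instead works at the state level: you show directly that $\{R_\bfg\k{\Psi_\o}\}$, modulo the global redundancy $R_{\bfg} \k{\Psi_\o} = R_{\bfg + h}\k{\Psi_\o}$, is an orthonormal basis of the $|G|^{L-1}$-dimensional singlet sector, using exactly the three ingredients (fractionalization, zero correlation length, and $\Tr[V^{(\o)}_f]=\sqrt{|G|}\,\d_{f,e}$, which the paper proves as a corollary of projective Schur orthogonality) that make the overlap $\lan\Psi_\o|R_\bfg^\da R_\bfh\k{\Psi_\o}$ computable. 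Your route buys a closed-form $C_\bfg = \lan\Psi_\o|R_\bfg^\da\k{\psi}$ and the normalization $\sum_\bfg|C_\bfg|^2=1$ for free, and it cleanly handles the closed-ring redundancy that the paper relegates to a footnote. The trade-off is that your part (i) proves the expansion for \emph{every} state in the singlet sector, including states outside $\spt_\o$, so the entire physical content of the theorem is pushed into part (ii); the paper's circuit decomposition makes the locality of the $C_\bfg$ more organic, and the intermediate picture (errors as string-operator endpoints) is reused elsewhere in the construction. Two minor points: membership of $\k\psi$ in the trivial-charge sector does not follow from symmetry alone (a symmetric state can carry a nontrivial 1D charge), so you should derive it from $\k{\psi}=\mcu_\psi\k{\Psi_\o}$ with $\mcu_\psi$ symmetric and $\k{\Psi_\o}$ uncharged; and your part (ii) is stated at the same informal level as the paper's own treatment (which invokes strict causality of a constant-depth circuit plus the rank-one transfer matrix), so it is acceptable here but is the place where a fully rigorous version would require the most work.
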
	
	This Theorem strongly constrains the types of errors that our circuits need to correct, namely only those errors caused by local symmetry operators $R_g$. 
	Since all of the $R_g$ commute in our case, the error correction process we will design will be essentially classical in nature: this is made possible by the presence of the symmetry $G$, which prevents `bit flip' errors ($i.e.$ those off-diagonal in our choice of basis) from arising.  
	
	The proof of this Theorem proceeds by first studying the general structure of symmetric finite-depth quantum circuits  $\mcu_\psi$ and finding an appropriate basis of operators in which to decompose the action of $\mcu_\psi$ on $\ket{\Psi_\o}$; the symmetric nature of $\mcu_\psi$ and the special fixed-point properties of $\k{\Psi_\o}$ guarantee that only $R_\bfg$ operators appear in the resulting decomposition. The precise statement of our Theorem and details of the full proof can be found in App.~\ref{app:general_ground_states}.

	Since quantum circuits are linear, we can focus our attention on correcting each term in the sum Eq.~\eqref{psio} separately. We will also restrict our attention to distributions of errors which are generic, meaning that there are no fine-tuned degeneracies among the coefficients $C_\bfg$. Adversarially chosen wavefunctions for which errors associated with distinct group elements occur with {\it exactly} equal probabilities will be able to fool our RG circuits, but since such wavefunctions constitute a set of measure zero in the space of all SPT ground states, we will not be concerned with this issue. 
	
	\begin{figure} 
		\includegraphics[width=.49\tw]{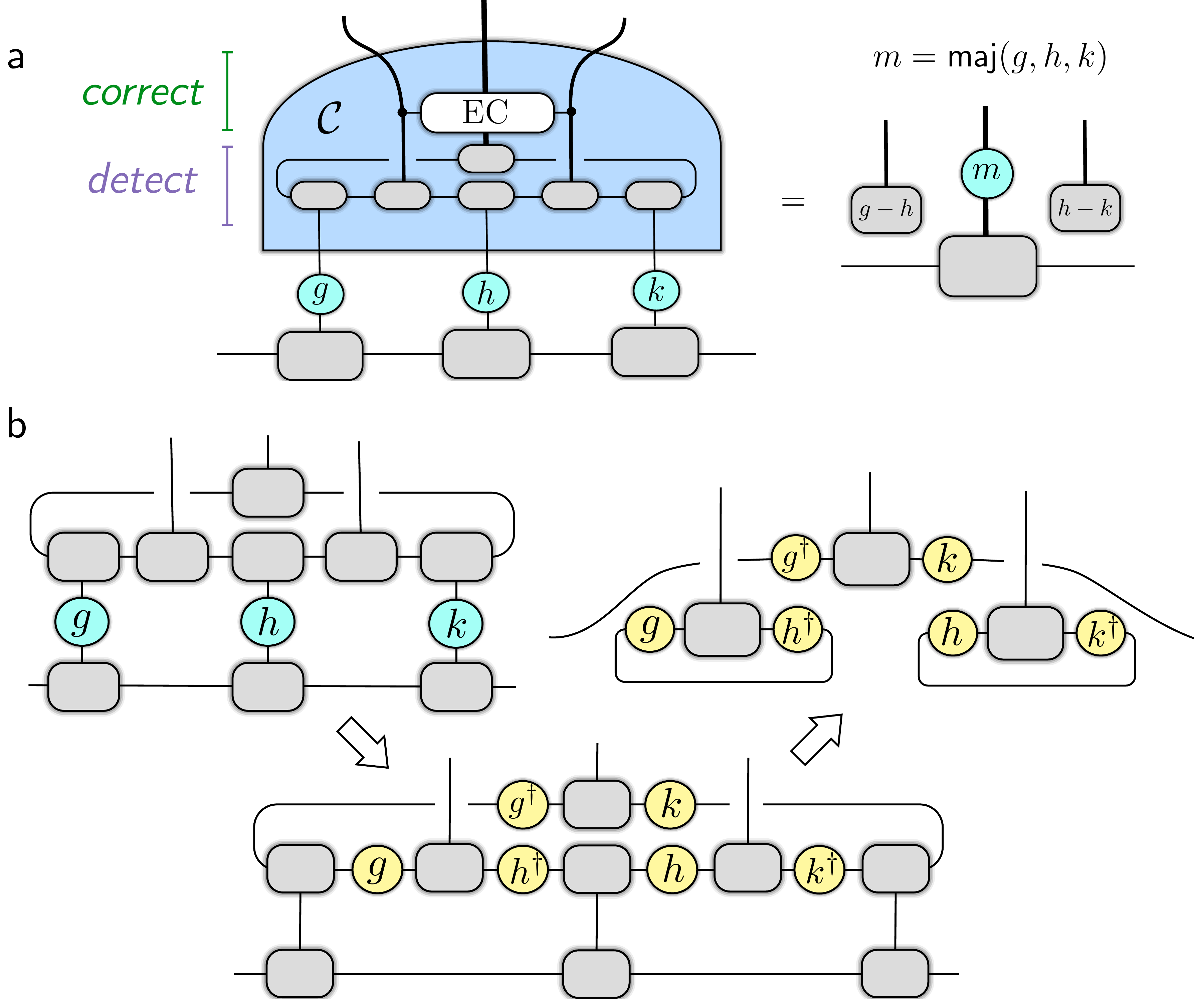} 
		\caption{\label{fig:mcc}  {\bf a.} The elementary unitary $\mcc$ is designed to detect and correct errors (cyan circles), which it filters out by way of a majority-vote scheme. $\mcc$ consists of six MPS tensors chained together in a ring-like configuration (which detects the errors) together with a control gate EC (which corrects them). 
			$\mcc$ is designed such that when it acts on a region containing three symmetry actions $R_{g,h,k}$, the renormalized output (middle leg) retains its original MPS form together with a symmetry action $R_{\maj(g,h,k)}$ of the majority vote, while the ancilla outputs (left and right legs) become decoupled from the rest of the system.
			{ \bf b.} The error detection stage is accomplished in two steps. First, the errors $R_{g,h,k}$ are fractionalized into projective representations (yellow circles) using Fig. \ref{fig:symfrac_and_iso}b. In the next step, the zero-correlation length property is used (Fig. \ref{fig:symfrac_and_iso}c) to decouple the renormalized leg from the ancilla legs. The renormalized leg incurs an error determined by $V_g^{(\o)\da} \tp V_k^{(\o)}$, while the ancilla legs store the `domain wall' variables $g-h, h-k$ that are used to diagnose the error. } 
	\end{figure} 
	
	A circuit $\mcq_\o$ which corrects generic errors can in fact easily be constructed; see Fig.~\ref{fig:qpr_and_rg}b for a sketch of the overall architecture.
	Each layer of $\mcq_\o$ consists of parallel applications of a certain 3-qudit unitary gate $\mcc$ (Fig.~\ref{fig:qpr_and_rg}b, blue boxes).
	$\mcc$ consists of two components, which `detect' and `correct' errors respectively, as shown in Fig.~\ref{fig:mcc}a.
	The `detect' portion of $\mcc$ is constructed by chaining together a ring of six reference-state MPS tensors $A^{[g]}$. 
	Fig.~\ref{fig:mcc}b illustrates diagrammatically how the ring of six $A^{[g]}$ tensors performs error detection. By using the symmetry fractionalization property of $A^{[g]}$ (Fig.~\ref{fig:symfrac_and_iso}b), a trio of errors $R_g\tp R_h \tp R_k$ (cyan circles) can be `pushed up' into the ring, where they split into pairs of projective representations (yellow circles).
	Using the zero correlation length property (Fig.\ref{fig:symfrac_and_iso}c), one finds that the left and right output qudits then become isolated from the central (renormalized) qudit, which remains entangled with its neighbors. 
	By measuring only the left and right qudits, one can then determine how to eliminate the errors remaining on the center leg. 
	Specifically, under the error $R_g\tp R_h \tp R_k$, the action by $\mcc$ leaves the ancilla legs in a product state $\approx\k{g-h}\tp \k{h-k}$ determined by the `domain wall' variables $g-h,h-k$, while the state of the renormalized leg incurs an action by the operator $V^{(\o)\da}_g \tp V_k^{(\o)}$.
	As a simple example, consider the case where $g=h=k$ are all identical.
	In this case the action of $R_g \tp R_h \tp R_k$ is locally indistinguishable from that of a global symmetry operation, which should accordingly not be treated as a local error. Indeed, the action of $\mcc$ leaves the ancilla legs in the state $\k{0}\tp \k{0}$, indicating that no EC needs to be performed. The renormalized leg on the other hand incurs an action of $V_g^{(\o)\da} \tp V^{(\o)}_g$, which by the symmetry fractionalization property is equivalent to an action by $R_g$. 
	This property in turn implies that our circuit is symmetric, in the sense that a global symmetry action passes through $\mcc$ to become a symmetry action on only $\mch_{re}$: $\mcc R_g^{\tp 3} = (R_g \tp \unit_{an}) \mcc$.
	
	Now consider what happens when an error occurs, $i.e.$ when not all of $g,h,k$ are equal. In this case, we correct the error by way of a majority voting scheme. This is performed by a control gate denoted by the gate marked `EC' in Fig.~\ref{fig:mcc}a. This gate, whose explicit form is given in App. \ref{app:qcnn_details}, performs an action controlled by the ancilla state $\k{g-h}\tp\k{h-k}$, and results in the renormalized leg being left in a state acted on by $R_{\maj(g,h,k)}$, where $\maj(g,h,k)$ is the majority vote of $g,h,k$ (which we define to equal $h$ in the case of ties). 

	Taking the system size $L = 3^l$, a single layer of $\mcc$ unitaries consequently act as 
	\be \mcc^{\tp 3^{(l-1)}} R_\bfg \k{\Psi_\o}_l = R_{\maj(\bfg)} \k{\Psi_\o}_{l-1} \tp \k{\phi_{an}(\bfg)},\ee 
	where $\k{\Psi_\o}_l$ denotes the reference state on a system of size $3^l$, $\maj(\bfg) = (\maj(g_1,g_2,g_3),\maj(g_3,g_4,g_5),\dots)$, and $\k{\phi_{an}(\bfg)}$ stores information about the errors detected by $\mcc^{\tp 3^{l-1}}$.

	\ss{Diagnosing convergence of the RG flow}
	
	For a given depth $d$, our RG circuit $\mcq_\o^{(d)}$ is constructed by applying $d$ successive layers of $\mcc$ unitaries to the renormalized degrees of freedom in a hierarchical fashion, as indicated in Fig.~\ref{fig:qpr_and_rg}b.
	Subsequent rounds of majority voting dilute the concentration of errors present in the renormalized wavefunction, so that the density of errors decays exponentially when $d \gtrsim \log_3(\xi)$, with $\xi$ the correlation length of $\k{\psi_\o}$ (more detailed arguments establishing this will be given below). 
	For sufficiently large $d$ the renormalized wavefunction will thus converge to one in which no errors occur. In this limit a single layer of the RG circuit simply disentangles the ancilla qubits from the renormalized ones, thereby mapping $\k{\Psi_\o}$ to a smaller copy of itself, tensored with the trivial product state $\k{0\cdots0}$ --- precisely the behavior desired from a fixed-point reference state. On the other hand, if the input wavefunction $\k{\psi_{\o'}}$ is {\it not} drawn from $\spt_\o$, the expectation value of the indicator observable in the state $\mcq_\o \k{\psi_{\o'}}$ will exactly vanish. This is a simple consequence of the selection rule in Eq.~\eqref{string_stab} and the fact that 
	our circuits are symmetric; see App. \ref{app:convergence} for the proof. Thus, no `false positives' occur, and our circuit faithfully performs phase recognition for this class of SPT phases.
	 
	The convergence of the RG flow can be summarized by saying that $\mcq_\o^{(d)}$ flows any $\k{\psi_\o}\in \spt_\o$ to a copy of the reference state $\k{\Psi_\o}$ supported on the renormalized Hilbert space $\mch_{re}^{(d)}$, tensored with a non-universal state on the ancilla space:
	\be \label{ref_tp_an} \mcq_\o^{(d) } \proj{\psi_\o} \mcq_\o^{(d)\da} \approx \proj{\Psi_\o} \tp \r_{an}\ee 
	with the approximation becoming an equality in the limit of large depth $d$. We will find it helpful to define a renormalized density matrix obtained by tracing out $\mch_{an}$ as 
	\be \label{rhoren} \r_{re,\o}^{(d)} \equiv \Tr_{\mch_{an}}[\mcq^{(d)}_\o \r^{(0)} \mcq^{(d)\da}_\o],\ee
	where $\r^{(0)}$ is the initial density matrix. The convergence of the RG flow means that $\r_{re,\o}^{(d\ra\infty)} = \proj{\Psi_\o}$ if $\r^{(0)}$ is a ground state in the target SPT phase. The fact that $\mcq_{\o}^{(d)}$ can be written as a tree tensor network (i.e. as a tensor product of unitaries supported on blocks of size $3^d$) implies that $\r_{re,\o}^{(d)}$ is obtained from $\r_{re,\o}^{(d-1)}$ through LOCC, where the `local' in LOCC is defined with respect to the sites of $\mch_{re}^{(d)}$. This means that any mixed-state entanglement measure (negativity, distillable entanglement, etc.) is monotonically decreasing with $d$, as entanglement measures monotonically decrease under LOCC by definition (note that this holds regardless of the input state $\r^{(0)}$). This concretely establishes the sense in which our RG flow successively eliminates non-universal entanglement. 
	
	We note in passing that while our focus until now has been on recognizing ground states, the initial state $\r^{(0)}$ need not be pure, and our circuit also recognizes mixed states with well-defined patterns of symmetry fractionalization. More precisely, $\r^{(d)}_{re,\o}$ flows to the fixed point $\proj{\Psi_\o}$ for any input states of the form
	\be \label{mixedrho} \r^{(0)} = \sum_{a,b} c_{a,b} \kb{\psi_{\o,a}}{\psi_{\o,b}},\ee 
	where the $c_{a,b}$ are complex coefficients, and for all $a$, $\k{\psi_{\o,a}} \in \spt_\o$. This follows as a simple consequence of Eq.~\eqref{ref_tp_an}.

 A more direct approach to diagnose the convergence of the RG flow is to measure the distribution of errors detected at each layer of the RG circuit.
	Let $\mch^{(d)}_{an}$ denote the degrees of freedom which become part of the ancilla space at layer $d$ of the circuit. As in Eq.~\eqref{rhoren}, we define the density matrix of these degrees of freedom as 	
	\be \label{rhoan} \r_{an,\o}^{(d)} = \Tr_{\mch \setminus \mch^{(d)}_{an}}[\mcq^{(d)}_\o \r^{(0)} \mcq^{(d)\da}_\o].\ee 
	If $\r^{(0)}$ is a ground state in $\spt_\o$ (or a mixed state of the form Eq.~\eqref{mixedrho}), then $\r_{an,\o}^{(d)}$ will approach the pure state $\proj{0\cdots 0}$ at large $d$, after the errors present in $\r^{(0)}$ have been corrected away. For all other $\r^{(0)}$, $\r_{an,\o}^{(d)}$ at sufficiently large $d$ will instead be an equal weight mixture of all $\k{g}$ at each site. 
	We can quantify this by defining 
	\be\label{ancillapurity} \mcf^{(d)} \equiv \frac{1}{\log_{|G|} \dim \mch_{an}^{(d)}}\sum_i \frac{\Tr[\r_{an}^{(d)} \Pi_{0,i}] - 1/|G|}{1-1/|G|},\ee 
	where $\Pi_{0,i}$ is the projector onto the state $\k 0$ at site $i$. For large $d$, $\mcf^{(d)}$ will approach 1 in the target phase and vanish outside of it (thus faithfully performing phase recognition),and in a subsequent section we will demonstrate numerically that this indeed holds. Note that since $\mcf^{(d)}$ is obtained from the expectation value of a single-site operator, it has the advantage of being simpler to measure than the string operators, which are necessarily supported on multiple sites.  However, unlike for the string operators, there is no selection rule enforcing $\mcf^{(d)}$ to exactly vanish outside of the target phase for all $d$. 
	
	\ss{Convergence of iterated majority vote}

	We now give an intuitive explanation for why errors are generically quickly diluted under successive rounds of majority voting. For the purposes of this discussion we will assume that the errors have finite correlation length and are distributed in a way which is statistically translation-invariant on long scales, an assumption which may fail for distributions of errors that break translation only in thermodynamically large scales ($e.g.$ errors obtained by applying a symmetry operator on one half of the system). 

	With this assumption we are able to rigorously prove that the density of errors is monotonically decreasing with $d$, and provide strong evidence 
	that the convergence is exponentially fast when $d > \log_3(\xi)$.
	Let $p_d(\bfg)$ be the probability for an error $R_\bfg$ to occur at depth $d$ of the RG flow. We claim that $p_{d\ra\infty}(\bfg) = \prod_i \d_{g_i,g_*}$ for some fixed group element $g_*\in G$, which is equivalent to the statement that the iterative majority vote process converges to a definite `winner' $g_*$. 

To argue this, we first note that the treelike structure of $\mcq^{(d)}$ means that as long as $3^d \gg \xi$, the probability distribution at depth $d$ approximately factorizes, $p_d(\bfg) \approx \prod_i p_d(g_i)$, where $p_d(g)$ are the single-site marginals of $p_d(\bfg)$. Intuitively, this can be seen by a simple geometric argument.  After $d$ rounds of majority voting on a given length-$3^d$ string of group elements $\bfg$, the majority $\maj^d(\bfg)$ will on average not be sensitive to changing a small ($\ll 3^d$) number of the $g_i$. Consider then the two-site marginal distribution $p_d(g_i,g_{i+1})$. If $\xi \ll 3^d$, this distribution can be computed approximately by ignoring the group elements located within a distance $\xi$ of the joining point between the two depth-$d$ majority-vote trees used to compute $g_i, g_{i+1}$. Doing so then leads to the distribution factorizing as $p_d(g_i,g_{i+1}) \approx p_d(g_i) p_d(g_{i+1})$, and other marginals can be shown to factorize by similar arguments. 
	To prove convergence we can therefore focus on distributions in which errors occur independently on each site. The proof from this starting point is straightforward, as it is easy to show that the difference between the largest and second-largest values of $p_d(g)$ is monotonically increasing with $d$, reaching a value of $1-\ep$ in a number of layers going roughly as $\log(1/\ep)$. 

    A more direct proof is possible if one makes an additional (reasonable) assumption about the three-site marginals $p_d(g,h,k)$. Let $g_1$ ($g_2$) be the element with the largest (second-largest) single-site marginal probability $p_d(g)$. Define the function 
    \be f_d(g) \equiv \sum_{h\in G} (p_d(g,h,g) - p_d(h,g,h)).\ee 
    It is then straightforward to show that the difference $\d_d \equiv p_d(g_1) - p_d(g_2)$ satisfies $\d_{d+1} - \d_d = f_d(g_1) - f_d(g_2).$
    Thus if we assume that $f_d(g_1) > f_d(g_2)$, then $\d_d$ is monotonically increasing with $d$, implying convergence of the majority vote. This essentially amounts to assuming that for two configurations of errors with the same number of domain walls, the one with more instances of $g_1$ has higher probability of occurring (this can be viewed as an assumption that the majority vote is not `Gerrymandered'). The full details are given in App.~\ref{app:convergence}. 
    
	\subsection{Numerical demonstration and sample complexity} 
	
	\begin{figure}
		\includegraphics[width=.5\tw]{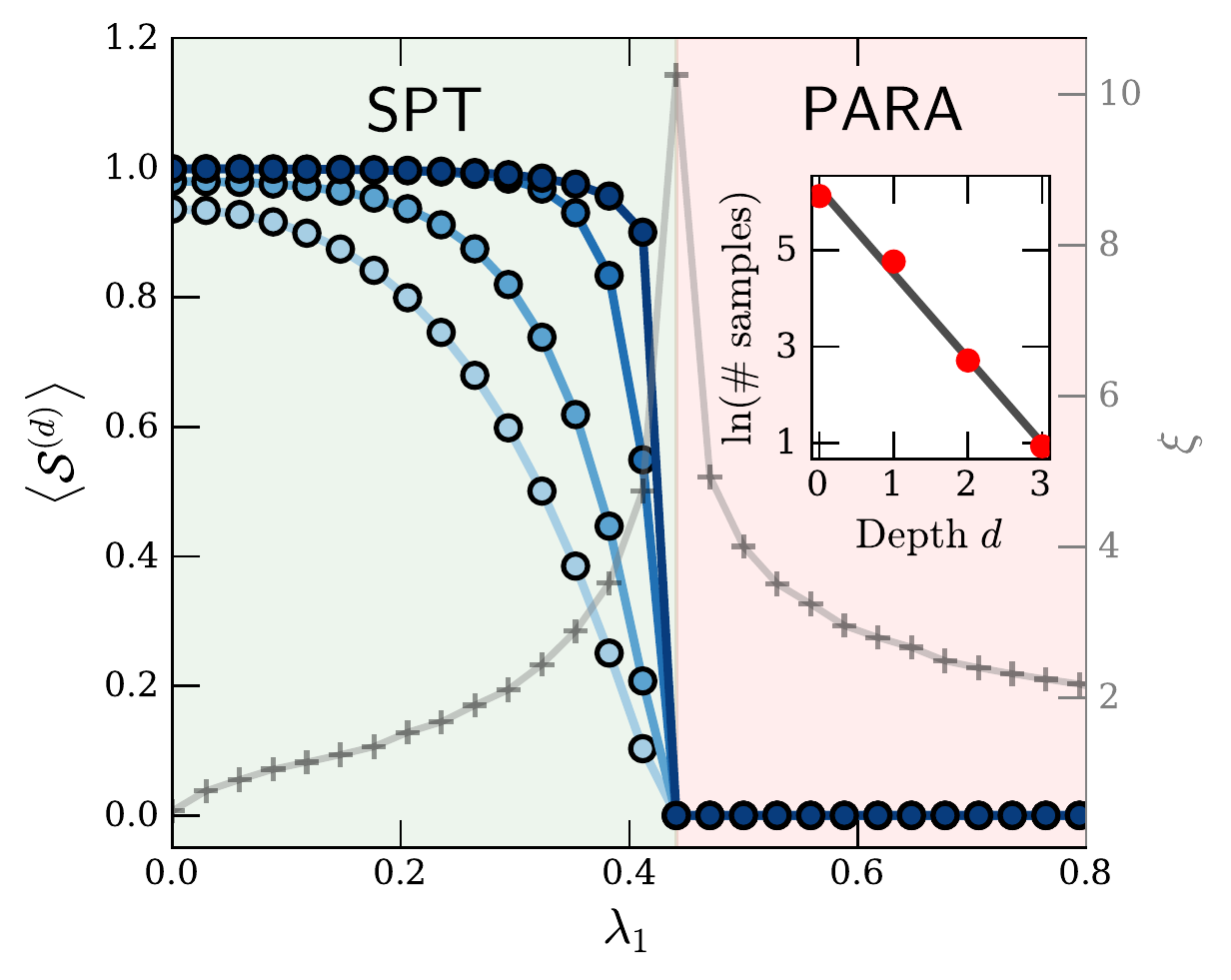}
		\caption{\label{fig:dmrg}
			Expectation value $\langle \mcs^{(d)}\rangle$ of a length $L/2$ string operator for the group element $g=(1,0)$, taken in the ground state of the $\zz_3^2$ cluster state model Eq.~\eqref{znham} on an $L=162$ length chain and with $\l_2 = 0.5$. Different curves indicate different layer depths $d$, increasing from $d=0$ (lightest) to $d=3$ (darkest). 
			Grey crosses indicate the correlation length, demonstrating that $\langle \mcs^{(d)}\rangle$ becomes a sharper diagnostic of the true phase boundary as $d$ is increased. 
			Inset: number of samples needed for phase recognition for $\lambda_1 = 0.4$ taken near the transition, plotted on a semi-log scale. The solid line is a fit to $\#\, {\rm samples} \propto e^{-1.76 d}$.}
	\end{figure}
	
	To assess the performance of our algorithm, we numerically implement our RG circuits by using them to perform QPR on ground states of a $\zn$ cluster state Hamiltonian \cite{tsui2015quantum}, perturbed by onsite and nearest-neighbor longitudinal fields with strengths $\l_1,\l_2$:
	\be \label{znham} H_\o =-\sum_{g\in \zn;i} \( Z_i^{-s_i \o g} X^g_{i+1} Z_{i+2}^{s_i\o g} + \l_1 X_i^g + \l_2 X^g_iX^g_{i+1}\), \ee 
	where $Z = \sum_g e^{\twp  i g /N} \proj g,X = \sum_g \kb{g+1}{g}$ are the usual $\zn$ clock and shift matrices, $s_i \equiv (-1)^i$, and $\o \in \zn$ labels the different SPT phases protected by the symmetry group $G = \zn^2$.
	For large $\l_{1,2}$ the ground state is in the trivial phase, while for small field strengths it is in a nontrivial SPT phase. 
	
	\begin{figure}
		\includegraphics[width=.48\tw]{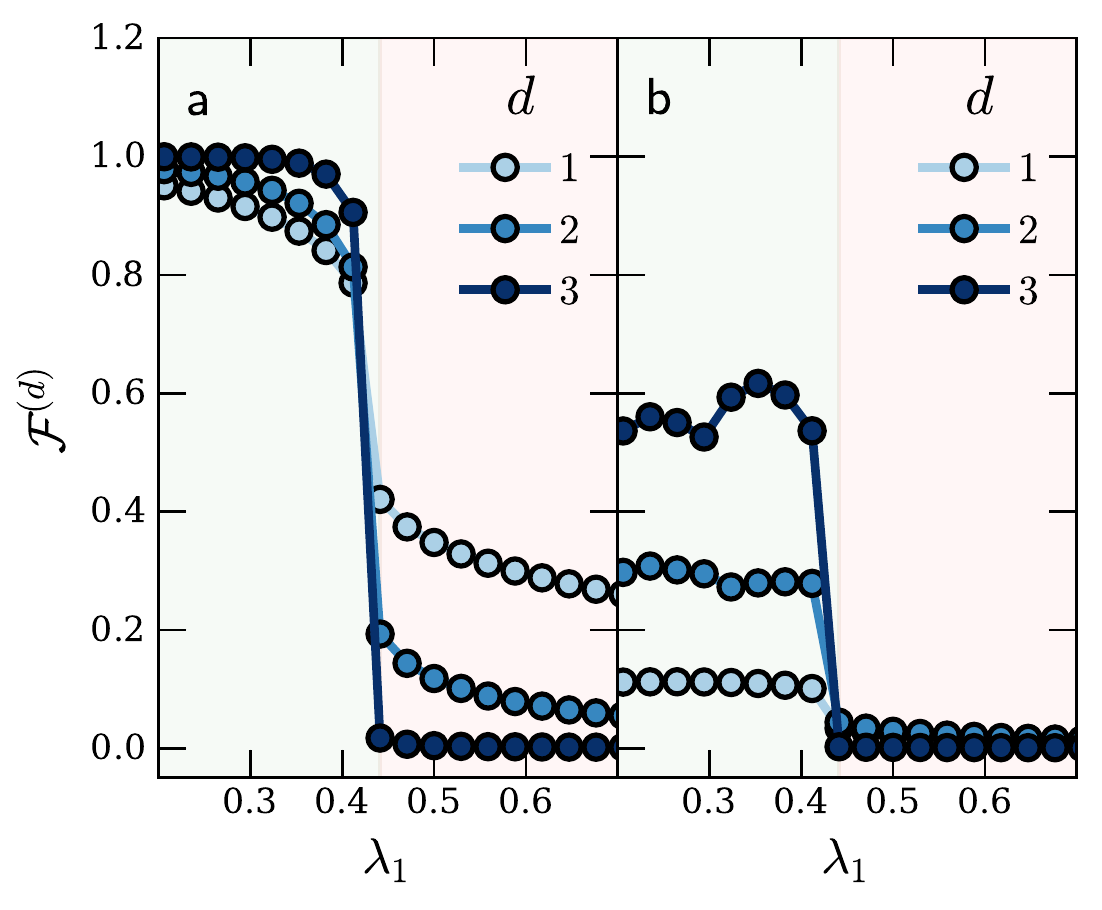}
		\caption{\label{fig:purif} {\bf a.} Error densities for the same ground states and parameter values as considered in Fig. \ref{fig:dmrg}. $ \mcf^{(d)}$ denotes the expectation value Eq.~\eqref{ancillapurity} of the projector onto the error-free subspace of the ancilla qudits at depth $d$ of the circuit, which equals unity at the fixed point $\k{\Psi_\o}$ (where no errors occur), and vanishes in the maximally mixed state (where all errors occur with equal probability). {\bf b.} As in the left panel, but with a single disorder realization of the depth-1 circuit Eq.~\eqref{confusing_circuit} applied to the ground state.  }
	\end{figure}
		
	Taking $N=3,\o=1$ as an example, 	
	we 
	numerically solve for the ground state using DMRG with bond dimension $\chi=200$, 
	to which we directly apply our RG circuit. 
	At each depth $d$ we measure the string operator $\mcs^{(d)}_{(1,0);i\ra j}$, where $|i-j|$ is a third of the coarse-grained system size. The results are shown in Fig.~\ref{fig:dmrg}, where we see that our circuit does a good job of correctly identifying the phase boundary as $d$ is increased. In the inset we plot the {\it sample complexity}, viz. the number of measurements of the string operator needed to determine the phase with high confidence (see App. \ref{app:sample_complexity} for the precise statement). As a function of $d$ we see an exponential improvement in the sample complexity as compared to directly measuring the string operators in the ground state ($d=0$), which is consistent with analytical predictions. 
	
	As discussed above near Eq.~\eqref{rhoan}, an alternate way of diagnosing convergence of our RG flow is to measure the distribution of errors detected at each step. In our numerics we measure the error density by directly calculating the quantity $\mcf^{(d)}$ defined in Eq.~\ref{ancillapurity}, which we show in Fig.~\ref{fig:purif}a for the same ground states as in Fig.~\ref{fig:dmrg}. $\mcf^{(d)}$ is seen to approach a step function with increasing $d$, confirming that $\mcf^{(d)}$ provides a good indicator function for the target phase.

    Our approach also does well at handling error distributions with larger error densities and spatial disorder. More specifically, we note that the ground states of Eq.~\eqref{znham} have rather simple error distributions, with the error density being low everywhere except near the phase boundary. In contrast, we can also consider states with a larger density of errors, with one simple way of producing such states being to act on ground states of Eq.~\eqref{znham} with a random short-depth symmetric circuit. To analyze this numerically, we choose the simple circuit
	\be \label{confusing_circuit} \mcu =  \bot_{i=1}^L {\rm diag}(e^{i\phi_i}, \, e^{i\phi_i},\, e^{-2i\phi_i} ),\ee
	where the angles $\phi_i$ at each site are drawn uniformly from $[0,\twp)$ (recall that we are working in a basis which diagonalizes the onsite symmetry action). The result for a single disorder realization is shown in Fig.~\ref{fig:purif}b. We see that the additional random errors significantly reduce $\mcf^{(d)}$ in the SPT phase, but that the error density decreases quickly with increasing layer depth, demonstrating the robustness of our phase recognition algorithm.

	\subsection{Universal multiscale string operators} \label{sec:msos}
	
	As discussed above, the string operators serve as a set of `indicator observables' that distinguish the different reference states $\k{\Psi_\o}$. The convergence of our RG circuits can thus be summarized in the equation 
	\be \label{string_convergence} \lan \psi_\o | \mcq_\o^{(d)\da} (\mcs^{(\o')}_{g;i\ra j} \tp \unit_{an})\mcq_\o^{(d)}|\psi_\o\ran  \xra{d\gg\log_3(\xi)} \d_{\o,\o'}\ee
	for any $\k{\psi_\o}\in\spt_\o$ (which we saw borne out in the numerics of Fig. \ref{fig:dmrg}).  
	In the `Schr\"odinger' picture, Eq.~\eqref{string_convergence} measures the expectation value of a bare string operator $\mcs^{(\o')}_{g;i\ra j}$ in the renormalized state $\mcq_\o^{(d)}\k{\psi_\o}$.
	It is however also helpful to change to the `Heisenberg' picture, where  
	Eq.~\eqref{string_convergence} can be viewed as the expectation value of a renormalized %
	string operator with respect in the original `UV' state $\k{\psi_\o}$~\cite{cong2022enhancing}. This is formalized by defining 
	the {\it mutliscale string operator} \cite{cong2019quantum} (MSO) as 
	\be \label{mso_def} \wt \mcs^{(\o,d)}_{g;i\ra j} \equiv \mcq_\o^{(d)\da} (\mcs_{g;i\ra j}^{(\o)} \tp \unit_{an} ) \mcq_\o^{(d)} .\ee 
Owing to the hierarchical structure of $\mcq^{(d)}$, the MSO is a linear combination of a number of terms that grows doubly-exponentially in $d$, with each term given by a product of elementary string operators~\cite{cong2019quantum,cong2022enhancing} (see App. \ref{app:qcnn_details} for a recursion relation that explicitly constructs the MSOs).

	Since we have shown that our RG circuit accurately performs phase identification, expectation values of the MSOs serve as a sharp diagnostic of the target SPT phase. That this is possible is rather remarkable, as the MSOs are constructed {\it solely} from the algebraic data that characterize the target SPT phase (a priori, one might expect that the renormalized string operators would depend on microscopic details of the specific state to be recognized). 
	This construction provides a physically transparent way of revealing the universal order present in SPT ground states, since the MSOs implement (arbitrarily accurate) projections onto the space of {\it all} ground state wavefunctions in a target phase. 
	This approach is complementary to other protocols which disentangle the SPT \cite{else2013hidden}, directly measure the projective symmetry action using swap gates \cite{haegeman2012order}, or use symmetry twists to create MSO-like operators in theories with $U(1)$ symmetry \cite{tasaki2018topological,huang2021provably}.

	\section{Discussion and implications}
	
	The goal of this work has been to advocate that renormalization group (RG) flow and error correction (EC) should really be viewed as two sides of the same coin.\footnote{Note that we are not trying to protect any logical information beyond that required to recognize a given phase. Whether or not we can $e.g.$ guarantee that the logical information contained in the degenerate SPT edge modes can always be preserved throughout the RG flow is a question we leave to the future.} 
	We arrived at this perspective by implementing RG flow in a unitary way, via the action of a certain class of quantum circuits. This is rather distinct from conventional approaches to RG, which is usually viewed as an irreversible flow performed on Hamiltonians and Lagrangians. Nevertheless, the mechanics of our construction is quite similar in spirit to the original block spin approach of Kadanoff~\cite{kadanoff1966scaling}, which in some sense we simply embed into a larger operational framework that allows unitarity to be preserved and enables us to identify phases that lack local order parameters. 

	Since we are interested in making rigorous statements, we have restricted our attention to 1D gapped phases of matter with Abelian internal symmetries, which can be analyzed using powerful group-theoretic and tensor network methods. In the main text we have focused on SPT phases, but as shown in App. \ref{app:symm_breaking} a similar treatment is also possible for the (simpler) case of symmetry-breaking orders. Our belief is however that the general connection between RG and EC extends well beyond these examples, bringing with it a diverse array of implications across multiple fields of physics.  We describe some of these implications and discuss future directions below.
	
	{\it Quantum matter:} From the quantum matter perspective, the most important implication of our work is the existence of the 
multiscale string operators (MSOs): operators that recognize all wavefunctions belonging to the same phase (see Sec.~\ref{sec:msos}).
    Another consequence of our work stems from our circuit's ability to function as a {\it universal variational ansatz} for ground state wavefunctions. The basic idea is to turn EC on its head, by running our RG circuit $\mcq$ `in reverse'. We do this by starting from a variational family of errors---parametrized by a wavefunction $\k{\phi_{an}}$---which we then encode into a variational wavefunction by acting with $\mcq^\da$ on $\k{\Psi} \tp \k{\phi_{an}}$. This performs a {\it reversed} RG flow from the IR to the UV, and generates a state in the target SPT phase whose pattern of local entanglement is controlled by the choice of $\k{\phi_{an}}$. 
 Crucially, our results imply that choices of $\ket{\phi_{an}}$ are in one-to-one correspondence with ground state wavefunctions in the target phase (up to a set of measure zero). 
	The hierarchical construction of this encoding map shares many parallels with that of the Multiscale Entanglement Renormalizaton Ansatz (MERA) \cite{vidal2008class}, but is distinguished by having features native to the target phase `hard-coded' into the ansatz. The EC machinery has the added benefit of explicitly separating out the universal degrees of freedom $(\k{\Psi})$ from the non-universal ones ($\k{\phi_{an}}$). 

 	{\it Quantum simulation:}
	From a practical point of view, our circuits provide a way of performing phase recognition even very close to critical points, where signals from conventional order parameters are often smeared out due to strong quantum fluctuations (or long correlation lengths). This problem is particularly acute in near-term quantum simulators, which typically have rather modest system sizes \cite{herrmann2021realizing}. 
	Furthermore, our protocol for performing RG involves applying a circuit whose depth is at most {\it logarithmic} in the system size, making it suitable for implementation on near-term quantum devices.
 
    As discussed in Sec.~\ref{sec:msos}, our phase recognition can be effected in practice in two different ways: a) first acting on the input state $\k\psi$ with a $d$-layer RG circuit $\mcq_\o^{(d)}$ and then measuring the expectation value of the elementary string operator $\mcs_{g}^{(\o)}$, or b) directly measuring the expectation value $\lan \psi | \wt \mcs^{(\o,d)}_{g} | \psi\ran $ of the MSO in the input state. 
    The advantage of b) is that no unitary gates are needed: the form of the MSOs implies that their expectation values can be reconstructed from simultaneous measurements of all of the exponentially many (in $d$) mutually-commuting elementary string operators (each supported only on two sites).
    Together with classical post-processing, this allows estimating the MSOs at arbitrary depth $d$ without applying $\mcq_\o^{(d)}$.
    This feature enables the immediate implementation of our approach in experiments, as already done in two different quantum simulation platforms~\cite{herrmann2021realizing,cong2022enhancing}, prior to our rigorous theoretical understanding.
    On the other hand, an advantage of approach a) is its robustness against experimental imperfections 
    that could arise in future digital fault-tolerant quantum computers. 
    This is due to the exponential decrease with $d$ in the size of operators that need to be measured, meaning that approach a) is exponentially less sensitive to final readout errors than approach b). In essence, our circuit $\mcq$ concentrates important and universal information about the input state onto a small number of output qubits.
	
	{\it Complexity of QPR:} 
	A key performance metric for an algorithm solving the QPR problem is the {\it sample complexity} \cite{haah2017sample}, namely the number of copies of a given state $\k\psi$ needed to confidently decide if $\k{\psi}$ belongs to the target phase. 
	Our multiscale string operators (MSOs) take on a value asymptotically close to unity {\it everywhere} in the target phase (and vanish outside of it), implying that measurements of the string operators performed after acting with our RG circuits lead to nearly deterministic outcomes, and correspondingly to low sample complexities for a fixed system size.
    This low sample complexity is particularly desirable from the perspective of near-term quantum simulators, where finite-size effects are often quite large.\footnote{Note that on a system of size $L$, the number of well-separated string operators with size $\gg \xi$ diverges as $L \ra \infty$, giving a trivial sample complexity in the thermodynamic limit. Thus this reduction in sample complexity is only well-defined for finite systems.}
It would be interesting to see if one could provide a fundamental lower bound on the sample complexity needed in this context, and if so, to check whether or not our algorithm saturates it.

	{\it Machine learning:} 
  There has recently been a concerted effort to tackle the phase recognition problem using machine learning (ML) methods \cite{aaronson2007learnability,huang2021provably,carleo2017solving,wang2016discovering,carrasquilla2017machine,carleo2019machine}. 
	In this context, our MSOs 
 suggest that our circuits may serve as a good reference point from which to benchmark such heuristic approaches.
 At the same time, our circuit provides a good starting point for numerical optimization schemes in QCNN-based ML approaches~\cite{cong2019quantum}.
Furthermore, the existence of MSOs has implications on the rigorous performance guarantee of classical shadow-based ML approaches to QPR problems.
In previous work~\cite{huang2021provably}, the performance guarantee for recognizing SPT phases was limited to 
the case of a 1D $\zt^2$ SPT phase with enhanced $O(2)$ symmetry. Our work extends these results to a broader class of phases: all 1D SPT phases with Abelian 
 internal symmetries. 
Relatedly, we also note that while we have focused on performing phase recognition by measuring expectation values of observables, our RG circuits may also allow for the efficient extraction of other universal (potentially non-linear) properties of the quantum phase in question.
	
	{\it Intrinsic topological phases:} 
	A clear next step is to rigorously characterize circuit architectures that recognize phases with intrinsic topological order (perhaps in a way which builds on string-net tensor network methods \cite{levinwen,levin2007tensor,gu2009tensor,evenbly2015tensor} or RG-inspired decoders \cite{bravyi2011analytic}), with first steps recently being taken in \cite{cong2022enhancing,cian2022extracting}. 
	Given the recent realization of topologically ordered states in quantum simulators \cite{satzinger2021realizing, semeghini2021probing} and measurement-based protocols for preparing topological orders \cite{verresen2021efficiently, tantivasadakarn2022hierarchy}, addressing this question is of immediate practical importance. Since the local unitary circuits (in the language of EC, the `errors') which relate different wavefunctions in a topological phase can be completely generic --- as opposed to the {\it symmetric} local circuits relevant for SPT phases --- we expect that the EC procedure in this case must be one of {\it quantum} error correction. Whether or not the EC threshold in these phases can be made to coincide with the phase boundary is an outstanding question for the future. 

	More broadly, connections between EC and RG may have implications for the design of useful RG schemes in contexts very different from those considered in this paper. While the particular RG scheme we have employed incorporates spatial coarse-graining, the connection between RG and EC may be deeper than this, and approaching RG from the perspective of EC may give us fresh insights into systems for which there is no useful real-space RG scheme, such as (non)-Fermi liquids \cite{shankar1994renormalization} and fracton-inspired models with UV/IR mixing \cite{you2021fractonic,gorantla2021low,lake2022renormalization}. 
	
	In a different context, it is natural to ask whether the integration of EC into the MERA framework provides any new lessons for tensor network approaches to holography \cite{swingle2012entanglement,jahn2021holographic,hayden2016holographic}, or for the role that EC plays in quantum gravity more generally \cite{almheiri2015bulk,pastawski2015holographic,kim2017entanglement,yang2016bidirectional}. This direction seems especially promising given that the layout of our RG quantum circuits resembles the geometry of holographic tensor networks and that our elementary MPS tensor $A$ is perfect, potentially allowing for more general constructions of holographic error correcting codes~\cite{pastawski2015holographic}.
	
	\section*{Acknowledgements}
	
	The authors thank Iris Cong, Jeongwan Haah, Iman Marvian, Hannes Pichler, David Stephen, Ruben Verresen, and Ashvin Vishwanath for helpful discussions, and are particularly grateful to Iris Cong, Sarang Gopalakrishnan, and Nishad Maskara for detailed feedback and a careful reading of a draft version. E.L. was supported by the Fannie and John Hertz Foundation and by NSF grant DMR-2206305. S.B. was supported by the National Science Foundation Graduate Research Fellowship under Grant No. 1745302. 
	
	\begin{widetext}

		\bibliography{circuit_rg}

\begin{thebibliography}{70}
\expandafter\ifx\csname natexlab\endcsname\relax\def\natexlab#1{#1}\fi
\expandafter\ifx\csname bibnamefont\endcsname\relax
  \def\bibnamefont#1{#1}\fi
\expandafter\ifx\csname bibfnamefont\endcsname\relax
  \def\bibfnamefont#1{#1}\fi
\expandafter\ifx\csname citenamefont\endcsname\relax
  \def\citenamefont#1{#1}\fi
\expandafter\ifx\csname url\endcsname\relax
  \def\url#1{\texttt{#1}}\fi
\expandafter\ifx\csname urlprefix\endcsname\relax\def\urlprefix{URL }\fi
\providecommand{\bibinfo}[2]{#2}
\providecommand{\eprint}[2][]{\url{#2}}

\bibitem[{\citenamefont{Kadanoff}(1966)}]{kadanoff1966scaling}
\bibinfo{author}{\bibfnamefont{L.~P.} \bibnamefont{Kadanoff}},
  \bibinfo{journal}{Physics Physique Fizika} \textbf{\bibinfo{volume}{2}},
  \bibinfo{pages}{263} (\bibinfo{year}{1966}).

\bibitem[{\citenamefont{Wilson and Fisher}(1972)}]{wilson1972critical}
\bibinfo{author}{\bibfnamefont{K.~G.} \bibnamefont{Wilson}} \bibnamefont{and}
  \bibinfo{author}{\bibfnamefont{M.~E.} \bibnamefont{Fisher}},
  \bibinfo{journal}{Physical Review Letters} \textbf{\bibinfo{volume}{28}},
  \bibinfo{pages}{240} (\bibinfo{year}{1972}).

\bibitem[{\citenamefont{Wilson}(1975)}]{wilson1975renormalization}
\bibinfo{author}{\bibfnamefont{K.~G.} \bibnamefont{Wilson}},
  \bibinfo{journal}{Reviews of modern physics} \textbf{\bibinfo{volume}{47}},
  \bibinfo{pages}{773} (\bibinfo{year}{1975}).

\bibitem[{\citenamefont{Gell-Mann and Low}(1954)}]{gell1954quantum}
\bibinfo{author}{\bibfnamefont{M.}~\bibnamefont{Gell-Mann}} \bibnamefont{and}
  \bibinfo{author}{\bibfnamefont{F.~E.} \bibnamefont{Low}},
  \bibinfo{journal}{Physical Review} \textbf{\bibinfo{volume}{95}},
  \bibinfo{pages}{1300} (\bibinfo{year}{1954}).

\bibitem[{\citenamefont{Furuya et~al.}(2022)\citenamefont{Furuya, Lashkari, and
  Ouseph}}]{furuya2022real}
\bibinfo{author}{\bibfnamefont{K.}~\bibnamefont{Furuya}},
  \bibinfo{author}{\bibfnamefont{N.}~\bibnamefont{Lashkari}}, \bibnamefont{and}
  \bibinfo{author}{\bibfnamefont{S.}~\bibnamefont{Ouseph}},
  \bibinfo{journal}{Journal of High Energy Physics}
  \textbf{\bibinfo{volume}{2022}}, \bibinfo{pages}{1} (\bibinfo{year}{2022}).

\bibitem[{\citenamefont{Furuya et~al.}(2021)\citenamefont{Furuya, Lashkari, and
  Moosa}}]{furuya2021renormalization}
\bibinfo{author}{\bibfnamefont{K.}~\bibnamefont{Furuya}},
  \bibinfo{author}{\bibfnamefont{N.}~\bibnamefont{Lashkari}}, \bibnamefont{and}
  \bibinfo{author}{\bibfnamefont{M.}~\bibnamefont{Moosa}},
  \bibinfo{journal}{arXiv preprint arXiv:2112.05099}  (\bibinfo{year}{2021}).

\bibitem[{\citenamefont{G{\'o}mez}(2021)}]{gomez2021wilsonian}
\bibinfo{author}{\bibfnamefont{C.}~\bibnamefont{G{\'o}mez}},
  \bibinfo{journal}{Fortschritte der Physik} \textbf{\bibinfo{volume}{69}},
  \bibinfo{pages}{2000094} (\bibinfo{year}{2021}).

\bibitem[{\citenamefont{Almheiri et~al.}(2015)\citenamefont{Almheiri, Dong, and
  Harlow}}]{almheiri2015bulk}
\bibinfo{author}{\bibfnamefont{A.}~\bibnamefont{Almheiri}},
  \bibinfo{author}{\bibfnamefont{X.}~\bibnamefont{Dong}}, \bibnamefont{and}
  \bibinfo{author}{\bibfnamefont{D.}~\bibnamefont{Harlow}},
  \bibinfo{journal}{Journal of High Energy Physics}
  \textbf{\bibinfo{volume}{2015}}, \bibinfo{pages}{1} (\bibinfo{year}{2015}).

\bibitem[{\citenamefont{Pastawski et~al.}(2015)\citenamefont{Pastawski,
  Yoshida, Harlow, and Preskill}}]{pastawski2015holographic}
\bibinfo{author}{\bibfnamefont{F.}~\bibnamefont{Pastawski}},
  \bibinfo{author}{\bibfnamefont{B.}~\bibnamefont{Yoshida}},
  \bibinfo{author}{\bibfnamefont{D.}~\bibnamefont{Harlow}}, \bibnamefont{and}
  \bibinfo{author}{\bibfnamefont{J.}~\bibnamefont{Preskill}},
  \bibinfo{journal}{Journal of High Energy Physics}
  \textbf{\bibinfo{volume}{2015}}, \bibinfo{pages}{1} (\bibinfo{year}{2015}).

\bibitem[{\citenamefont{Kim and Kastoryano}(2017)}]{kim2017entanglement}
\bibinfo{author}{\bibfnamefont{I.~H.} \bibnamefont{Kim}} \bibnamefont{and}
  \bibinfo{author}{\bibfnamefont{M.~J.} \bibnamefont{Kastoryano}},
  \bibinfo{journal}{Journal of High Energy Physics}
  \textbf{\bibinfo{volume}{2017}}, \bibinfo{pages}{1} (\bibinfo{year}{2017}).

\bibitem[{\citenamefont{Yang et~al.}(2016)\citenamefont{Yang, Hayden, and
  Qi}}]{yang2016bidirectional}
\bibinfo{author}{\bibfnamefont{Z.}~\bibnamefont{Yang}},
  \bibinfo{author}{\bibfnamefont{P.}~\bibnamefont{Hayden}}, \bibnamefont{and}
  \bibinfo{author}{\bibfnamefont{X.-L.} \bibnamefont{Qi}},
  \bibinfo{journal}{Journal of High Energy Physics}
  \textbf{\bibinfo{volume}{2016}}, \bibinfo{pages}{1} (\bibinfo{year}{2016}).

\bibitem[{\citenamefont{Chen et~al.}(2022)\citenamefont{Chen, Zhang, Ji, Shen,
  Wang, Zeng, and Hung}}]{chen2022exact}
\bibinfo{author}{\bibfnamefont{L.}~\bibnamefont{Chen}},
  \bibinfo{author}{\bibfnamefont{H.-C.} \bibnamefont{Zhang}},
  \bibinfo{author}{\bibfnamefont{K.-X.} \bibnamefont{Ji}},
  \bibinfo{author}{\bibfnamefont{C.}~\bibnamefont{Shen}},
  \bibinfo{author}{\bibfnamefont{R.-s.} \bibnamefont{Wang}},
  \bibinfo{author}{\bibfnamefont{X.-d.} \bibnamefont{Zeng}}, \bibnamefont{and}
  \bibinfo{author}{\bibfnamefont{L.-Y.} \bibnamefont{Hung}},
  \bibinfo{journal}{arXiv preprint arXiv:2210.12127}  (\bibinfo{year}{2022}).

\bibitem[{\citenamefont{Cong et~al.}(2019)\citenamefont{Cong, Choi, and
  Lukin}}]{cong2019quantum}
\bibinfo{author}{\bibfnamefont{I.}~\bibnamefont{Cong}},
  \bibinfo{author}{\bibfnamefont{S.}~\bibnamefont{Choi}}, \bibnamefont{and}
  \bibinfo{author}{\bibfnamefont{M.~D.} \bibnamefont{Lukin}},
  \bibinfo{journal}{Nature Physics} \textbf{\bibinfo{volume}{15}},
  \bibinfo{pages}{1273} (\bibinfo{year}{2019}).

\bibitem[{\citenamefont{Cong et~al.}(2022)\citenamefont{Cong, Maskara, Tran,
  Pichler, Semeghini, Yelin, Choi, and Lukin}}]{cong2022enhancing}
\bibinfo{author}{\bibfnamefont{I.}~\bibnamefont{Cong}},
  \bibinfo{author}{\bibfnamefont{N.}~\bibnamefont{Maskara}},
  \bibinfo{author}{\bibfnamefont{M.}~\bibnamefont{Tran}},
  \bibinfo{author}{\bibfnamefont{H.}~\bibnamefont{Pichler}},
  \bibinfo{author}{\bibfnamefont{G.}~\bibnamefont{Semeghini}},
  \bibinfo{author}{\bibfnamefont{S.}~\bibnamefont{Yelin}},
  \bibinfo{author}{\bibfnamefont{S.}~\bibnamefont{Choi}}, \bibnamefont{and}
  \bibinfo{author}{\bibfnamefont{M.}~\bibnamefont{Lukin}},
  \bibinfo{journal}{arXiv preprint arXiv:2209.12428}  (\bibinfo{year}{2022}).

\bibitem[{\citenamefont{Herrmann et~al.}(2021)\citenamefont{Herrmann, Llima,
  Remm, Zapletal, McMahon, Scarato, Swiadek, Andersen, Hellings, Krinner
  et~al.}}]{herrmann2021realizing}
\bibinfo{author}{\bibfnamefont{J.}~\bibnamefont{Herrmann}},
  \bibinfo{author}{\bibfnamefont{S.~M.} \bibnamefont{Llima}},
  \bibinfo{author}{\bibfnamefont{A.}~\bibnamefont{Remm}},
  \bibinfo{author}{\bibfnamefont{P.}~\bibnamefont{Zapletal}},
  \bibinfo{author}{\bibfnamefont{N.~A.} \bibnamefont{McMahon}},
  \bibinfo{author}{\bibfnamefont{C.}~\bibnamefont{Scarato}},
  \bibinfo{author}{\bibfnamefont{F.}~\bibnamefont{Swiadek}},
  \bibinfo{author}{\bibfnamefont{C.~K.} \bibnamefont{Andersen}},
  \bibinfo{author}{\bibfnamefont{C.}~\bibnamefont{Hellings}},
  \bibinfo{author}{\bibfnamefont{S.}~\bibnamefont{Krinner}},
  \bibnamefont{et~al.}, \bibinfo{journal}{arXiv preprint arXiv:2109.05909}
  (\bibinfo{year}{2021}).

\bibitem[{\citenamefont{Bartlett et~al.}(2010)\citenamefont{Bartlett, Brennen,
  Miyake, and Renes}}]{bartlett2010quantum}
\bibinfo{author}{\bibfnamefont{S.~D.} \bibnamefont{Bartlett}},
  \bibinfo{author}{\bibfnamefont{G.~K.} \bibnamefont{Brennen}},
  \bibinfo{author}{\bibfnamefont{A.}~\bibnamefont{Miyake}}, \bibnamefont{and}
  \bibinfo{author}{\bibfnamefont{J.~M.} \bibnamefont{Renes}},
  \bibinfo{journal}{Physical review letters} \textbf{\bibinfo{volume}{105}},
  \bibinfo{pages}{110502} (\bibinfo{year}{2010}).

\bibitem[{\citenamefont{Miller and Miyake}(2015)}]{miller2015resource}
\bibinfo{author}{\bibfnamefont{J.}~\bibnamefont{Miller}} \bibnamefont{and}
  \bibinfo{author}{\bibfnamefont{A.}~\bibnamefont{Miyake}},
  \bibinfo{journal}{Physical review letters} \textbf{\bibinfo{volume}{114}},
  \bibinfo{pages}{120506} (\bibinfo{year}{2015}).

\bibitem[{\citenamefont{Vidal}(2008)}]{vidal2008class}
\bibinfo{author}{\bibfnamefont{G.}~\bibnamefont{Vidal}},
  \bibinfo{journal}{Physical review letters} \textbf{\bibinfo{volume}{101}},
  \bibinfo{pages}{110501} (\bibinfo{year}{2008}).

\bibitem[{\citenamefont{Aguado and Vidal}(2008)}]{aguado2008entanglement}
\bibinfo{author}{\bibfnamefont{M.}~\bibnamefont{Aguado}} \bibnamefont{and}
  \bibinfo{author}{\bibfnamefont{G.}~\bibnamefont{Vidal}},
  \bibinfo{journal}{Physical review letters} \textbf{\bibinfo{volume}{100}},
  \bibinfo{pages}{070404} (\bibinfo{year}{2008}).

\bibitem[{\citenamefont{Pollmann et~al.}(2010)\citenamefont{Pollmann, Turner,
  Berg, and Oshikawa}}]{pollmann2010entanglement}
\bibinfo{author}{\bibfnamefont{F.}~\bibnamefont{Pollmann}},
  \bibinfo{author}{\bibfnamefont{A.~M.} \bibnamefont{Turner}},
  \bibinfo{author}{\bibfnamefont{E.}~\bibnamefont{Berg}}, \bibnamefont{and}
  \bibinfo{author}{\bibfnamefont{M.}~\bibnamefont{Oshikawa}},
  \bibinfo{journal}{Physical review b} \textbf{\bibinfo{volume}{81}},
  \bibinfo{pages}{064439} (\bibinfo{year}{2010}).

\bibitem[{\citenamefont{Pollmann and Turner}(2012)}]{pollmann2012detection}
\bibinfo{author}{\bibfnamefont{F.}~\bibnamefont{Pollmann}} \bibnamefont{and}
  \bibinfo{author}{\bibfnamefont{A.~M.} \bibnamefont{Turner}},
  \bibinfo{journal}{Physical review b} \textbf{\bibinfo{volume}{86}},
  \bibinfo{pages}{125441} (\bibinfo{year}{2012}).

\bibitem[{\citenamefont{Gu and Wen}(2009)}]{gu2009tensor}
\bibinfo{author}{\bibfnamefont{Z.-C.} \bibnamefont{Gu}} \bibnamefont{and}
  \bibinfo{author}{\bibfnamefont{X.-G.} \bibnamefont{Wen}},
  \bibinfo{journal}{Physical Review B} \textbf{\bibinfo{volume}{80}},
  \bibinfo{pages}{155131} (\bibinfo{year}{2009}).

\bibitem[{\citenamefont{Levin and Nave}(2007)}]{levin2007tensor}
\bibinfo{author}{\bibfnamefont{M.}~\bibnamefont{Levin}} \bibnamefont{and}
  \bibinfo{author}{\bibfnamefont{C.~P.} \bibnamefont{Nave}},
  \bibinfo{journal}{Physical review letters} \textbf{\bibinfo{volume}{99}},
  \bibinfo{pages}{120601} (\bibinfo{year}{2007}).

\bibitem[{\citenamefont{Evenbly and Vidal}(2015)}]{evenbly2015tensor}
\bibinfo{author}{\bibfnamefont{G.}~\bibnamefont{Evenbly}} \bibnamefont{and}
  \bibinfo{author}{\bibfnamefont{G.}~\bibnamefont{Vidal}},
  \bibinfo{journal}{Physical review letters} \textbf{\bibinfo{volume}{115}},
  \bibinfo{pages}{180405} (\bibinfo{year}{2015}).

\bibitem[{\citenamefont{Albash and Lidar}(2018)}]{albash2018adiabatic}
\bibinfo{author}{\bibfnamefont{T.}~\bibnamefont{Albash}} \bibnamefont{and}
  \bibinfo{author}{\bibfnamefont{D.~A.} \bibnamefont{Lidar}},
  \bibinfo{journal}{Reviews of Modern Physics} \textbf{\bibinfo{volume}{90}},
  \bibinfo{pages}{015002} (\bibinfo{year}{2018}).

\bibitem[{\citenamefont{Chen et~al.}(2013)\citenamefont{Chen, Gu, Liu, and
  Wen}}]{chen2013symmetry}
\bibinfo{author}{\bibfnamefont{X.}~\bibnamefont{Chen}},
  \bibinfo{author}{\bibfnamefont{Z.-C.} \bibnamefont{Gu}},
  \bibinfo{author}{\bibfnamefont{Z.-X.} \bibnamefont{Liu}}, \bibnamefont{and}
  \bibinfo{author}{\bibfnamefont{X.-G.} \bibnamefont{Wen}},
  \bibinfo{journal}{Physical Review B} \textbf{\bibinfo{volume}{87}},
  \bibinfo{pages}{155114} (\bibinfo{year}{2013}).

\bibitem[{\citenamefont{Huang et~al.}(2015)\citenamefont{Huang, Chen
  et~al.}}]{huang2015quantum}
\bibinfo{author}{\bibfnamefont{Y.}~\bibnamefont{Huang}},
  \bibinfo{author}{\bibfnamefont{X.}~\bibnamefont{Chen}}, \bibnamefont{et~al.},
  \bibinfo{journal}{Physical Review B} \textbf{\bibinfo{volume}{91}},
  \bibinfo{pages}{195143} (\bibinfo{year}{2015}).

\bibitem[{\citenamefont{Schuch et~al.}(2011)\citenamefont{Schuch,
  P{\'e}rez-Garc{\'\i}a, and Cirac}}]{schuch2011classifying}
\bibinfo{author}{\bibfnamefont{N.}~\bibnamefont{Schuch}},
  \bibinfo{author}{\bibfnamefont{D.}~\bibnamefont{P{\'e}rez-Garc{\'\i}a}},
  \bibnamefont{and} \bibinfo{author}{\bibfnamefont{I.}~\bibnamefont{Cirac}},
  \bibinfo{journal}{Physical review b} \textbf{\bibinfo{volume}{84}},
  \bibinfo{pages}{165139} (\bibinfo{year}{2011}).

\bibitem[{\citenamefont{Senthil}(2015)}]{senthil2015symmetry}
\bibinfo{author}{\bibfnamefont{T.}~\bibnamefont{Senthil}},
  \bibinfo{journal}{Annu. Rev. Condens. Matter Phys.}
  \textbf{\bibinfo{volume}{6}}, \bibinfo{pages}{299} (\bibinfo{year}{2015}).

\bibitem[{\citenamefont{Jamadagni and
  Weimer}(2022{\natexlab{a}})}]{jamadagni2022operational}
\bibinfo{author}{\bibfnamefont{A.}~\bibnamefont{Jamadagni}} \bibnamefont{and}
  \bibinfo{author}{\bibfnamefont{H.}~\bibnamefont{Weimer}},
  \bibinfo{journal}{Physical Review B} \textbf{\bibinfo{volume}{106}},
  \bibinfo{pages}{085143} (\bibinfo{year}{2022}{\natexlab{a}}).

\bibitem[{\citenamefont{Jamadagni and
  Weimer}(2022{\natexlab{b}})}]{jamadagni2022error}
\bibinfo{author}{\bibfnamefont{A.}~\bibnamefont{Jamadagni}} \bibnamefont{and}
  \bibinfo{author}{\bibfnamefont{H.}~\bibnamefont{Weimer}},
  \bibinfo{journal}{Physical Review B} \textbf{\bibinfo{volume}{106}},
  \bibinfo{pages}{115133} (\bibinfo{year}{2022}{\natexlab{b}}).

\bibitem[{\citenamefont{Else et~al.}(2012)\citenamefont{Else, Schwarz,
  Bartlett, and Doherty}}]{else2012symmetry}
\bibinfo{author}{\bibfnamefont{D.~V.} \bibnamefont{Else}},
  \bibinfo{author}{\bibfnamefont{I.}~\bibnamefont{Schwarz}},
  \bibinfo{author}{\bibfnamefont{S.~D.} \bibnamefont{Bartlett}},
  \bibnamefont{and} \bibinfo{author}{\bibfnamefont{A.~C.}
  \bibnamefont{Doherty}}, \bibinfo{journal}{Physical review letters}
  \textbf{\bibinfo{volume}{108}}, \bibinfo{pages}{240505}
  (\bibinfo{year}{2012}).

\bibitem[{\citenamefont{Stephen}(2017)}]{stephen2017computational}
\bibinfo{author}{\bibfnamefont{D.~T.} \bibnamefont{Stephen}},
  \emph{\bibinfo{title}{Computational power of one-dimensional
  symmetry-protected topological phases}} (\bibinfo{year}{2017}).

\bibitem[{\citenamefont{Stephen et~al.}(2017)\citenamefont{Stephen, Wang,
  Prakash, Wei, and Raussendorf}}]{stephen2017computational2}
\bibinfo{author}{\bibfnamefont{D.~T.} \bibnamefont{Stephen}},
  \bibinfo{author}{\bibfnamefont{D.-S.} \bibnamefont{Wang}},
  \bibinfo{author}{\bibfnamefont{A.}~\bibnamefont{Prakash}},
  \bibinfo{author}{\bibfnamefont{T.-C.} \bibnamefont{Wei}}, \bibnamefont{and}
  \bibinfo{author}{\bibfnamefont{R.}~\bibnamefont{Raussendorf}},
  \bibinfo{journal}{Physical review letters} \textbf{\bibinfo{volume}{119}},
  \bibinfo{pages}{010504} (\bibinfo{year}{2017}).

\bibitem[{\citenamefont{Verstraete et~al.}(2005)\citenamefont{Verstraete,
  Cirac, Latorre, Rico, and Wolf}}]{verstraete2005renormalization}
\bibinfo{author}{\bibfnamefont{F.}~\bibnamefont{Verstraete}},
  \bibinfo{author}{\bibfnamefont{J.~I.} \bibnamefont{Cirac}},
  \bibinfo{author}{\bibfnamefont{J.~I.} \bibnamefont{Latorre}},
  \bibinfo{author}{\bibfnamefont{E.}~\bibnamefont{Rico}}, \bibnamefont{and}
  \bibinfo{author}{\bibfnamefont{M.~M.} \bibnamefont{Wolf}},
  \bibinfo{journal}{Physical review letters} \textbf{\bibinfo{volume}{94}},
  \bibinfo{pages}{140601} (\bibinfo{year}{2005}).

\bibitem[{\citenamefont{Else et~al.}(2013)\citenamefont{Else, Bartlett, and
  Doherty}}]{else2013hidden}
\bibinfo{author}{\bibfnamefont{D.~V.} \bibnamefont{Else}},
  \bibinfo{author}{\bibfnamefont{S.~D.} \bibnamefont{Bartlett}},
  \bibnamefont{and} \bibinfo{author}{\bibfnamefont{A.~C.}
  \bibnamefont{Doherty}}, \bibinfo{journal}{Physical Review B}
  \textbf{\bibinfo{volume}{88}}, \bibinfo{pages}{085114}
  (\bibinfo{year}{2013}).

\bibitem[{\citenamefont{de~Groot et~al.}(2021)\citenamefont{de~Groot, Turzillo,
  and Schuch}}]{de2021symmetry}
\bibinfo{author}{\bibfnamefont{C.}~\bibnamefont{de~Groot}},
  \bibinfo{author}{\bibfnamefont{A.}~\bibnamefont{Turzillo}}, \bibnamefont{and}
  \bibinfo{author}{\bibfnamefont{N.}~\bibnamefont{Schuch}},
  \bibinfo{journal}{arXiv preprint arXiv:2112.04483}  (\bibinfo{year}{2021}).

\bibitem[{\citenamefont{Tsui et~al.}(2015)\citenamefont{Tsui, Jiang, Lu, and
  Lee}}]{tsui2015quantum}
\bibinfo{author}{\bibfnamefont{L.}~\bibnamefont{Tsui}},
  \bibinfo{author}{\bibfnamefont{H.-C.} \bibnamefont{Jiang}},
  \bibinfo{author}{\bibfnamefont{Y.-M.} \bibnamefont{Lu}}, \bibnamefont{and}
  \bibinfo{author}{\bibfnamefont{D.-H.} \bibnamefont{Lee}},
  \bibinfo{journal}{Nuclear Physics B} \textbf{\bibinfo{volume}{896}},
  \bibinfo{pages}{330} (\bibinfo{year}{2015}).

\bibitem[{\citenamefont{Haegeman et~al.}(2012)\citenamefont{Haegeman,
  P{\'e}rez-Garc{\'\i}a, Cirac, and Schuch}}]{haegeman2012order}
\bibinfo{author}{\bibfnamefont{J.}~\bibnamefont{Haegeman}},
  \bibinfo{author}{\bibfnamefont{D.}~\bibnamefont{P{\'e}rez-Garc{\'\i}a}},
  \bibinfo{author}{\bibfnamefont{I.}~\bibnamefont{Cirac}}, \bibnamefont{and}
  \bibinfo{author}{\bibfnamefont{N.}~\bibnamefont{Schuch}},
  \bibinfo{journal}{Physical review letters} \textbf{\bibinfo{volume}{109}},
  \bibinfo{pages}{050402} (\bibinfo{year}{2012}).

\bibitem[{\citenamefont{Tasaki}(2018)}]{tasaki2018topological}
\bibinfo{author}{\bibfnamefont{H.}~\bibnamefont{Tasaki}},
  \bibinfo{journal}{Physical Review Letters} \textbf{\bibinfo{volume}{121}},
  \bibinfo{pages}{140604} (\bibinfo{year}{2018}).

\bibitem[{\citenamefont{Huang et~al.}(2021)\citenamefont{Huang, Kueng, Torlai,
  Albert, and Preskill}}]{huang2021provably}
\bibinfo{author}{\bibfnamefont{H.-Y.} \bibnamefont{Huang}},
  \bibinfo{author}{\bibfnamefont{R.}~\bibnamefont{Kueng}},
  \bibinfo{author}{\bibfnamefont{G.}~\bibnamefont{Torlai}},
  \bibinfo{author}{\bibfnamefont{V.~V.} \bibnamefont{Albert}},
  \bibnamefont{and} \bibinfo{author}{\bibfnamefont{J.}~\bibnamefont{Preskill}},
  \bibinfo{journal}{arXiv preprint arXiv:2106.12627}  (\bibinfo{year}{2021}).

\bibitem[{\citenamefont{Haah et~al.}(2017)\citenamefont{Haah, Harrow, Ji, Wu,
  and Yu}}]{haah2017sample}
\bibinfo{author}{\bibfnamefont{J.}~\bibnamefont{Haah}},
  \bibinfo{author}{\bibfnamefont{A.~W.} \bibnamefont{Harrow}},
  \bibinfo{author}{\bibfnamefont{Z.}~\bibnamefont{Ji}},
  \bibinfo{author}{\bibfnamefont{X.}~\bibnamefont{Wu}}, \bibnamefont{and}
  \bibinfo{author}{\bibfnamefont{N.}~\bibnamefont{Yu}}, \bibinfo{journal}{IEEE
  Transactions on Information Theory} \textbf{\bibinfo{volume}{63}},
  \bibinfo{pages}{5628} (\bibinfo{year}{2017}).

\bibitem[{\citenamefont{Aaronson}(2007)}]{aaronson2007learnability}
\bibinfo{author}{\bibfnamefont{S.}~\bibnamefont{Aaronson}},
  \bibinfo{journal}{Proceedings of the Royal Society A: Mathematical, Physical
  and Engineering Sciences} \textbf{\bibinfo{volume}{463}},
  \bibinfo{pages}{3089} (\bibinfo{year}{2007}).

\bibitem[{\citenamefont{Carleo and Troyer}(2017)}]{carleo2017solving}
\bibinfo{author}{\bibfnamefont{G.}~\bibnamefont{Carleo}} \bibnamefont{and}
  \bibinfo{author}{\bibfnamefont{M.}~\bibnamefont{Troyer}},
  \bibinfo{journal}{Science} \textbf{\bibinfo{volume}{355}},
  \bibinfo{pages}{602} (\bibinfo{year}{2017}).

\bibitem[{\citenamefont{Wang}(2016)}]{wang2016discovering}
\bibinfo{author}{\bibfnamefont{L.}~\bibnamefont{Wang}},
  \bibinfo{journal}{Physical Review B} \textbf{\bibinfo{volume}{94}},
  \bibinfo{pages}{195105} (\bibinfo{year}{2016}).

\bibitem[{\citenamefont{Carrasquilla and
  Melko}(2017)}]{carrasquilla2017machine}
\bibinfo{author}{\bibfnamefont{J.}~\bibnamefont{Carrasquilla}}
  \bibnamefont{and} \bibinfo{author}{\bibfnamefont{R.~G.} \bibnamefont{Melko}},
  \bibinfo{journal}{Nature Physics} \textbf{\bibinfo{volume}{13}},
  \bibinfo{pages}{431} (\bibinfo{year}{2017}).

\bibitem[{\citenamefont{Carleo et~al.}(2019)\citenamefont{Carleo, Cirac,
  Cranmer, Daudet, Schuld, Tishby, Vogt-Maranto, and
  Zdeborov{\'a}}}]{carleo2019machine}
\bibinfo{author}{\bibfnamefont{G.}~\bibnamefont{Carleo}},
  \bibinfo{author}{\bibfnamefont{I.}~\bibnamefont{Cirac}},
  \bibinfo{author}{\bibfnamefont{K.}~\bibnamefont{Cranmer}},
  \bibinfo{author}{\bibfnamefont{L.}~\bibnamefont{Daudet}},
  \bibinfo{author}{\bibfnamefont{M.}~\bibnamefont{Schuld}},
  \bibinfo{author}{\bibfnamefont{N.}~\bibnamefont{Tishby}},
  \bibinfo{author}{\bibfnamefont{L.}~\bibnamefont{Vogt-Maranto}},
  \bibnamefont{and}
  \bibinfo{author}{\bibfnamefont{L.}~\bibnamefont{Zdeborov{\'a}}},
  \bibinfo{journal}{Reviews of Modern Physics} \textbf{\bibinfo{volume}{91}},
  \bibinfo{pages}{045002} (\bibinfo{year}{2019}).

\bibitem[{\citenamefont{Levin and Wen}(2005)}]{levinwen}
\bibinfo{author}{\bibfnamefont{M.~A.} \bibnamefont{Levin}} \bibnamefont{and}
  \bibinfo{author}{\bibfnamefont{X.-G.} \bibnamefont{Wen}},
  \bibinfo{journal}{Phys. Rev. B} \textbf{\bibinfo{volume}{71}},
  \bibinfo{pages}{045110} (\bibinfo{year}{2005}).

\bibitem[{\citenamefont{Bravyi and Haah}(2011)}]{bravyi2011analytic}
\bibinfo{author}{\bibfnamefont{S.}~\bibnamefont{Bravyi}} \bibnamefont{and}
  \bibinfo{author}{\bibfnamefont{J.}~\bibnamefont{Haah}},
  \bibinfo{journal}{arXiv preprint arXiv:1112.3252}  (\bibinfo{year}{2011}).

\bibitem[{\citenamefont{Cian et~al.}(2022)\citenamefont{Cian, Hafezi, and
  Barkeshli}}]{cian2022extracting}
\bibinfo{author}{\bibfnamefont{Z.-P.} \bibnamefont{Cian}},
  \bibinfo{author}{\bibfnamefont{M.}~\bibnamefont{Hafezi}}, \bibnamefont{and}
  \bibinfo{author}{\bibfnamefont{M.}~\bibnamefont{Barkeshli}},
  \bibinfo{journal}{arXiv preprint arXiv:2209.14302}  (\bibinfo{year}{2022}).

\bibitem[{\citenamefont{Satzinger et~al.}(2021)\citenamefont{Satzinger, Liu,
  Smith, Knapp, Newman, Jones, Chen, Quintana, Mi, Dunsworth
  et~al.}}]{satzinger2021realizing}
\bibinfo{author}{\bibfnamefont{K.}~\bibnamefont{Satzinger}},
  \bibinfo{author}{\bibfnamefont{Y.-J.} \bibnamefont{Liu}},
  \bibinfo{author}{\bibfnamefont{A.}~\bibnamefont{Smith}},
  \bibinfo{author}{\bibfnamefont{C.}~\bibnamefont{Knapp}},
  \bibinfo{author}{\bibfnamefont{M.}~\bibnamefont{Newman}},
  \bibinfo{author}{\bibfnamefont{C.}~\bibnamefont{Jones}},
  \bibinfo{author}{\bibfnamefont{Z.}~\bibnamefont{Chen}},
  \bibinfo{author}{\bibfnamefont{C.}~\bibnamefont{Quintana}},
  \bibinfo{author}{\bibfnamefont{X.}~\bibnamefont{Mi}},
  \bibinfo{author}{\bibfnamefont{A.}~\bibnamefont{Dunsworth}},
  \bibnamefont{et~al.}, \bibinfo{journal}{Science}
  \textbf{\bibinfo{volume}{374}}, \bibinfo{pages}{1237} (\bibinfo{year}{2021}).

\bibitem[{\citenamefont{Semeghini et~al.}(2021)\citenamefont{Semeghini, Levine,
  Keesling, Ebadi, Wang, Bluvstein, Verresen, Pichler, Kalinowski, Samajdar
  et~al.}}]{semeghini2021probing}
\bibinfo{author}{\bibfnamefont{G.}~\bibnamefont{Semeghini}},
  \bibinfo{author}{\bibfnamefont{H.}~\bibnamefont{Levine}},
  \bibinfo{author}{\bibfnamefont{A.}~\bibnamefont{Keesling}},
  \bibinfo{author}{\bibfnamefont{S.}~\bibnamefont{Ebadi}},
  \bibinfo{author}{\bibfnamefont{T.~T.} \bibnamefont{Wang}},
  \bibinfo{author}{\bibfnamefont{D.}~\bibnamefont{Bluvstein}},
  \bibinfo{author}{\bibfnamefont{R.}~\bibnamefont{Verresen}},
  \bibinfo{author}{\bibfnamefont{H.}~\bibnamefont{Pichler}},
  \bibinfo{author}{\bibfnamefont{M.}~\bibnamefont{Kalinowski}},
  \bibinfo{author}{\bibfnamefont{R.}~\bibnamefont{Samajdar}},
  \bibnamefont{et~al.}, \bibinfo{journal}{Science}
  \textbf{\bibinfo{volume}{374}}, \bibinfo{pages}{1242} (\bibinfo{year}{2021}).

\bibitem[{\citenamefont{Verresen et~al.}(2021)\citenamefont{Verresen,
  Tantivasadakarn, and Vishwanath}}]{verresen2021efficiently}
\bibinfo{author}{\bibfnamefont{R.}~\bibnamefont{Verresen}},
  \bibinfo{author}{\bibfnamefont{N.}~\bibnamefont{Tantivasadakarn}},
  \bibnamefont{and}
  \bibinfo{author}{\bibfnamefont{A.}~\bibnamefont{Vishwanath}},
  \bibinfo{journal}{arXiv preprint arXiv:2112.03061}  (\bibinfo{year}{2021}).

\bibitem[{\citenamefont{Tantivasadakarn
  et~al.}(2022)\citenamefont{Tantivasadakarn, Vishwanath, and
  Verresen}}]{tantivasadakarn2022hierarchy}
\bibinfo{author}{\bibfnamefont{N.}~\bibnamefont{Tantivasadakarn}},
  \bibinfo{author}{\bibfnamefont{A.}~\bibnamefont{Vishwanath}},
  \bibnamefont{and} \bibinfo{author}{\bibfnamefont{R.}~\bibnamefont{Verresen}},
  \bibinfo{journal}{arXiv preprint arXiv:2209.06202}  (\bibinfo{year}{2022}).

\bibitem[{\citenamefont{Shankar}(1994)}]{shankar1994renormalization}
\bibinfo{author}{\bibfnamefont{R.}~\bibnamefont{Shankar}},
  \bibinfo{journal}{Reviews of Modern Physics} \textbf{\bibinfo{volume}{66}},
  \bibinfo{pages}{129} (\bibinfo{year}{1994}).

\bibitem[{\citenamefont{You and Moessner}(2021)}]{you2021fractonic}
\bibinfo{author}{\bibfnamefont{Y.}~\bibnamefont{You}} \bibnamefont{and}
  \bibinfo{author}{\bibfnamefont{R.}~\bibnamefont{Moessner}},
  \bibinfo{journal}{arXiv preprint arXiv:2106.07664}  (\bibinfo{year}{2021}).

\bibitem[{\citenamefont{Gorantla et~al.}(2021)\citenamefont{Gorantla, Lam,
  Seiberg, and Shao}}]{gorantla2021low}
\bibinfo{author}{\bibfnamefont{P.}~\bibnamefont{Gorantla}},
  \bibinfo{author}{\bibfnamefont{H.~T.} \bibnamefont{Lam}},
  \bibinfo{author}{\bibfnamefont{N.}~\bibnamefont{Seiberg}}, \bibnamefont{and}
  \bibinfo{author}{\bibfnamefont{S.-H.} \bibnamefont{Shao}},
  \bibinfo{journal}{Physical Review B} \textbf{\bibinfo{volume}{104}},
  \bibinfo{pages}{235116} (\bibinfo{year}{2021}).

\bibitem[{\citenamefont{Lake}(2022)}]{lake2022renormalization}
\bibinfo{author}{\bibfnamefont{E.}~\bibnamefont{Lake}},
  \bibinfo{journal}{Physical Review B} \textbf{\bibinfo{volume}{105}},
  \bibinfo{pages}{075115} (\bibinfo{year}{2022}).

\bibitem[{\citenamefont{Swingle}(2012)}]{swingle2012entanglement}
\bibinfo{author}{\bibfnamefont{B.}~\bibnamefont{Swingle}},
  \bibinfo{journal}{Physical Review D} \textbf{\bibinfo{volume}{86}},
  \bibinfo{pages}{065007} (\bibinfo{year}{2012}).

\bibitem[{\citenamefont{Jahn and Eisert}(2021)}]{jahn2021holographic}
\bibinfo{author}{\bibfnamefont{A.}~\bibnamefont{Jahn}} \bibnamefont{and}
  \bibinfo{author}{\bibfnamefont{J.}~\bibnamefont{Eisert}},
  \bibinfo{journal}{Quantum Science and Technology}
  \textbf{\bibinfo{volume}{6}}, \bibinfo{pages}{033002} (\bibinfo{year}{2021}).

\bibitem[{\citenamefont{Hayden et~al.}(2016)\citenamefont{Hayden, Nezami, Qi,
  Thomas, Walter, and Yang}}]{hayden2016holographic}
\bibinfo{author}{\bibfnamefont{P.}~\bibnamefont{Hayden}},
  \bibinfo{author}{\bibfnamefont{S.}~\bibnamefont{Nezami}},
  \bibinfo{author}{\bibfnamefont{X.-L.} \bibnamefont{Qi}},
  \bibinfo{author}{\bibfnamefont{N.}~\bibnamefont{Thomas}},
  \bibinfo{author}{\bibfnamefont{M.}~\bibnamefont{Walter}}, \bibnamefont{and}
  \bibinfo{author}{\bibfnamefont{Z.}~\bibnamefont{Yang}},
  \bibinfo{journal}{Journal of High Energy Physics}
  \textbf{\bibinfo{volume}{2016}}, \bibinfo{pages}{1} (\bibinfo{year}{2016}).

\bibitem[{\citenamefont{Raussendorf et~al.}(2017)\citenamefont{Raussendorf,
  Wang, Prakash, Wei, and Stephen}}]{raussendorf2017symmetry}
\bibinfo{author}{\bibfnamefont{R.}~\bibnamefont{Raussendorf}},
  \bibinfo{author}{\bibfnamefont{D.-S.} \bibnamefont{Wang}},
  \bibinfo{author}{\bibfnamefont{A.}~\bibnamefont{Prakash}},
  \bibinfo{author}{\bibfnamefont{T.-C.} \bibnamefont{Wei}}, \bibnamefont{and}
  \bibinfo{author}{\bibfnamefont{D.~T.} \bibnamefont{Stephen}},
  \bibinfo{journal}{Physical Review A} \textbf{\bibinfo{volume}{96}},
  \bibinfo{pages}{012302} (\bibinfo{year}{2017}).

\bibitem[{\citenamefont{Berkovich and Zhmud\_}(1998)}]{berkovich1998characters}
\bibinfo{author}{\bibfnamefont{I.~G.} \bibnamefont{Berkovich}}
  \bibnamefont{and} \bibinfo{author}{\bibfnamefont{E.}~\bibnamefont{Zhmud\_}},
  \emph{\bibinfo{title}{Characters of finite groups}}, vol.~\bibinfo{volume}{2}
  (\bibinfo{publisher}{American Mathematical Soc.}, \bibinfo{year}{1998}).

\bibitem[{\citenamefont{Karpilovsky}(1994)}]{karpilovsky1994group}
\bibinfo{author}{\bibfnamefont{G.}~\bibnamefont{Karpilovsky}},
  \emph{\bibinfo{title}{Group representations}}, vol.~\bibinfo{volume}{3}
  (\bibinfo{publisher}{Elsevier}, \bibinfo{year}{1994}).

\bibitem[{\citenamefont{Brown}(2012)}]{brown2012cohomology}
\bibinfo{author}{\bibfnamefont{K.~S.} \bibnamefont{Brown}},
  \emph{\bibinfo{title}{Cohomology of groups}}, vol.~\bibinfo{volume}{87}
  (\bibinfo{publisher}{Springer Science \& Business Media},
  \bibinfo{year}{2012}).

\bibitem[{\citenamefont{Cirac et~al.}(2021)\citenamefont{Cirac, Perez-Garcia,
  Schuch, and Verstraete}}]{cirac2021matrix}
\bibinfo{author}{\bibfnamefont{J.~I.} \bibnamefont{Cirac}},
  \bibinfo{author}{\bibfnamefont{D.}~\bibnamefont{Perez-Garcia}},
  \bibinfo{author}{\bibfnamefont{N.}~\bibnamefont{Schuch}}, \bibnamefont{and}
  \bibinfo{author}{\bibfnamefont{F.}~\bibnamefont{Verstraete}},
  \bibinfo{journal}{Reviews of Modern Physics} \textbf{\bibinfo{volume}{93}},
  \bibinfo{pages}{045003} (\bibinfo{year}{2021}).

\bibitem[{\citenamefont{Raussendorf et~al.}(2019)\citenamefont{Raussendorf,
  Okay, Wang, Stephen, and Nautrup}}]{raussendorf2019computationally}
\bibinfo{author}{\bibfnamefont{R.}~\bibnamefont{Raussendorf}},
  \bibinfo{author}{\bibfnamefont{C.}~\bibnamefont{Okay}},
  \bibinfo{author}{\bibfnamefont{D.-S.} \bibnamefont{Wang}},
  \bibinfo{author}{\bibfnamefont{D.~T.} \bibnamefont{Stephen}},
  \bibnamefont{and} \bibinfo{author}{\bibfnamefont{H.~P.}
  \bibnamefont{Nautrup}}, \bibinfo{journal}{Physical review letters}
  \textbf{\bibinfo{volume}{122}}, \bibinfo{pages}{090501}
  (\bibinfo{year}{2019}).

\bibitem[{\citenamefont{Friedman et~al.}(2022)\citenamefont{Friedman, Yin,
  Hong, and Lucas}}]{friedman2022locality}
\bibinfo{author}{\bibfnamefont{A.~J.} \bibnamefont{Friedman}},
  \bibinfo{author}{\bibfnamefont{C.}~\bibnamefont{Yin}},
  \bibinfo{author}{\bibfnamefont{Y.}~\bibnamefont{Hong}}, \bibnamefont{and}
  \bibinfo{author}{\bibfnamefont{A.}~\bibnamefont{Lucas}},
  \bibinfo{journal}{arXiv preprint arXiv:2206.09929}  (\bibinfo{year}{2022}).

\bibitem[{\citenamefont{Perez-Garcia et~al.}(2006)\citenamefont{Perez-Garcia,
  Verstraete, Wolf, and Cirac}}]{perez2006matrix}
\bibinfo{author}{\bibfnamefont{D.}~\bibnamefont{Perez-Garcia}},
  \bibinfo{author}{\bibfnamefont{F.}~\bibnamefont{Verstraete}},
  \bibinfo{author}{\bibfnamefont{M.~M.} \bibnamefont{Wolf}}, \bibnamefont{and}
  \bibinfo{author}{\bibfnamefont{J.~I.} \bibnamefont{Cirac}},
  \bibinfo{journal}{arXiv preprint quant-ph/0608197}  (\bibinfo{year}{2006}).

\bibitem[{\citenamefont{Prakash and Wei}(2014)}]{prakash2014ground}
\bibinfo{author}{\bibfnamefont{A.}~\bibnamefont{Prakash}} \bibnamefont{and}
  \bibinfo{author}{\bibfnamefont{T.-C.} \bibnamefont{Wei}},
  \bibinfo{journal}{arXiv preprint arXiv:1410.0974}  (\bibinfo{year}{2014}).

\end{thebibliography}
				
		\bs 
		\section*{\underline{Guide to appendices}}
		
		For the reader's conveniece, we list below a brief description of the technical appendicies to follow: 
		\begin{itemize}
			\item Appendix \ref{app:math}: Group theory details, introducing notation. 
			\item Appendix \ref{app:zn_details}: Representation theory details for finite Abelian groups $G$. 
			\item Appendix \ref{app:mps_technology}: MPS techniuqes and construction of the canonical reference states. 
			\item Appendix \ref{app:general_ground_states}: General representation of SPT ground states in terms of `errors' acting on the canonical reference states. 
			\item Appendix \ref{app:qcnn_details}: The details of our circuit architecture. 
			\item Appendix \ref{app:convergence}: Proof that the RG implemented by our RG circuit converges. 
			\item Appendix \ref{app:symm_breaking}: How to construct RG circuits for spontaneous symmetry breaking. 
			\item Appendix \ref{app:measurements}: Error correction with measurements and classical post-processing. 
			\item Appendix \ref{app:blocking}: How to deal with generic $G$-representations.
			\item Appendix \ref{app:sample_complexity}: Detailed analysis of sample complexity. 
			\item Appendix \ref{app:non_mnc}: Modified quantum circuits for cohomology classes which are not maximally non-commutative.
		\end{itemize}
		
		\bs 
		
		\hrule 
		
		\appendix 
		
		\section{Mathematical preliminaries } \label{app:math}
		
		In this appendix, we provide some of the background group theory details used in this work. 
		
		Let $G$ be a finite Abelian group. We will let $R_g$ denote the local symmetry action of $g\in G$, which until App. \ref{app:blocking} we take to be in the regular representation, acting on a Hilbert space of dimension $|G|$. Explicitly, 
		\be R_g = \sum_{h\in G} \c_h(g) \proj h.  \ee 
		
		Let $V^{(\o)}_g$ be a projective representation of $G$ with factor set $\o : G \times G \ra U(1)$, so that 
		\be V^{(\o)}_g V^{(\o)}_h = \o_{g,h} V^{(\o)}_{g+h}.\ee 
		For ease of notation, we will omit the dependence of $V_g$ on $\o$ when there is no cause for confusion. 
		
		Since the $V_g$ are only defined up to $g$-dependent phases, $\o_{g,h}$ is equivalent to $\o_{g,h} \a_g \a_h / \a_{g+h}$ for some set of phases $\a_g$. For all of the cases we will be interested in, $\o$ can always be chosen to satisfy 
		\be \o_{g,h} = \o_{-g,h}^* = \o_{g,-h}^*.\ee 
		We define the function
		\be  \l_\o(g,h) \equiv \frac{\o_{g,h}}{\o_{h,g}},\ee 
		which determines the group commutators of the $V_g$ via 
		\be V_gV_h = \l_\o(g,h) V_hV_g.\ee 
		Note that $\l_\o$ is invariant under transforming $\o_{g,h}$ by the phases $\a_g$, and due to the cocycle condition on $\o$, it is a homomorphism in both of its arguments:
		\be \l_\o(g+k,h) = \l_\o(g,h) \l_\o(k,h),\ee 
		and similarly for $\l_\o(g,h+k)$. 
		$\l_\o$ furthermore obeys the identities 
		\be \l_\o(g,e) = \l_\o(g,-g) = 1 ,\qq \l_\o(g,h) = \l_\o^*(h,g) = \l_\o^*(g,-h) = \l_\o(-g,-h),\ee
		from which one sees that Hermitian conjugates commute in the same way: $V_g^\da V_h^\da = \l_\o(g,h) V_h^\da V_g^\da$.

		We will also make use of the function 
		\be \vs_g \equiv \o(g,g) \ee 
		which can be used to convert between $V_g$ and $V_{-g}^\da$ via 
		\be \vs_g V_{-g} = V_{g}^\da.\ee 
		
		\begin{definition}
			The projective center \cite{de2021symmetry} $C_\o$ is defined to contain all those elements which are projectively represented by scalars: 
			\be C_\o \equiv \{ g\in G \, : \, \l_\o(g,h) = 1 \, \forall\, h \in G\}.\ee 
		\end{definition}

		We furthermore define a {\it maximally non-commutative} factor set \cite{else2013hidden} as one in which no nontrivial elements are projectively represented by scalars: 
		
		\begin{definition}
			A maximally non-commutative factor set is one with trivial projective center, viz. one for which $C_\o = \{ e\}$. 
		\end{definition}

		For a given factor set, we may construct a homomorphism $\L$ as 
		\bea \L_\o & : G \ra G^* \\ 
		g &\mt \l_\o(\cdot,g) \eea 
		where $G^*$ denotes the group of characters.  Then by definition, 
		\be C_\o = \ker(\L_\o).\ee 
		Since $\L_\o(g) : G \ra U(1)$, it can always be identified with some nontrivial character of $G$. 
		We will write the map which associates $g$ to the appropriate character as $\G_\o$, with $\G_\o(g)$ thus satisfying 
		\be \label{gammadef} \c_h(\G_\o(g)) = \l_\o(h,g) \, \, \forall \,\, h \in G.\ee 
		The element $\G_\o(g)$ can be identified explicitly using character orthogonality as 
		\be \k{\G_\o(g)} = \int_k \c_h^*(k) \l_\o(h,g) \k k, \ee 
		where $\int_k \equiv  \frac1{|G|} \sum_{k\in G}$.

		If $\o$ is a maximally non-commutative factor set, then $\L_\o$ and $\G_\o$ are injective. Thus since $|G^*| = |G|$, $\L_\o$ and $\G_\o$ are isomorphisms iff $\o$ is maximally non-commutative. This means that the knowledge of $\l_\o(g,h)$ for all $h$ is sufficient to uniquely determine the group element $g$ iff $\o$ is maximally non-commutative. This property greatly simplifies the error correcting technology used in the construction of our RG circuits (and also simplifies the analysis of SPT phases as resource states for measurement-based quantum computation \cite{else2012symmetry,raussendorf2017symmetry}), and is the reason why we choose to restrict our attention to maximally non-commutative cohomology classes throughout most of this work.

		We now state some useful math facts that we will not go through the trouble of proving; proofs can be extracted from \cite{berkovich1998characters}, \cite{karpilovsky1994group}:
		
		\begin{prop}
			The following facts hold: 
			\begin{itemize} 
				\item Maximally non-Abelian factor sets exist iff $G = G'\times G'$ is a square. 
				\item All projective representations with factor set $\o$ have dimension $ \sqrt{|G|/|C_\o|}$.
				\item Distinct projective irreps with factor set $\o$ are enumerated by the linear characters of the group $C_\o$; there are thus $|C_\o|$ distinct projective irreps. In particular, if $\o$ is maximally non-commutative, there is only a single projective irrep up to isomorphism.
			\end{itemize} 
		\end{prop}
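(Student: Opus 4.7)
The plan is to analyze everything through the twisted group algebra $\mathbb{C}^\omega[G]$, defined as the $|G|$-dimensional $\mathbb{C}$-algebra with basis $\{e_g\}_{g\in G}$ and multiplication $e_g e_h = \omega_{g,h}\, e_{g+h}$. A projective representation of $G$ with factor set $\omega$ is exactly a module over this algebra (with $V_g$ corresponding to $e_g$), so projective irreps correspond to simple modules of $\mathbb{C}^\omega[G]$. Since $\mathbb{C}^\omega[G]$ is semisimple (an averaging argument analogous to Maschke's theorem works because $G$ is finite), Artin--Wedderburn gives
\begin{equation}
\mathbb{C}^\omega[G] \cong \bigoplus_i M_{d_i}(\mathbb{C}), \qquad \sum_i d_i^2 = |G|,
\end{equation}
where the number of summands equals the $\mathbb{C}$-dimension of the center. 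All three bullets of the proposition will follow by computing this center and its structure using the commutation relation $e_g e_h = \lambda_\omega(g,h)\, e_h e_g$ established earlier in the appendix.

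For bullet (1), I would first observe that $\lambda_\omega$ is bimultiplicative and alternating (since $\lambda_\omega(g,g)=1$), with radical in either slot exactly $C_\omega = \ker(\Lambda_\omega)$. Hence $\lambda_\omega$ descends to a non-degenerate alternating bimultiplicative pairing on $G/C_\omega$, and the existence of a maximally non-commutative factor set on $G$ is equivalent to the existence of such a non-degenerate form on $G$ itself. Using the invariant factor decomposition $G \cong \prod_j \mathbb{Z}_{n_j}$, I would then construct a symplectic basis by a standard induction: pick $g_1$ whose image under $\Lambda_\omega$ has maximal order in $G^*$, find a partner $h_1$ with $\lambda_\omega(g_1,h_1)$ a primitive $n_1$-th root of unity, show $\langle g_1\rangle$ and $\langle h_1\rangle$ have equal order and pair perfectly, and split off the orthogonal complement. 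Iterating yields $G \cong H\times H$ with $\lambda_\omega$ the standard symplectic pairing. Conversely, on any $H \times H$ one writes down the explicit cocycle $\omega((h_1,h_1'),(h_2,h_2')) = \chi_{h_1'}(h_2)$ for a faithful pairing $H \times H \to U(1)$, which is immediately maximally non-commutative.

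For bullets (2) and (3), the first step is to pin down the center of $\mathbb{C}^\omega[G]$. An element $\sum_g c_g e_g$ is central iff $\lambda_\omega(g,h)\, c_g = c_g$ for all $h$, i.e.\ iff $c_g = 0$ whenever $g \notin C_\omega$. Thus the center has dimension $|C_\omega|$, and there are exactly $|C_\omega|$ simple modules. To see they share a single dimension, note that for $g \in C_\omega$ the element $e_g$ is central and so acts as a scalar $\chi(g)$ in any simple module by Schur's lemma. After a standard cocycle gauge fix --- using that $\omega|_{C_\omega\times C_\omega}$ is symmetric (because $\lambda_\omega$ vanishes on $C_\omega$) and that symmetric $U(1)$-valued $2$-cocycles on a finite abelian group are coboundaries --- one may assume $e_g e_h = e_{g+h}$ on $C_\omega$, so $\chi$ is a genuine linear character of $C_\omega$, matching the claimed enumeration. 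To compute the common dimension, I would push the problem down to $G/C_\omega$: the cocycle $\omega$ induces a maximally non-commutative cocycle $\bar\omega$ on $G/C_\omega$, so by bullet (1) $G/C_\omega \cong H\times H$ and $\mathbb{C}^{\bar\omega}[G/C_\omega]$ is the twisted algebra of a symplectic form. A Stone--von Neumann-style argument --- exhibited concretely by writing down the $|H|$-dimensional ``clock-and-shift'' irrep on $\mathbb{C}[H]$ and matching dimensions $|H|^2 = |G/C_\omega|$ --- shows this algebra has a unique irrep of dimension $|H|$. Combining with the character of $C_\omega$ gives one irrep of dimension $|H| = \sqrt{|G|/|C_\omega|}$ per choice of $\chi$, and the identity $|G| = |C_\omega|\cdot|H|^2$ is consistent with $\sum d_i^2 = |G|$.

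The main obstacle I expect is the symplectic decomposition step for bullet (1): when the invariant factors $n_j$ of $G$ are distinct, one has to check that partners $g_i, h_i$ can be chosen to generate direct summands of the \emph{same} order, so that the orthogonal-complement induction actually closes and yields a decomposition $G \cong H \times H$ rather than some mismatched product. This requires delicately combining the primary decomposition of $G$ with non-degeneracy of $\lambda_\omega$ prime-by-prime, and choosing $g_1$ so that its image in each cyclic factor of $G^*$ via $\Lambda_\omega$ admits a matching-order preimage. Once the symplectic basis is in hand, the remaining steps (center computation, Artin--Wedderburn, and the Heisenberg uniqueness) are largely formal, and the dimension and counting formulas of bullets (2) and (3) fall out immediately.
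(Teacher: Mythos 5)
The paper does not actually prove this Proposition: it is stated as a list of ``useful math facts that we will not go through the trouble of proving,'' with the reader referred to Berkovich--Zhmud and Karpilovsky. Your proposal supplies the standard proof from that literature, and it is sound. The twisted group algebra $\cc^\o[G]$ with $e_ge_h=\o_{g,h}e_{g+h}$ is semisimple, its center is spanned by $\{e_g : g\in C_\o\}$ (the centrality condition $c_g\l_\o(g,h)=c_g$ for all $h$ kills every $g\notin C_\o$), so Artin--Wedderburn gives exactly $|C_\o|$ blocks; quotienting by $e_g-\c(g)\unit$ for each character $\c$ of $C_\o$ (legitimate after gauging $\o|_{C_\o\times C_\o}$ to be trivial, which works because symmetric cocycles on a finite Abelian group are coboundaries) yields a twisted algebra of $G/C_\o$ with non-degenerate commutator form, hence trivial center, hence a single matrix block of size $\sqrt{|G/C_\o|}$ --- giving both the count and the common dimension $\sqrt{|G|/|C_\o|}$. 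Two small remarks: (i) you are right that the symplectic-basis induction for the first bullet is the only delicate step; the clean way to close it is to first split $\l_\o$ orthogonally over the Sylow subgroups (elements of coprime order automatically pair trivially) and then run the maximal-order induction within each $p$-group, which is Wall's theorem on alternating forms; (ii) once the block count $|C_\o|$ and the equal-dimension statement are in hand, the dimension formula already follows from $\sum_i d_i^2=|G|$, so the explicit clock-and-shift construction is only needed for the converse of bullet one, not for bullet two. An alternative shortcut to equal dimensions, if you want to avoid the quotient-algebra bookkeeping, is to note that $G^*$ acts on the set of $\o$-irreps by $V\mapsto \c\tp V$ and that this action is transitive, which forces all $d_i$ to coincide.
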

		
		One result that we could not find in the literature is the grand orthogonality theorem for projective representations:
		
		\begin{theorem}[Schur orthogonality for projective representations]
			For $V^\mu$ and $V^\nu$ any two projective irreps of $G$, 
			\be \label{shur_ortho} \int_g [(V^\mu_g)^\da]_{ij}[V^\nu_g]_{kl} =\frac1{\dim(V^\mu)} \d_{il}\d_{jk} \d_{\mu\nu}.\ee 
		\end{theorem}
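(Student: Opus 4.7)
The plan is to follow the standard Schur orthogonality proof for linear representations, being careful about the role of the factor set. I assume $V^\mu$ and $V^\nu$ are unitary projective irreps sharing the same factor set $\o$ (pairs with different factor sets give vanishing integrals by a separate argument via the central extension determined by $\o$, and the main content is the shared-factor-set case). For an arbitrary matrix $X$ of shape $\dim(V^\mu) \times \dim(V^\nu)$, define the twirl
\be M \equiv \int_g (V^\mu_g)^\da\, X\, V^\nu_g. \ee

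The first step is to verify the intertwining property. Using $V^{\mu/\nu}_g V^{\mu/\nu}_h = \o_{g,h} V^{\mu/\nu}_{g+h}$ together with $|\o_{g,h}|=1$, a direct calculation gives
\be (V^\mu_h)^\da M V^\nu_h = \int_g \o_{g,h}^* \o_{g,h}\, (V^\mu_{g+h})^\da X V^\nu_{g+h} = \int_{g'} (V^\mu_{g'})^\da X V^\nu_{g'} = M, \ee
so $V^\mu_h M = M V^\nu_h$ for every $h\in G$. The cancellation of the factor-set phase on both sides is precisely what forces $V^\mu$ and $V^\nu$ to share the same $\o$ for this argument to close.

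Next I would invoke a projective version of Schur's lemma to conclude that $M=0$ if $V^\mu\not\cong V^\nu$ (i.e. $\mu\neq\nu$) and $M = c(X)\,\unit$ if $\mu=\nu$. The projective Schur lemma follows from the usual eigenspace/invariant-subspace argument applied to $M$ — any eigenspace of $M$, if nonzero, is invariant under the projective $G$-action, and irreducibility forces $M$ to be either zero or a scalar multiple of the identity — or equivalently by lifting the $V^\mu$ to honest linear irreps of the central extension $\widetilde G_\o$ and applying the ordinary Schur lemma. In the $\mu=\nu$ case, the scalar $c(X)$ is fixed by taking the trace: using unitarity of $V^\mu_g$ and cyclicity, $\Tr M = \int_g \Tr X = \Tr X$, so $c(X) = \Tr(X)/\dim(V^\mu)$.

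Finally I would specialize to the matrix unit $X = E_{jk}$ defined by $[E_{jk}]_{ab} = \d_{aj}\d_{bk}$. The $(i,l)$ entry of $M$ is then exactly $\int_g [(V^\mu_g)^\da]_{ij}[V^\nu_g]_{kl}$, i.e. the left-hand side of Eq.~\eqref{shur_ortho}, while the right-hand side follows from $c(E_{jk}) = \d_{jk}/\dim(V^\mu)$ together with the vanishing of $M$ in the inequivalent case. The only non-routine step is the projective Schur lemma invoked above, and this is the main obstacle — but the reduction to the standard Schur lemma (either directly or via the central extension) is classical, and the rest of the argument is a mechanical transcription of the linear-representation proof.
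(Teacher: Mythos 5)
Your proposal is correct and takes essentially the same route as the paper: both construct the twirled intertwiner $\int_g (V^\mu_g)^\da E_{jk} V^\nu_g$, verify that it intertwines the two projective irreps, invoke Schur's lemma (valid for projective irreps with a common factor set), and fix the remaining scalar by taking a trace. Your handling of the phases --- conjugating by $V_h$ so that unitarity gives $\o_{g,h}^*\o_{g,h}=1$ immediately --- is a slightly cleaner bookkeeping than the paper's one-sided push-through, which must invoke the cocycle condition, but the argument is the same.
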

		\begin{proof}
			The proof is straightforward and mimicks the linear case: we construct an intertwiner with the map 
			\be 
			F_{ij}^{\mu\nu} = \int_g (V^\mu_g)^\da E_{ij} V_g^\nu,
			\ee 
			
			which satisfies 
			
			\bea
			V^\mu_h F_{ij}^{\mu\nu} &= \int_g \vs_g \o(h,-g) V^\mu_{-g+h} E_{ij} V_g^\nu  \\
			&= \int_g \vs_{g}\vs_{-g+h}^*\o_{h,-g} (V^\mu_{g-h})^\da E_{ij} V_{g-h+h}^\nu \\ &= \int_g \vs_{g}\vs_{-g+h}^* \o_{h,-g} \o_{g-h,h}^* ((V^\mu_{g-h})^\da E_{ij} V_{g-h}^\nu) V_h^\nu.
			\eea 
			
			Crucially, the factor 
			\be \vs_{g}\vs_{-g+h}^* \o_{h,-g} \o_{g-h,h}^* =1\ee 
			due to the cocycle condition $\d\o=1$, which expanded out is simply
			\be
			\omega(g_1, g_2) \omega(g_1+g_2, g_3) = \omega(g_2,g_3) \omega(g_1, g_2+g_3).
			\ee
			Therefore $[F^{\mu\nu}_{ij}]_{kl} = \d_{\mu,\nu} \d_{k,l} \l_{i,j}$ for some constant factor $\l_{i,j}$. This can be identified by computing 
			\be \l_{i,j} = \frac{\Tr[F_{ij}^{\mu\mu}]}{\dim(V^\mu)} = \frac{\d_{i,j}}{\dim(V^\mu)},\ee 
			thereby proving the claim. 
			
		\end{proof}
		
		\begin{corollary}
			If $V_g$ is a projective irrep and $\o$ is maximally non-commutative, then 
			\be {\rm Tr}[V_g] = \d_{g,e} \sqrt{|G|}.\ee 
			In particular, the only nonzero character of $V_g$ is the trivial character. 
		\end{corollary}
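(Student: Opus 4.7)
The plan is to extract this directly from the Schur orthogonality theorem that was just proved, by contracting indices to convert matrix-element orthogonality into character orthogonality. Take $\mu=\nu$ to be the unique projective irrep with factor set $\omega$ (unique up to isomorphism, since $\omega$ is maximally non-commutative, by the proposition above). In Eq.~\eqref{shur_ortho}, set $j=i$ and $l=k$, then sum over $i$ and $k$. The left-hand side becomes
$$\int_g \mathrm{Tr}[V_g^\dagger]\,\mathrm{Tr}[V_g] = \int_g \lvert \mathrm{Tr}[V_g]\rvert^2,$$
while the right-hand side collapses to $\frac{1}{\dim V}\sum_{i,k}\delta_{ik}\delta_{ik} = 1$. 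Multiplying through by $|G|$ gives $\sum_{g\in G}\lvert \mathrm{Tr}[V_g]\rvert^2 = |G|$.

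Next I would evaluate the $g=e$ contribution separately. Since $V_eV_e = \omega(e,e)V_e$ and $V_e$ is unitary, $V_e$ is a scalar multiple of $\mathbb{1}$; normalizing so that $V_e = \mathbb{1}$ (equivalently fixing the gauge $\omega(e,e)=1$), we get $\mathrm{Tr}[V_e]=\dim V$. By the proposition recalled above, $\dim V = \sqrt{|G|/|C_\omega|}$, which equals $\sqrt{|G|}$ precisely when $\omega$ is maximally non-commutative (so $|C_\omega|=1$). Hence $\lvert\mathrm{Tr}[V_e]\rvert^2 = |G|$ already saturates the sum $\sum_g \lvert\mathrm{Tr}[V_g]\rvert^2 = |G|$, leaving
$$\sum_{g\neq e} \lvert \mathrm{Tr}[V_g]\rvert^2 = 0,$$
which forces $\mathrm{Tr}[V_g]=0$ for every $g\neq e$. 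Assembling the two cases gives $\mathrm{Tr}[V_g] = \delta_{g,e}\sqrt{|G|}$, and the final sentence of the corollary is just a verbal restatement: the projective character vanishes except at the identity.

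There is no real obstacle here; the only small items to verify are that the gauge choice $V_e = \mathbb{1}$ is compatible with the conventions on $\omega$ used elsewhere in the appendix (it is, since the cocycle can always be rescaled by a coboundary to achieve this), and that the dimension formula from the proposition is being invoked with $|C_\omega|=1$. An alternative route that avoids Schur orthogonality entirely would be to note that $V_gV_h = \lambda_\omega(g,h)V_hV_g$ implies $V_gV_hV_g^\dagger = \lambda_\omega(g,h)V_h$; taking traces and using cyclicity yields $\mathrm{Tr}[V_h] = \lambda_\omega(g,h)\mathrm{Tr}[V_h]$ for every $g$, and maximal non-commutativity then directly supplies some $g$ with $\lambda_\omega(g,h)\neq 1$ whenever $h\neq e$, again forcing $\mathrm{Tr}[V_h]=0$. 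Either approach is essentially a one-line argument once the preceding machinery is in hand.
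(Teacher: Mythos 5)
Your main argument is exactly the paper's proof: contract the Schur orthogonality relation to get $\sum_{g\in G}|{\rm Tr}[V_g]|^2=|G|$, note ${\rm Tr}[V_e]=\dim V=\sqrt{|G|}$ saturates the sum in the maximally non-commutative case, and conclude ${\rm Tr}[V_g]=0$ for $g\neq e$. Your alternative one-liner via ${\rm Tr}[V_h]=\lambda_\o(g,h){\rm Tr}[V_h]$ is also correct and arguably cleaner, but the primary route is the same as the paper's, so there is nothing to fix.
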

		\begin{proof}
			We use Shur orthogonality to write 
			\bea \int_g |\Tr[V_g]|^2 & = \sum_{i,j} \int_g [V^\da_g]_{ii} [V_g]_{jj} = 1.\eea 
			Since $\int_g = \frac1{|G|} \sum_{g\in G}$ we must have 
			\be \sum_{g\in G} |\Tr[V_g]|^2 = |G|,\ee 
			analogously to the case of linear representations.  Now if $\o$ is maximally non-commutative, then $\dim(V) = \sqrt{|G|}$. Since $V_e = \unit_{\sqrt{|G|}}$, the above sum is then fully saturated by $g=e$, proving the claim. 		
		\end{proof}
		
		\section{Representation theory data for finite Abelian $G$ }\label{app:zn_details}
		
		In this appendix we provide explicit expressions for some of the representation theory data employed in the main text. We first focus on the case of $G = \zn \times \zn$, and later show how this can be generalized. 
		
		\ss{$G=\zn \times \zn$}
		
		The second cohomology classes of $G$ form the group $H^2(\zn\times \zn;U(1)) = \zn$. 
		Writing group elements $g\in \zn\times \zn$ as $\bm{g}=(g_1,g_2)$, we may always choose a gauge in which the factor sets are given by \cite{berkovich1998characters}
		\be \o_{g,h} = \z^{\o g_2h_1},\ee 
		where $\z \equiv e^{2\pi i /N}$ and where we have abused notation slightly by letting $\o$ in the exponent denote the element of $\zn$ corresponding to the cohomology class $\o$. 
		The commutator function $\l_\o$ is correspondingly 
		\be \l_\o(g,h) = \z^{-\o \bfg \times \bfh} = \z^{-\o (g_1h_2-g_2h_1)}.\ee
		With the standard choice of character $\c_g(h) = \z^{\bfg\cdot\bfh}$, we then have 
		\be \label{zngamma} \G_\o((g_1,g_2)) = \o \cdot (-g_2,g_1),\qq \vs((g_1,g_2)) = \z^{\o g_1g_2}.\ee 
		Explicit linear and projective representations consistent with the above choice of factor set are afforded by 
		\be R_g = Z^{g_1} \tp Z^{g_2},\qq V_g^\o = X^{g_1} Z^{\o g_2},\ee 
		where 
		\be X \equiv \sum_a \kb{a+1}a,\qq Z\equiv \sum_a \z^a \proj a\ee 
		are the usual $\zn$ clock and shift matrices, satisfying $ZX = \z XZ$. 
		
		By attempting to invert $\G_\o$, we see that $\o$ is maximally non-commutative iff $\o$ is invertible in $\zn$, i.e. iff $\o$ and $N$ are relatively prime \cite{else2013hidden}, in which case $\G_\o\inv((g_1,g_2)) = \frac1\o(g_2,-g_1)$ (note that $\o$ is always maximally non-commutative if $N$ is prime). By the results of App. \ref{app:math}, all possible projective representations are then gauge-equivalent to the $V^\o_g$ written above.
		
		\ss{General Abelian groups} 
		
		Let $G = \prod_{i=1}^M \zz_{N_i}$,
		and write elements in $G$ as $M$-component vectors $(g_1,\dots,g_M)$.  	
		The calculation of $H^2(G;U(1))$ can be accomplished by repeated use of the Kunneth formula \cite{brown2012cohomology}
		\be 0 \ra \bigoplus_{k=0}^n H^k(G;\zz) \tp_\zz H^{n-k}(K;\zz) \ra H^n(G\times K;\zz) \ra \bigoplus_{k=0}^{n+1} {\rm Tor}[H^{n+1-k}(G;\zz), H^k(K;\zz)]\ra0,\ee 
		together with use of $H^k(G) \equiv H^k(G;U(1)) = H^{k+1}(G;\zz)$, $\zn \tp_\zz \zz_M = \zz_{(N,M)}$ (with $(N,M)\equiv \gcd(N,M)$), and the fact that 
		\be H^k(\zn;\zz) = \begin{dcases}
			\zn \qq & k \in 2\nn \\ 0\qq & k \in 2\nn+1 \\ \zz & k = 0 
		\end{dcases}.\ee 
		Putting these together, we obtain 
		\be H^2(G;U(1)) = \prod_{j<k}^M \zz_{(N_j,N_k)}.\ee 
		As a straightforward generalization of the $\zn^2$ case, representatives for the distinct factor sets are provided by 
		\be \o_{g,h} = \prod_{1\leq i < j \leq M} \z_{(N_i,N_j)}^{\hat\o_{ij} g_ih_j},\ee 
		where $\hat \o$ is now a symmetric matrix with entries $\hat\o_{ij} \in \zz_{(N_i,N_j)}$. The commutator function is consequently 
		\be \l_\o(g,h) = \prod_{1\leq i < j \leq M} \z_{(N_i,N_j)}^{\hat\o_{ij}(g_ih_j - h_i g_j)},\ee 
		from which $\o$ is seen to be maximally non-commutative if $\hat \o_{ij}$ is relatively prime to $(N_i,N_j)$ for all $i,j$. 
		Explicit forms for the projective representations with maximally non-commutative $\o$ can be constructed by taking appropriate products of the expressions in the $\zn\times \zn$ case.

		\section{MPS techniques and the construction of $\k{\Psi_\o}$ } \label{app:mps_technology} 
		
		In this section we describe how to construct the canonical reference states $\k{\Psi_\o}$ for each SPT phase, and prove some relations involving the shift operators $S^{L/R}_g$ introduced in the main text. 
		
		\ss{MPS tensors} 
		
		In all of what follows, we take the physical indices of our MPS tensors to be indexed by elements of $G$, with the physical onsite Hilbert space $\mch$ having dimension $|G|$.  
		We will always choose a basis in which the onsite symmetry action $R_g$ is diagonal, with 
		\be \label{correctiblerh} R_g = \sum_{h\in G} \chi_g(h) \proj h,\ee  
		where $\chi_g(\cdot)$ is the linear character on $G$ dual to the element $g$. 
		
		Since the physical indices of our MPS states are indexed by $G$, an arbitrary MPS tensor $A$ may be written 
		\be A = \sum_{g\in G} \k{g} A^g,\ee 
		where $A^g$ is a $\chi\times \chi$ matrix in ${\rm End}(\mch_{\text{virt}})$, with $\mch_{\text{virt}}$ the $\c$-dimensional virtual space of the MPS. All MPS tensors descrbing MPS wavefunctions in $\spt_\o$ must be such that the symmetry pushes from a linear action on the physical leg to a projective action on the virtual legs \cite{cirac2021matrix}, with 
		\be \label{explicit_pushthrough} \sum_{h\in G} R_g \k{h} A^h = \sum_{h\in G} \k h (V^\da_g A^h V_g)\ee 
		for some $\c$-dimensional projective representation $V_g$ of $G$ with factor set $\o$ (which neccessitates that $\c\geq \sqrt{|G|}$ if $\o$ is nontrivial). We will find it helpful to use notation where matrices acting on the left of $A$ denote an action on $A$'s physical index, while matrices of the form $(X\tp Y)$ which act on the right of $A$ denote an action on $A$'s left virtual index by $X$ and right virtual index by $Y$. In this notation Eq.~\eqref{explicit_pushthrough} reads 
		\be R_g A = A (V_g^\da \tp V_g).\ee 
		Graphically, 
		\be \label{pushthrough} \igptfc{figures/pushthrough}\ee 
		where the grey box denotes an MPS tensor whose physical leg is indicated by a doubled line (since in the present MNC case, $\dim \mch = (\dim \mch_{virt})^2$). In our notation MPS tensors with downward-pointing physical legs will tacitly assumed to be complex conjugated; thus $R_g$ pushes `up' through MPS tensors as  
		\be \igptfc{figures/pushthrough_top}\ee

		Since there is no spontaneous symmetry breaking (SSB) in the models we are considering, all MPS tensors will be required to be injective \cite{cirac2021matrix}, meaning that some integer number $n$ of $A$ tensors can be blocked together so that the MPS tensor $\sum_{\bfg \in G^n} \k{\bfg} (A^{g_1} \cdots A^{g_n})$ defines an isometry on $\mch_{\text{virt}}^{\tp 2}$. 
		
		We now discuss the construction of the canonical reference states $\k{\Psi_\o}$. 
		The MPS tensors that appear in our definition of $\k{\Psi_\o}$ will in fact satisfy a much stronger property than injectivity: 
		\begin{definition}[perfect MPS]
			An MPS tensor $A$ is said to be perfect if it is a perfect tensor when viewed as a map $\mch_{virt}^{\tp 2} \ra \mch$. That is, $A$ is perfect if both 
			\be \label{virtis} {\rm Tr}[(A^{g})^\da A^h] \propto \d_{g,h} \ee 
			(isometry when virtual legs are contracted), and 
			\be\label{inj} \sum_{g\in G} [A^g]_{ij} [A^g]^*_{kl} \propto \d_{ik}\d_{jl} \ee 
			(isometry when physical legs are contracted). 
		\end{definition}
		Graphically, Eq.~\eqref{virtis} means (from now on all grey boxes drawn will indicate canonical MPS tensors $A_\o$, unless indicated otherwise)
		\be \includegraphics[width=.2\tw]{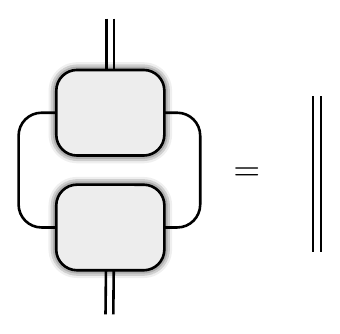} \ee 
		while Eq.~\eqref{inj} means 
		\be \label{graphical_inj} \includegraphics[width=.25\tw]{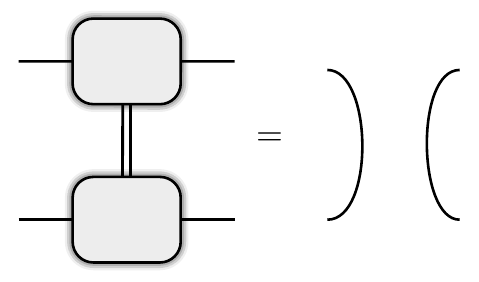} \ee 
		We will refer to a translation-invariant MPS wavefunction constructed from perfect MPS tensors as a `perfect MPS'. Note that by Eq.~\eqref{inj}, a perfect MPS tensor is automatically injective without the need for any blocking. 
		
		It is very easy to write down an explicit expression for a perfect MPS tensor in the maximally non-commutative case:	
		\begin{prop}[maximally non-commutative MPS tensors]
			For maximally non-commutative $\o$, one may always construct a perfect MPS with bond dimension $\sqrt{|G|}$ and MPS tensors 
			\be\label{canonical_mps_def} A_\o = \sum_{g\in G} \k{g}  V_{\G_\o\inv(g)}^\da = \sum_{g\in G} \vs_g \k g V_{-\G_\o\inv(g)}.\ee
		\end{prop}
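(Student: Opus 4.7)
My proof plan is to verify the two isometry conditions \eqref{virtis} and \eqref{inj} defining a perfect MPS tensor by direct substitution of the ansatz $A^{[g]}_\o = V^\da_{\G_\o\inv(g)}$, invoking the projective Schur orthogonality relation Eq.~\eqref{shur_ortho} together with its corollary that $\Tr[V_g] = \d_{g,e}\sqrt{|G|}$ whenever $\o$ is maximally non-commutative. The hypothesis of maximal non-commutativity is used in two essential ways: it guarantees that $\G_\o:G\to G$ is a bijection (so that $\G_\o\inv$ is a well-defined permutation of $G$, allowing the summation variable to be relabeled), and it fixes $\dim V = \sqrt{|G|}$, giving both the advertised bond dimension and the nondegenerate prefactor in Schur orthogonality. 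The second form of $A_\o$ quoted in the statement is an immediate consequence of the identity $V_a^\da = \vs_a V_{-a}$ applied with $a = \G_\o\inv(g)$.

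For the physical-leg isometry \eqref{inj}, I would substitute the ansatz, change the summation variable from $g$ to $a = \G_\o\inv(g)$, and use $[V^\da_a]^*_{kl} = [V_a]_{lk}$ to rewrite the left-hand side as $\sum_{a\in G}[V^\da_a]_{ij}[V_a]_{lk}$. A single application of Eq.~\eqref{shur_ortho} with $\mu=\nu$ then collapses this to $\sqrt{|G|}\,\d_{ik}\d_{jl}$, as required.

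For the virtual-leg isometry \eqref{virtis}, I would compute
\[
\Tr\!\left[(A^{[g]})^\da A^{[h]}\right] = \Tr\!\left[V_{\G_\o\inv(g)}\,V^\da_{\G_\o\inv(h)}\right],
\]
rewrite $V^\da_{\G_\o\inv(h)} = \vs_{\G_\o\inv(h)} V_{-\G_\o\inv(h)}$, and fuse the product using $V_a V_c = \o_{a,c} V_{a+c}$ to reduce the trace to an overall phase times $\Tr[V_{\G_\o\inv(g)-\G_\o\inv(h)}]$. The corollary to projective Schur orthogonality then kills this trace unless $\G_\o\inv(g) = \G_\o\inv(h)$, and by bijectivity of $\G_\o$ this occurs iff $g=h$, giving the desired proportionality to $\d_{g,h}$.

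The \textbf{main obstacle} is really conceptual rather than technical: one has to recognize that the maximal non-commutativity assumption simultaneously supplies the invertibility of $\G_\o$ needed to reindex sums over $G$ by $G$ and the unique, full-rank projective irrep needed to collapse traces of $V$. No substantive analytic difficulty arises once the orthogonality and trace identities from App.~\ref{app:math} are in hand. The push-through identity Eq.~\eqref{pushthrough}, which is implicitly used to interpret $A_\o$ as an SPT reference state, is a separate consequence of the defining relation $\c_h(\G_\o(g)) = \l_\o(h,g)$ for $\G_\o$ and is logically independent of the isometry arguments above.
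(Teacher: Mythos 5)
Your proposal is correct and follows essentially the same route as the paper: both verify the physical-leg isometry via projective Schur orthogonality, the virtual-leg condition via the trace identity $\Tr[V_k]\propto\d_{k,e}$ together with bijectivity of $\G_\o$, and both reduce the pull-through property to the defining relation $\c_h(\G_\o(g))=\l_\o(h,g)$. No gaps.
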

		Note that $A_\o$ has the smallest allowable bond dimension, and thus defines a state in $\spt_\o$ with the smallest possible entanglement. The fact that the bond dimension of $A_\o$ is the square root of the physical dimension of $A_\o$ is the reason for indicating physical legs with doubled lines in our graphical notation. 
		
		That the above choice of $A_\o$ works is proven by noticing that the phases generated when commuting $V_g$s past one another can always be transformed into the linear character $\c_g(h)$ appearing in the action of $R_g$:
		\begin{proof}
			Injectivity of $A_\o$ can be checked using Schur orthogonality \eqref{shur_ortho}: 
			\be\label{correctible_inject} \sum_{g\in G} (A^g_\o)^\da \tp A_\o^g = |G|\int_g  \sum_{ijkl} \kb{ij}{kl} [V_g]_{ij} [V_g]^*_{lk} = \sqrt{|G|}\sum_{ij} \kb{ij}{ij} = \sqrt{|G|}\unit_{\text{virt}}^{\tp 2}.\ee  
			That $A_\o$ is perfect then follows from the fact that 
			\be \Tr[(A_\o^g)^\da A_\o^h] \kb{g}{h} =  \kb{g}{h} \Tr[V^\da_{\G\inv_\o(g)}V_{\G\inv_\o(h)}] = \sqrt{|G|}\d_{g,h}\ee 
			
			We then verify the pull-through property \eqref{explicit_pushthrough}. Conjugating the $V_{\G\inv_\o(g)}$ by $V_h$s at the cost of a $\l_\o$ term, we have 
			\bea R_h A_\o = \sum_{g\in G} \k g \frac{\c_h(g)}{\l_\o(h,\G\inv_\o(g))} V_h^\da V_{\G_\o\inv(g)}^\da V_h . \eea
			But by definition of the map $\G_\o$ \eqref{gammadef}, we have 
			\be \l_\o(h,\G_\o\inv(g)) = \c_h(g),\ee 
			and thus we have $R_h A_\o = A_\o(V_h^\da\tp V_h)$ as desired. 
		\end{proof}
		
		We define the canonical representative of the phase $\spt_\o$ using these MPS tensors: 
		
		\begin{definition}[canonical reference state of an SPT phase]
			For a given maximally non-commutative cohomology class $\o$, the canonical reference state of the phase $\spt_\o$ is defined as
			\be \label{canonical_def} \k{\Psi_\o} \equiv \sum_{\bfg\in G^L} {\rm Tr}[A^{g_1}_\o \cdots A_\o^{g_L}]\k{\bfg},\ee 
			where $A_\o$ are the MPS tensors of \eqref{canonical_mps_def}. 
		\end{definition}
		
		A useful fact we will employ later, which follows easily from the injectivity of $A_\o$ and $\Tr[V_g] \propto \d_{g,e}$, is 
		\be \label{rgoverlap} \lan \Psi_\o | R_\bfg | \Psi_\o\ran = \prod_i \d_{g_i,e}.\ee

		\ss{Shift operators and parent Hamiltonians} 
		
		We now define the right and left shift operators employed in the main text, and show how they can be used to construct a Hamiltonian whose ground state is $\k{\Psi_\o}$. 
		\begin{definition}[right shift operator]
			We define the right shift operator as 
			\be S^R_g = \sum_{h\in G}  \o(g,\G\inv_\o(h)+g) \vs_{-g}^* \kb{h+\G_\o(g)}{h} = \sum_{h\in G} \o(g,\G\inv_\o(h)) \vs_{-g}^* \kb{h}{h-\G_\o(g)}.\ee
			
		\end{definition}
		This operator is designed to satisfy 
		\bea \label{sright} A_\o(\unit \tp V_g) & = \sum_{h\in G} \k h V_{\G\inv_\o(h)}^\da V_{-g}^\da \vs^*_{-g} \\ 
		& = \sum_{h\in G} \k{h+\G_\o(g)} V_{\G\inv_\o(h)}^\da\o(-g,\G\inv_\o(h)+g)^* \vs^*_{-g} \\ 
		& = S^R_g A_\o.\eea
		Graphically, 
		\be \igptfc{figures/right_shift}.\ee 
		Given that $S_R$ is unitary, $(S^R_g)^\da$ can be used to correct for an `error' of $V_g$ occurring on the right virtual leg of $A_\o$, via 
		\be (S_g^R)^\da A_\o (1\tp V_g) = A_\o.\ee 
		Note that the $S^R_g$ obey the same fusion rule as the projective representations themselves, 
		\be S^R_g S^R_h = \o_{g,h}S^R_{g+h}, \ee 
		so that 
		\be \label{srcomm} S^R_g S^R_h = \l_\o(g,h) S^R_h S^R_g .\ee 
		Furthermore, $S^R_g$ picks up the same phase as $V_g$ under Hermitian conjugation:
		\be (S^R_g)^\da = \vs_g S^R_{-g},\ee
		and it commutes with the linear symmetry operators as 
		\be \label{srwithr} S^R_g R_h = \l_\o(g,h) R_h S^R_g.\ee 
		These equations can all be proved by evaluating the action of both sides on $A_\o$; that this is sufficient follows from the fact that $A_\o$ is perfect.

		Similarly,
		\begin{definition}[left shift operator] We define the left shift operator as 
			\bea S^L_g =  \vs_gR_g S^R_{-g} = \sum_{h\in G} \o(g-\G\inv_\o(h),g) \kb{h-\G_\o(g)}{h} = \sum_{h\in G} \o(\G\inv_\o(h),g)^* \kb{h}{h+\G_\o(g)}.\eea 
		\end{definition}
		
		$S^L_g$ is designed to satisfy 
		\bea \label{sleft} A_\o(V_g^\da \tp \unit) & = R_g A_\o (\unit \tp V_g^\da) \\ 
		& = \vs_g R_g A_\o(\unit \tp V_{-g}) \\ 
		& = \vs_gR_g S^R_{-g} A_\o \\ 
		&= S^L_gA_\o.\eea
		Graphically, 
		\be \igptfc{figures/left_shift}. \ee  
		Therefore an `error' of $V_g^\da$ occurring on the left virtual leg of $A_\o$ can be corrected by way of 
		\be (S^L_g)^\da  A_\o(V_g^\da \tp \unit) = A_\o.\ee 
		The $S^L$ operators combine in the opposite way as the projective representations, since they put the projective representations in `reversed' order on the left legs: 
		\be S^L_g S^L_h = \o_{h,g}^* S^L_{g+h}.\ee 
		Therefore (c.f. \eqref{srcomm})
		\be S^L_g S^L_h =\l_\o(h,g) S^L_h S^L_g,\ee 
		and 
		\be \label{slwithr} S^L_g R_h = \l_{h,g} R_h S^L_g.\ee

		Note that the $S^{L/R}_g$ are designed to `fractionalize' the linear symmetry action, in that 
		\be \label{fracd_rg} R_g = S^L_g S^R_g.\ee  
		We also see that $\G_\o(g)$ has the meaning of the linear representation of $G$ that the shift operators transform under the adjoint action of $G$: indeed, by \eqref{srwithr} and \eqref{slwithr}, 
		\be \label{shift_irreps} R_h^\da S^L_g R_h =  \c_{\G_\o(g)}(h) S^L_g ,\qq R_h^\da S^R_g R_h = \c^*_{\G_\o(g)}(h) S^R_g.\ee 
		Thus $S^L_g, S^R_g$ transform in conjugate representations, as befitting the fact that their product (viz. $R_g$) transforms trivially under the adjoint action, by virtue of $G$ being Abelian. 
		
		The left and right shift operators can be used to construct string operators, as well as a Hamiltonian whose ground state is $\k{\Psi_\o}$: 
		
		\begin{prop}[Parent Hamiltonian] 
			A commuting projector Hamiltonian with ground state $\k{\Psi_\o}$ is 
			\be H = -\sum_i \sum_{g\in G} S_{g,i-1}^R S_{g,i}^L.\ee
		\end{prop}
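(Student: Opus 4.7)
The plan is to verify three things: (i) the terms $h_i \equiv \sum_g S^R_{g, i-1} S^L_{g, i}$ mutually commute, (ii) each $h_i$ equals $|G|$ times a projector, and (iii) $\k{\Psi_\o}$ is the unique ground state of $H = -\sum_i h_i$. The first is nearly automatic: for $|i - j|\geq 2$ the two terms act on disjoint sites, while for $j = i+1$ the only overlap is site $i$, where $S^L_{g,i}$ and $S^R_{h,i}$ commute because both orderings $S^L_g S^R_h$ and $S^R_h S^L_g$ applied to $A_\o$ produce the same tensor $A_\o(V_g^\da \otimes V_h)$ by the pushthrough identities, and injectivity of $A_\o$ promotes this to an equality of operators on the on-site Hilbert space.

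For (ii), set $P_g^{(i)} \equiv S^R_{g, i-1} S^L_{g, i}$. The key claim is that the $\{P_g^{(i)}\}_{g\in G}$ form a genuine \emph{linear} representation of $G$, i.e.\ $P_g^{(i)} P_h^{(i)} = P_{g+h}^{(i)}$ with no cocycle. To establish this I will apply the pushthrough identity twice on $A_\o$ to derive that $S^R$ composes with the same cocycle $\o_{g,h}$ as the projective representation itself, whereas $S^L$ composes with the complex conjugate cocycle $\o^*_{g,h}$---the reversal arising because iterated virtual-leg insertions on the right virtual leg stack naturally as $V_g V_h$, while iterated insertions on the left virtual leg stack in the reversed order $V_h^\da V_g^\da = (V_g V_h)^\da$. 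Combining these using $[S^L_{g,i}, S^R_{h,i-1}]=0$, the two cocycles cancel because $|\o_{g,h}|=1$, giving $P_g^{(i)} P_h^{(i)} = P_{g+h}^{(i)}$. It then follows that $h_i^2 = \sum_{g,h} P_{g+h}^{(i)} = |G| \sum_k P_k^{(i)} = |G| h_i$, so $P_i \equiv h_i/|G|$ is a projector.

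For (iii), the MPS form of $\k{\Psi_\o}$ combined with the pushthrough identities implies $S^R_{g, i-1} S^L_{g, i}\k{\Psi_\o} = \k{\Psi_\o}$: this operator inserts $V_g$ and $V_g^\da$ on the two sides of the shared virtual bond between the tensors at sites $i-1$ and $i$, where they multiply to $\unit$. Summing over $g$ yields $h_i \k{\Psi_\o} = |G|\k{\Psi_\o}$, so $\k{\Psi_\o}$ saturates every projector $P_i$ and is therefore a ground state. Uniqueness follows from the standard result that an injective MPS is the unique ground state of its frustration-free parent Hamiltonian. The main technical obstacle is getting the cocycle cancellation in step (ii) correct: the pushthrough identity is a special property of $A_\o$ and does not naively iterate on modified tensors, so one must carefully track how successive $V$'s compose on the left vs.\ right virtual legs and see that they produce cocycles which are complex conjugates of each other---this asymmetric combination is precisely what is needed for the phases to cancel to $|\o_{g,h}|^2 = 1$ and for the $P_g^{(i)}$ to assemble into an honest linear representation of $G$.
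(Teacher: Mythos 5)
Your proof is correct, and it is a more hands-on route than the paper's. The paper invokes the canonical injective-MPS parent-Hamiltonian construction: each term is by definition (minus) the projector onto the two-site span $\mch_{AA}$ of contracted MPS tensors, injectivity gives $\dim\mch_{AA}=|G|$, and the identification of that projector with $\frac{1}{|G|}\sum_g S^R_{g,i-1}S^L_{g,i}$ is left to ``a little reflection,'' checked only through the stabilization identities \eqref{sright} and \eqref{sleft}. You instead verify the three defining properties directly: commutation of overlapping terms (both orderings of $S^L_{g}$ and $S^R_{h}$ on the shared site act identically on the perfect tensor $A_\o$, and linear independence of the matrices $A^g$ promotes this to an operator identity), the projector property via the group law $P^{(i)}_gP^{(i)}_h=P^{(i)}_{g+h}$, and stabilization of $\k{\Psi_\o}$ by bond insertion of $V_gV_g^\da=\unit$. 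What the paper's route buys is immediate access to ground-state uniqueness from the general injective-MPS theory (to import that into your argument you would add the rank count $\Tr[h_i]=|G|^2$, so that ${\rm im}(h_i/|G|)$, which contains $\mch_{AA}$, actually equals it); what your route buys is an explicit proof of the ``commuting projector'' claim, which the paper never spells out. Two small remarks. First, your careful tracking of the left-leg composition is vindicated: pushing through $A_\o$ componentwise gives $S^L_gS^L_h=\o^*_{g,h}S^L_{g+h}$ (same argument order, conjugated), which is what makes the phases cancel as $\o_{g,h}\o^*_{g,h}=1$; the appendix formula $S^L_gS^L_h=\o^*_{h,g}S^L_{g+h}$ appears to have its arguments transposed, and with that rule one would instead find $P_gP_h=\l_\o(g,h)P_{g+h}$ and $h_i^2=|G|\,\unit$ rather than $|G|\,h_i$, contradicting e.g.\ the explicit $\zt^2$ cluster-state form of $H$. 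Second, for completeness add the one-line observation that $P_g^\da=P_g^{-1}=P_{-g}$, so $h_i$ is Hermitian and $h_i/|G|$ is an orthogonal projector rather than merely an idempotent.
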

		
		\begin{proof} 
			The standard way of constructing $H$ is to write \cite{cirac2021matrix}
			\be H = -\sum_i \mcp_{{\rm span}(AA)},\ee 
			where $\mcp_{{\rm span}(AA)}$ is the projector onto the subspace $\mch_{AA}$ spanned by the tensor product of two contracted MPS tensors: 
			\be \mch_{AA} =  {\rm span}\{\k{g,h} \Tr[A_\o^g A_\o^h X]\, | \, X \in {\rm Mat}_{\sqrt{|G|}\times \sqrt{|G|}},\,g,h\in G\}.\ee 
			Since our MPS tensors are injective, $\dim \mch_{AA} = \dim \mch = |G|$. We therefore need to find $|G|$ independent operators $\mcp_{i,i+1}^g$ which stabilize the ground state. Constructing these operators is actually quite straightforward to do if we make use of the shift operators defined previously; a little reflection reveals that the expression above works, as it is easily checked to stabilize the MPS ground state by way of \eqref{sright} and \eqref{sleft}.
		\end{proof} 
		
		The following string operators, which are formed by taking a symmetry action on a finite interval and decorating the ends with shift operators, play a fundamental role in our ability to distinguish different SPT phases: 
		\begin{definition}[string operators]
			For any two sites $j>i$ and any $g\in G$, we define the string operator $\mcs_{g,i\ra j}$ as 
			\be \label{string_def_app} \mcs_{g;i\ra j} \equiv  S^R_{g,i} \tp \bot_{l=i+1}^{j-1} R_{g,l} \tp S^L_{g,j}.\ee
		\end{definition}
		Using the pull-through condition \eqref{pushthrough}, these string operators are easily checked to stabilize $\k{\Psi_\o}$. Note also that the operators appearing in the above parent Hamiltonian are simply $\mcs_{g,i\ra i+1}$.
		
		Intuitively, the string operators are constructed by acting with the symmetry over a finite interval, and then decorating the ends of the interval with a fractionalized action of the symmetry (this terminology being appropriate since $R_g$ fractionalizes into $S^L_g S^R_g$). If one were to decorate the ends of the interval with operators transforming under $G$ according to any irrep {\it other} than $\c_{\G_\o(g)}$ (see \eqref{shift_irreps}), it is easy to see that the resulting string operator will have vanishing expectation value throughout the SPT phase \cite{pollmann2012detection,de2021symmetry}.

		\ss{Explicit expressions for $G = \zn\times \zn$} \label{app:explicit_zn_tech}

		We now give expressions for the various tensors defined above in case considered in the main text, where $G = \zn\times \zn$ and $\o$ is maximally non-commutative. We will often decompose the physical Hilbert space into two factors of dimension $N$, writing $\mch = \mch_1 \tp \mch_2$ with $\dim\mch_{1} = \dim\mch_{2} = N$. 
		
		Since $\o$ is invertible in $\zn$ by assumption, we may use \eqref{canonical_mps_def} to write down the canonical MPS tensors as
		\be A_\o = \sum_{g\in G} \k g Z^{g_1} X^{-g_2/\o}.\ee 
		That $A_\o$ is perfect follows from the fact that the matrices $X^aZ^b, a,b,\in \zn$ generate $GL(N)$, together with the invertibility of $\o$ in $\zn$.  Note that the canonical MPS tensors for two maximally non-commutative factor sets $\nu$ and $\o$ are linearly related by
		\be A_\nu = \sum_{g\in G} (\proj{g_1} \tp \kb{\nu g_2/\o}{g_2}) A_\o,\ee

		One can further verify that the left and right shift operators are 
		\be S^L_g = Z^{g_1} X^{\o g_2} \tp X^{-\o g_1},\qq S^R_g = X^{-\o g_2} \tp X^{g_1 \o }Z^{g_2} .\ee 
		The Hamiltonian $H = - \sum_{g,i} S^R_{g,i} S^L_{g,i+1} \equiv -\sum_{g,i} H_{g,i}$ is consequently 
		\be H = -\sum_{g,i} \(X^{-\o g_2} \tp X^{\o g_1} Z^{g_2} \tp Z^{g_1} X^{\o g_2} \tp X^{-\o g_1} \)_{i\ra i+3} ,\ee 
		where now we are indexing the sites by the individual tensor factors of dimension $N$. 
		
		While the Hamiltonian as written above contains terms acting on 4 consecutive sites, it can be simplified somewhat as follows. First, for the $\zt$ case, we may write 
		\be H_{\zt^2} = -\sum_i (IIII + IXZX + XZXI + XYYX)_{i\ra i+3}.\ee 
		The first term in parenthesis is a constant and can be dropped. The last term is a product of the first two, and as all terms commute it is thus redundant and can also be dropped, leaving us with the familiar $XZX$ form of the Hamiltonian.  
		
		A similar type of simplification is possible for general $N$ \cite{tsui2015quantum}. 
		Note that $H_{g,i} = H_{(g_1,0),i} H_{(0,g_2),i}$, and that all terms in the Hamiltonian mutually commute. 
		Therefore terms where both $g_1,g_2$ are nontrivial will automatically be minimized if the terms where one of $g_1,g_2$ is trivial are minimized. This means we may get away with only summing over elements of the form $\{(g_1,0), (0,g_2)\}$. This gives 
		\be H_{\zn^2} = - \sum_{g\in \zn} \( \sum_{i\in 2\zz}  X^{-\o g}_i \tp Z^g_{i+1} \tp X^{\o g}_{i+2} + \sum_{i\in 2\zz+1} X^{\o g}_i \tp Z^g_{i+1} \tp X^{-\o g}_{i+2} \).\ee
		Note that the alternating placement of daggers on the two sublattices is actually important to keep track of, and is needed to ensure that all of the terms in $H_{\zn^2}$ commute.

		With this notation the string operators are generated by 
		\be \mcs_{(1,0);i\ra i+2l} = (I_i\tp X^\o_{i+1})\bot_{n=2}^{l-2} (Z_{2n} \tp I_{2n+1}) \tp (Z_{i+2l-1} \tp X^{-\o}_{i+2l})\ee 
		and 
		\be \mcs_{(0,1);i\ra i+2l} = (X^{-\o}_{i}\tp Z_{i+1})\bot_{n=2}^{l-2} (I_{2n}\tp Z_{2n+1} )\tp (X^\o_{i+2l-1} \tp I_{i+2l}).\ee

		\section{General expression for ground states of maximally non-commutative SPT phases } \label{app:general_ground_states}
		
		This section is devoted to proving the following theorem: 
		
		\begin{theorem}[relation between SPT states in the same phase] \label{thm:simpleU} 
			Let $\k{\psi_\o} \in \spt_\o$ be a ground state in the SPT phase identified with the maximally non-commutative cohomology class $\o$. Then for $\k{\Psi_\o}$ the canonical reference state defined in \eqref{canonical_def}, we may always write
			\be\label{psicirc} \k{\psi_\o} = \sum_{\bfg\in G^L} C_\bfg R_\bfg  \k{\Psi_\o},\ee 
			for $\bm{g}$ an $L$-component vector of group elements, and where the $C_\bfg$ are complex coefficients satisfying $\sum_\bfg |C_\bfg|^2 = 1$ and $R_\bfg \equiv  \bot_{i=1}^L R_{g_i,i}.$
		\end{theorem}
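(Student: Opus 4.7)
The plan is to prove the theorem via a dimension-counting argument in the symmetric subspace. Since $\k{\psi_\o}$ lies in an SPT phase with unbroken symmetry $G$, it satisfies $R_g^{\tp L}\k{\psi_\o} = \k{\psi_\o}$ for all $g \in G$. Working in the basis $\k{\bfg}$ that diagonalizes the onsite symmetry, so that $R_g^{\tp L}\k{\bfg} = \chi_g(\sum_i g_i)\k{\bfg}$, the symmetric subspace $\mch_{\rm sym} \subset \mch^{\tp L}$ is spanned by basis vectors with $\sum_i g_i = 0$ and therefore has dimension $|G|^{L-1}$. Since $G$ is Abelian, each $R_\bfg$ commutes with $R_g^{\tp L}$, so $R_\bfg \k{\Psi_\o}$ also lies in $\mch_{\rm sym}$. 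Both sides of the claimed identity therefore lie in $\mch_{\rm sym}$, and it suffices to show that $\{R_\bfg \k{\Psi_\o}\}_{\bfg \in G^L}$ spans this subspace.

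I would establish spanning by directly computing the inner products $\lan \Psi_\o | R_\bfk | \Psi_\o \ran$. Using the canonical form $A_\o^g = V^\da_{\G_\o\inv(g)}$ together with the projective multiplication rule $V_g V_h = \o_{g,h} V_{g+h}$, a product of $L$ MPS tensors telescopes, up to an overall cocycle phase, down to a single projective element:
\begin{equation}
\Tr[A_\o^{g_1} \cdots A_\o^{g_L}] \propto \Tr\bigl[V^\da_{\sum_i \G_\o\inv(g_i)}\bigr].
\end{equation}
The maximally non-commutative corollary of Schur orthogonality proved in App.~\ref{app:math} gives $\Tr[V_h] = \sqrt{|G|}\,\d_{h,e}$, so this trace is nonzero iff $\sum_i \G_\o\inv(g_i) = 0$, which --- since $\G_\o$ is a group isomorphism in the MNC case --- is equivalent to $\sum_i g_i = 0$. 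A short character sum over the $|G|^{L-1}$ configurations with $\sum_i g_i = 0$ then yields
\begin{equation}
\lan \Psi_\o | R_\bfk | \Psi_\o \ran \propto \d_{\bfk \in \Delta},
\end{equation}
where $\Delta = \{(k,\dots,k) : k \in G\}$ is the diagonal subgroup of $G^L$. Consequently, $R_\bfg \k{\Psi_\o}$ and $R_\bfh \k{\Psi_\o}$ agree whenever $\bfg - \bfh \in \Delta$ and are orthogonal otherwise, producing $|G|^L / |G| = |G|^{L-1}$ distinct orthonormal vectors --- precisely the dimension of $\mch_{\rm sym}$, so they form an orthonormal basis.

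Since $\k{\psi_\o} \in \mch_{\rm sym}$, it admits a unique expansion $\k{\psi_\o} = \sum_{[\bfg]} c_{[\bfg]} R_\bfg \k{\Psi_\o}$ in this basis, with $\sum_{[\bfg]} |c_{[\bfg]}|^2 = 1$ by Parseval. To recast this as a sum over all of $G^L$ as in the theorem statement, one picks a unique representative in each $\Delta$-equivalence class and sets $C_\bfg = c_{[\bfg]}$ on the representative and $C_\bfg = 0$ otherwise, which yields $\sum_\bfg |C_\bfg|^2 = 1$ as claimed. The main obstacle I anticipate is cleanly tracking the cocycle phases in the trace computation --- in particular, verifying that the telescoping respects the $\o$-cocycle condition so that $\Tr[\prod_i A_\o^{g_i}]$ depends on $\bfg$ only through the combination $\sum_i g_i$. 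Once this technical reduction is in hand, the dimension count immediately gives the result; the stronger locality statement in the main-text version of the Theorem (that the $C_\bfg$ are correlated on the scale $\xi$) would require additionally analyzing how a constant-depth symmetric circuit $\mcu_\psi$ constrains the support of $C_\bfg$ through successive pull-through operations.
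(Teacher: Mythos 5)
Your argument is correct for the statement as formally written, but it takes a genuinely different route from the paper's proof. The paper's proof (App.~\ref{app:general_ground_states}) is constructive: it expands every gate of the finite-depth symmetric circuit $\mcu_\psi$ relating $\k{\psi_\o}$ to $\k{\Psi_\o}$ in the operator basis $\{S^R_{g}R_{h}\}$, shows that symmetry forces the shift operators to pair up into string operators $\mcs_{g;i\ra j}$, and then pushes those strings onto $\k{\Psi_\o}$, where they act trivially, leaving only the depth-1 layer $R_\bfh$. You bypass the circuit entirely: you verify that $\lan\Psi_\o|R_\bfk|\Psi_\o\ran \propto \d_{\bfk\in\Delta}$ and count dimensions, so that $\{R_\bfg\k{\Psi_\o}\}$ modulo the diagonal subgroup is an orthonormal basis of the trivial-charge symmetric subspace. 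Your overlap formula is right --- indeed sharper than Eq.~\eqref{rgoverlap} as written, which drops the diagonal contribution that the footnote in App.~\ref{app:convergence} later restores --- and the cocycle-phase worry you flag at the end dissolves, because only the modulus $|\Tr[\prod_i A_\o^{g_i}]|^2 = |G|\,\d_{\sum_i g_i,e}$ enters the diagonal matrix element, so the $\o$-phases cancel identically. As for what each approach buys: yours is shorter and makes the normalization and the $\Delta$-redundancy transparent, but precisely because it applies to \emph{every} trivially-charged symmetric state --- including states far outside $\spt_\o$ --- it shows that the bare expansion \eqref{psicirc} by itself carries no information about the phase; all the content the paper actually uses downstream (the finite correlation length of $|C_\bfg|^2$, on which the majority-vote convergence of App.~\ref{app:convergence} rests) comes only from the circuit decomposition, as you correctly acknowledge in your closing sentence. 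One input your route relies on silently: $\k{\psi_\o}$ must carry \emph{trivial} global charge, $R_g^{\tp L}\k{\psi_\o}=\k{\psi_\o}$ rather than $\chi(g)\k{\psi_\o}$ for a nontrivial character $\chi$; this holds here because $\k{\psi_\o}=\mcu_\psi\k{\Psi_\o}$ with $\mcu_\psi$ symmetric, but it is worth stating, since the spanning argument would fail in a nontrivially-charged sector while the paper's constructive argument supplies the charge assignment automatically.
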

		
		The proof proceeds by finding a convenient operator basis in which to decompose the action of the unitary circuit relating $\k{\psi_\o}$ to $\k{\Psi_\o}$. The overall logic is essentially the same as a similar result pertaining to the 2D cluster state proven in Ref. \cite{raussendorf2019computationally}. 
		
		To begin, we need the following lemma: 
		
		\begin{lemma}
			For $\o$ maximally non-commutative, any $n$-site operator $\mco_n$ can be decomposed as 
			\be\label{opdecomp} \mco_n = \sum_{\bfg,\bfh\in G^n} C_{\bfg,\bfh} \bigotimes_{i=1}^n S^R_{g_i,i} R_{h_i,i}\ee 
			for some constants $C_{\bfg,\bfh}$. Furthermore, $\mco_n$ commutes with the symmetry action iff $\sum_i g_i = e$. 
		\end{lemma}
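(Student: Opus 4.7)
The plan is to first prove that at a single site the operators $\{S^R_g R_h\}_{g,h\in G}$ form a linear basis for $\text{End}(\mathcal{H})$, then extend to $n$ sites by tensor product, and finally use the commutation relation \eqref{srwithr} together with the maximal non-commutativity of $\omega$ to pin down the symmetry condition.

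For the first step I would count dimensions and prove linear independence via trace orthogonality. Since $\dim\mathcal{H} = |G|$ and there are $|G|^2$ operators of the form $S^R_g R_h$, it suffices to show these are linearly independent. In the basis $\k h$ in which $R_h$ is diagonal, $R_h = \sum_k \chi_h(k)\proj k$ and $S^R_g$ is a shift by $\Gamma_\omega(g)$ (up to phases). Thus $S^R_g R_h$ has a single nonzero entry in each row, located on the diagonal displaced by $\Gamma_\omega(g)$. Computing
\[
\Tr\bigl[(S^R_{g'} R_{h'})^\dagger S^R_g R_h\bigr] \propto \delta_{\Gamma_\omega(g),\Gamma_\omega(g')}\sum_k \chi_{h-h'}(k+\text{const}),
\]
the first Kronecker delta uses that $\Gamma_\omega$ is a bijection (by maximal non-commutativity, as established in App.~\ref{app:math}), while the character sum over $k\in G$ enforces $h=h'$. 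Hence the operators are pairwise orthogonal under the Hilbert--Schmidt inner product, proving both linear independence and that they span $\text{End}(\mathcal{H})$.

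The extension to $n$ sites is immediate: tensor products of basis operators form a basis of $\text{End}(\mathcal{H}^{\otimes n})$, which gives the decomposition \eqref{opdecomp}. For the symmetry statement, apply the global symmetry by conjugation to each basis element. Using \eqref{srwithr} and the fact that $R_h$ commutes with itself,
\[
R_k^{\otimes n}\Bigl(\bigotimes_i S^R_{g_i,i} R_{h_i,i}\Bigr)R_k^{-\otimes n} = \prod_i \lambda_\omega(g_i,k)^{-1}\,\bigotimes_i S^R_{g_i,i} R_{h_i,i} = \lambda_\omega\bigl(\textstyle\sum_i g_i,\, k\bigr)^{-1}\bigotimes_i S^R_{g_i,i} R_{h_i,i},
\]
where I used that $\lambda_\omega$ is a homomorphism in its first argument. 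By the linear independence established above, $[\mathcal{O}_n, R_k^{\otimes n}] = 0$ for every $k\in G$ forces $\lambda_\omega(\sum_i g_i, k)=1$ for all $k\in G$ whenever $C_{\mathbf{g},\mathbf{h}}\neq 0$. This means $\sum_i g_i\in C_\omega$, and by maximal non-commutativity $C_\omega=\{e\}$, giving the stated condition $\sum_i g_i = e$.

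The only potentially delicate step is the linear independence argument: one must track the phase factors in $S^R_g$ carefully to confirm that different choices of $(g,h)$ really do produce orthogonal operators (rather than accidentally proportional ones). A cleaner alternative, which I would present as a backup, is to note that the set $\{R_h\}_{h\in G}$ spans the algebra of diagonal operators (by invertibility of the character table), while left-multiplication by $S^R_g$ cyclically permutes the ``shift sectors'' of $\text{End}(\mathcal{H})$ indexed by $\Gamma_\omega(g)$; since $\Gamma_\omega$ is a bijection under maximal non-commutativity, every shift sector is hit exactly once, proving the span.
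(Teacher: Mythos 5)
Your proposal is correct and follows essentially the same route as the paper: trace (Hilbert--Schmidt) orthogonality plus dimension counting to show $\{S^R_g R_h\}$ is a single-site basis, tensoring to $n$ sites, and then using the commutation relation $S^R_g R_k = \l_\o(g,k) R_k S^R_g$ together with the homomorphism property of $\l_\o$ and triviality of $C_\o$ in the maximally non-commutative case to extract the condition $\sum_i g_i = e$. The paper's trace computation is organized slightly differently (reducing to $\Tr[S^R_{g-g'}R_{h-h'}]$ rather than working entry-by-entry in the diagonal basis), but the content is identical.
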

		
		\begin{proof}
			The proof proceeds simply by noting that the operators in the above sum form an orthogonal set which spans $\mch^{\tp n}$. Consider first the case where $n=1$. Then we have  
			\bea \Tr[S^R_g R_h (S^R_{g'} R_{h'})^\da] & = \Tr[S^R_g (S^R_{g'})^\da R_{h-h'}]  \\ 
			& = |G| \d_{g,g'} \d_{h,h'}.\eea 
			Since there are $|G|^2= \dim \End(\mch)$ such operators, they form a basis for all single-site unitaries. Tensoring $n$ copies of them together then gives a basis for the unitaries on $\mch^{\tp n}$. 
			
			Requiring that a given operator is symmetric means that it must commute with $R_k^{\tp n}$ for all $k\in G$. Now 
			\bea \mco_n R_k^{\tp n} & = R_k^{\tp n} \sum_{\bfg,\bfh\in G^n}  C_{\bfg,\bfh} \bigotimes_{i=1}^n  \l_\o(k,g_i) S^R_{g_i,i} R_{h_i,i}   \\ & = 
			R_k^{\tp n} \sum_{\bfg,\bfh \in G^n} C_{\bfg,\bfh} \c_k(\G\inv_\o(\sum_jg_j)) \bigotimes_{i=1}^nS^R_{g_i,i} R_{h_i,i} .
			\eea 
			The RHS is equal to $R_k^{\tp n}\mco_n$ iff $\sum_j g_j=e$, and so any symmetric operator must indeed be of the form claimed. 
		\end{proof}
		
		Since $S^R_{-g} = S^L_{g} R_{-g}\vs_{-g}$, we can equivalently replace $S^R_{-g_i,i}$ operators appearing in the decomposition of a given $\mco_n$ with $S^L_{g_i,i}$ operators, at the cost of modifying the values of the $C_{\bfg,\bfh}$. The combination of $S^{R/L}_{g_i,i}$ operators appearing in the decomposition of any symmetric operator can thus always be manipulated so that it consists only of tensor products of pairs $S^R_{g,i} \tp  S^L_{g,j>i}$. These give the endpoints of a $g$ string operator, and hence by shifting the $R_{h_i,i}$, any symmetric operator may be written in terms of single-site symmetry actions and string operators as 
		\be \label{mcosym} \mco^{\text{sym}} = \sum_{\bfg,\bfh\in G^n}  \bot_{i=1}^n R_{h_i,i} \sum_{j>i}C'_{\bfg,\bfh;j}\mcs_{g_i,i\ra j}\ee 
		where the $C'_{\bfg,\bfh;j}$ are complex numbers determinable in principle from the $C_{\bfg,\bfh}$.

		We can use the above fact to prove theorem \ref{thm:simpleU}. By definition, all states $\k{\psi_\o} \in \spt_\o$ can be related to one another via the action of a constant-depth local\footnote{Strictly speaking the correct adjective is `quasilocal', allowing for the gates appearing in $\mcu$ to have arbitrarily large support, so long as operators with large support are given exponentially small weights. Keeping track of the distinction between local and quasilocal will not be important for the issues that concern us here.} unitary circuit $\mcu_\psi$, where each gate in $\mcu_\psi$ commutes with the symmetry action $R_g^{\tp L}$. 
		By \eqref{mcosym}, all of the gates in $\mcu_\psi$ can be decomposed in terms of the onsite symmetry generators $R_{g_i,i}$ and string operators $\mcs_{g_i,i\ra j}$ --- this leads to the picture shown in Fig~\ref{fig:wfs_and_ec}. We can then move all $\mcs_{g_i,i\ra j}$ operators appearing in the circuit to the `bottom' of the circuit, so that they act directly on $\k{\Psi_\o}$; this can be done merely at the cost of picking up various unimportant complex phases. However, {\it all} string operators act trivially on (i.e. stabilize) $\k{\Psi_\o}$ and hence may be eliminated from $\mcu_\psi$, leaving behind only the $R_\bfh$ operators. The $R_\bfh$ then form of depth-1 circuit, and $\k{\psi_\o}$ can indeed be written in the form of \eqref{psicirc}.
		\\ 
		\qed

		\section{The circuit architecture  } \label{app:qcnn_details} 
		
		\begin{figure}
			\includegraphics{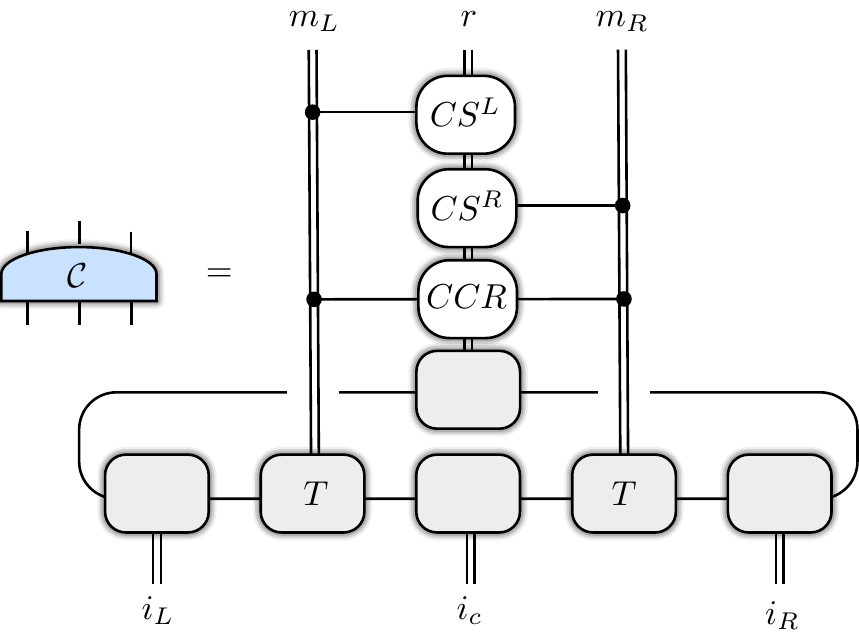} 
			\caption{\label{fig:just_mcc} Definition of the coarse-graining unitary $\mcc$. Grey boxes denote canonical MPS tensors $A_\o$, the $T$ denotes transpose, and the controlled gates $CS^{L/R}$, $CCR$ are defined in the text. }
		\end{figure}
		In this appendix we elaborate on the on the properties of the tensors used to build the coarse-graining convolution unitary $\mcc$, and demonstrate that $\mcc$ performs error correction by implementing the recursive majority function. The explicit form of $\mcc$ is repeated in Fig.~\ref{fig:just_mcc} for convenience. The ring of six MPS tensors serves to coarse-grain the input wavefunction; when acting on $\k{\Psi_\o}$ this part of $\mcc$ simply produces a coarse-grained (by a factor of 3) version of $\k{\Psi_\o}$, tensored with a product state of $|0\ran$s on the remaining sites. The controlled gates activate when nontrivial `errors' act on $\k{\Psi_\o}$, and these gates serve to correct the errors by re-directing the action of the errors away from the renormalized leg $r$, and onto the measurement legs $m_L,m_R$. 
		
		Specifically, we introduce the controlled left-shift operator as
		\bea CS^L & \equiv \sum_{g\in G} \proj{\G_\o(g)} \tp  (S^L_{g})^\da \tp \unit \eea 
		where we have written the operator in the basis $\mch_{m_L}\tp \mch_{r}\tp \mch_{m_R}$ (see Fig.~\ref{fig:just_mcc}). 
		Similarly, we define the controlled right-shift operator as 
		\be CS^R \equiv  \sum_{g\in G} \vs_g^* \unit \tp (S^R_g)^\da \tp   \proj{\G_\o(-g)} .\ee 
		Finally, the double-controlled gate $CCR$ is defined to perform a linear action of the symmetry if the left and right legs are in opposite states, and to act as the identity otherwise: 
		\bea CCR & \equiv \sum_{g}  \proj{\G_\o(-g)} \tp R_g^\da \tp \proj{\G_\o(g)} + \sum_{g\neq g'}    \proj{\G_\o(-g)} \tp \unit \tp \proj{\G_\o(g')} \\ & 
		= \sum_{g,g'} \proj{\G_\o(-g)} \tp ((R_g^\da-\unit) \d_{g,g'} + \unit) \tp \proj{\G_\o({g'})}. \eea 

		To understand the utility of these gates, consider first how a single-site error passes through $\mcc$:
		\begin{prop}
			$\mcc$ corrects all single-site $R_g$ errors acting on $\k{\Psi_\o}$. That is, for an error of the form $R_g \tp R_h \tp R_k$ where only one of $g,h,k\neq e$, 
			\be \mcc (R_g\tp R_h \tp R_k) A_\o^{\tp 3} = \k{\G_\o(h-g)} \tp A_\o \tp \k{\G_\o(k-h)}.\ee
		\end{prop}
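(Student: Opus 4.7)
The plan is to reduce the statement to a sequence of graphical identities using the two defining properties of the perfect MPS tensor $A_\o$ established in App.~\ref{app:mps_technology}: the symmetry pushthrough (Fig.~\ref{fig:symfrac_and_iso}b) and the zero-correlation-length / perfect-tensor property (Fig.~\ref{fig:symfrac_and_iso}c). Since the ring of six MPS tensors inside $\mcc$ consists entirely of such tensors, I will analyze $\mcc$ in two stages — the ring contraction first, then the controlled gates $CS^L$, $CS^R$, $CCR$ — and finally enumerate the three single-error sub-cases.

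Stage 1 (ring evaluation). I push each onsite symmetry $R_{g_i}$ through its adjacent input $A_\o$ tensor using Eq.~\eqref{pushthrough}, converting it into the pair $V^\da_{g_i} \otimes V_{g_i}$ on the virtual legs flanking that tensor. These projective operators now sit on internal bonds of the six-tensor ring. I then collapse pairs of ring tensors that are contracted on a physical leg: by the perfect-tensor identity \eqref{graphical_inj}, such a pair acts as the identity on the remaining virtual legs up to a trace factor for any operator $M$ sitting on the enclosed bond. Traces of products of $V$'s reduce to $\Tr[V_g] = \sqrt{|G|}\,\delta_{g,e}$ by the Schur orthogonality corollary of App.~\ref{app:math}. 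This contracts the ring to a configuration in which the two ancilla legs are decoupled pure states, while the renormalized leg retains an $A_\o$ tensor whose virtual legs carry residual operators $V_g^\da$ on the left and $V_k$ on the right.

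Stage 2 (identifying the ancilla states). Expanding the decoupled ancilla $V$-products in the computational basis, the matrix elements carry precisely the phases $\lambda_\o(\cdot,\,h-g)$ and $\lambda_\o(\cdot,\,k-h)$. By the defining property of $\G_\o$ in Eq.~\eqref{gammadef} together with the representation $\k{\G_\o(m)} = \int_k \chi_h^*(k)\, \lambda_\o(h,m)\, \k k$, these states are recognized as $\k{\G_\o(h-g)}$ and $\k{\G_\o(k-h)}$ respectively.

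Stage 3 (controlled gates and case analysis). The shift-operator relations $(S^L_g)^\da A_\o = A_\o(V_g \otimes \unit)$ and $(S^R_g)^\da A_\o = A_\o(\unit \otimes V_g^\da)$ — immediate inversions of Eqs.~\eqref{sleft} and \eqref{sright} — show that $CS^L$ and $CS^R$ are tuned to exactly cancel the residual $V_g^\da$ and $V_k$ on the renormalized leg whenever the ancillas advertise the corresponding group elements. I then check the three single-error cases: (i) only $g\neq e$, where the ancillas are $\k{\G_\o(-g)}\otimes\k 0$ and only $CS^L$ activates nontrivially; (ii) only $k\neq e$, the mirror image resolved by $CS^R$; and (iii) only $h\neq e$, where the renormalized-leg residual is trivial but the ancillas $\k{\G_\o(h)}\otimes\k{\G_\o(-h)}$ trigger both $CS^L$ and $CS^R$, producing $A_\o(V_h \otimes V_h^\da)$; this is then neutralized by $CCR$, which activates in the $g=-h$ sector and applies $R_{-h}^\da = R_h$, whereupon the pushthrough identity collapses $R_h A_\o(V_h\otimes V_h^\da) = A_\o(V_h^\da V_h \otimes V_h V_h^\da) = A_\o$. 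Each case yields the claimed output.

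The main obstacle I expect is bookkeeping the cocycle phases consistently through Stages 1 and 3: the shift operators multiply with non-trivial factor-set phases $\vs_g$ and $\o_{g,h}$, and the perfect-tensor collapse may also extract phases from $V$-products. Case (iii) is the tightest consistency check, since two successive controlled shifts must conspire with $CCR$ to produce the identity rather than a residual $\vs_h$-type phase on $A_\o$. My strategy for handling this is to not try to shortcut the calculation: I would track every phase using only the primitive identities $\lambda_\o(g,h) = \o_{g,h}/\o_{h,g}$, $\vs_g V_{-g} = V_g^\da$, and the cocycle relation $\d\o = 1$, and allow the factors $\vs^*$ appearing in the definition of $CS^R$ to cancel the phases generated in Stage 1, exactly as they were engineered to.
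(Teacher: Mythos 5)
Your proposal is correct and follows essentially the same route as the paper's own proof: push the $R_g$ errors into the ring via symmetry fractionalization, collapse the ring with the perfect-tensor identity so the ancilla legs decouple into states identified through $\Tr[A^h_\o V_g]\propto \k{\G_\o(g)}$, and then verify case-by-case that the controlled gates $CS^{L/R}$ and $CCR$ cancel the residual shift action on the renormalized leg. The only discrepancy is the sign convention in the ancilla labels (e.g.\ $\k{\G_\o(-g)}$ vs.\ $\k{\G_\o(g)}$), on which the paper itself is internally inconsistent between the proposition statement and its proof, so this does not reflect a gap in your argument.
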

		In the above, we have used notation where tensor products are in physical space, with virtual indices being implicitly contracted as appropriate. Thus e.g. $A^{\tp 3}_\o = \sum_{\bfg \in G^3} \k{\bfg} A_\o^{g_1} A_\o^{g_2} A_\o^{g_3}$ is a tensor valued in $\mch_{\text{virt}} \tp \mch^{\tp 3} \tp \mch_{\text{virt}}$. 
		\begin{proof}

			We demonstrate this simply by explicitly computing the action of $\mcc$ in all situations where a single error is present. Suppose an error of $g$ occurs on the leftmost input site $i_l$. Understanding how this error travels through $\mcc$ can be done graphically as follows: 
			\be \label{single_error} \includegraphics{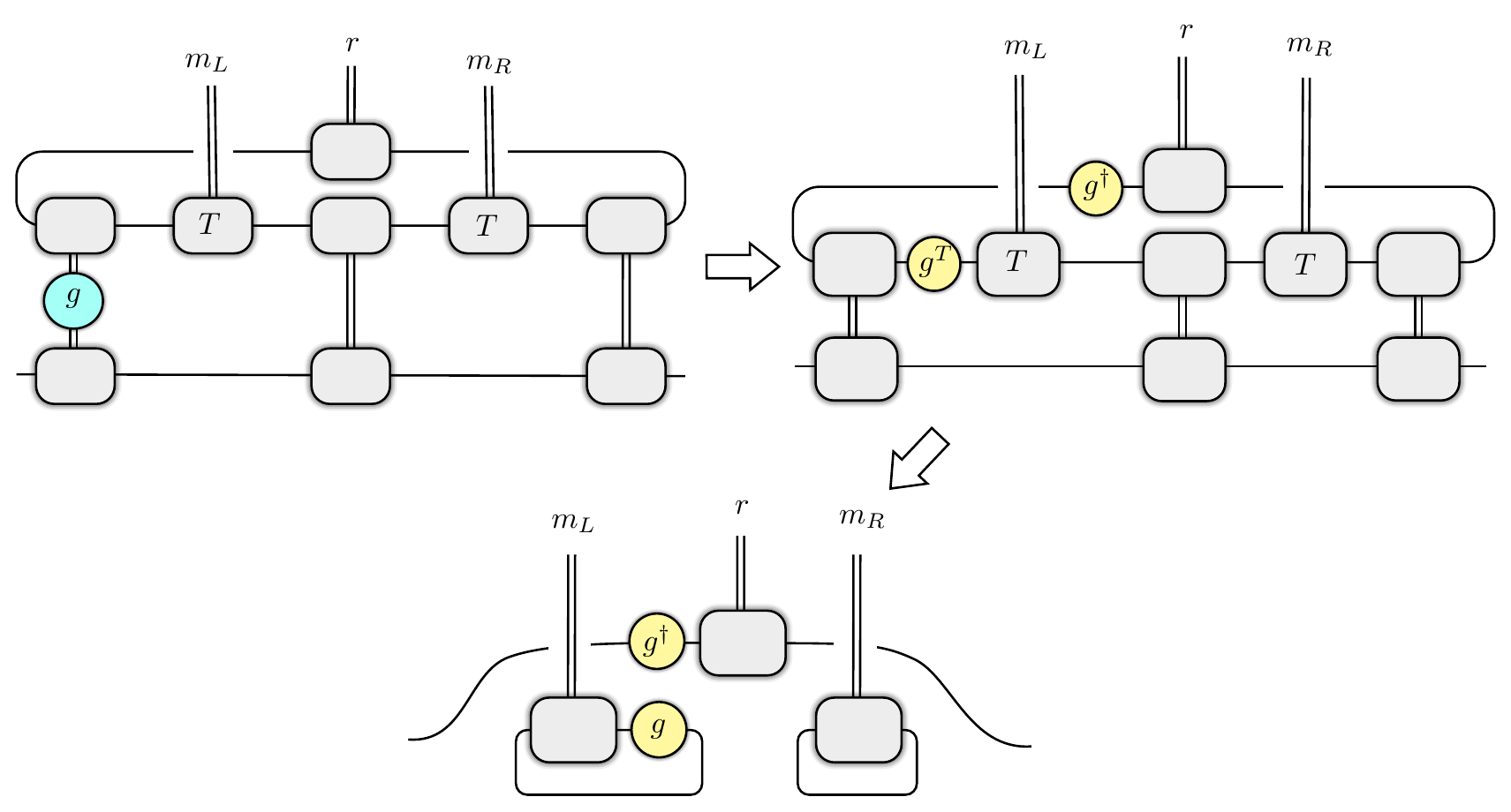} \ee 
			where the cyan and yellow circles stand for linear and projective representations respectively, and where in the second step we have used the injectivity condition \eqref{graphical_inj}. The states of the measurement legs $m_{L/R}$ can be worked out using 
			\be \label{traced_vg_app}  \sum_{h\in G} \k{h} \Tr[A^h_\o V_g] = \igptwoc{figures/ttensor} = \sqrt{|G|} \k{\G_\o(g)}\ee
			since $\Tr[V_g^\da V_h] = \sqrt{|G|} \d_{g,h}$. Thus in this case, the left leg $m_L$ is in the state $\k{\G_\o(g)}$, and the right leg $m_R$ is in the state $\k{e}$. The job of the control gates $CS^{L/R}, CCR$ is to use the information contained in these states to correct for the action of $S^L_g$ that has been applied to the renormalized leg $r$ (due to the $V_g^\da$ projective rep appearing in the last panel of \eqref{single_error}). In this case, since $m_R$ is in the state $\k{e}$, only the $CS^L$ gate activates. This gate applies $(S^L_g)^\da$ to the renormalized leg, which cancels the $S^L_g$ error. Thus indeed, 
			\be \mcc (R_g \tp \unit \tp \unit) A_\o^3= \k{\G_\o(g)} \tp A_\o \tp \k{e}.\ee 
			
			Similarly, consider what happens when an error $R_k$ occurs on the rightmost site $c$. Similar arguments show that the right measurement module is left in the state $\vs_k \k{\G_\o(-k)}$, the left module in the sate $\k{\G_\o(k)}$, and that renormalized leg incurrs an error of $S^R_k$. Only the $CS^R$ gate activates, which applies a corrective action of $(S^R_k)^\da$ to the renormalized leg, cancelling the error. Therefore 
			\be \mcc (\unit \tp \unit \tp R_k) A_\o^{\tp 3}= \k{e} \tp A_\o \tp \k{\G_\o(-k)}.\ee 
			
			Finally, consider an error of $R_h$ applied to the central leg. This leaves the left module in the state $\vs_h\k{\G_\o(-h)}$, the right module in the state $\k{\G_\o(h)}$, and leads to no error on the renormalized leg. Both the $CS^{L/R}$ gates activate, and conseuqnetly the renormalized leg is acted on by $\vs_h\vs_{-h}^* (S^L_{-h})^\da  (S^R_{-h})^\da = \l_\o(-h,h) (R_{-h})^\da = R_h$. We see that the $CCR$ gate has been chosen to exactly compensate for this action; thus 
			\be \mcc (\unit \tp R_h \tp \unit) A_\o^{\tp 3} = \k{\G_\o(-h)} \tp A_\o \tp \k{\G_\o(h)}.\ee 
			
		\end{proof}
		
		We also have the necessary property that $\mcc$ preserves the global symmetry action: 
		\begin{prop}
			The global symmetry commutes with $\mcq_\o$, in the sense that the global symmetry operator $R_g^{\tp L}$ is mapped to $R_g^{\tp L'} \tp \unit_M$ upon commuting with $\mcq_\o$, where $L' = L/3$ and $\unit_M$ denotes the identity on the subspace spanned by the measured legs.
		\end{prop}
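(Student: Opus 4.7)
The plan is to reduce the proposition to a single-gate intertwining identity for the coarse-graining unitary, namely
\be \mcc\, (R_g \tp R_g \tp R_g) = (\unit_{m_L} \tp R_g \tp \unit_{m_R})\, \mcc \ee
for every $g \in G$. Since a layer of $\mcq_\o$ consists of $L/3$ copies of $\mcc$ applied in parallel on disjoint triples of sites, tensoring this identity across all such triples immediately converts $R_g^{\tp L}$ on the layer input into $R_g^{\tp L'} \tp \unit_M$ on the output, with $L' = L/3$ and $\unit_M$ the identity on the $2L/3$ measurement legs. The full multi-layer statement $\mcq_\o^{(d)} R_g^{\tp L} \mcq_\o^{(d)\da} = R_g^{\tp L/3^d} \tp \unit_M$ then follows by a straightforward induction on $d$, using the fact that subsequent layers act only on the renormalized legs.

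To prove the single-gate identity, I would first analyze the action of $R_g^{\tp 3}$ on the six-tensor MPS ring subcircuit (the portion of $\mcc$ preceding the controlled gates). The pushthrough relation $R_g A_\o = A_\o(V_g^\da \tp V_g)$ applied to each of the three input tensors converts each physical $R_g$ into a pair of projective insertions on the virtual legs. At every internal virtual bond shared by two adjacent input tensors a $V_g$ meets a $V_g^\da$; by unitarity these cancel. The surviving outermost insertions propagate around the ring and, using the zero-correlation-length and injectivity identities of Fig.~\ref{fig:symfrac_and_iso}c and Eq.~\eqref{graphical_inj}, localize onto the two virtual legs flanking the renormalized output tensor as $V_g^\da \tp V_g$. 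A final application of the pushthrough to this output tensor converts these virtual insertions back into an $R_g$ on the physical renormalized leg, while the two ancilla output tensors, unaffected by any insertion, reduce via Eq.~\eqref{traced_vg_app} to $\k{e}$. The controlled gates $CS^L$, $CS^R$, and $CCR$ then act as the identity on this branch: since both ancilla legs emerge in $\k{e} = \k{\G_\o(e)}$, only the $h = e$ summands of $CS^L = \sum_h \proj{\G_\o(h)} \tp (S^L_h)^\da \tp \unit$ (and of $CS^R$) are triggered, and $S^L_e = S^R_e = \unit$; $CCR$ also acts trivially, since its nontrivial action requires the left and right controls to carry opposite nontrivial labels $\G_\o(-h)$ and $\G_\o(h)$.

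The main obstacle is the diagrammatic bookkeeping in the middle step: one must carefully track the orientation of each MPS tensor in the six-tensor ring (some with physical legs pointing up, others down, with the latter satisfying the conjugate pushthrough $R_g A_\o^* = A_\o^*(V_g \tp V_g^\da)$) and verify that the cumulative effect of pushing the global symmetry around the ring localizes the projective insertions exactly on the renormalized output leg. Once this verification is in place, the controlled-gate check and the induction over circuit layers are routine.
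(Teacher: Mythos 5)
Your reduction to a single-gate identity and the induction over layers match the paper's strategy, but your proof of the key identity $\mcc\, R_g^{\tp 3} = (\unit \tp R_g \tp \unit)\,\mcc$ has a genuine gap: it is a \emph{state-level} argument, not an operator identity. The steps where the ancilla output tensors ``reduce via Eq.~\eqref{traced_vg_app} to $\k{e}$'' and where the controlled gates ``act as the identity on this branch'' both presuppose that the three input legs are contracted against the reference tensors $A_\o^{\tp 3}$: Eq.~\eqref{traced_vg_app} is a trace of a ring tensor against an input MPS tensor, and the $\k{e}$ control branch exists only for that input. What you actually establish is $\mcc\, R_g^{\tp 3} A_\o^{\tp 3} = (\unit \tp R_g \tp \unit)\,\mcc\, A_\o^{\tp 3}$, i.e.\ the identity restricted to the $|G|$-dimensional image of $A_\o^{\tp 3}$ inside the $|G|^3$-dimensional three-site space. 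The proposition, however, must hold at the operator level --- it is exactly what is invoked in App.~\ref{app:convergence} to exclude false positives, where $\mcq_\o$ acts on input states from \emph{other} phases that are not built from the $A_\o$ tensors. A related slip: the surviving virtual insertions cannot ``localize onto the two virtual legs flanking the renormalized output tensor'' without first being pushed through the two ancilla output tensors, and that pushthrough deposits an $R_g$ on each ancilla physical leg; those $R_g$'s are silently dropped in your account.

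The paper's proof differs in both halves and is worth internalizing. First, the ring of six tensors intertwines $R_g^{\tp 3}$ on the inputs to $R_g^{\tp 3}$ on \emph{all three} outputs, ancillas included, as a pure operator statement. Second, the controlled gates are not trivial here: commuting the $R_g$ on the renormalized leg through $CS^L$ via $(S^L_h)^\da R_g = \c_{-g}(\G_\o(h))\, R_g\, (S^L_h)^\da$ gives $CS^L(\unit \tp R_g \tp \unit) = (R_g^\da \tp R_g \tp \unit)\, CS^L$ (and similarly for $CS^R$), and these emitted $R_g^\da$'s cancel the $R_g$'s the ring deposited on the measurement legs. If you insist on your route, you would need to verify the intertwining relation on all states $R_\bfh A_\o^{\tp 3}\k{v}$ (whose span is the full three-site space) rather than only $\bfh = (e,e,e)$; but that requires the full domain-wall analysis of the controlled gates, which the paper derives \emph{after}, and using, this very proposition --- so the operator-level computation is both cleaner and logically prior.
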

		
		\begin{proof}
			The proof is straightforward, and does not require us to project onto states generated by $A_\o$. Without loss of generality, we can focus on the commutation relation between $R_g^{\otimes 3}$ and a single convolution module $\mcc$. First, we see that $R_g^{\tp 3}$ passes through the layer below the control gates in $\mcc$ unchanged. Since the controls occur in the $R_g$ basis, we just need to compute how $R_g$ on the renormalized leg passes through the $CS^{L/R}$ gates. For the $CS^L$ gate, 
			\bea CS^L (\unit \tp R_g \tp \unit) & = \sum_{h\in G}  \proj{\G_\o(h)} \tp (S^L_h)^\da R_g \tp \unit \\ & = 
			\sum_{h\in G} \c_g(\G_\o(h)) \proj{\G_\o(h)} \tp R_g (S^L_h)^\da \tp \unit   \\
			& = (R_g^\da \tp R_g \tp \unit ) CS^L,\eea 
			where we used $(S^L_h)^\da R_g = \l_\o(-g,h) R_g (S^L_h)^\da = \c_{-g}(\G_\o(h)) R_g (S^L_h)^\da$. One can similarly verify that 
			\be CS^R (\unit \tp R_g \tp \unit ) = (\unit \tp R_g \tp R_g^\da) CS^R.\ee 
			Therefore when the $R_g$ on the renormalized leg passes through the control gates, it cancels the two applications of $R_g$ on the measurement legs. Thus 
			\be \mcc R_g^{\tp 3} = (\unit \tp R_g \tp \unit )\mcc\ee
			at the level of operators (i.e. not just when acting on states generated by the MPS tensors $A_\o$). 
			
		\end{proof}
		
		To define the action of $\mcc$ on general errors of the form $R_g \tp R_h \tp R_k$, we define a majority vote function as follows: 
		\begin{definition}[majority vote rule]
			The function $\maj : G^3 \ra G$ is defined to implement majority vote on $G$. In the case of ties, $\maj$ selects out the center element of the tuple $(g,h,k) \in G^3$. Explicitly, 
			\be \maj(g,h,k) \equiv \begin{dcases} g \quad & {\rm if} \, \, (h=g)\vee( k=g) \\ 
				h\quad & {\rm if} \, \, (g=h) \vee (k=h) \\ 
				k \quad & {\rm if}\, \, (g=k)\vee (h=k) \\ 
				h \quad & {\rm else}\end{dcases}\ee 
		\end{definition}
		The previous two propositions then lead to 
		\begin{corollary}
			$\mcc$ implements the majority vote rule on $R_\bfg$ operator strings which act on the canonical ground state $\k{\Psi_\o}$. That is, 
			\be \mcc (R_g \tp R_h \tp R_k)  A_\o^{\tp 3} = \k{\G_\o(h-g)} \tp R_{\maj(g,h,k)} A_\o \tp \k{\G_\o(k-h)}.\ee  
			Graphically, 
			\be \label{mcc_maj} \igpfoc{figures/mcc_vote}\ee 
		\end{corollary}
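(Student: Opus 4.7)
The plan is to leverage the two previously-established results—the symmetry property $\mcc R_g^{\tp 3} = (\unit \tp R_g \tp \unit)\mcc$ and the single-error correction Proposition—together with a short case analysis that reduces every configuration $(g,h,k)$ to one of these building blocks. Because $\mcc$ is linear and the physical errors $R_\bfg$ commute (being simultaneously diagonal in the $\k h$ basis), the entire corollary can be read off term by term.

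First, I would set $m \equiv \maj(g,h,k)$ and factor $R_g \tp R_h \tp R_k = R_m^{\tp 3}(R_{g-m}\tp R_{h-m}\tp R_{k-m})$. Pushing $R_m^{\tp 3}$ through $\mcc$ via the symmetry property, the problem reduces to evaluating $\mcc$ on the residual configuration $R_{g-m} \tp R_{h-m} \tp R_{k-m}$ acting on $A_\o^{\tp 3}$, with the renormalized leg then picking up a left-over factor of $R_m = R_{\maj(g,h,k)}$ as required.

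The residual configuration has at most two nontrivial sites. If it has zero or one nontrivial $R$, we are immediately done: the zero-error case is the clean coarse-graining $\mcc A_\o^{\tp 3} = \k e \tp A_\o \tp \k e$ that follows from the ring of six $A_\o$ tensors together with the trace identity \eqref{traced_vg_app}, while the one-error case is exactly the previous Proposition. A quick check of the domain-wall variables $(h-g,\,k-h)$ in each subcase (only two $g,h,k$ equal) shows that the measurement legs produced by the Proposition match the claimed $\k{\G_\o(h-g)}, \k{\G_\o(k-h)}$.

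The only genuinely new case is three distinct elements, where the tiebreaking rule gives $m = h$ and the residual configuration $R_{g-h} \tp \unit \tp R_{k-h}$ carries nontrivial errors on both the left and right input sites simultaneously. Here I would repeat the diagrammatic push-through argument used for the single-error analysis: fractionalize each error into a pair of projective representations on the ring, then collapse the ring using injectivity \eqref{graphical_inj} and \eqref{traced_vg_app}. Because the left and right errors act on disjoint virtual indices of the ring, their effects separate cleanly—the left measurement leg is projected onto $\k{\G_\o(h-g)}$, the right onto $\k{\G_\o(k-h)}$, and the renormalized $A_\o$ incurs shifts $S^L_{g-h}$ and $S^R_{k-h}$ on its two virtual legs. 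The control gates $CS^L$ and $CS^R$ read off these measurement states and apply $(S^L_{g-h})^\da$ and $(S^R_{k-h})^\da$, cancelling both shifts.

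The main subtlety—and the part I expect to be the real obstacle to write carefully—is checking that the two simultaneous errors do not accidentally trigger the $CCR$ gate. By its definition, $CCR$ activates only when $m_L = \k{\G_\o(-g')}$ and $m_R = \k{\G_\o(g')}$ for a common $g'\in G$; matching this against the measurement states just produced requires $g-h = g' = k-h$, i.e.\ $g = k$, which is excluded in the all-distinct case. A parallel check is needed in the one-error and zero-error subcases to make sure $CCR$ either activates in exactly the way needed or stays dormant, and that various phase factors (the $\vs_g$'s appearing in $CS^R$ and in the proof of the single-error Proposition) cancel so that no spurious scalar is produced. Once these bookkeeping points are verified, reassembling the cases and restoring the $R_m$ factor pushed through at the first step yields precisely \eqref{mcc_maj}.
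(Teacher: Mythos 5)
Your case split and final answer are right, and for the $3{:}0$ and $2{:}1$ configurations your argument is literally the paper's: push the majority element through $\mcc$ using $\mcc R_g^{\tp 3} = (\unit\tp R_g\tp\unit)\mcc$ and invoke the single-error Proposition on what remains. Where you genuinely diverge is the all-distinct (tie) case. The paper does \emph{not} factor out $R_h^{\tp 3}$ there; it computes $\mcc(R_g\tp R_h\tp R_k)A_\o^{\tp 3}$ directly with all three errors present, finds the renormalized leg carrying the full error $V_g^\da\tp V_k$ and the measurement legs carrying $\G_\o(g-h)$, $\G_\o(h-k)$ (with nontrivial $\vs$ and $\o_{\cdot,\cdot}$ phases attached), and then shows via the cocycle identity that the corrective action $(S^L_{g-h})^\da(S^R_{k-h})^\da$ recombines into $R_h(S^L_g)^\da(S^R_k)^\da$ --- so the surviving $R_{\maj}=R_h$ emerges from shift-operator algebra rather than from the symmetry push-through. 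Your route instead makes the tie case structurally identical to the others (the majority always appears via the push-through) at the price of needing a \emph{two-error} version of the push-through lemma for the residual $R_{g-h}\tp\unit\tp R_{k-h}$, which is not an instance of the single-error Proposition. That is a fair trade: your version avoids the $\o$-cocycle gymnastics entirely, since the control gates then apply exactly $(S^L_{g-h})^\da$ and $(S^R_{k-h})^\da$ against exactly those shifts.

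The one place your sketch is thinner than it reads is the assertion that the two simultaneous errors ``separate cleanly'' on the ring. That separation is precisely the content of the diagrammatic computation (fractionalize via Fig.~\ref{fig:symfrac_and_iso}b, collapse via \eqref{graphical_inj} and \eqref{traced_vg_app}), and it must be verified against the actual contraction pattern of the six tensors together with the $\vs$-phases built into $CS^R$ --- note that the paper's own single-error analysis records a rightmost-site error as touching \emph{both} measurement legs, so the claim is not automatic from ``disjoint virtual indices'' and has to be checked (the paper's bookkeeping there appears to contain sign/leg typos, which is all the more reason to redo it carefully). You correctly identify the other genuine subtlety, namely that $CCR$ must stay dormant in the tie case, and your check that activation would force $g=k$ matches the paper's argument exactly. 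With the two-error push-through written out at the same level of detail as the paper's single-error Proposition, your proof closes.
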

		
		Note that this property is not satisfied by the QCNN used in \cite{cong2019quantum} to study the $\zt\times \zt$ cluster state. 
		
		\begin{proof}
			Since the global symmetry pushes through $\mcc$ and since single errors are corrected, this property obviously holds in the 3:0 situation when $g=h=k$. When we are in a 2:1 vote situation, we have e.g. 
			\be \mcc (R_g \tp R_g \tp R_h) A_\o^{\tp 3} = (\unit \tp R_g \tp \unit) \mcc (\unit\tp \unit \tp R_{h-g}) A_\o^{\tp 3}=  \k{e} \tp R_gA_\o \tp \k{\G_\o(h-g)},\ee 
			and likewise for the two other 2:1 vote configurations. Since $R_g$ acts on the renormalized leg in all cases, the 2:1 vote works in the way claimed.
			
			Consider then the case of a tie, where $g\neq h \neq k$. In this case the $R_g^\da$ part of the $CCR$ gate will be unactivated as $g\neq k$. On the other hand, both the $CS^{L/R}$ gates will activate. In this case, the left measurement module is left in the state $\vs_h \o_{g,-h} \k{\G_\o(g-h)}$, the right module is left in the state $\vs_k\o_{h,-k}\k{\G_\o(h-k)}$, and the renormalized leg incurs an error of $V_g^\da \tp V_k$. After the $CS^{L/R}$ gates activate, we get a combined action on the renormalized leg of 
			\bea \vs_h \vs_k \vs^*_{k-h}\o_{g,-h}\o_{h,-k}  (S^L_{g-h})^\da (S^R_{k-h})^\da  & = R_h (S^L_g)^\da (S^R_k)^\da, \eea 
			where we used $(S^R_{k-h})^\da = \o_{k,-h} (S^R_{-h})^\da (S^R_k)^\da$, $(S^L_{g-h})^\da = \o^*_{g,-h}(S^L_{-h})^\da (S^L_g)^\da$, and 
			where we made use of the fact that $\o$ is a bicharacter to write $\vs_h \vs_k \vs_{k-h}^* = \o_{k,h}\o_{h,k}$. The action of the suriving shift operators cancels the error on the renormalized leg, and so in this case, 
			\be \mcc (R_g\tp R_h \tp R_k) A_\o^{\tp 3} =  \k{\G_\o(h-g)} \tp R_{h}A_\o \tp \k{\G_\o(k-h)} .\ee

		\end{proof}

		Note that the above majority-vote property is not true at the level of operators; it relies projecting onto the space generated by the MPS tensors $A_\o$.  
		
		From the above, we know that $R_g$ operators on renormalized legs push `down' through $\mcc$ to become $R_g^{\tp 3}$. To determine how an arbitrary operator on a renormalized leg pushes down through $\mcc$, we need only determine what happens to $S^L_g$. We do this with the following proposition: 
		
		\begin{prop}
			The left shift operator $S^L_g$ pushes `down' through the circuit as  (recall $\int_h \equiv \frac1{|G|} \sum_{h\in G}$)
			\bea \label{renormed_s} 
			\mcc^\da (\unit \tp S^L_g \tp \unit)\mcc & =  R_g \tp S^L_g \tp \unit +  \( S^L_g \tp \unit \tp \unit  - R_g \tp S^L_g \tp \unit \) \int_k S^R_k \tp R_k \tp S^L_k. \eea

		\end{prop}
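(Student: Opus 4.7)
The plan is to verify this operator identity by testing both sides on a spanning family of states in $\mch^{\tp 3}$. Using unitarity of $\mcc$, the identity is equivalent to $(\unit\tp S^L_g\tp\unit)\mcc\ket\psi = \mcc\mathcal{M}\ket\psi$ for all $\ket\psi$, where $\mathcal{M}$ denotes the right-hand side of the proposition. A convenient spanning family is $\{R_\bfh A_\o^{\tp 3}\ket{\a,\b}\}$, with $\bfh\in G^3$ and $\ket{\a,\b}$ running over a basis of the boundary virtual space $\mch_{\text{virt}}^{\tp 2}$: by the pull-through Eq.~\eqref{pushthrough}, the prefactor $R_\bfh$ becomes insertions $V_{h_i}V_{h_{i+1}}^\da$ on the two internal virtual bonds of the chain, and Schur orthogonality (Eq.~\eqref{shur_ortho}) guarantees that $\{V_g\}_{g\in G}$ spans $\End(\mch_{\text{virt}})$ when $\o$ is maximally non-commutative, so these internal insertions exhaust all possible internal-bond operators and the family is a spanning set for $\mch^{\tp 3}$.

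The first main step is to simplify $\mathcal{M}$ on such test states. Since every elementary string satisfies $\mcs_{k;1\ra 3} A_\o^{\tp 3} = A_\o^{\tp 3}$ (the boundary $V_k,V_k^\da$ factors produced by pushing $S^R_k, R_k, S^L_k$ into the MPS cancel pairwise on the internal bonds), the commutation relations Eqs.~\eqref{srwithr} and \eqref{slwithr} yield $\mcs_{k;1\ra 3} R_\bfh A_\o^{\tp 3} = \l_\o(k, h_1-h_3) R_\bfh A_\o^{\tp 3}$, and averaging against $\int_k$ while using $\l_\o(k, m) = \c_k(\G_\o(m))$ gives $\Sigma R_\bfh A_\o^{\tp 3} = \d_{h_1,h_3} R_\bfh A_\o^{\tp 3}$ for $\Sigma = \int_k \mcs_{k;1\ra 3}$, by character orthogonality and injectivity of $\G_\o$. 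Thus $\mathcal{M} R_\bfh A_\o^{\tp 3}$ collapses to $S^L_{g,1} R_\bfh A_\o^{\tp 3}$ when $h_1=h_3$, and to $R_{g,1}S^L_{g,2}R_\bfh A_\o^{\tp 3}$ otherwise; in the second case, using the pull-through identity $S^L_{g,2}A_\o^{\tp 3} = \vs_g S^R_{-g,1}A_\o^{\tp 3}$ together with $R_g S^R_{-g} = \vs_g^* S^L_g$ (an immediate consequence of $R_g = S^L_g S^R_g$ and $S^R_g S^R_{-g} = \vs_g^{-1}$) rewrites it as $\l_\o(h_2-h_1, g) S^L_{g,1} R_\bfh A_\o^{\tp 3}$. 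Applying $\mcc$ to this via the majority-vote identity Eq.~\eqref{mcc_maj} --- and noting that an $S^L_{g,1}$ factor passes through $\mcc$ as a virtual $V_g^\da$ on the left boundary, which re-emerges unchanged on the virtual leg of the renormalized output --- yields the state $\ket{\G_\o(h_2-h_1)}\tp S^L_g R_{\maj(\bfh)} A_\o\tp\ket{\G_\o(h_3-h_2)}$, dressed by an overall phase $\l_\o(h_2-\maj(\bfh),g)$ picked up from commuting $S^L_g$ past $R_{\maj(\bfh)}$. The corresponding computation of $(\unit\tp S^L_g\tp\unit)\mcc R_\bfh A_\o^{\tp 3}$ from Eq.~\eqref{mcc_maj} produces the same state without the extra phase, so the matching of the two sides reduces to the vanishing of $\l_\o(h_2-\maj(\bfh),g)$ on every test state.

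The main obstacle --- and the origin of the specific integrated string operator in $\mathcal{M}$ --- is precisely this phase-matching condition, which works only because of the tiebreaker convention in the majority vote: when $h_1\neq h_3$ the rule forces $\maj(\bfh) = h_2$ (either because $h_2$ already pairs with one of its neighbors, or by the central-element tiebreaker in a three-way disagreement), killing the phase; when $h_1=h_3$ one instead has $\maj(\bfh)=h_1$, but the corresponding phase is absorbed by the $\Sigma=\unit$ simplification of $\mathcal{M}$ that was already carried out above. Beyond this single conceptual point, the remainder of the proof is routine bookkeeping of the cocycle-derived phases $\o,\vs,\l_\o$; the precise combination $\int_k S^R_k\tp R_k\tp S^L_k$ appearing in the correction term of $\mathcal{M}$ is uniquely dictated by the requirement that these phases cancel exactly in both the $h_1=h_3$ and $h_1\neq h_3$ sectors.
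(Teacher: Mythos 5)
Your proof is correct, but it takes a genuinely different route from the paper's. The paper establishes \eqref{renormed_s} as a bare operator identity by conjugating $S^L_g$ through the gates of $\mcc$ one at a time: it passes freely through $CS^R$, acquires the prefactor $R_g\tp S^L_g\tp\unit$ from $CS^L$, picks up $\int_k R_k(R_g^\da-\unit)\tp\unit\tp R_k+\unit^{\tp3}$ from $CCR$ after resolving the projectors $\proj{\G_\o(g)}$ into sums of $R_h$'s, and finally pushes the resulting terms through the ring of six MPS tensors (where $R_k\tp\unit\tp R_k\ra S^R_k\tp R_k\tp S^L_k$). You instead verify the equivalent statement $(\unit\tp S^L_g\tp\unit)\mcc=\mcc\mathcal{M}$ on the spanning family $R_\bfh A_\o^{\tp3}\k{\a,\b}$, reusing the already-proven majority-vote corollary \eqref{mcc_maj} and the stabilizer property of the three-site strings. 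Your spanning argument is sound, and it is worth appreciating the subtlety it navigates: the majority-vote property holds only on MPS-generated states, yet because the perfect-tensor property \eqref{inj} makes $A_\o^{\tp3}$ a surjection from the six virtual legs onto $\mch^{\tp3}$ and the MNC insertions $V_{h_i}V_{h_{i+1}}^\da$ exhaust $\End(\mch_{\text{virt}})$ on each internal bond, testing on these states really does establish the full operator identity. (I would make the surjectivity step explicit, since for a merely injective --- rather than perfect --- MPS tensor the family would not span.) Your phase bookkeeping checks out: $\int_k\mcs_{k;1\ra3}$ acts as $\d_{h_1,h_3}$ on the test states, the residual phase is $\l_\o(h_2-\maj(\bfh),g)$, and this is trivial precisely because the tiebreaker forces $\maj(\bfh)=h_2$ whenever $h_1\neq h_3$, while $\maj(\bfh)=h_1$ in the $h_1=h_3$ sector where the string projector has already collapsed $\mathcal{M}$ to $S^L_{g,1}$. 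What each approach buys: the paper's derivation is self-contained (independent of the majority-vote corollary) and shows constructively where each term of the renormalized operator originates in the gate structure; yours makes transparent \emph{why} the answer has the form it does --- the integrated string operator is exactly the projector onto configurations with no net error gradient between the outer sites --- and exposes that the correctness of the identity hinges on the tie-breaking convention of $\maj$. One cosmetic note: you say the phase must ``vanish'' when you mean it must equal unity.
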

		
		\begin{proof}
			$S^L_g$ moves through the $CS^R$ gate for free (as $S^L_g, S^R_h$ commute), and pushes through the $CS^L$ gate as
			\bea (\unit \tp S^L_g\tp \unit ) CS^L  & = 	\sum_{h\in G}  \proj{\G_\o(h)} \tp S^L_g (S^L_h)^\da \tp \unit \\ 
			& = \sum_{h\in G}  \c_g(\G_\o(h)) \proj{\G_\o(h)} \tp (S^L_h)^\da S^L_g \tp \unit \\ 
			& = CS^L(R_g \tp S^L_g \tp \unit)\eea 
			where in the first line we used $S^L_g (S^L_h)^\da = \l_\o(-h,g) (S^L_h)^\da S^L_g = \c_g(\G_\o(h)) (S^L_h)^\da S^L_g$. 
			Now we pull $R_g \tp S^L_g\tp \unit$ through the $CCR$ gate: 
			\bea (R_g \tp S^L_g \tp \unit) CCR & = CCR \Big( \sum_{h\in G} R_g^\da \proj{\G_\o(-h)} \tp \unit \tp \proj{\G_\o(h)} \\ & \qq + \sum_{h\neq h' \in G}  \proj{\G_\o(-h)} \tp \unit \tp \proj{\G_\o(h')}\Big)(R_g \tp S^L_g \tp \unit)\eea
			where we similarly used $S^L_g R_h^\da = \c_g(\G_\o(h)) R^\da_h S^L_g$. 
			The second gate on the RHS has the structure of a controlled-$R_g^\da$ gate acting on the first $\tp$ factor, with the third $\tp$ factor acting as the control. To simplify this, we use
			\be \proj{g} = \int_{h} R_h \c_h^*(g) \implies \proj{\G_\o(g)} = \int_h R_h \l_\o^*(h,g),\ee 
			which allows us to replace the double control gate with a sum over $R_h$ operators. This then lets us write the expression in parenthesis on the RHS of the above equation as 			
			\bea |G|\int_{h,h'} & (\d_{h,h'}( R_g^\da-\unit) + \unit) \proj{\G_\o(-h)} \tp \unit \tp \proj{\G_\o(h')}   \\ & = |G|^2 \int_{h,h',k,k'} \l_\o(k,h) \l_\o(k',-h') ((R_g^\da -\unit)\d_{h,h'} + \unit) (R_k \tp \unit \tp R_{k'}) \\
			& = \int_k R_k(R_g^\da -\unit) \tp \unit \tp R_k + \unit^{\tp 3}.\eea 
			While it is not completely obvious in this presentation, the RHS is indeed unitary as required. 
			
			Pushing $\unit \tp S^L_g \tp \unit$ through the control gates in $\mcc$ thus generates the operator $(\int_k (R_g^\da -\unit)R_k \tp \unit \tp R_k + \unit^{\tp 3})(R_g\tp S_g^L\tp \unit)$. 
			Now we push these terms through the ring of six MPS tensors in $\mcc$. This can be done by noting that $R_g \tp S^L_g \tp \unit$ pushes through the ring unchanged, $R_g^\da \tp \unit \tp \unit$ pushes through to $S^R_{-g} \tp S^L_{-g} \tp \unit$, while $R_k \tp \unit \tp R_k$ pushes through to a 3-site $k$ string operator: 
			\be R_k \tp \unit \tp R_k \ra S^R_k \tp R_k \tp S^L_k.\ee 
			Therefore as claimed, 
			\bea \mcc^\da (\unit \tp S^L_g \tp \unit)\mcc & =  R_g \tp S^L_g \tp \unit +  \( S^L_g \tp \unit \tp \unit  - R_g \tp S^L_g \tp \unit \) \int_k S^R_k \tp R_k \tp S^L_k, \eea 
			where we used $S^R_{-g} R_g \tp S^L_{-g} S^L_g = S^L_g \vs_g^* \tp \vs_{-g} \unit = S^L_g \tp \unit$. This expression may also be written more explicitly as 
			\bea 
			\mcc^\da (\unit \tp S^L_g \tp \unit)\mcc & =  R_g \tp S^L_g \tp \unit (1-1/|G|) + \frac1{|G|} S^L_g \tp \unit \tp \unit + \frac1{|G|} R_g \tp R_g \tp S^L_g \\ & \qq + \frac1{|G|} (S^L_g \tp \unit \tp \unit - R_g \tp S^L_g \tp \unit) \sum_{G \ni k \neq e,g} S^L_k \tp R_k \tp S^L_k - \frac1{|G|} R_g S^R_g  \tp S^L_g R_g \tp S^L_g \eea

		\end{proof}
		
		The point of this rather messy expression is to show that $S^L_g$ pushes through the RG circuit to become a renormalized operator which still carries the same irrep under the adjoint $G$-action as $S^L_g$ itself (an analogous calculation for $S^R_g$ shows that it pushes through to something where the `net' shift is by $S^R_g$ for all terms). Note furthermore that this symmetry-property means that all of the terms on the RHS of \eqref{renormed_s} mutually commute with each other; this can be easily demonstrated explicitly using the commutation relations \eqref{slwithr} and \eqref{srwithr}. Finally, note that upon iterating this map $d$ times, the number of terms appearing in the resulting linear combination is {\it doubly}-exponential in $d$ --- but due to the symmetry, all such terms transform under the adjoint action in the same way that $S^L_g$ itself does. 
		
		\section{Convergence of the RG circuits } \label{app:convergence} 
		
		In this section we will discuss the convergence of our circuits when run on a generic state $\k{\psi_\o} \in \spt_\o$. We will first prove that the error correction procedure always succeeds in the limit of large circuit layer depth $d$, so that the error correction threshold can be identified exactly with the SPT phase boundary (up to a set of measure zero where fine-tuned degeneracies prevent convergence). A subsequent section contains a numerical study of the convergence for a few types of random error models.  In the final section, we show that the circuit does not `over-correct' and produce false-positives. 
		
		A note on notation: in this section, we will write $\mcq_\o(\r)$ for the action of the RG circuit on the density matrix $\r$ followed by a tracing out of the ancilla degrees of freedom:
		\be \mcq_\o(\r) = \Tr_{an}[\mcq_\o \r \mcq_\o^\da].\ee

		\ss{Analytic arguments: convergence of majority vote} 
		
		Recall that $\k{\psi_\o}$ can always be written as 
		\be \k{\psi_\o} =\sum_\bfg C_\bfg R_\bfg \k{\Psi_\o},\ee 
		where $\sum_\bfg |C_\bfg|^2 = 1$. \footnote{Note that on a closed ring, the normalization condition is actually 
			\be \sum_{\bfg,\bfg'} \sum_{h\in G} \d_{\bfg,\bfg'+h} C_\bfg^*C_{\bfg'} = 1,\ee 
			where $\bfg'+h$ means adding $h$ to each component of $\bfg'$. This is just because the global symmetry operator stabilizes $\k{\Psi_\o}$. However, we can wolog assume that no two $C_\bfg$ differ by a constant factor of $h$ on every site, since $C_\bfg$ and $C_{\bfg+h}$ act identically on $\k{\Psi_\o}$. }. 
		As usual, for convenience we will let $\k{\psi_\o}$ be defined on a chain of length $L = 3^l$. Let $\k{\Psi_\o^{(d)}}$ denote the canonical representative of $\spt_\o$ defined on a chain of length $3^{l-d}$. 
		
		\begin{prop}
			After $d$ applications of  $\mcq_\o$ on the initial state $\proj{\psi_\o}$, the probability $p_d(\bfg)$ that a given error $R_\bfg$ acts on the state $\mcq_\o^d(\proj{\psi_\o})$ is 
			\be p_d(\bfg) =\lan \Psi_\o^{(d)} | R_\bfg^\da \mcq_\o^d(\proj{\psi_\o}) R_\bfg \k{\Psi_\o^{(d)}}  =\sum_{\bfh \in G^L} \d(\maj^d(\bfh)-\bfg) |C_\bfh|^2.\ee 
		\end{prop}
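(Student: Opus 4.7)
The plan is to unfold the action of $\mcq_\o^d$ by linearity on the representation $\k{\psi_\o} = \sum_{\bfh} C_\bfh R_\bfh \k{\Psi_\o}$ furnished by Theorem~\ref{maintext_theorem}, then read off the renormalized probability by taking the overlap with $R_\bfg \k{\Psi_\o^{(d)}}$ after tracing out the ancillas. First, I would iterate the single-layer majority-vote identity $\mcc(R_g\tp R_h\tp R_k) A_\o^{\tp 3} = R_{\maj(g,h,k)} A_\o \tp \k{\G_\o(h-g)} \tp \k{\G_\o(k-h)}$ over the ternary tree of $\mcc$-gates to obtain
\be \label{eq:majd}
\mcq_\o^d \, R_\bfh \k{\Psi_\o} \;=\; R_{\maj^d(\bfh)} \k{\Psi_\o^{(d)}} \tp \k{\phi_d(\bfh)}_{an},
\ee
where $\k{\phi_d(\bfh)}$ is a product state on the ancillas whose components encode, at every level of the tree, the pairwise domain-wall variables between the three inputs fed into each $\mcc$.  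This identity is the workhorse of the proof and follows by a straightforward induction on $d$.

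Next, by linearity $\mcq_\o^d \k{\psi_\o} = \sum_\bfh C_\bfh \, R_{\maj^d(\bfh)} \k{\Psi_\o^{(d)}} \tp \k{\phi_d(\bfh)}$, so
\be
\Tr_{an}\!\Big[\mcq_\o^d \proj{\psi_\o}\mcq_\o^{d\da}\Big] = \sum_{\bfh,\bfh'} C_\bfh C_{\bfh'}^* \,\lan \phi_d(\bfh') | \phi_d(\bfh)\ran\, R_{\maj^d(\bfh)} \proj{\Psi_\o^{(d)}} R_{\maj^d(\bfh')}^\da.
\ee
Sandwiching this between $\lan \Psi_\o^{(d)}| R_\bfg^\da$ and $R_\bfg \k{\Psi_\o^{(d)}}$, the two inner products $\lan \Psi_\o^{(d)}|R_\bfg^\da R_{\maj^d(\bfh)}|\Psi_\o^{(d)}\ran$ (and its conjugate) collapse to a Kronecker delta equating $\bfg$ with $\maj^d(\bfh)$ by the identity \eqref{rgoverlap}, leaving
\be
p_d(\bfg) = \sum_{\bfh,\bfh'} C_\bfh C_{\bfh'}^* \, \lan \phi_d(\bfh')|\phi_d(\bfh)\ran\, \d(\maj^d(\bfh)-\bfg)\,\d(\maj^d(\bfh')-\bfg).
\ee

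The remaining step, and the only one requiring a bit of care, is showing that the ancilla overlaps $\lan \phi_d(\bfh')|\phi_d(\bfh)\ran$ reduce the double sum to a single diagonal sum $\sum_\bfh |C_\bfh|^2 \d(\maj^d(\bfh)-\bfg)$. The key observation is that, because the ancilla at each node of the tree records the group-element difference between its three inputs, the full product state $\k{\phi_d(\bfh)}$ uniquely determines $\bfh$ modulo a global shift $\bfh \mapsto \bfh + c$ by a constant $c\in G$.  Such a constant shift is precisely the ambiguity that was already removed when we chose orbit representatives in writing $\k{\psi_\o} = \sum_\bfh C_\bfh R_\bfh \k{\Psi_\o}$ (since $R_{c}^{\tp L}$ stabilizes $\k{\Psi_\o}$, as is readily checked using the push-through of Fig.~\ref{fig:symfrac_and_iso}b). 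With this convention the ancilla states $\k{\phi_d(\bfh)}$ for distinct $\bfh$ are mutually orthogonal product states, the overlap collapses to $\d_{\bfh,\bfh'}$, and the claim follows. The main obstacle is really just bookkeeping the global-shift redundancy consistently between the Hilbert-space inner product $\lan \Psi_\o^{(d)}|R_\bfg^\da R_{\maj^d(\bfh)}|\Psi_\o^{(d)}\ran$ and the ancilla inner product $\lan \phi_d(\bfh')|\phi_d(\bfh)\ran$; once representatives are fixed both yield genuine Kronecker deltas and the formula emerges immediately.
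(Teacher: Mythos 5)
Your proof is correct and follows essentially the same route as the paper's: expand the input via Theorem~\ref{maintext_theorem}, iterate the majority-vote identity for $\mcc$ down the ternary tree, and collapse the double sum using orthogonality of the ancilla records together with $\lan\Psi_\o|R_\bfg|\Psi_\o\ran=\prod_i\d_{g_i,e}$. The only cosmetic difference is in the last step, where the paper kills the off-diagonal terms by noting that the ancilla record together with the renormalized output $\maj^d(\bfh)$ determines $\bfh$ uniquely, while you instead invoke the global-shift representative convention for the $C_\bfh$; both are valid.
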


		\begin{proof}
			$d$ applications of $\mcq_\o$ on $\proj{\psi_\o}$ produce the state 
			\be \mcq_\o^d(\proj{\psi_\o}) = \sum_{\bfh,\bfh'\in G^L} C_\bfh C_{\bfh'}^* R_{\maj^d(\bfh)} \proj{\Psi_\o^{(d)}} R^\da_{\maj^d(\bfh')}  \Tr[ \k{\mcm_d(\bfh)} \lan \mcm_d(\bfh')|] \ee 
			where $\k{\mcm_d(\bfh)}$ are the measurement outcomes generated by $d$ layers of the circuit. 
			The probability $p_d(\bfg)$ of getting an error is then found by taking the overlap with the state $R_\bfg\k{\Psi^{(d)}_\o}$, giving by way of \eqref{rgoverlap}
			\bea p_d(\bfg) 
			& =\sum_{\bfh,\bfh'\in G^L} C_\bfh C_{\bfh'}^* \lan \Psi_\o^{(d)} | R_{\maj^d(\bfh)-\bfg}\k{\Psi_\o^{(d)}}\lan \Psi_\o^{(d)}| R_{\maj^d(\bfh')-\bfg}^\da \k{\Psi_\o^{(d)}} \d_{\mcm_d(\bfh)_M,\mcm_d(\bfh')_M} \\ 
			& = \sum_{\bfh,\bfh' \in G^L} C_\bfh C^*_{\bfh'} \d_{\mcm_d(\bfh),\mcm_d(\bfh')} \d_{\bfg,\maj^d(\bfh)} \d_{\maj^d(\bfh),\maj^d(\bfh')} \\  &  
			= \sum_{\bfh\in G^L \, : \, \maj^d(\bfh)  = \bfg} |C_\bfh|^2,\eea 
			which is positive and satisfies $\sum_{\bfg\in G^{L/3^d}} p_d(\bfg) = 1$. 
		\end{proof}
		
		The probability distribution of errors thus flows under application of the RG circuit according to recursive majority voting.
		To understand this flow, we need to know a bit more about the initial distribution $p_0(\bfg) = |C_\bfg|^2$. 
		
		First, we will assume that $p_0(\bfg)$ is translation-invariant, meaning that e.g. the single-site marginal probability $p_0(g_i)$ of finding a $g$ error on site $i$ is independent of $i$. This assumption is made purely for simplicity of presentation, and is not essential. Second, since $C_\bfg$ can be identified with the MPO of $R_\bfg$ operators that relates $\k{\Psi_\o}$ to $\k{\psi_\o}$, $p_0(\bfg)$ will have a finite correlation length $\xi$. Letting 
		\be p_d(g_{i_1},\dots,g_{i_n}) = \sum_{\{g_j\} \, : j \, \not \in \{i_1,\dots,i_n\}} p_d(\bfg) \ee 
		denote the $n$-site marginals of $p_d(\bfg)$, this means that $p_0(g_i,g_j) \approx p_0(g_i)p_0(g_j) $ when $|i-j|>\xi$. In fact in the approximation where $\k{\psi_\o}$ is obtained from $\k{\Psi_\o}$ from a constant-depth local (as opposed to quasilocal) circuit --- which is good enough for our purposes --- $p_0(g_i,g_j)$ {\it exactly} factorizes for all $|i-j|>\xi$, by causality. 
		
		Finally, we will assume that $p_0(\bfg)$ is generic, meaning that there are no accidental degeneracies among the single-site marginals $p_0(g)$. More specifically, we will assume that for all $g\neq h$, $|p_0(g) - p_0(h)|$ is nonzero and independent of system size (e.g. does not vanish as $1/L$). Situations where this assumption fails are measure zero in the space of possible probability distributions generated by the circuits relating $\k{\psi_\o}$ to $\k{\Psi_\o}$, and hence will be ignored.  
		
		Our goal in the following is to prove the following claim, which ensures $\mcq_\o^{d\ra \infty}(\proj{\psi_\o}) = \proj{\Psi_\o},\,\,\, \forall\,\, \k{\psi_\o}\in \spt_\o$, so that the RG flow converges on the correct reference state: 
		
		\begin{claim}[convergence of majority vote]\label{claim:maj_conv}
			For a generic translation-invariant starting probability distribution $p_0(\bfg)$ with finite correlation length, the sequence of distributions $\{ p_d(\bfg)\}$ converges as $d\ra\infty$ to the product distribution $p_{d\ra\infty}(\bfg) = \prod_i \d_{g_i,g_*}$, for some fixed $g_*\in G$. 
		\end{claim}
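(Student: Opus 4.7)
The plan is to reduce the claim to a one-site marginal analysis by exploiting the tree structure of the RG circuit together with the finite correlation length of the initial distribution, and then to prove monotonic amplification of the gap between the largest and second-largest single-site marginals.

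First I would establish approximate factorization of the marginals at large depth. Each renormalized variable $g_i^{(d)}$ is a function, via iterated majority, of a length-$3^d$ contiguous block of the original $\bfg$. Two neighboring renormalized variables $g_i^{(d)}, g_{i+1}^{(d)}$ therefore depend on two disjoint blocks that are coupled in $p_0$ only through the $\sim \xi$ sites straddling the block boundary. A sensitivity argument for $\maj^d$ (a single flipped leaf flips the root only if it lies on a long chain of ties, which is increasingly improbable as $d$ grows under the genericity hypothesis) gives a bound of the form
\begin{equation}
\bigl|\,p_d(g_i,g_{i+1}) - p_d(g_i)p_d(g_{i+1})\,\bigr| = O(\xi/3^d),
\end{equation}
and analogous bounds for higher marginals. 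Thus long-time behavior reduces to the case of a product starting distribution.

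Next I would analyze the independent case. With $p_d(\bfg) = \prod_i p_d(g_i)$ exact, the single-site marginal evolves under
\begin{equation}
p_{d+1}(g) = \sum_{\maj(g_1,g_2,g_3)=g} p_d(g_1)\,p_d(g_2)\,p_d(g_3).
\end{equation}
Let $g_1^*, g_2^* \in G$ realize the largest and second-largest values of $p_d$, and set $\d_d = p_d(g_1^*) - p_d(g_2^*)$. Expanding the sum and using the tie-breaking convention, the identity
\begin{equation}
p_{d+1}(g_1^*) - p_{d+1}(g_2^*) = f_d(g_1^*) - f_d(g_2^*)
\end{equation}
derived in the text specializes to a purely polynomial expression in the $p_d(g)$ that is manifestly positive whenever $p_d(g_1^*) > p_d(g_2^*)$. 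This gives strict monotonicity of $\d_d$; a quantitative version (e.g. bounding $\d_{d+1} \geq \d_d + c\,\d_d(1-\d_d)$ for some constant $c$ depending only on $|G|$) yields exponential convergence $\d_d \to 1$, hence $p_d(g) \to \d_{g,g_*}$ with $g_* = g_1^*$ selected by the generic non-degeneracy of $p_0$. Translation invariance guarantees the same $g_*$ at every site.

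Finally I would combine the two pieces: given $\ep>0$, choose $d_0$ so that the factorization error is below $\ep/2$, and then apply the independent-case result for the remaining $d-d_0$ layers to drive the single-site marginal within $\ep/2$ of $\d_{g_*}$. The main obstacle is the approximate-factorization step. Making it fully quantitative requires a uniform sensitivity bound on $\maj^d$ with the stated tie-breaking rule, and in particular controlling how ties in a subtree can propagate to the root; genericity of $p_0$ and the exponential rarity of balanced configurations under $\maj$ should close this, but the bookkeeping is delicate. A cleaner alternative, flagged in the excerpt, is to directly assume the three-site inequality $f_d(g_1^*) > f_d(g_2^*)$ (the ``non-Gerrymandering'' hypothesis), which bypasses the factorization estimate altogether and turns the monotonicity argument above into an immediate proof of convergence.
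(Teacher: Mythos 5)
Your proposal is correct and follows essentially the same route as the paper's proof in App.~\ref{app:convergence}: reduce to an iid distribution at depths $d \gg \log_3\xi$ by arguing that adjacent depth-$d$ majority trees are coupled only through $O(\xi)$ boundary leaves whose influence on the root is exponentially suppressed (the paper formalizes this with a Fano-inequality bound on $H(\maj^l(\bfg)\,|\,g_1,\dots,g_{3^l-1})$), then show that in the iid case the gap $\d_d$ between the top two single-site marginals grows monotonically with the same worst-case rate $\d_{d+1}-\d_d \gtrsim \d_d(1-\d_d)/|G|$ that you propose, with the dominance/non-Gerrymandering assumption offered as the same shortcut. The only slip is notational: the three-site identity should read $\d_{d+1}-\d_d = f_d(g_1^*)-f_d(g_2^*)$ rather than $p_{d+1}(g_1^*)-p_{d+1}(g_2^*) = f_d(g_1^*)-f_d(g_2^*)$.
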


		Claim \ref{claim:maj_conv} follows easily if one makes an additional (reasonable) assumption about the initial distribution of errors. 
		
		\begin{definition}[dominance property]
			For a probability distribution $p(\bfg)$, let $g_1$ be the element for which the single-site marginal probability $p(g)$ is maximized; $p(g) \leq p(g_1) \, \, \forall \, g \in G$. Let $p(g,h,k) = p(g_i,h_{i+1},k_{i+2})$ denote the 3-site marginal probability, for arbitrary $i$. The distribution $p(g)$ is said to be dominant if for all $G\ni g \neq g_1$,
            \be \label{dominance}\sum_{h\in G} (p(g_1,h,g_1) - p(h,g_1,h)) > \sum_{h\in G} (p(g,h,g) - p(h,g,h)).\ee 
		\end{definition}
		
		This property essentially means that if the single-site marginals $p(g)$ are maximized on $g_1$, then for two strings $\bfg$ with the same number of domain walls (occurrences of adjacent non-identical group elements in $\bfg$), the one with more occurrences of $g_1$ is the one with the higher probability. To see this, we note that the difference between the LHS and RHS of \eqref{dominance} can be written as 
        \bea \label{elaborated_dominance}\sum_{h\in G} [(p(g_1,h,g_1)-p(g,h,g)) - (p(h,g_1,h) & - p(h,g,h))]  = 2 (p(g_1,g,g_1) - p(g,g_1,g)) \\ & + \sum_{G \ni h \neq g_1,g} [(p(g_1,h,g_1) - p(g,h,g)) - (p(h,g_1,h) - p(h,g,h))].\eea 
  We expect the RHS of \eqref{elaborated_dominance} to be positive for distributions associated with ground states of all `reasonable' Hamiltonians. Indeed, the first term proportional to $p(g_1,g,g_1) - p(g,g_1,g)$ will be positive if for two configurations with the same number of domain walls, the one with more instances of $g_1$ has higher probability. If this property holds, the differences $d_1(h) = p(g_1,h,g_1) - p(g,h,g),\, d_2(h) = p(h,g_1,h) - p(h,g,h)$ for $h \neq g_1,g$ will also be positive. The RHS of \eqref{elaborated_dominance} will then be positive if $d_1(h) > d_2(h)$, which is reasonable since more instances of $g_1$s appear in $d_1(h)$. 
		
	To see why distributions which are dominant converge as in claim \ref{claim:maj_conv}, we note that under majority vote, the single-site marginals at layer $d+1$ are obtained from the single- and three-site marginals at layer $d$ as 
		\bea \label{pd_flow} p_{d+1}(g) & = \sum_{h\in G} p_d(g,h,g) + \sum_{h,k\in G} p(h,g,k) - \sum_{h\in G}p(h,g,h) \\ & = p_d(g) + \sum_{h\in G} (p_d(h,g,h) - p_d(g,h,g)).\eea 
		Let $\{g_a\},\,  a=1,\dots,|G|$ be an ordering of the elements of $G$ sorted by decreasing probability, so that $a\leq b \implies p_d(g_a) \geq p_d(g_b)$. Define the gap 
		\be\label{deltadef_app} \d_d \equiv p_d(g_1) - p_d(g_2)\ee 
		to be the difference in probabilities between the most likely and second-most likely group elements. The gap then evolves under $\maj$ as 
		\be \label{delta_flow} \d_{d+1} - \d_{d} = \sum_{h\in G} [(p_d(h,g_1,h) - p_d(g_1,h,g_1)) - (p_d(h,g_2,h) - p_d(g_2,h,g_2))].\ee 
		If $p_d(\bfg)$ obeys the dominance property, the gap clearly monotonically increases with $d$. A generic initial distribution satisfying the dominance property thus flows until the gap reaches the maximal size of $\d_{d\ra \infty} = 1$, implying that all of the probability is concentrated on a single group element, and that the RG flow converges. 
		
		We have to work a bit harder when the initial distribution is not dominant, because in this case $\d_d$ needn't be monotonically increasing with $d$. Indeed, it is possible to construct examples in which $\d_d$ changes nonmonotonically. These examples are rather contrived, but do not appear to be measure-zero in the space of distributions. We refer to this phenomenon as Gerrymandering, for obvious reasons: 
		
		\begin{definition}
			A series of distributions $p_d(\bfg),d\in \nn$ related by recursive majority voting is said to Gerrymander if there is some $d_*$ for which the single-site marginal $p_{d>d_*}(g)$ attains its maximum on $g_>$, while $p_{d\leq d_*}(g)$ attains its maximum on $g_< \neq g_>$. 
		\end{definition}
		
		Any Gerrymandering distribution cannot be dominant, and as we will see shortly, can also not be iid. It is however possible to construct translation-invariant Gerrymandered distributions using inter-site correlations. For example, we may construct a distribution whose initial 3-site marginals are 
		\be \label{gerry} p_0(g,h,k) = \(1-(\d_{h,g_*}\d_{k,g_*} + \d_{k,g_*}\d_{g,g_*} + \d_{h,g_*} \d_{g,g_*}) + 2\d_{h,g_*}\d_{k,g_*}\d_{g,g_*}\) \( \a \frac{\d_{g_* \in \{g,h,k\}}}{3(|G|-1)^2} + (1-\a) \frac{1-\d_{g\in \{g,h,k\}}}{(|G|-1)^3} \),\ee 
		where $0<\a<1$ and $\d_{a\in \{g,h,k\}}$ is unity if $a \in \{g,h,k\}$, and zero else. 
		The first factor in \eqref{gerry} is to ensure that no two $g_*$s occur within a distance of 3 from one another in a way consistent with translation invariance, implying $p_1(g_*) = 0$. However, by increasing the value of $\a$ one can arrange for $p_0(g)$ to be maximized at $g_*$ as long as $|G|>2$ ($\a<1$ is needed for translation invariance, which is spontaneously broken at $\a=1$). This phenomenon is robust with respect to relaxing the hard-core constraint on the distribution of $g_*$s, and thus we expect Gerrymandering to occupy a set of nonzero measure in parameter space.

		We now argue that a general distribution (even if it Gerrymanders) will still converge in the manner of claim \ref{claim:maj_conv}. We argue in two stages: first, we prove that if $p_d(\bfg)$ is iid for some $d$, meaning that $p_d(\bfg) = \prod_i p_d(g_i)$, then $p_{d\ra\infty}(\bfg)$ converges in the way claimed. Secondly, we argue that for a generic initial distribution $p_0(\bfg)$, $p_d(\bfg)$ can be well-approximated as iid for all $d$ sufficiently greater than $\log_3(\xi)$. 
		
		\begin{prop}
			Let $p_0(\bfg)$ be iid, with gap $\d_0$. For any $\ep>0$, there exists a $d_\ep$ such that 
			\be \label{maxlcond} \d_{d_\ep} > 1-\ep,\ee 
			so that $d_\ep$ sets the scale when $p_d(g)$ is $\ep$ close to its $d\ra\infty$ limiting value. 
			The scale $d_\ep$ is approximately upper bounded by 
			\be d_\ep \leq \ln(1/\ep) + |G|\ln(1/\d_0).\ee
		\end{prop}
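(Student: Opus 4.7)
The plan is to exploit the iid assumption to reduce the recursive majority vote to a closed one-dimensional map on the probability simplex, and then run a two-phase convergence analysis. First I would specialize the flow \eqref{pd_flow} to the iid case $p_d(g,h,k) = p_d(g) p_d(h) p_d(k)$, which after substitution and simplification gives the single-site recursion
\be p_{d+1}(g) = p_d(g)\bigl[1 + p_d(g) - S_d\bigr],\qq S_d \equiv \sum_{h\in G} p_d(h)^2.\ee
This map manifestly preserves normalization, and because $S_d \leq \max_h p_d(h) = p_d(g_1)$ the ordering of the $p_d(g)$ is preserved along the flow, so $g_1$ and $g_2$ remain well-defined for all $d$ and in fact $p_d(g_1)$ is non-decreasing.

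Next I would derive a quantitative gap flow by direct substitution, obtaining $\d_{d+1} = \d_d\bigl[1 + p_d(g_1) + p_d(g_2) - S_d\bigr]$. The multiplier is at least $1$ using $S_d \leq p_d(g_1)$, giving immediate monotonicity of $\d_d$. For a quantitative rate I would use that $\sum_{h\neq g_1} p_d(h) = 1-p_d(g_1)$ with each summand $\leq p_d(g_2)$, so $\sum_{h\neq g_1} p_d(h)^2 \leq p_d(g_2)\bigl(1-p_d(g_1)\bigr)$. Combined with $p_d(g_1) \geq 1/|G|$, this simplifies to
\be p_d(g_1) + p_d(g_2) - S_d \;\geq\; p_d(g_1)(1-\d_d) \;\geq\; \frac{1-\d_d}{|G|},\ee
so that $1 - \d_{d+1} \leq (1-\d_d)(1 - \d_d/|G|)$.

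From this I would close the argument with a two-phase analysis. In the slow phase $\d_d \leq 1/2$, the bound gives $\d_{d+1} \geq \d_d\bigl(1 + 1/(2|G|)\bigr)$, so $\d_d$ grows geometrically and reaches $1/2$ within $O\bigl(|G|\ln(1/\d_0)\bigr)$ layers. In the fast phase $\d_d \geq 1/2$, the same bound reads $1-\d_{d+1} \leq (3/4)(1-\d_d)$, which is exponential contraction and requires only $O\bigl(\ln(1/\ep)\bigr)$ additional layers to drive $1 - \d_d$ below $\ep$. Summing the two contributions and absorbing numerical constants into the choice of the crossover threshold recovers the claimed bound $d_\ep \leq \ln(1/\ep) + |G|\ln(1/\d_0)$.

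The main obstacle is not the algebra but a conceptual one: the contraction rate is genuinely regime-dependent. When the distribution is nearly uniform there is no clear winner and the near-symmetry among the $p_d(g)$'s enforces a slow, logarithmic escape on a time scale proportional to $|G|$; only once one element separates from the pack does positive feedback drive the system rapidly to a product state. Making the prefactors in the two phases combine to match the specific form $\ln(1/\ep) + |G|\ln(1/\d_0)$, rather than a coarser expression such as $|G|\ln\bigl(1/(\d_0\ep)\bigr)$, is the only place where some care in tracking constants is needed.
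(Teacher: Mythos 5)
Your reduction to the single-site recursion $p_{d+1}(g) = p_d(g)\,[1 + p_d(g) - S_d]$ and the resulting gap flow $\d_{d+1} = \d_d\,[1 + p_d(g_1) + p_d(g_2) - S_d]$ are correct and match the paper's iid formula \eqref{iid_delta_flow}. Your route is close in spirit to the paper's but is executed more cleanly: where the paper asserts that the increment is minimized on the ``almost uniform'' distribution \eqref{almost_uniform} and then integrates an approximate ODE, you prove the distribution-independent inequality $p_d(g_1)+p_d(g_2)-S_d \geq p_d(g_1)(1-\d_d)$. In fact this inequality is saturated exactly on the paper's worst-case distribution, so the two arguments bound the same quantity; yours buys a fully discrete, rigorous derivation at the cost of a factor-of-two looser constant in the slow phase (harmless, since the statement is only an approximate bound).

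There is one slip, and it sits precisely where you say care is needed. In the fast phase you claim that $1-\d_{d+1} \leq (1-\d_d)(1-\d_d/|G|)$ gives a contraction factor of $3/4$ once $\d_d \geq 1/2$; but that bound only yields $1 - 1/(2|G|)$, which tends to $1$ for large $|G|$ and would give $d_\ep \lesssim |G|\ln(1/\d_0) + |G|\ln(1/\ep)$ --- the $|G|$ in front of $\ln(1/\ep)$ is exactly what the claimed bound does not have. The culprit is using $p_d(g_1)\geq 1/|G|$ in a regime where $p_d(g_1)$ is large. The fix is already implicit in your own inequality: since $p_d(g_1) \geq \d_d$ always, one has
\be 1-\d_{d+1} \;\leq\; (1-\d_d)\bigl(1-\d_d\, p_d(g_1)\bigr) \;\leq\; (1-\d_d)(1-\d_d^2) \;\leq\; \tfrac34\,(1-\d_d) \qq \text{for } \d_d\geq \tfrac12,\ee
which restores the $|G|$-independent contraction and the claimed $\ln(1/\ep)$ term. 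With that one-line repair your proof is complete and arguably tighter than the paper's.
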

		
		\begin{proof}
			
			We give a rather lowbrow proof that doesn't make use of any information-theoretic properties of the flow of $p_d$. 
			From \eqref{delta_flow}, in the iid case we have 
			\be \label{iid_delta_flow} \d_{d+1} - \d_d = p_d(g_1)^2 - p_d(g_2)^2-\d_d \sum_{h\in G} p_d(h)^2 = \d_d\( 2p_d(g_1)(1+\d_d -p_d(g_1)) - \d_d(1+\d_d) - \sum_{G \ni h\neq g_1,g_2} p_d(h)^2 \).\ee 
			Note that the RHS is bounded as  
			\be  p_d(g_1)^2 - p_d(g_2)^2 -\d_d \sum_{h\in G} p_d(h)^2 \geq  \d_d p_d(g_2) \geq 0,\ee 
			so that the gap grows monotonically with $d$ (with $\d_{d+1}-\d_d >0$ as long as $p_d(g_2) >0$ and $\d_d>0$); the iid case thus satisfies the above dominance condition, and is guaranteed to converge when $d\ra\infty$. 
			
			Now we estimate the convergence time. 
			For a fixed $\d_d$, the difference $\d_{d+1} - \d_d$ is readily checked to be minimized on the distribution where all $p_d(h\neq g_1)$ are identical: 
			\be \label{almost_uniform} p_d(g_1) = \frac{1-\d_d}{|G|} + \d_d,\qq p_d(h\neq g_1) = \frac{1-\d_d}{|G|},\ee 
			with the form of this distribution being preserved under $\maj$. This is also the distribution which minimizes $p_d(g_1)$ for fixed $\d_d$.
			While \eqref{almost_uniform} contains non-generic degeneracies among the $p_d(h\neq g_1)$, we can consequently use it to upper-bound the time needed for convergence to be achieved for a fixed starting $\d_0$. Inserting this distribution on the RHS of \eqref{iid_delta_flow}, 
			\be \frac{\d_{d+1} - \d_d}{\d_d} = \frac{(1-\d_d)(1+\d_d(|G|-1))}{|G|}.\ee 
			
			Given an initially small gap of $\d_0$, we may write the above for small $d$ as the differential equation $\dot \d_d \approx \d_d / |G|$. Therefore from an initial small value, $\d_d$ will grow to be of order 1 by the scale 
			\be d_{1} \approx  |G| \ln \frac1{\d_0}.\ee 
			After the flow reaches $d_1$, we may expand in small $\eta_d \equiv 1 - \d_d$. In terms of $\eta_d$, we have 
			\be \eta_{d+1}-\eta_d = -\eta_d(1-\eta_d) \( (1-\eta_d) - \frac{\eta_d}{|G|-1}\) = -\eta_d + O(\eta_d^2), \ee
			and so after reaching $d_1$, we will reach the scale $d_2$ at which $\d_d > 1-\ep$ for $d_2 \approx d_1 + \ln 1/\ep$. Therefore up to constants, we approach the limiting distribution within a distance $\ep$ after $d_2$ steps, with 
			\be d_2 = |G| \ln \frac1{\d_0} + \ln \frac1\ep. \ee 
			Since the above analysis was performed in the worst case, where the flow to $\d_d > 1-\ep$ was as slow as possible, the scale $d_\ep$ is upper bounded in the way claimed. 
			
		\end{proof}

		We now need to argue the following: 
		\begin{prop}
			Let $p_0(\bfg)$ be a generic distribution with finite correlation length $\xi$. Then for $d \gg \log_3(\xi)$, $p_d(\bfg)$ is approximately iid. That is, for any string $\bfg_n$ of length $n\sim O(1)$ \footnote{We specialize to such strings rather looking at the individual probabilities for $\bfg$ with $|\bfg|=L$, since for the latter case the accumulated error will depend on $L$, and hence not be small.}, the marginal probabilities factor as 
			\be p_d(\bfg_n) = \prod_{g\in \bfg_n} p_d(g) + \cdots,\ee 
			where the $\cdots$ are terms exponentially suppressed in $d/\log_3(\xi)$. 
		\end{prop}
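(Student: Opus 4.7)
The plan is to exploit the tree structure of $\mcq_\o^{(d)}$ together with the finite correlation length of $p_0$. Let $B_1,\dots,B_n$ be the $n$ disjoint adjacent blocks of $3^d$ original sites whose recursive majorities produce the $n$ output sites $\bfg_n$. Then
\be p_d(\bfg_n) = \sum_{\bfh\in G^{n\cdot 3^d}} p_0(\bfh) \prod_{i=1}^n \d_{\maj^d(\bfh_{B_i}),g_i}, \ee
so I need to show that the $n$ block majorities are approximately independent under $p_0$. Factorization over non-adjacent blocks (separation $\gtrsim 3^d$) is immediate from the correlation length assumption, with error $e^{-3^d/\xi}$, doubly exponentially small in $d-\log_3\xi$. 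The nontrivial case is adjacent blocks, which share a boundary.

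To handle adjacent blocks, I first pick a buffer width $w$ to be optimized, and split each block as $B_i = B_i^\circ \sqcup \partial B_i$, where $\partial B_i$ consists of the $\sim w$ sites within distance $w$ of either end of $B_i$. By the clustering property of $p_0$, the joint marginal on $\bigcup_i B_i^\circ$ factorizes over the bulks up to additive errors of order $e^{-w/\xi}$. If the block majorities $\maj^d(B_i)$ depended only on the bulks $B_i^\circ$, one would then immediately obtain the desired factorization of $p_d(\bfg_n)$.

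The key step is therefore a noise-insensitivity estimate: the probability under $p_0$ that $\maj^d(B_i)$ is altered by the values on $\partial B_i$ must be exponentially small in $d$. I would prove this by bounding the influence of any single leaf $j\in \partial B_i$ on $\maj^d$. For the value at $j$ to affect the output, every triple along the root-to-leaf path in the ternary majority tree must be a tie among its three children. Using the preceding iid-convergence proposition, the effective distribution at tree level $k$ has gap $\delta_k \to 1$ exponentially in $k$, so the probability of a tie at level $k$ is bounded by $\max_g p_k(g)(1-p_k(g))$, which itself decays exponentially in $k$. Multiplying across levels $k=1,\dots,d$ gives an influence per leaf bounded by $\beta^d$ for some $\beta<1$, and a union bound over the $w$ leaves in $\partial B_i$ yields a total boundary influence of at most $w\beta^d$. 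Balancing $e^{-w/\xi}$ against $w\beta^d$ by choosing $w\sim d\xi\log(1/\beta)$ produces an overall error exponentially small in $d$ whenever $d\gg \log_3\xi$.

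The main obstacle is the noise-insensitivity step, since the leaves of the majority tree are not truly iid but only approximately so on scales $\gtrsim\xi$. One route is to condition on a thickened set of boundary values so that the bulks become exactly independent, and then feed the resulting (non-iid but short-range-correlated) distribution into a refinement of the iid analysis, using a coarse-graining to intervals of length $\gtrsim \xi$ on which $p_0$ is nearly product. An alternative is to inductively track the marginals of the intermediate triples, showing that the recursion in $p_k$ of the previous proposition governs their evolution up to $O(e^{-3^k/\xi})$ corrections at level $k$. Either route reduces the present proposition to the iid convergence result already established, completing the argument.
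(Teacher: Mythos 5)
Your overall strategy is the same as the paper's: the only obstruction to factorization of $p_d(\bfg_n)$ is the boundary region shared by adjacent depth-$d$ majority trees, and the heart of the matter is showing that the leaves near that boundary have exponentially small influence on the block majorities. Where you differ is in how the two steps are implemented. The paper first coarse-grains to depth $d_\xi = \lceil \log_3 \xi\rceil$, after which (taking the errors to be generated by a strictly local circuit) the correlations in $p_{d_\xi}$ are exactly nearest-neighbor, so only a \emph{single} boundary leaf couples the two trees; it then bounds the conditional entropy $H(\maj^l(\bfg)\mid g_1,\dots,g_{3^l-1})$ via Fano's inequality, arguing that the prediction scheme ``assume the last vote returns its center element'' fails only if every vote on the right flank of the tree is flipped, each with probability strictly below $1$. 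You instead keep a buffer of width $w$, pay $e^{-w/\xi}$ for bulk factorization, and bound the per-leaf influence by a pivotal-path product, then optimize $w$; this is a more quantitative and arguably cleaner route, and it does not require the correlations to become exactly nearest-neighbor. The one substantive caveat is a circularity you should make explicit: your pivotality bound invokes the gap convergence $\d_k\to 1$, which the preceding proposition establishes only for iid inputs --- precisely what the present proposition is meant to justify. The paper sidesteps this by needing only that each flank-flip probability is strictly less than $1$ (which follows from non-degeneracy of the three-site marginals, not from convergence); to make your version self-contained you would need your ``alternative route,'' i.e.\ an induction on $k$ that simultaneously propagates approximate factorization and the approximate iid recursion for $p_k$. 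Also, your stated tie bound $\max_g p_k(g)(1-p_k(g))$ should be $1-\sum_g p_k(g)^2$ (the probability the two siblings disagree); the two agree up to a factor of $2$ once $p_k(g_1)$ is close to $1$, so the conclusion is unaffected.
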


		\begin{figure}
			\includegraphics{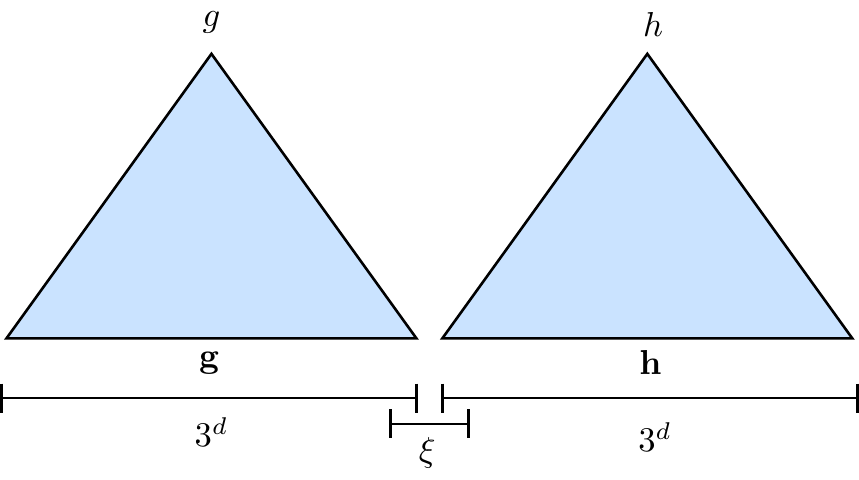}
			\caption{\label{fig:indep_cones} An illustration of why $p_d(g,h)$ approximately factors as $p_d(g)p_d(h)$ when $d\gg \log_3(\xi)$. Each blue triangle represents a $\maj$ tree of depth $d$. The input variables $\bfg,\bfh$ (satisfying $\maj^d(\bfg)=g,\maj^d(\bfh)=h,|\bfg|=|\bfh|=3^d$) are only correlated over the scale $\xi$; thus the only correlations between $g$ and $h$ arise from the variables within the indicated region of width $\xi$. Since the variables wihin this region have a vanishingly small influence on the outcomes of $\maj^d(\bfg),\maj^d(\bfh)$ in the limit $\xi \ll 3^d$, $p_d(g,h)$ approximately factorizes. } 
		\end{figure}
		
		\begin{proof}
			As summarized in Figure~\ref{fig:indep_cones}, this is a very reasonable statement. Consider for simplicity the 2-site marginal probability $p_d(g,h)$ (more complicated marginals are treated in a similar way). For large $d$, $p_d(g)$ is determined by the distribution $p_0(\bfg_d)$, where $\bfg_d$ has length $|\bfg_d| = 3^d$ and runs over variables in the $\maj$ tree of depth $d$ that outputs $g$. Since $p_0(\bfg_d)$ is correlated only over the scale $\xi$, when $3^d \gg \xi$ the variables at the bases of two adjancent depth-$d$ $\maj$ trees contributing to $p_d(g,h)$ will recieve correlations only from a region of size $\xi$ located where the bases of the two trees touch. Since the variables in this region have almost no influence on the outcomes of the two majority trees, $p_d(g,h)$ consequently nearly factorizes (see Fig.~\ref{fig:indep_cones}). 
			
			Now we try to make this inuitive argument more rigorous.  As mentioned above, we can without loss of generality take correlations in the initial distribution $p_0(\bfg)$ to be exactly vanishing on scales greater than $\xi$. This consequently implies that as soon as $3^d > \xi$, the correlations in $p_d(\bfg)$ are purely nearest-neighbor, implying e.g. 
			\be \label{factorizes} d> \log_3(\xi) \implies \sum_{h \in G} p_{d} (g,h,k) = p_d(g) p_d(k).\ee 
			
			Let $d_\xi = \lceil \log_3(\xi) \rceil $ and define $l \equiv d-d_\xi$. Then
			\bea \label{restricted_sum} p_d(g,h) & = \sum_{\bfg,\bfh \in G^{3^{l}}} p_{d_\xi}(\bfg,\bfh) \d_{\maj^{l}(\bfg),g}\d_{\maj^{l}(\bfh),h}.\eea 
			Because of \eqref{factorizes}, $p_d(g,h)$ would factorize exactly if we were able to perform an unrestricted sum on the element $g_{3^l}$, which is located at the boundary between the bases of the two depth-$l$ $\maj$ trees leading to $g$ and $h$. The thing which prevents us from doing so is of course the factor of $\d_{\maj^l(\bfg),g}$. However, when $3^l \gg \xi$, we expect the outcome of the majority vote $\maj^l(\bfg)$ to be largely insensitive to the exact value of $g_{3^l}$: in the vast majority of cases, the value of the element $g_{3^l}$ will not by itself be able to sway the outcome of the majority vote. We thus claim that 
			\be \label{twosite_factorizes} p_{d_\xi + l}(g,h) = p_{d_\xi + l}(g)p_{d_\xi + l}(h) + \cdots,\ee 
			where $\cdots$ is exponentially small in $l$. 
			
			We can quantify this expectation by computing the conditional entropy $H(\maj^l(\bfg) | g_1,\dots,g_{3^l-1})$. This entropy vanishes if $\maj^l(\bfg)$ can be determined from $ g_1,\dots,g_{3^l-1}$ alone, without the need to know the value of the remaining element $g_{3^l}$. In general $H(\maj^l(\bfg) | g_1,\dots,g_{3^l-1})$ is nonzero, but is nevertheless exponentially small in $l$. Indeed, given a scheme for predicting $\maj^l(\bfg)$ from $g_1,\dots,g_{3^l-1}$, we can bound $H(\maj^l(\bfg) | g_1,\dots,g_{3^l-1})$ using Fano's inequality as 
			\be H(\maj^l(\bfg) | g_1,\dots,g_{3^l-1}) \leq H(p_e) + p_e \ln(|G|-1),\ee 
			where $p_e$ is the probability of our prediction scheme giving the wrong answer, and $H(x) = -x \ln x - (1-x) \ln (1-x)$ is the binary Shannon entropy. 
			
			Given $g_1,\dots,g_{3^l-1}$, we can exactly determine the values of all nodes in the ternary tree computing $\maj^l$ except for those on the tree's rightmost flank. For example, when $l=2$ we can determine all of the nodes of the following tree, except those marked in yellow: 
			\be \igpfc{figures/twodepth_tree} \ee 
			One estimate for $\maj^l(\bfg)$ given $g_1,\dots,g_{3^l-1}$ is to assume that the result of the majority vote on the last three elements $g_{3^l-2},g_{3^l-1},g_{3^l}$ is just the central element $g_{3^l-1}$ (since the central element wins in the case of ties), and to then assign the remaining nodes of the majority tree according to this choice. In order for this estimate to be in error, the choice of $g_{3^l}$ must be able to change the outcomes of every majority vote occuring on the right flank of the $\maj$ tree (the yellow nodes in the above figure). Since the probability of changing each of these majority votes is less than 1, the error probability $p_e$ will accordingly be exponentially small in $l$.
			
			More precisely, in order for this estimate to be incorrect, we first need that $g_{3^l-2} = g_{3^l} \neq g_{3^l-1}$, so that the value of the `deepest' majority vote on the right flank of the $\maj$ tree, viz.  $\maj(g_{3^l-2},g_{3^l-1},g_{3^l})$, disagrees with our guess of $g_{3^l-1}$. The probability of this disagreement happening is less than 1 unless $p_{d_\xi}(g,h,k) \propto \d_{g,k}$ for all $g,h,k$ --- but this is impossible, as summing over $h$ would produce $p_{d_\xi}(g)p_{d_\xi}(k) \propto \d_{g,k}$, which can be true only if $p_{d_\xi}(g)\neq 0$ only for a single $g$ (in which case $\maj$ has already converged). A similar argument applies for each node along the right flank of the majority tree, where the probability of an error occuring given a change in the rightmost member of a $\maj$ vote is strictly less than 1, unless $\maj$ has already converged. Thus $p_e$ is a product of $l$ numbers less than 1, and consequently is exponentially small in $l$ (unfortunately explicitly calculating $p_e$ in terms of $p_{d_\xi}(\bfg)$ is rather complicated). By Fano's inequality this implies that $H(\maj^l(\bfg) | g_1,\dots,g_{3^l-1}) $ is also exponentially small in $l$, meaning that performing an unrestricted sum over $g_{3^l}$ in \eqref{restricted_sum} can be done while incurring only an exponentially small error. This then implies \eqref{twosite_factorizes}. 
		\end{proof}

		\begin{figure}
			\includegraphics[width=.5\textwidth]{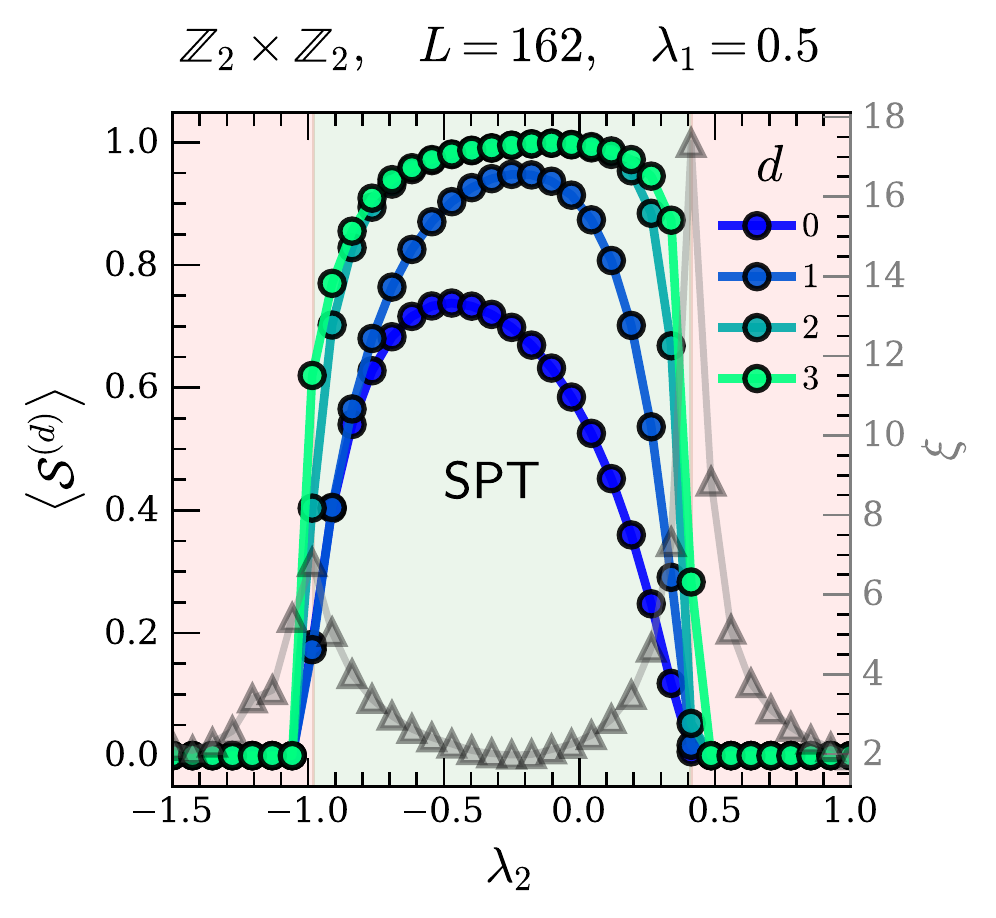}
			\caption{\label{fig:confused} String order parameters measured at different depths $d$ of an RG circuit acting on the ground states of a family of $\zt$ deformed cluster state Hamiltonians \eqref{znham}. Grey triangles denote the correlation length.} 
		\end{figure}

		\begin{figure}
			\includegraphics[width=.5\textwidth]{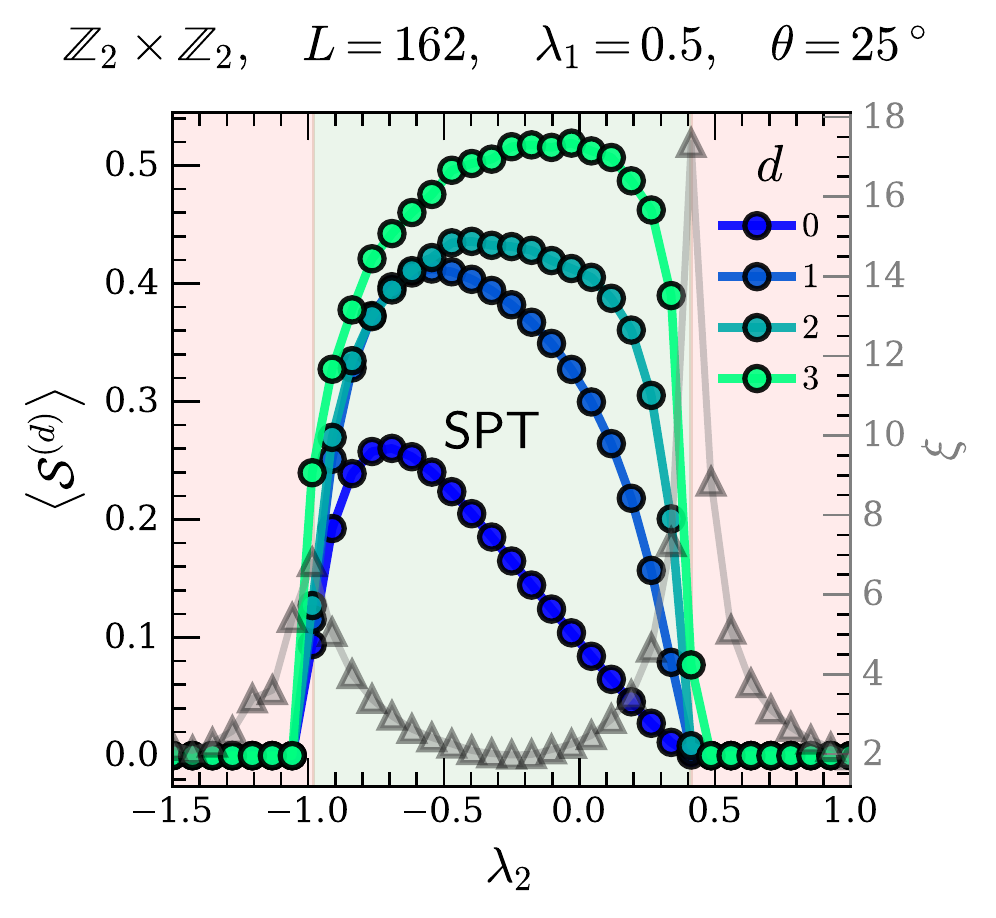}
			\caption{\label{fig:confused_2} As in Fig. \ref{fig:confused}, except with the wavefunction incurring an additional error (symmetry action) on each site with probability $\sin^2(25^\circ)$ (note the different scale on the $y$-axis). }
		\end{figure}
		
		\ss{Numerical tests of convergence}
		
		The numerical results in Figure~\ref{fig:dmrg} were obtained by running our RG circuit on the ground states obtained from the simple family of Hamiltonians given by \eqref{znham}. For this class of Hamiltonians, the initial distribution of errors $p_0(\bfg)$ can be determined straightforwardly within perturbation theory. The density of errors for these models is low except when one is close to the boundary of $\spt_\o$, and thus we expect (and indeed observe) our circuit to converge rather rapidly in most of parameter space. 
		
		One simple way of making the circuit's life more difficult (with the aim of performing a more stringent test of the convergence of our RG flow) is to act on a given input ground state with a depth-1 circuit that performs a symmetric rotation at each site, as was done in the main text near \eqref{confusing_circuit}. For the case of the $\zt$ version of the perturbed cluster state model \eqref{znham}, this can be done simply by acting with the unitary $\bot_i(\cos\t + Z_i \sin \t)$ (where in keeping with the previous appendicies, we are working in a basis where the symmetry is generated by $Z_j$ operators on each site). Fig. \ref{fig:confused} shows how our RG circuit performs on ground states drawn from a particular cut in the parameter space of the Hamiltonian \eqref{znham}, and Fig. \ref{fig:confused_2} shows how the performance is altered when acting on the same ground states with the above unitary. In the latter case the signals from the string operators are suppressed, but the pattern of convergence with increasing circuit depth is similar. 
		
		To test the flow induced by the majority voting process more explicitly, we can examine the flow of error probability distributions $p_0(\bfg)$ drawn from various ensembles not directly tied to the ground states of any simple family of Hamiltonians. Continuing to specify to the case of $\zt^2$ SPT phases, as one example we can consider the error distribution associated with the consistant depth symmetric circuit  
		\be \mcu_\t = \bot_i \prod_{a=1}^{k} \exp\( i \t_a \bot_{j=i}^{i+k} Z_j \),\ee 
		where the $\t_a$ are iid random variables drawn from some distribution on $[0,2\pi)$. This yields the error distribution 
		\be\label{circuiterror} p_e(\bfg) = \underset{\{ \t_a\}}{\EE}\[ \lan +| \mcu_\t^\da \bot_{i=1}^L \(\frac{1+ (-1)^{g_i}X_i}2 \)\mcu_\t | +\ran \],   \ee 
		where $\k{+} = (|0\ran  +|1 \ran )^{\tp L}$. 
		As another example, we can take the error probabilities to be generated by sampling from a random MPS (thereby encorporating a small amount of spatial correlation into the error distribution); in this case 
		\be p_e(\bfg) = \lan \phi_\c | \Pi_\bfg |\phi_\c\ran,\ee 
		where $\k{\phi_\c}$ is a translation-invariant MPS wavefunction constructed from a random MPS tensor of fixed bond dimension $\c$.  
		
		\begin{figure}
			\includegraphics{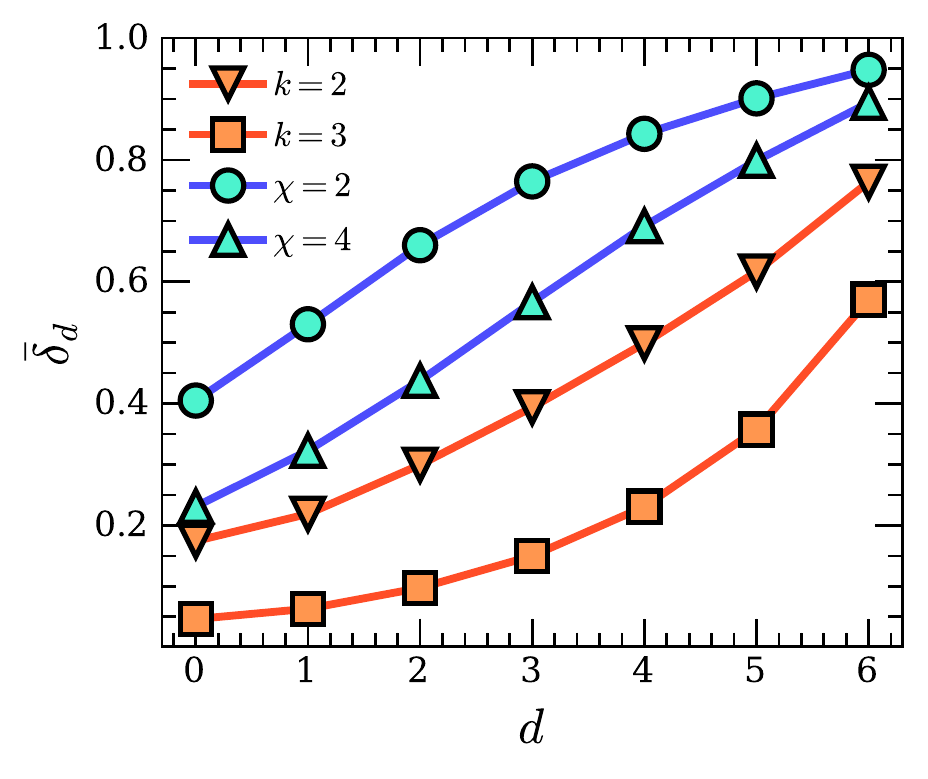}
			\caption{\label{fig:dmat_flow} Convergence of the flow induced by our RG circuit for two different random error models. Here $\ob \d_d$, the average gap between the most likely and second most likely single-site marginals, is plotted as a function of the depth $d$; the flow converges when $\ob \d_d \ra 1$. In the blue curves, the errors are sampled from random MPS wavefunctions with bond dimensions $\c=2,4$, while in the blue curves they are sampled using the error model of \eqref{circuiterror} with $k=2,3$.} 
		\end{figure}
		
		 In Fig. \ref{fig:dmat_flow} we numerically illustrate the convergence of the majority voting process under these two error models for errors applied to a system of size $L = 3^7 = 2187$. Let $\bar \d_d$ be the average gap at layer $d$ between the two largest values of the single-site marginals $p_d(g)$, as in \eqref{deltadef_app}. 
		We consider ranges $k=2,3$ and bond dimensions $\c = 2,4$, obtaining the evolution of $\ob \d_d$ shown in the figure.
		The error model derived from the circuit $\mcu_\t$ with $k=3$ displays the smallest expected gap $\ob\d_d$, but still shows a clear trend towards convergence with increasing $d$. 
		
		\ss{No false positives}
		
		We now show that $\mcq_\o$ does not produce false positives, i.e. that for $\nu\neq \o$, $\mcq_\o$ does not mistakenly converge to $\k{\Psi_\o}$ if given $\k{\psi_\nu}$ as input. That this holds can in fact be argued on general grounds using the results of \cite{de2021symmetry}, wherein it was shown that any quantum channel whose Kraus operators commute with the symmetry action cannot map $\k{\psi_\o}\in \spt_\o$ to $\k{\psi_\nu}\in \spt_\nu$ if $\o\neq \nu$. 	
		$\mcq_\o$ enjoys precisely this symmetry-property: from \eqref{mcc_maj} we see that  $(\unit \tp R_g \tp \unit) \mcc = \mcc R_g^{\tp 3}$,  
		implying that the symmetry pushes through $\mcq_\o$ to operate on just the renormalized legs:
		\be  \label{sym_pushthrough} \mcq_\o[(R_g^\da)^{\tp 3^l} \r\, R_g^{\tp 3^l}] = (R_g^\da)^{\tp 3^{l-1}} \mcq_\o(\r) R_g^{\tp 3^{l-1}}\ee 
		for any density matrix $\r$.
		This then implies that exactly the desired condition is satisfied by the Kraus operators of $\mcq_\o$, so that the claim follows.

		\section{Identifying spontaneous symmetry breaking with RG circuits  \label{app:symm_breaking}} 
		
		In this appendix we briefly discuss how to construct an RG circuit which recognizes phases with and without spontaneous symmetry breaking of an internal finite Abelian symmetry $G$. As with the SPT case, the RG circuit has the merit of being able to reduce the sample complexity of phase recognition near the critical point, and provides a direct link between error correction and RG flow. 
		
		The construction parallels that of the architecture used for recognizing SPT phases, but as there exists a standard local order parameter for diagnosing symmetry breaking the construction is much simpler (see \cite{furuya2021renormalization,friedman2022locality}). As in the rest of the paper, the setting will be on a 1d chain of length $L$.  
		
		Define the linear symmetry and shift operators as 
		\be R_g \equiv \sum_{h\in G} \c_g(h) \proj{h},\qq S_g \equiv \sum_{h\in G} \kb{g+h}{h},\ee 
		which obey $R_g S_h = \c_g(h) S_h R_g$. We will take the global symmetry to be generated by the operators $S_g^{\tp L}$ \footnote{In our discussion of SPT phases we worked in a basis where the symmetry generators were diagonal, since in that case we were only interested in symmetric phases. Here we are focused on phases with SSB, and taking the symmetry generators to be off-diagonal is more natural.}. By character orthogonality, any operator can be expanded in terms of the $R_g$ and $S_h$; the proof is analogous to that of \eqref{opdecomp}. While we will not specify a particular Hamiltonian, for concreteness one could simply take a $G$-generalization of the standard Ising model, e.g.
		\be H = - \sum_i \sum_{g\in G} \( R_{-g,i} \tp R_{g,i+1} + \l S_{g,i}\)\ee 
		where the real constant $\l$ tunes between the symmetric and SSB phases. 
		
		Denote the symmetric phase by $\para$, and let $\ssb$ denote the phase where $G$ is spontaneously broken. We take $\ssb$ to be a direct sum over {\it all} superslection sectors, including states where $R_g$ has all nonzero expectation values. Our goal will be to construct a circuit $\mcq_\o$ for classifying whether a given wavefunction is in $\para$, or in $\ssb$. 
		The reason for defining $\ssb$ as a union over all superselection sectors is that we will not require $\mcq_\o$ to correctly reproduce the superselection sector a given $\k\psi \in \ssb$ belongs to. This is because as in our analysis of SPT phases, we define phases by equivalence classes of wavefunctions under symmetric local constant-depth circuits, which can permute between superselection sectors (consider the depth-1 symmetric local circuit $S_g^{\tp L}$, for example). 
		
		Consider a wavefunction $\k\psi$ drawn from $\ssb$. $\k\psi$ can be obtained from the product state $\k\bfzero \equiv \k{0}^{\tp L}$ by a constant-depth symmetric circuit built from combinations of $S_{h,i}$ and $R_{h,i} \tp R_{-h,j}$ operators, with bounded $|i-j|$. Since the $R_h$ operators act as c-numbers on $\k{\bfzero}$, they can effectively be ignored, and we may write 
		\be \k{\psi} = \sum_\bfg C_\bfg \k{\bfg}.\ee 
		where the complex coefficients $C_\bfg$ satisfy $\sum_\bfg |C_\bfg|^2=1$ and produce a wavefunction with a finite correlation length. 
		
		\begin{figure}
			\includegraphics{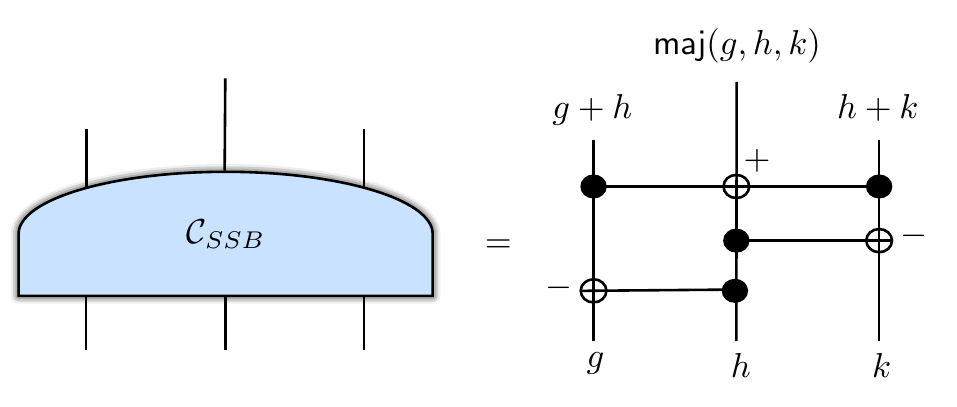}
			\caption{\label{fig:ising} Coarse-graining layer used for performing RG to identify the symmetry breaking phase of a theory with an Abelian symmetry group $G$. $g,h,k \in G$, and the $\pm$ signs denote the type of controlled shift gate applied ($CS_\pm$ or $CCS_\pm$; in the case $G = \zt$ these are just CNOT and CCNOT). The left (right) measurement leg is left in a state which keeps track of the presence of a domain wall between the first and second (second and third) input legs, and the renormalized leg becomes the majority of the inputs. }
		\end{figure}
		
		We now construct a circuit which performs RG on $\k\psi$ simply by implementing majority vote on the strings $\bfg$. The convolution module $\mcc_{SSB}$ that we use for this purpose is built out of controlled shift gates $CS_\pm, CCS_\pm$, defined as the $G$-generalizations of CNOT and Toffoli gates: 
		\bea CS_\pm & \equiv \sum_{g,h\in G} \proj{g} \tp \kb{h\pm g}h , \\
		CCS_\pm & \equiv \sum_{g,h\in G} \proj{g} \tp \proj{g} \tp \kb{h\pm g}h. 
		\eea  
		
		Majority voting on $G$ can be acommplished using the choice of $\mcc_{SSB}$ shown in figure \ref{fig:ising}, which consists of a pair of $CS_-$ gates controlled by the central `renormalized' leg, and a single $CCS_+$ gate controlled by the two measurement legs. It is straightforward to check that 
		\be \mcc_{SSB} (\k g \tp \k h \tp \k k) = \k{g-h} \tp \k{\maj(g,h,k)}  \tp \k{k-h}.\ee 
		The values of the measurement legs thus keep track of domain walls between the left / right and central tensor factors, while the value of the renormalized leg is determined by majority vote. Constructing a circuit $\mcq_\o$ from $\mcc_{SSB}$ in the same way as in the main text, we see that $\mcq_\o$ will converge to a reference product state $\k{g}^{\tp L}$ for some $g\in G$ as long as the recursive majority vote converges, which it will in the absence of fine tuning, as argued previously. Note that the value of $g$ which it converges to may not correpond to the superselection sector in which $\k\psi$ lies, as made clear by the Gerrymandering example of \eqref{gerry}. This however is enough for our purposes. 
		
		Finally, consider the case when $\k\psi$ is drawn from $\para$. $\k\psi$ can then be constructed by applying operators of the form $R_{g,i} \tp R_{-g,j}$ to the reference state $\k{+}^{\tp L}$, where $\k{+} = \frac1{\sqrt{|G|}} \sum_{g\in G} \k g$. We claim that the RG circuit will never incorrectly classify $\k\psi$ as being in $\ssb$, and that the distribution of group elements measured will remain uniformly random at all circuit depths. As in the SPT case, this follows from a symmetry argument. It is easy to check that $\mcc_{SSB}$ commutes with the symmetry group, in that 
		\be \mcc_{SSB} (S_g \tp S_g \tp S_g) = (\unit \tp S_g \tp \unit ) \mcc_{SSB}.\ee 
		Consider then the expectation value of $R_g$ on an arbitrary site at some depth $d$ of the circuit. We have 
		\bea \Tr[R_g \mcq_\o^d(\proj{\psi})] & = \c^*_g(h) \Tr[(S_h^\da)^{\tp L/3^d} R_g S_h^{\tp L/3^d} \mcq_\o^d(\proj{\psi})] \\ 
		& = \c^*_g(h) \Tr[R_g \mcq_\o^d((S_h^\da)^{\tp L} \proj \psi S_h^{\tp L} )] \\ & = \c^*_g(h)  \Tr[R_g \mcq_\o^d(\proj{\psi})] \eea 
		since $\k\psi\in \para$ is symmetric. Since this holds for any $g,h$, the LHS must vanish for all $d$.

		\section{Error correction with measurements and classical post-processing} \label{app:measurements} 
		
		We have seen that the RG circuit essentially performs a classical error correction procedure on an input wavefunction $\k{\psi_\o}\in\spt_\o$. One is naturally then led to ask: to what extent is the full RG circuit architecture actually needed for phase identification? As demonstrated above, the whole implementation of the RG circuit circuit can be effectively replaced by the direct measurement of the multiscale string order parameters, at least as far as phase identification is concerned. However, the multiscale string operator is rather complicated, and contains a very large number of terms. Is there an easier way out, which lets a collection of simple local measurements bypass the work done by the RG circuit? In particular, given that the job of the circuit is to perform a classical error correction of the local symmetry actions, can it simply be replaced by a simpler one-shot measurement of the local errors, together with classical data post-processing? In this appendix we argue that the answer to this question is likely no. 
		
		For the purposes of this discussion it will help to introduce the {\it disentangling operators}
		\be \scd_\o \equiv \sum_{\bfg \in G^{2L}}\Tr\[ \prod_{i=1}^L (A_\o^{g_{2i}})^* (A_\o^{g_{2i+1}})^T\] \bot_{j=1}^L \kb{g_{2j}}{g_{2j+1}} =  \igpfoc{figures/disentangler}.\ee
		The disentangler $\scd_\o$ gets its name from the fact that it converts $\k{\Psi_\o}$ to the product state $\k{\Psi_0} = \k{0}^{\tp L}$ and vice-versa:
		\be \scd_\o \k{\Psi_\o} = \k{\Psi_0},\ee 
		as can easily be checked using the perfect tensor condition \eqref{graphical_inj}. 

		The utility of $\scd_\o$ is that given $\k{\psi_\o}\in \spt_\o$, it lets us easily sample from the distribution of errors acting on $\k{\psi_\o}$. 
		As before, let the error amplitudes of $\k{\psi_\o}$ be $C_\bfg$. Using \eqref{inj} and \eqref{traced_vg_app}, we obtain  
		\be \scd_\o \k{\psi_\o} = \sum_{\bfg \in G^L} C_\bfg \bot_{i=1}^L \k{\G_\o(g_{i+1}-g_i)}.\ee
		Applying $\scd_\o$ and then measuring in the $R_g$ basis thus lets us directly sample the probability distribution of errors. Indeed, consider the distribution $p^\o_e(\bfg)$, defined by 
		\be p^\o_e(\bfg)  = \lan \psi_\o| \scd_\o^\da \Pi_\bfg \scd_\o \k{\psi_\o},\ee
		with the projector 
		\be \Pi_\bfg \equiv \bot_{i=1}^L  \int_{h\in G} \c^*_h(g_i) R_{h,i} .\ee  
		As argued above, at a generic point in $\spt_\o$, the recursive majority function $\maj$ applied to samples obtained from the distribution $p_e^\o(\bfg)$ will converge to a fixed point where $p_e^\o(g)$ has support on only a single group element $g_*$. Furthermore, the convergence scale obtained in this way will agree with the convergence scale of the actual RG circuit. Thus measurements and classical data processing suffice to determine how `far' $\k{\psi_\o}$ is from $\k{\Psi_\o}$, given the promise that $\k{\psi_\o}\in \spt_\o$.  
		
		However, if one is not promised that $\k{\psi_\o} \in \spt_\o$, the full architecture of the RG circuit cannot simply be replaced by the depth-1 disentangling circuit $\scd_\o$, together with the classical data processing required to compute $\maj^d(p^\o_e(\bfg))$. To see this, define the distributions 
		\be p_e^{\o,\nu}(\bfg) = \lan \psi_\nu| \scd_\o^\da \Pi_\bfg \scd_\o \k{\psi_\nu},\ee 
		where $\psi_\nu \in \spt_\nu$ (with $\nu=0$ being the trivial paramagnetic phase). In order for the phase identification ability of the circuit to be replaced by the action of $\scd_\o$ together with classical data processing, we require that if $\nu\neq\o$, then $\maj^d$ does {\it not} converge when acting on samples obtained from $p_e^{\o,\nu}(\bfg)$. That is, for $\o \neq \nu$, we require that the marginal distribution $p_e^{\o,\nu}(g)$ be uniformly random on $G$. This is however not generically true, as we now argue. 
		
		Consider first the case where the input wavefunction is the product state $\k{\Psi_0}$ (we will see in a moment that this case is special). Since $\scd_\o \k{\Psi_0} = \k{\Psi_\o}$, the marginal probability for a single-site error is 
		\be p_e^{\o,0}(g) = \lan \Psi_0| \scd_\o^\da \Pi_g \scd_\o |\Psi_0\ran = \int_h \chi^*_h(g) \lan \Psi_\o |  R_h | \Psi_\o\ran = \wt f(g), \ee
		where $\wt f(g)$ is the Fourier transform of $f(g) = \lan \Psi_\o | R_g | \Psi_\o\ran$.  
		For $\maj^d(p_e^{\o,0}(\bfg))$ to generically fail to converge, $p_e^{\o,0}(g)$ should be uniformly distributed on $G$. This is true only if $f(g) \propto \d_{g,e}$, i.e. only if $\lan \Psi_\o |R_g| \Psi_\o\ran \propto \d_{g,e}$. In the present case this is indeed true, as can easily be checked using the injectivity of the canonical MPS tensor $A_\o$ and the fact that $\Tr[V_g] = \propto \d_{g,e}$. 
		
		However, while $p^{\o,0}_e(g)$ is uniformly random when $\k{\psi_0} = \k{\Psi_0}$ is an exact product state, we claim that $p^{\o,0}_e(g)$ is {\it not} generically be uniform throughout the paramagnetic phase $\spt_0$. Indeed, since $R_h$ commutes with the symmetry action, there is no selection rule which generically enforces $\lan \psi_0| \scd_\o^\da R_g \scd_\o |\psi_0 \ran \propto \d_{g,e}$, and the expectation value will generically be nonzero. 
		
		More precisely, consider a generic wavefunction $\k{\psi_0} \in \spt_0$. Since $R_{g,i}$ acts trivially on $\k{\psi_0}$ for all $i,g$, $\k{\psi_0}$ may be obtained from the trivial product state $\k{\Psi_0}$ through a circuit of the form 
		\be \k{\psi_0} = \sum_{i>j} \sum_{g\in G} D_{i-j}(g) S^L_{g,i} S^R_{g,j} \k{\Psi_0},\ee 
		where we have without loss of generality chosen a basis of symmetric operators to be generated by the linear symmetry generators (which act trivially on $\k{\Psi_0}$) and the shift operators $S^{L/R}_g$ for the cohomology class $\o$. When we act on this wavefunction with $\scd_\o$, we obtain 
		\be \scd_\o \k{\psi_0} = \sum_\bfg D'_\bfg R_\bfg \k{\Psi_\o},\ee 
		where the coefficients $D'_\bfg$ are generically rather complicated functions of the $D_{i-j}(g)$ \footnote{In the case where the coefficients $D_{i-j}(g)$ are all 2-local, being nonzero only when $i = j+1$, we have simply $D'_\bfg = \prod_i D_1(g_i)$. }.
		The point here is that $\scd_\o \k{\psi_0}$ is simply some generic wavefunction $\k{\psi'_\o} \in \spt_\o$. Such a generic wavefunction will {\it not} be such that $\lan \psi'_\o|R_g|\psi'_\o\ran$ vanishes for all $g\neq e$, since there is no symmetry reason for this to be so (that $\lan \psi'_\o|R_g|\psi'_\o\ran \neq 0$ for all $g\neq e$ generically is also confirmed in our numerics).

		\section{Different $G$-representations  }\label{app:blocking} 
		
		In this paper we have focused on the case in which $G$ is represented in the regular representation (a direct sum of each $G$ irrep, with each irrep entering with multiplicity one) on a Hilbert space $\mch_{reg}$ of dimension $|G|$: 
		\be R_g = \sum_{h\in G} \c_g(h) \proj h.\ee 
		Of course, a given physical system might be in some other representation $r_g$ of $G$ other than the regular representation, with the local Hilbert space $\mch_{r}$ of dimension $\dim \mch_{r} \neq |G|$. A prominent example of such a situation is the Haldane chain. Here a $G = \zt^2$ symmetry is represented on a spin-1 Hilbert space $\dim \mch_{r^{Hald}} = 3$ in a way which does not include the trivial representation: 
		\be r^{Hald}_g = \bpm \c_{(1,0)}(g) & & \\ & \c_{(0,1)}(g) & \\ & & \c_{(1,1)}(g)\epm, \ee 
		with the elements $(1,0)$ and $(0,1)$ corresponding to $\pi$ rotations about the $\uvx$ and $\uvy$ axes, respectively. In general, the symmetry will be represented as 
		\be \label{general_rg} r_g = \sum_{h\in G : N_h > 0} \sum_{\a = 1}^{N_h} \c_h(g) \proj{h,\a},\ee 
		where the degeneracies $\{N_h\}$ are non-negative integers. One choice of (unnormalized) representative wavefunctions $\k{\Psi_\o^r}$ for an SPT in which $G$ acts as \eqref{general_rg} is\footnote{We will continue to specify to maximally non-commutative $\o$ throughout this discussion.} 
		\be \k{\Psi^r_\o} = \sum_{\{g_i\} \in G^L : N_{g_i}  >0\, \forall \, i}\Tr[A^{g_1}_\o \cdots A^{g_L}_\o] \bot_{i=1}^L \sum_{\a_i = 1,\dots ,N_{g_i}} |g_i,\a_i\ran,  \ee
		which is uniform in degeneracy space and which otherwise is obtained simply by restricting the regular representation reference state $\k{\Psi_\o^R}$ to those group elements such that $N_g >0$ (in the case of the Haldane phase, $\k{\Psi_\o^{r^{Hald}}}$ is simply the AKLT state). It is straightforward to check that the symmetry fractionalizes on $\k{\Psi^r_\o}$ according to the cohomology class $\o$. It is also the case that $\k{\Psi^r_\o}$ is injective for all projectively faithful representations of $G$ (the ones we care about when doing quantum mechanics), as we prove below in proposition \ref{prop:injective_blocking}. Thus for each $r,\o$, $\k{\Psi^r_\o}$ indeed constitutes an allowed reference state. 
		
		From this starting point, and with the symmetry representation fixed to be $r_g$, one approach would be to design an RG circuit acting on the onsite Hilbert space $\mch_{r}$, which maps any input wavefunction in $\spt_\o$ to the reference state $\k{\Psi^r_\o}$. While such a construction is likely possible, it would likely require developing an architecture distinct from the architecture we developed for the regular representation, as the MPS tensors of $\k{\Psi^r_\o}$ will not generically be perfect tensors, and the set of $G$-symmetric local unitary operators will not generically be generated purely by the set of string operators and local symmetry actions. 
		
		Because of these complications, and because in this work we are primarily interested in illustrating a proof of principle, we will resort to a different approach. First, it is always possible to embed a model in which $G$ is represented as $r_g$ on a system with local Hilbert space $\mch_{r}$ into a model defined on the local Hilbert space $\mch_{reg}$, on which $G$ acts as the regular representation. This embedding is achieved via a local symmetric isometry $I = \bot_i I_i$, $I_i^\da I_i = \unit_i$, where the isometry at each site maps 
		\be I_i : \mch_{r} \ra \mch_{reg}^{\tp (n_r+1)},\qq n_r = \lceil \log_{|G|} (\max_{g\in G} N_g ) \rceil.\ee 
		Note that for the case where all of the degeneracies $N_g = 0,1$, we have $n_r=0$ and we can simply embed $\mch_{r}$ into $\mch_{reg}$ through the obvious inclusion. The requirement that $I_i$ be symmetric means 
		\be\label{sym_iso} r_{g,i} I_i = I_i R_{g,i}^{\tp (n_r+1)}.\ee 	
		
		Intuitively, the action of a local symmetric isometry such as $I$ preserves SPT phases, as the notion of an SPT phase should be robust with respect to adding unentangled ancillae and coupling them to the system through symmetric local unitaries. Therefore given a wavefunction $\spt_\o \ni \k{\psi_\o^r} \in \mch_{r}^{\tp L}$, the embedded wavefunction $I\k{\psi_\o^r}\in  \mch_{reg}^{\tp L(n_r+1)}$ is also in $\spt_\o$. This can be proved by looking at the projective representations obeyed by split symmetry operators (which are left untouched by the isometry by virtue of \eqref{sym_iso}), or by thinking about edge states: if the $|G|$ different edge states cannot be lifted by a symmetric local perturbation in $\mch_{r}^{\tp L}$, they are also unable to be lifted by a similar perturbation in the larger Hilbert space $\mch_{reg}^{\tp L(n_r+1)}$. 
		
		Since $G$ is represented as the regular representation in the larger Hilbert space, we can then apply the already-developed RG circuit for this case to flow the embedded wavefunction $I \k{\psi_\o^r}$ to the canonical representative $\k{\Psi_\o}$ of the regular representation. String operators can then be measured in $\k{\Psi_\o}$ before using $I^\da$ to project back to the original Hilbert space, and the expectation values of these string operators serve as a diagnostic of whether or not the original wavefunction $\k{\psi_\o^r}$ is indeed in $\spt_\o$. Thus in the present setting, the correct definition of the multiscale string operators is 
		\be \wt\mcs^{(\o)}_g = I^\da \mcq_\o^\da (\mcs^{(\o)}_g \tp \unit_{an}) \mcq_\o I. \ee 
		Note however that in this case, we have only accomplished phase recognition by the direct measurement of the multiscale string operators, and have not actually provided a unitary RG circuit which asymptotically maps any $\k{\psi_\o} \in \spt_\o$ to $\k{\Psi_\o} \tp \k{\phi_{an}}$ (conjugating $\mcq_\o$ with $I$ does not generically work, since $I^\da \mcq_\o I$ needn't be unitary). 
		In actual hardware implementations measuring the multiscale string operators may be impractical, and ideally one would still be able to construct a unitary circuit where each individual gate --- rather than the entire circuit itself --- is able to be implemented on the particular microscopic Hilbert space one is provded with. We leave the best way of getting around this issue as a question for the future. 
		
		\ms 
		
		We close this section with a proof of the proposition referenced above:
		\begin{prop}\label{prop:injective_blocking} 
			One may always block sites together to form composite sites on which $G$ is represented by 
			\be r^{block}_g = \bigoplus_{h\in G} \c_h(g)^{\oplus N_h},\ee 
			where the degeneracies $N_h \geq 1$ for all $h\in G$.
		\end{prop}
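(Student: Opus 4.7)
The plan is to reduce the claim to a short character-theoretic calculation. Since $G$ is finite abelian, every representation canonically decomposes into its $|G|$ one-dimensional irreducible characters $\chi_h$, so it is enough to show that for some finite block size $n$, the tensor power $r^{\otimes n}$ contains each $\chi_h$ with multiplicity $N_h^{(n)} \geq 1$. The candidate for $r^{block}$ is then simply $r^{\otimes n}$ on the blocked Hilbert space $\mch_r^{\otimes n}$.

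First I would compute the character of $r^{\otimes n}$, which is $\chi_r(g)^n$ with $\chi_r(g) = \sum_h N_h \chi_h(g)$. By character orthogonality on $G$, the multiplicity of a given $\chi_h$ is
\[ N_h^{(n)} = \frac{1}{|G|}\sum_{g \in G}\overline{\chi_h(g)}\,\chi_r(g)^n = \frac{1}{|G|}\Bigl(\chi_r(e)^n + \sum_{g \neq e}\overline{\chi_h(g)}\,\chi_r(g)^n\Bigr). \]
The goal is thus to show that the $g=e$ term dominates the rest once $n$ is large enough.

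Next I would invoke projective faithfulness of $r$. Because $r_g$ is diagonal with eigenvalues drawn from $\{\chi_h(g) : N_h > 0\}$, each of modulus one, the triangle inequality gives $|\chi_r(g)| \leq \chi_r(e) = \dim r$, with equality precisely when all of these eigenvalues agree, i.e.\ when $r_g$ acts as a scalar. Projective faithfulness rules this out for $g \neq e$, so $|\chi_r(g)| < \chi_r(e)$ for every nontrivial $g$. Combined with the previous display this yields the bound
\[ |G|\,N_h^{(n)} \geq \chi_r(e)^n - (|G|-1)\Bigl(\max_{g\neq e}|\chi_r(g)|\Bigr)^n, \]
whose right-hand side is strictly positive for all $n$ larger than some threshold $n_*$ depending only on $\chi_r$. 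Since $N_h^{(n)}$ is a nonnegative integer, it must be at least one; picking $n \geq n_*$ uniformly in $h$ (possible because $G$ is finite) delivers the required representation.

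The main conceptual obstacle is really just isolating the correct hypothesis on $r$: what makes the argument go through is that the character $|\chi_r|$ attains its maximum $\chi_r(e)$ only at the identity, and this is equivalent to $r$ being projectively faithful. This is the physically natural assumption, since an element $g \neq e$ acting by a scalar would be indistinguishable from the identity on all local observables and in particular could not host a nontrivial projective edge representation; if one ever encounters an $r$ that fails this condition, one should simply quotient $G$ by the kernel of the projective action before applying the argument.
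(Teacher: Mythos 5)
Your proof is correct, and it takes a genuinely different route from the paper's. The paper argues combinatorially: after normalizing one summand of $r$ to be the trivial character, it shows that the set of characters appearing in $r$ must generate all of $G^*\cong G$ (using the fact that a proper subgroup $H\subset G$ admits $|G|/|H|>1$ characters of $G$ trivial on $H$, which would contradict faithfulness-up-to-phases), and then observes that the $n$-fold sumset of a generating set containing the identity exhausts $G$ for some $n\le |G|$, so $r^{\otimes n}$ contains every character. You instead compute the multiplicity $N_h^{(n)}$ directly by character orthogonality and dominate the off-identity terms via the strict triangle inequality $|\chi_r(g)|<\chi_r(e)$ for $g\neq e$, which is precisely the projective-faithfulness hypothesis (equality would force $r_g$ to be a scalar) — the same hypothesis the paper isolates as its ``legit'' condition. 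Your version is arguably cleaner, needing no Frobenius-reciprocity-style counting and making transparent that faithfulness up to phases is the only input; the trade-off is quantitative: your threshold $n_*$ is controlled by the spectral gap $\chi_r(e)-\max_{g\neq e}|\chi_r(g)|$, which can be small (e.g.\ for $G=\zz_N$ and $r=1\oplus\zeta_N$ one gets $n_*\sim N^2\log N$ from your bound, versus the true threshold $N-1$ and the paper's uniform bound $n\le |G|$). Since the proposition asserts only that \emph{some} finite blocking works, this looseness is immaterial to its validity.
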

		For the maximally non-commutative factor sets we are concerned with in this appendix, the set of matrices $V_g$ generate the full matrix algebra of the virtual space, and thus the above proposition is equivalent to injectivity of the reference states $\k{\Psi^r_\o}$ given above. 
		
		\begin{proof}
			
			Suppose that microscopically $G$ acts on each site as 
			\be r_g = \bigoplus_{\g \in S}\c_\g(g),\ee 
			where the $\g$ run over some subset $S$ of $G$. Since in quantum mechanics we mod out by overall phases, any element $g$ for which $r_g \propto \unit$ is represented trivially. We will not allow this to occur, since in this case $G$ is not represented faithfully. Therefore we require that the $\{\g\}$ be such that 
			\be  \label{alphaclaim} \c_\g(g) =1 \, \, \forall \, \, \g  \, \implies g = e.\ee 
			We will call a representation obeying the above equation a `legit' representation. 
			As an example, for $G = \zn$ a legit representation is 
			\be r_g =1 \oplus \z_N^g\ee 
			where $\z_N= e^{\twp i /N}$. 
			For $\zn^2$, one requires at least three summands, with a legit representation being e.g. 
			\be r_\bfg = 1 \oplus \z_N^{g_1} \oplus \z_N^{g_2}.\ee 
			Note how we are always taking the first summand to be the trivial representation; this is done wolog because we can always redefine $r_g$ by an overall $g$-dependent phase. This is what allows us to use 1 on the RHS of \eqref{alphaclaim}. 
			
			Let $H$ be the subgroup of $G$ generated by the $\{\g\}$. We will first show that if $r_g$ is a legit representation, then $H=G$. Indeed, suppose that $H$ is a proper subgroup of $G$. We claim that this implies that $r_g$ is not a legit representation, i.e. that there is some $e \neq g_*\in G$ for which $r_{g_*} = \unit$. To show this we use the fact that for a subgroup $H\subset G$, every irrep of $H$ can be extended to an irrep of $G$ in $|G|/|H|$ different ways, so that in particular, there are $|G|/|H|$ irreps of $G$ whose characters are trivial when restricted to $H$ (as can be proved using e.g. Frobenius reciprocity). 
			Thus if $H$ is a proper subgroup of $G$, we can find $|G|/|H|>1$ different irreps of $G$ whose characters are trivial on $H$. Since each irrep of $G$ is identified with a group element, there exist $|G|/|H|$ choices of $g_*\in G$ such that $\c_\g(g_*)=1$ for all $\g$. Therefore $r_g$ does not generate a legit representation, as claimed.
			
			Finally, suppose $r_g$ is a legit representation. Taking $n$ tensor powers of $r_g$ gives 
			\be r_g^{\tp n} = \bigoplus_{\{\g_1,\dots,\g_n\}} \c_{\g_1 + \cdots + \g_n}(g).\ee 
			Since the $\{\g\}$ must generate $G$, for large enough $n$ (upper bounded by $|G|$), $r_g^{\tp n}$ will contain $\c_h(g)$ for all $h\in G$. Thus as long as $n$ is large enough, all of the degeneracies $N_h$ can be made positive, as we wanted to show. 
		\end{proof}

		\section{Sample complexity} \label{app:sample_complexity} 
		
		In this section we describe how to estimate the sample complexity of phase recognition using the expectation values of the multiscale string operators (MSOs) $\mcs_g^{(\o)}$ defined in the main text. Given an SPT ground state wavefunction $\k{\psi}$, the sample complexity $M_{\d}^\psi$ is defined as the number of measurements of the MSOs in the state $\k{\psi}$ needed to conclude whether or not $\k{\psi}$ is in the target phase $\spt_\o$ of interest, given some tolerance for error $\d$. 
		
		The eigenvalues of any $\mcs_g^{(\o)}$ are given by the phases $\c_g(h), h\in G$. This fact can be proven by demonstrating it in the case where $\mcs_g^{(\o)} = S^R_g \tp R_g \tp S^L_g$ is a `bare' 3-site string operator (which can in turn be shown using the graphical technology developed in previous appendices), and noting that the eigenvalues of $\mcs_g^{(\o)}$ do not change under conjugation by $\mcq_\o$. Since $G$ is a finite group, for any $g$ the character $\c_g(h)$ will be a multiple of some primitive $n$-th root of unity for all $h$, for some integer $n$ (with $n>1$ if $g\neq e$). We can thus select out the eigenspace of $\mcs^{(\o)}_g$ spanned by those eigenvectors with eigenvalue $\z_n^k$, $k\in \zz_n$ (recall $\z_n \equiv e^{\twp i /n}$) by forming the projector 
		\be \Pi_k^{(\o,g)} \equiv \frac1n \sum_{l\in \zz_n} \z_n^{kl} (\mcs^{(\o)}_g)^l.\ee 
		If $\k{\psi} \in \spt_\o$ is an RG fixed point, then $\k{\psi}$ is an eigenstate of $\mcs^{(\o)}_g$ with eigenvalue 1, and has support only on the image of $\Pi_e^{(\o,g)}$. When $\k{\psi}\in \spt_\o$ but $\k{\psi}$ is not a fixed point, $\lan \psi | \mcs^{(\o)}_g |\psi\ran$ will be some $O(1)$ number which approaches $1$ exponentially fast in the QCNN circuit depth, but which for small circuit depths and near phase boundaries can still be rather small, giving nonzero weight to the $\Pi_{k\neq e}^{(\o,g)}$. On the other hand, if $\k{\psi} \not\in \spt_\o$, then by the selection rules discussed earlier, $\lan \psi | \mcs^{(\o)}_{g;i\ra j} | \psi\ran$ will vanish as $e^{-|i-j|/\xi}$, meaning that the measurements of the $\Pi_{k}^{(\o,g)}$ will be very nearly uniformly distributed over all $k$. When doing numerics, this means that as soon as one sees a $\lan \psi | \mcs^{(\o)}_{g;i\ra j} | \psi\ran$ whose value is above machine precision and not exponentially small in the limit where $|i-j|/\xi \gg 1$, one can be confident that $\k{\psi} \in \spt_\o$. However in an experimental context there will always be some finite level of noise present in the measurement process, mandating that a stronger signal be observed before one can be confident about phase identification. 
		
		The criterion we will adopt is as follows. Define the binary random variable $\xi_\psi \in \{0,1\}$ to be the outcome of measuring the projector $\Pi^{(\o,g_\bullet)}_e$ in the state $|\psi\ran$, 
		where $g_\bullet$ is some fixed element of maximal order $n$ in $G$. Let $p \equiv \EE[\xi_\psi]$. Assuming $\k\psi\in\spt_\o$, the sample complexity $M^\psi_\d$ is defined as the number of samples needed so that $p$ is separated from $1/n$ (which would be the expectation value of $\xi$ if $\k\psi\not\in\spt_\o$) by $\d$ times the standard deviation of the emperical average of $\xi_\psi$ (in practice, one may be able to do slightly better by using a more sophisticated estimator of the average \cite{cong2019quantum}). That is, $M^\psi_\d$ is the minimum integer satisfying 
		\be p - \d \frac{\sqrt{p(1-p)}}{\sqrt{M^\psi_\d}} > 1/n.\ee 
		Thus 
		\be M^\psi_\d = \lceil \frac{\d^2 p(1-p) }{(p-1/n)^2}\rceil.\ee 
		In the setting of the $\zz_3\times \zz_3$ SPT considered in Fig. \ref{fig:dmrg} of the main text, we take $g_\bullet = (1,0)$, so that $n=3$ and 
		\be p = \frac13(1 + 2S),\qq S \equiv {\rm Re}[\lan \psi | \mcs^{(\o)}_{g_\bullet} | \psi \ran],\ee 
		giving
		\be M_\d^\psi =  \lceil \d^2 \frac{(1+2S)(1-S)}{2S^2} \rceil.\ee 
		Note that $M^\psi_\d =1$ as $S \ra 1$ from below; in this case one is nearly at the fixed point $\k{\Psi_\o}$, which can be recognized immediately. In the opposite limit of $S \ra 0$ the string operator contains no signal of the SPT order, and $M^\psi_1$ accordingly diverges. The inset of Fig. \ref{fig:dmrg} shows $M^\psi_3$ for $\k\psi$ the DMRG ground state close to the phase boundary, illustrating the exponential improvement of the sample complexity with increasing depth $d$.

		\section{The case when $\omega$ is not MNC} \label{app:non_mnc} 
		
		In this appendix we explain how the technology used in the construction of our RG circuit can be generalized to the case when the cohomology class $\o$ in question is not maximally non-commutative (MNC). This allows us to give a provable performance guarantee for identifying {\it any} SPT phase protected by an internal finite Abelian symmetry. 
		
		A summary of the material contained in this appendix is as follows. We begin in Sec. \ref{ss:entangledpairs} by discussing an alternate way of formulating the tools (reference states, shift operators, etc.) used to construct the RG circuits $\mcq_\o$ in the MNC case, which will allow us to more easily generalize to non-MNC $\o$. In Sec. \ref{ss:generalentangledpairs} we extend this technology to general $\o$. The main result of this section is to prove that for any SPT phase, the protecting symmetry $G$ can always be factored into a `MNC part' and a `flavor' part. The Hilbert space can similarly be factored as $\mch = \mch_{MNC} \tp \mch_{flav}$, with $G$ acting projectively on $\mch_{MNC}$ and linearly on $\mch_{flav}$. The degrees of freeom in $\mch_{MNC}$ are the ones responsible for the phase's protected edge modes, while those in $\mch_{flav}$ are in some sense trivial (in our reference states, the degrees of freedom in $\mch_{flav}$ are frozen out in a product state and unentangled with the rest of the system). 
		
		In Sec. \ref{ss:generalwfs} we discuss how general wavefunctions $\k{\psi_\o}$ can be represented in terms of collections of `errors' applied to the reference states $\k{\Psi_\o}$ constructed in Sec. \ref{ss:generalentangledpairs}. Some errors involve local actions of the symmetry that act only on $\mch_{MNC}$. These errors can be corrected using the same framework developed above for the case of MNC $\o$. Other errors act only on $\mch_{flav}$; these errors are unimportant and can essentially simply be ignored. The main difficulty is that there exist errors which involve a {\it mixed} action on $\mch_{MNC}$ and $\mch_{flav}$. From the perspective of either one of these subsystems, these mixed errors look as though they break the symmetry; that they are in fact symmetric is only seen when both subsystems are considered. To deal with these errors, the strategy we adopt is to convert them into errors that act {\it only} on $\mch_{MNC}$ and to then run an MNC RG circuit on $\mch_{MNC}$, with $\mch_{flav}$ being ignored for the remaineder of phase recognition procedure. 
		
		The conversion process is done by measuring the degrees of freedom on $\mch_{flav}$, performing an $O(L)$ amount of classical data processing on the measurement outcome, and then using the result of this data processing to construct a certain depth-1 local symmetric circuit $\mcu$; we will show that $\mcu$ can always be chosen to turn mixed errors into errors that act only on $\mch_{MNC}$. After acting with $\mcu$, the wavefunction is left in a form amenable to be adressed with an MNC RG circuit applied to $\mch_{MNC}$, and phase recognition proceeds as in the MNC case developed previously. The upshot is that phases with non-MNC $\o$ can be dealt with using more or less the same framework as the MNC case, provided that one pays an $O(L)$ amount of classical data processing as overhead.

		\ss{Entangled pairs representation of SPT ground states for MNC $\o$} \label{ss:entangledpairs}
		
		Before getting into the construction for general $\o$, we first discuss an alternate representation of the symmetry action and reference fixed-point wavefunctions in the MNC case, which will be helpful to use when we extend our technology to general $\o$. This representation can be thought of as a generalization of the usual representation of the AKLT ground state wavefunction, where spin-1 degrees of freedom are decomposed into two spin 1/2 sites that are then entangled into Bell pairs between sites. It is based on the fact \cite{verstraete2005renormalization,perez2006matrix} that any SPT ground state wavefunction with zero correlation length can be constructed by applying local isometries to a collection of Bell pairs. In this representation this construction is made manifest; for this reason we refer to it as the `entangled pairs representation'. 
		
		To get started, recall that in all of the previous discussion, we have worked in a basis in which the local symmetry action $R_g$ is diagonal and acts in the regular representation, viz. 
		\be \label{regrep} R_g = \sum_{h\in G} \c_g(h) \proj h.\ee 
		There is however another convenient basis in which the symmetry fractionalization that occurs in SPT phases is made more manifest.
		
		\begin{prop}[entangled pair representation, MNC case] \label{entangledpairrep}
			When $\o$ is MNC, the regular representation $R_g$ is unitarily equivalent to the representation
			\be \label{mcrdef} \mcr_g  = V_g^* \tp V_g,\ee 
			meaning that there exists a local unitary $U$ such that $U^\da R_g U = \mcr_g$ (as before, $V_g$ denotes the projective representation associated with $\o$). 
		\end{prop}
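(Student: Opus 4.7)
The plan is to reduce the claim to character theory using the Schur orthogonality result for projective representations already established in the excerpt.

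First I would verify that $\mcr_g = V_g^* \tp V_g$ is indeed a \emph{linear} representation of $G$. This is immediate from the cocycle: $V_g V_h = \o_{g,h} V_{g+h}$ implies $V_g^* V_h^* = \o_{g,h}^* V_{g+h}^*$, so the factor sets cancel in the tensor product and $\mcr_g \mcr_h = \mcr_{g+h}$. I would also record the dimension count: when $\o$ is MNC, the unique (up to equivalence) projective irrep $V_g$ has dimension $\sqrt{|G|}$, so $\mcr_g$ acts on a space of dimension $|G|$, matching the regular representation $R_g$.

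Next I would compare characters. Since $G$ is a finite group (abelian, in fact), two linear representations on spaces of the same dimension are unitarily equivalent iff their characters agree pointwise. The character of the regular representation is $\chi^{reg}(g) = |G|\, \d_{g,e}$. The character of $\mcr_g$ is $\Tr[V_g^*]\,\Tr[V_g] = |\Tr[V_g]|^2$. The corollary to the Schur orthogonality theorem proved earlier in the appendix states that for MNC $\o$, $\Tr[V_g] = \sqrt{|G|}\, \d_{g,e}$, so $|\Tr[V_g]|^2 = |G|\, \d_{g,e}$. Hence the two characters agree, and the representations are unitarily equivalent via some $U \in U(|G|)$ with $U^\da R_g U = \mcr_g$. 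Because $R_g$ and $\mcr_g$ both act on the single-site Hilbert space $\mch$, this $U$ is by construction onsite (``local'' in the sense used in the paper).

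There is essentially no obstacle here: the heavy lifting was done when Schur orthogonality for projective irreps, and in particular the trace formula $\Tr[V_g]\propto \d_{g,e}$, was established in Appendix~\ref{app:math}. The only thing worth stating carefully is the MNC hypothesis, which enters in two places: (i) it guarantees that $\dim V_g = \sqrt{|G|}$ (so that $\dim(\mcr_g)=|G|$ matches $\dim(R_g)$), and (ii) it is what allows the full saturation of $\sum_{g\in G}|\Tr[V_g]|^2 = |G|$ by the identity element alone. Without MNC, the projective irreps have strictly smaller dimension and additional characters become nontrivial, so the equivalence fails; this is precisely the reason the non-MNC case requires the separate flavor-space treatment developed in the rest of App.~\ref{app:non_mnc}.
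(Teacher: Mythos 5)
Your proof is correct, but it takes a different route from the paper's. You reduce the claim to a character computation: $\mcr_g$ is linear because the factor sets cancel, its character is $|\Tr[V_g]|^2 = |G|\,\d_{g,e}$ by the corollary to projective Schur orthogonality, and this matches the character of the regular representation, so the two (unitary) representations of the finite group $G$ are unitarily equivalent. This is clean, gauge-invariant, and correctly isolates where MNC enters. The paper instead argues constructively: after noting linearity, it exhibits an explicit simultaneous eigenbasis $\k{v_h} = (V_h^*\tp\unit)\k{\G}$ built from the maximally entangled state $\k\G$, computes the eigenvalue of $\mcr_g$ on $\k{v_h}$ to be $\c_h(\G_\o(g))$, and uses the fact that $\G_\o$ is an automorphism (again the MNC hypothesis) to conclude that the joint spectrum coincides with that of $R_g$. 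The trade-off is that your argument is shorter and relies only on standard character theory, but it is non-constructive, whereas the paper's eigenvectors $\k{v_h}$ are essentially the generalized Bell states that underlie the entire ``entangled pairs'' picture used in the rest of App.~\ref{app:non_mnc} (the identification $S^L_g = V_g^*\tp\unit$, $S^R_g = \unit\tp V_g$, and the MPS tensors $A^g=\kb{g_1}{g_2}$ all live in that explicit basis), so the explicit $U$ is doing work beyond proving this one proposition. For the statement as literally written, your proof is complete.
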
 
		
		Recall that $\o$ being MNC implies that $G = G'\times G'$ is a square and $\dim(V_g) = |G'|$; thus the dimensions in \eqref{mcrdef} match. 
		
		\begin{proof}
			First, note that $\mcr_g$ is linear over $G$, viz. that $\mcr_g \mcr_h = \mcr_{g+h}$, since while $V_g V_h \neq V_{g+h}$, $V^*_g V^*_h \tp V_g V_h = V_{g+h}^* \tp V_{g+h}$. 
			
			We thus only need demonstrate that $R_g$ and $\mcr_g$ have identical eigenvalues. The eigenvalues of $R_g$ are of course enumerated by the characters $\c_g(h), h\in G$. That these agree with the eigenvalues of $V_g^*\tp V_g$ can be seen by explicitly constructing the corresponding eigenvectors, which are 
			\be \k{v_h} \equiv (V_h^* \tp \unit) \k{\G}, \qq \k{\G} \equiv \frac1{\sqrt{|G|}} \sum_{g\in G'} \k{g} \tp \k{g}. \ee 
			The eigenvalue of $\k{v_h}$ under $V_g^* \tp V_g$ is accordingly 
			\be ( V_g ^*\tp V_g) \k{v_h} = (V_g^* V_h^* \tp V_g) \k{\G} = \l_\o(h,g) (V_h^* V_g^* \tp V_g) \k{\G} = \c_h(\G_\o(g)) (V_h^* \tp V_g V_g^\da) \k{\G} = \c_h(\G_\o(g)) \k{v_h}.\ee 
			Since $\G_\o$ is an automorphism of $G$, the eigenvalues of $\mcr_g$ are indeed enumerated by the characters $\c_g(h),h\in G$, as claimed. 
		\end{proof}
		
		In this representation, the left and right shift operators $S^{L/R}_g$ are espeically simple. Indeed, from the fact that the shift operators `fractionalize' the symmetry action into left and right projective parts, it is not surprising that under the unitary transformation in the above proposition, they become 
		\be S^L_g = V^*_g \tp \unit,\qq S^R_g = \unit \tp V_g,\ee 
		so that $\mcr_g = S^L_g S^R_g$, as desired. It is easy to check that the $S^{L/R}_g$ obey all the algebraic relations derived previously. 
		
		\sss{MPS representation of reference states $\k{\Psi_\o}$} 
		The reference state $\k{\Psi_\o}$ also takes a rather simple form in this basis. With the decomposition of the physical space $\mch$ into the two tensor factors as above, the MPS tensors of $\k{\Psi_\o}$ simply associate the left tensor factor with the left virtual leg, and the right tensor factor with the right virtual leg: 
		\be\label{mnamps} \k{\Psi_\o} = \sum_{\bfg \in G^L} \Tr[A^{g_1} \cdots A^{g_L} ], \qq A^g = \kb{g_1}{g_2}.\ee 
		Graphically, 
		\be \igptc{figures/pair_mps_mna} \ee 
		Chaining these tensors together, it is apparent that in $\k{\Psi_\o}$, the right $\tp$ factor of the physical Hilbert space at site $i$ forms a maximally entangled pair with the left $\tp$ factor at site $i+1$, generalizing the usual picture of the AKLT state to groups beyond $\zt$. 
		
		\sss{Parent Hamiltonian}
		
		The parent Hamiltonian of the reference state $\k{\Psi_\o}$ has the same representation as before, viz. 
		\be \label{hepair} H = \sum_i H_i,\qq H_i = -\sum_{g\in G} S^R_{g,i} S^L_{g,i+1}.\ee 
		In the present representation, this is 
		\be H_i = - \sum_{g\in G} (\unit \tp V_g)_{i} \tp (V_g^* \tp \unit)_{i+1},\ee
		with each $H_i$ easily being seen to stabilize $\k{\Psi_\o}$. Graphically, the action of e.g. $H_2$ on $\k{\Psi_\o}$ looks like 
		\be \label{graphical_pairham} \igpfc{figures/pair_ham_mna}\ee 
		The existence of $\dim V_g = \sqrt{|G|}$ protected edge modes is particularly transparent in this representation, since $H$ as given above contains no nontrivial action on the left tensor factor of the leftmost site (straight line on far left of \eqref{graphical_pairham}), nor on the right tensor factor of the rightmost site (straight line on far right). Thus on a finite chain, $H$ has $\sqrt{|G|}$ degenerate modes on each edge, corresponding to the $\sqrt{|G|}$ independent states of each free tensor factor. It is easy to check that these degenerate modes cannot be lifted by any symmetry-preserving perturbation. Indeed, the symmetry action restricted to the edges is the projective action $V_g$, and $\{ V_g\}$ generate the full matrix algebra on $L(\cc^{\sqrt{|G|}})$ --- thus there are no nontrivial terms that can be added to the Hamiltonian which commute with the symmetry and act locally on the free tensor factors.

		\ss{Entangled pairs representation for general $\o$} \label{ss:generalentangledpairs}
		
		We now generalize the formalism of the previous subsection to the case where $\o$ is not MNC. We will show that for general $\o$, all of the tools involved (projective representations, reference-state MPS wavefunctions, etc.) can in some sense be split into an MNC part and an extra `flavor' part, with the latter not playing an essential role in distinguishing one SPT phase from another. This decomposition is inspired by previous studies of non-MNC SPT phases \cite{stephen2017computational,stephen2017computational2,prakash2014ground,de2021symmetry} and in particular by the masters thesis of David Stephen \cite{stephen2017computational}, in which a general prescription was worked out for constructing zero correlation length MPS tensors for non-MNC SPT ground states. 
		
		\sss{Mathematical preliminaries} 
		
		First, recall from App. \ref{app:math} that for general $\o$, the {\it projective center} $C_\o$ is defined as the subgroup of all elements $g$ for which $V_g$ is a scalar matrix. This means that $\o$ is MNC when restricted to the quotient $\mcg_\o \equiv G / C_\o$ (which is always non-trivial; if $C_\o = G$ then there are no nontrivial projective representations, and hence no nontrivial SPT phases). For future reference, we will write the relation between $\mcg, G,$ and $C_\o$ as the short exact sequence 
		\be \label{ogses} 0 \ra C_\o \xra{\iota} G \xra{\pi} \mcg_\o \ra 0.\ee 
		In what follows we will drop the subscripts on $\mcg_\o, C_\o$ for simplicity of notation. 
		
		Recall from the math facts listed in App. \ref{app:math} that MNC cohomology classes exist for a given finite Abelian group iff that group takes the form $G'\times G'$ for some factor $G'$. Since $\o$ is MNC on $\ob \mcg$, we thus always have 
		\be \mcg = G' \times G'\ee 
		for some $G'$. Since the dimension of every projective irrep with factor set $\o$ is $\sqrt{|\mcg|} = |G'|$, the projective irreps will always satisfy $\dim(V) = |G'|$ (readers interested in seeing an explicit example can skip ahead to Sec. \ref{sec:znex}). Because the number of topologically protected edge modes is determined by $\dim(V)$ (see e.g. \ref{pollmann2010entanglement} and the following), their number is thus equal to $|G'|$. While this is the same number of edge modes as a $\mcg$-SPT with MNC factor set, the $G$-SPT under consideration is nevertheless generically physically distinct from a $\mcg$-SPT with an `accidentally' enlarged symmetry, as we will see explicitly later. 
		
		Let us now fix some notation. In situations where confusion could potentially arise, we will use subscripts to explicitly denote the group in which operations are being performed, with e.g. $+_K$ denoting addition in the group $K$ (we use `$+$' for the group action since all groups under consideration are Abelian). Furthermore, we will adopt notation in which bars denote membership in $\mcg$, while tildes denote membership in $C$.
		We thus will write a general $g\in G$ in terms of the pair $(\ub g, \ut g) \in \mcg \times C$ as 
		\be \label{oggdecomp} g = \pi\inv(\ub g) +_G  \iota(\ut g),\ee 
		where the map $\pi\inv$ will always be defined using a certain fixed section of $G$.\footnote{Physicists which find this notation strange can think of $\pi$ as an isometry: $\pi (\pi \inv) = \unit_C$, but $(\pi\inv) \pi\neq \unit_G$.}  
		We see that $\ub g$ and $\ut g$ are obtained from $g$ as 
		\be \label{gdecomp} \ub g = \pi(g),\qq \ut g = \iota\inv(g -_G \pi\inv(\pi(g))),\ee 
		where $\ut g$ is well-defined since $g - \pi\inv(\pi(g)) \in \ker(\pi) = {\rm im}(\iota)$.

		Note that when writing elements pairwise like this, the group addition law acts in a twisted way: 
		\be \label{twistedgroup} (\ub g , \ut g) +_G (\ub h , \ut h ) = (\ub g +_{\mcg} \ub h, \,\,\ut h +_{C} \ut g +_{C} \ut \d(\ub h, \ub g)),\ee 
		where we have defined the function
		\be \ut\d : \mcg \times \mcg \ra C,\quad \ut\d(\ub h , \ub g) = \iota\inv(\d\pi\inv(\ub h, \ub g)),\ee 
		with 
		\be \d \pi\inv(\ub h, \ub g  ) = \pi\inv(\ub h) +_G \pi\inv(\ub g) -_G \pi\inv(\ub h+_{\mcg} \ub g),\ee 
		with $\ut \d$ well-defined for the same reason as $\ut g$ in \eqref{gdecomp}. Since $\d\pi\inv$ is a coboundary, its exterior derivative vanishes, with 
		\be \label{ext_der} \d\pi\inv(a,b) + \d\pi\inv(a+b,c) = \d\pi\inv(a,b+c) + \d\pi\inv(b,c) = \pi\inv(a) + \pi\inv(b) + \pi\inv(c) - \pi\inv(a+b+c).\ee

		We will also need a way of talking about relations between the characters of $G,\mcg,$ and $C$, which we will denote respectively as $\c, \c^\mcg$, and $\c^C$. 
		To find the relations between the characters, we apply the Hom functor to the short exact sequence \eqref{ogses}. Letting $K^* \equiv {\rm Hom}(K,U(1))$ denote the group of characters on $K$, the contravariance of Hom gives the left exact sequence 
		\be 1 \ra \mcg^* \xra{\pi_*} G^* \xra{\iota_*} C^*,\ee 
		where $\pi_*,\iota_*$ satisfy $\ker(\iota_*) = {\rm im}(\pi_*)$ and are defined by pre-composing with $\pi,\iota$, respectively (e.g. $\pi_*(\c^\mcg)(\cdot) = \c^\mcg(\pi(\cdot)) : G \ra U(1)$). 
		
		Since $K^*\cong K$ for all finite Abelian groups $K$, just as we could write any $g\in G$ as the pair $(\ub g, \ut g)\in \mcg \times C$, we can perform a similar decomposition on characters in $G^*$, with any $\c_*\in G^*$ equivalently being given in terms of a pair $(\c_{\ob g}^\mcg, \c_{\wt g}^C )\in \mcg^* \times C^*$ --- here the group elements $(\bar g, \wt g) \in \mcg \times C$ label their respective characters, and we have placed the bars and tildes on top to distinguish these group elements from $\ub g, \ut g$. 
		
		This is formalized as the following proposition, which is essentially the dual of \eqref{oggdecomp}:
		\begin{prop}[decomposition of characters]\label{prop:chidecomp}
			Any character $\c_g\in G^*$ can be written as 
			\be \c_g = \pi_*(\c^\mcg_{\ob g}) \iota_*\inv(\c^C_{\wt g}),\ee 
			where the map $\iota_*\inv$ is as usual defined by picking a section of $G^*$.\footnote{Note that it does not simply work by pre-composing by $\iota\inv$, since $\iota$ is not invertible except on $\ker(\pi)$.} Here the characters $\c^\mcg_{\ob g}\in\mch^*$ and $\c^C_{\wt g}\in C^*$ are determined from $\c_g$ as 
			\be \c^C_{\wt g} = \iota_*(\c_g),\qq \c^\mcg_{\ob g} = \pi_*\inv\( \frac{\c_g}{\iota\inv_*(\iota_*(\c_g))}\),\ee 
			where the expression in parenthesis is invertible by $\pi_*$ since it is in $\ker(\iota_*)$. 
			
		\end{prop}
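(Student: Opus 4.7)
The decomposition is essentially a splitting of the left-exact sequence $1 \to \mcg^* \xra{\pi_*} G^* \xra{\iota_*} C^*$ displayed just above the proposition. The plan is: (i) upgrade this to a short exact sequence by showing $\iota_*$ is surjective; (ii) pick a set-theoretic section $\iota_*\inv : C^* \to G^*$ of $\iota_*$; (iii) define $\c^\mcg_{\ob g}$ and $\c^C_{\wt g}$ by the formulas in the proposition and verify the decomposition by direct computation.

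The crucial step is (i). For finite abelian groups it is immediate by counting: Pontryagin duality gives $|K^*| = |K|$ for every finite abelian $K$, so $|\mcg^*| \cdot |C^*| = |\mcg| \cdot |C| = |G| = |G^*|$, which combined with the injectivity of $\pi_*$ (which already follows from left exactness) forces $\iota_*$ to be surjective. Alternatively one may argue directly from the divisibility of $U(1)$ as an abelian group, extending characters across the inclusion $\iota$ one generator of $G / \iota(C)$ at a time. Step (ii) then makes sense, and the expression $\iota_*\inv(\c^C_{\wt g})$ used in the proposition is well-defined.

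For step (iii), set $\c^C_{\wt g} := \iota_*(\c_g)$, identifying $\wt g \in C$ via $C^* \cong C$. Consider the ratio $\c_g / \iota_*\inv(\iota_*(\c_g))$. Applying $\iota_*$ yields $\iota_*(\c_g)/\iota_*(\c_g) = 1$, so the ratio lies in $\ker(\iota_*) = {\rm im}(\pi_*)$ by the left-exactness quoted above the proposition. Because $\pi_*$ is injective, there is a unique $\c^\mcg_{\ob g} \in \mcg^*$ with $\pi_*(\c^\mcg_{\ob g})$ equal to this ratio, reproducing the formula for $\c^\mcg_{\ob g}$. Multiplying through by $\iota_*\inv(\iota_*(\c_g))$ gives $\c_g = \pi_*(\c^\mcg_{\ob g}) \cdot \iota_*\inv(\c^C_{\wt g})$, as claimed.

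The main (and arguably only) nontrivial input is the surjectivity of $\iota_*$; the rest is formal bookkeeping. A worthwhile caveat is that the pair $(\ob g, \wt g)$ depends on the choice of section $\iota_*\inv$, mirroring the analogous non-canonicity of the decomposition $g = \pi\inv(\ub g) +_G \iota(\ut g)$ in \eqref{oggdecomp}. For the RG-circuit construction in the remainder of the appendix only the existence of such a decomposition, with appropriately compatible sections for $G$ and $G^*$, is needed.
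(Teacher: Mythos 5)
Your proposal is correct and follows essentially the same route the paper sketches: dualize the short exact sequence $0 \ra C \ra G \ra \mcg \ra 0$, use $\ker(\iota_*) = {\rm im}(\pi_*)$ together with a chosen section $\iota_*\inv$, and read off the two factors. The paper leaves the proof implicit (asserting the proposition is "the dual of" the decomposition of $G$ itself and appealing to $K^*\cong K$); your only genuine addition is to make explicit the surjectivity of $\iota_*$ — by the counting argument $|{\rm im}(\iota_*)| = |G^*|/|\mcg^*| = |C^*|$, or equivalently by divisibility of $U(1)$ — which is exactly the detail needed for the section $\iota_*\inv$ to exist, so the verification goes through as you state.
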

		
		We will simplify notation by defining the map 
		\be \t \equiv \iota_*\inv\ee
		and by writing $G^*\ni \t(\c^C_{\wt g}) \equiv \c_{\t(\wt g)}$. Explicitly, when evaluated on $h\in G$, Prop. \ref{prop:chidecomp} reads 
		\be \label{explicitchar} \c_g(h) = \c^{\mcg}_{\bar g}(\ub h) \c_{\t(\wt g)}(h).\ee 
		Since characters can be identified with group elements, we can also decompose any $g\in G$ as 
		\be g = \pi_*(\bar g) +_G \t(\wt g),\ee 
		where by following logic similar to \eqref{gdecomp}, 
		\be \bar g = \pi_*\inv(g -_G \t(\iota_*(g))),\qq \wt g = \iota_*(g).\ee 
		In terms of the pairs $(\ob g, \wt g)$, group addition is performed in a way analogous to \eqref{twistedgroup}: 
		\be \label{dualtwistedgroup} (\bar g, \wt g) +_G (\bar h , \wt h) = (\bar g +_\mcg \bar h +_\mcg \bar \d(\wt g, \wt h), \wt g +_C \wt h),\ee 
		where 
		\be \bar \d : C \times C \ra \mcg , \qq \ob \d(\wt h,\wt g) =\pi_*\inv( \d \t(\wt h, \wt g)),\ee
		where we have defined the coboundary 
		\be \d\t : C \times C \ra G, \qq  \d \t(\wt h,\wt g) = \t(\wt h) + \t(\wt g) - \t(\wt h + \wt g),\ee  
		which also obeys \eqref{ext_der}. 

		\ms 
		
		{\bf Example: $G = \zz_{pq}^2$}
		
		\ms 
		
		To illustrate the above symbol pushing with an example, consider the case when $G = \zz_{pq}^2$ with $p,q$ prime, and take $\o=q$. Then 
		\be C = p\zz_{pq}^2 \cong \zz_q^2,\qq \mcg = \zz_{pq}^2 / p\zz_{pq}^2 \cong \zz_p^2,\ee 
		giving the SES 
		\be  \label{zpqses} 0 \ra \zz_q^2 \xra{\times p} \zz_{pq}^2 \xra{\mod p} \zz_p^2 \ra 0,\ee 
		so that $\iota$ multiplies by $p$ and $\pi$ takes its argument mod $p$,  with $\pi\inv$ defined by emedding the elements of $\mcg$ as those elements of $G$ which have both factors less than $p$. Letting $(a+_xb) \equiv (a+b) \mod x$, we may then write any $g \in G$ as 
		\be \label{gdec_exp} g = \ub g +_{pq} p \ut g, \ee 
		where $ \ub g \in \zz_p^2,\,\ut g \in \zz_q^2,$ and the function $\ut \d(\ub g,\ub h)$ is simply 
		\be \ut \d (\ub g, \ub h ) = \frac{(\ub g +_{pq} \ub h) - (\ub g+_p \ub h)}p.\ee 
		
		In this simple setting, it is straightforward to check that taking the dual merely exchanges $p$ and $q$. Thus the map $\pi_*=\times q$ is multiplication by $q$, while $\iota_*$ takes its argument modulo $q$. The inverse $\t=\iota_*\inv$ embedds the elements of $C$ as those elements of $G$ with both factors less than $q$. We may correspondingly also write any $g\in G$ as 
		\be g = q  \bar g +_{pq} \wt g,\ee 
		with 
		\be \bar \d(\wt g, \wt h) = \frac{(\wt g +_{pq} \wt h) - (\wt g +_q \wt h)}q. \ee

		\sss{Form of $\mcr_g$} 
		
		We now generalize the entangled pair representation $\mcr_g$ of the symmetry operators \eqref{mcrdef} to the case of general $\o$. 
		The idea is to split $\mcr_g$ into a projective action controlled by the part $\ub g$ of $G$ belonging to $\mcg$, together with an ordinary linear action controlled by the part $\wt g$ belonging to $C$. 
		To facilitate this decomposition, we will break up the $|G|$-dimensional physical Hilbert space $\mch$ at each site as 
		\be \mch = \mch_l \tp \mch_{flav} \tp \mch_{re},\ee 
		where $\dim \mch_l = \dim \mch_{re} = \sqrt{|\mcg|} = |G'|$ and $\dim\mch_{flav} = |C|$. We will refer to $\mch_l \tp \mch_{re}$ as the `MNC' subspace $\mch_{MNC}$; it is on this subspace that the symmetry fractionalization occurs. We will also use the symbols $\mch_{flav},\mch_{MNC}$ to refer to the tensor products of the single-site flavor / MNC Hilbert spaces across {\it all} sites; the distinction between the two should be clear from context. 
		
		In order to write down $\mcr_g$, which will split as a tensor product according to the above decomposition, we need a further definition: 

		\begin{definition}
			For any $g\in G$, the matrix $\wt R_g \in L(\cc^{|C|})$ is defined as the linear represetation $R_g$ restricted to $\t(C)$ (again with the same fixed section used to define $\t$). Explicitly, for $\{\k{\wt h}\, : \, \wt h \in C\}$ a basis of $\mch_{flav}$, 
			\be \wt R_g \equiv \sum_{\wt h \in C} \c_{\t(\wt h)}(g) \proj{\wt h}. \ee 
			
		\end{definition}
		
		We can now state the main result of this subsection, which shows that $R_g$ can be equivalently written in a form that explicitly separates the symmetry action into a projective action on $\mch_{MNC}$ and a linear action on $\mch_{flav}$: 
		
		\begin{theorem}[entangled pair representation, general case]
			For a factor set $\o$ with projective center $C$, the regular representation $R_g$ is unitarily equivalent to the representation
			\be \label{mcrdef2} \mcr_g  = V^*_{\ub g} \tp \wt R_g \tp V_{\ub g}  .\ee 
		\end{theorem}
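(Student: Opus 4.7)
The plan is a two-step, character-theoretic argument. First I would verify that $\mcr_g$ is actually a unitary linear representation of $G$, and second I would check that its character agrees with that of the regular representation $R_g$; standard representation theory of finite groups then upgrades this to unitary equivalence, since both representations are manifestly unitary and finite-dimensional unitary representations are determined up to unitary equivalence by their characters.

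For the first step, the flavor factor is immediate: $\wt R_g \wt R_h = \wt R_{g+h}$ because $g \mapsto \c_{\t(\wt h)}(g)$ is an ordinary character of $G$. The nontrivial piece is the MNC factor $V^*_{\ub g} V^*_{\ub h} \tp V_{\ub g} V_{\ub h}$. Using the twisted group law \eqref{twistedgroup} we have $\overline{g+h} = \ub g +_\mcg \ub h$, while the canonical lifts satisfy $\pi\inv(\ub g) + \pi\inv(\ub h) = \pi\inv(\overline{g+h}) + \iota(\ut\d(\ub g, \ub h))$. Since $V$ acts by a scalar on the projective center $C$, the factor $V_{\iota(\ut\d)}$ is just a phase, and combined with the cocycle $\o$ this gives $V_{\ub g} V_{\ub h} = \phi(\ub g, \ub h)\, V_{\overline{g+h}}$ for some complex phase $\phi$. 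Crucially this phase cancels against its complex conjugate coming from $V^*_{\ub g} V^*_{\ub h}$, so the MNC factor reduces to $V^*_{\overline{g+h}} \tp V_{\overline{g+h}}$, and combining with the flavor factor yields $\mcr_g \mcr_h = \mcr_{g+h}$.

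For the second step, factorize $\Tr(\mcr_g) = \Tr(V^*_{\ub g}) \Tr(\wt R_g) \Tr(V_{\ub g})$ and split into three cases. If $g = e$ then $\ub g = e$ and $\wt g = e$, and all three factors reduce to the dimensions of their respective spaces, multiplying to $|\mcg|\cdot|C| = |G|$. If $\ub g \neq e$, the corollary to Schur orthogonality for projective irreps in App.~\ref{app:math} (applied to the MNC factor set that $\o$ induces on $\mcg$) gives $\Tr(V_{\ub g}) = 0$, killing the product. In the remaining case $\ub g = e$ but $g \neq e$, so $g \in \iota(C)$ with $\wt g \neq e$; the identity $\c_{\t(\wt h)}(\iota(\wt g)) = \c^C_{\wt h}(\wt g)$, which is just the definition of $\t = \iota_*\inv$, reduces $\Tr(\wt R_g)$ to $\sum_{\wt h \in C} \c^C_{\wt h}(\wt g)$, vanishing by ordinary character orthogonality on $C$. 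Hence $\Tr(\mcr_g) = |G|\,\d_{g,e}$, matching the regular representation.

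The main obstacle is the bookkeeping in step one: the nontrivial extension class that makes $G \to \mcg$ a non-split extension by $C$ is precisely what generates the phase $\phi(\ub g, \ub h)$, and one has to recognize that pairing $V^*$ with $V$ across the two outer tensor factors cancels this phase exactly. Once that is clear the rest is essentially automatic — the character computation in step two requires nothing beyond identifying when $\ub g$ or $\wt g$ is trivial, and the final appeal to character theory is standard.
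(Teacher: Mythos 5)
Your proof is correct and follows essentially the same strategy as the paper's: first verify that $\mcr_g$ is an honest linear representation (with the cocycle phase cancelling between $V^*_{\ub g}$ and $V_{\ub g}$), and then match it spectrally to the regular representation. The only difference is in the second step, where the paper enumerates the eigenvalues of $\mcr_g$ directly via the character decomposition formula \eqref{explicitchar}, exhibiting each character of $G$ exactly once in a simultaneous eigenbasis, whereas you compute the trace and show it equals $|G|\,\d_{g,e}$ using $\Tr[V_{\ub g}]\propto \d_{\ub g,e}$ together with character orthogonality on $C$ --- an equivalent and equally valid route.
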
 
		
		\begin{proof}
			It is easy to check that $\mcr_g$ is linear, viz. that $\mcr_g \mcr_h = \mcr_{g+_Gh}$.		
			Thus as with the case of Prop. \ref{entangledpairrep}, it sufficies simply to show that the eigenvalues of $R_g$ are identical to the eigenvalues of $\mcr_g$. Since the eigenvalues of $V_{\ub g}^* \tp \unit_{|C|} \tp V_{\ub g}$ are given by $\c_{\ub g}^{\mcg}(\mcg)$ by virtue of $\o$ being MNC when restricted to $\mcg$, from the definition of $\wt R_g$ we have 
			\be \label{rgeigs} \mathsf{Eigvals}(\mcr_g) = \{ \c^{\mcg}_{\bar h}(\ub g) \c_{\t(\wt h)}(g) \,\, | \, \, (\bar h, \wt h) \in \mcg \times C\}.\ee  
			From the character decomposition formula \eqref{explicitchar}, this set is simply equivalent to the set $\c_g(G)$, which is indeed equal to the eigenvalues of the regular representation $R_g$. 
		\end{proof}

		\sss{MPS representation of reference states $\k{\Psi_\o}$} 
		
		We are finally in the position to apply the above group theory technology to the problem at hand. Our goal is to construct a wavefunction $\k{\Psi_\o}$ that will serve as our chosen RG fixed point for the phase $\spt_\o$. $\k{\Psi_\o}$ should be symmetric, and should also be stabilized by a collection of string operators.
		Furthermore, $\k{\Psi_\o}$ should reduce to the simple form of the generalized AKLT state \eqref{mnamps} in the MNC case, i.e. in the case when $C$ is trivial. We claim that the following wavefunction does the job: 
		\be\label{can_ref_state_general}\k{\Psi_\o} = \sum_{\bfg \in G^L} \Tr[ A^{g_1}_{\wt g_\bullet} \cdots A^{g_L}_{\wt g_\bullet} ],\ee 
		where the MPS tensors are (writing $g\in G$ as the triple $(\bar g_1,\wt g, \bar g_2) \in G' \times C \times G'$ in accordance with the factorization $\mch_l \tp \mch_{flav} \tp \mch_{re}$)
		\be  \label{nonmnamps}  A^g_{\wt g_\bullet} =  A^{(\bar g_1,\wt g,\bar g_2)}_{\wt g_\bullet} = \kb{\bar g_1}{\bar g_2} \d_{\wt g, \wt g_\bullet} ,\ee 
		where the virtual space has dimension $\dim \mch_{virt} = |G'|$ and where $\wt g_\bullet$ is an arbitrary fixed element of $C$. 
		$A^{(\bar g_1,\wt g,\bar g_2)}_{\wt g_\bullet}$ is simply given by the generalized AKLT state MPS tensor used in the MNC case \eqref{mnamps}, plus a projector onto a trivial product state for the degrees of freedom in $\mch_{flav}$. Graphically, 
		\be A^g_{\wt g_\bullet} = \igptc{figures/nonmna_mps},\ee 
		where the orange leg lives in $\mch_{flav}$ and the circle denotes a projection onto the state $\wt g_\bullet$. 
		
		$\k{\Psi_\o}$ has zero correlation length, is stabilized (modulo $\wt g_\bullet$-dependent phases) by all of the global symmetry operators $\mcr_g^{\tp L}$, and has the correct pattern of symmetry fractionalization; hence it defines a suitable reference state for the phase $\spt_\o$. It is clear from this presentation that the degrees of freedom in $\mch_{flav}$ are not essential to defining the topological properties of the SPT phase, and as such it is convenient to work with a reference state which trivializes them away. Relatedly, we see that on a finite chain, $\k{\Psi_\o}$ has $|G'|$ symmetry-protected modes localized to each edge; this shows that the number of protected edge modes is determined {\it only} by the dimension of the MNC subgroup $\mcg = G' \times G'$ (see also e.g. \cite{pollmann2010entanglement,de2021symmetry}).
		
		Physically, the purpose of string operators is that they allow one to move between distinct degenerate edge states. Since there are only $|G'|$ symmetry-protected modes, the space of operators which acts on the zero mode subspace has dimension $|G'|^2 = |\mcg|$. Thus we will only define $|\mcg|$ (and not $|G|$) different string operators, done in the obvious way by focusing only on the action of the symmetry on the $\mch_{MNC}$ factor of $\mch$: 
		
		\begin{definition}[string operators for general $\o$]
			For $\ub g\in \mcg$, the string operators $\mcs_{\ub g;i\ra j}$ are defined as 
			\be \mcs_{\ub g; i\ra j} \equiv S^R_{\ub g,i} \tp \bot_{l = i+1}^{j-1} \mcr_{\pi\inv(\ub g),l} \tp S^L_{\ub g, j},\ee 
			where the left and right shift operators are defined as 
			\be \label{shiftdef} S^L_{\ub g} \equiv V_{\ub g}^* \tp \unit_{flav} \tp \unit_r,\qq S^R_{\ub g} \equiv \unit_l \tp \unit_{flav} \tp V_{\ub g}.\ee 
		\end{definition}
		Note that we have chosen the linear symmetry operators $\mcr_g$ appearing in the interior of the string operator's support to be defined with a particular section of $G$ picked out by $\pi\inv(\mcg)$ --- when evaluated on states where the $\mch_{flav}$ degrees of freedom are in a product state (such as $|\Psi_\o\ran$), the exact choice of section only modifies the action of $\mcs_{\ub g;i\ra j}$ by an unimportant overall phase factor. It is easy to check that the choice of $\k{\Psi_\o}$ in \eqref{nonmnamps} gives a reference state stabilized (up to a phase) by all $|\mcg|$ string operators:
		\be \mcs_{\ub g; i\ra j} \k{\Psi_\o} = \c_{\wt g_\bullet}(\pi\inv(\ub g))^{|j-i|} \k{\Psi_\o} \, \, \forall \,\, \ub g \in \mcg, \, i<j.\ee 
		
		A parent Hamiltonian for the wavefunction $\k{\Psi_\o}$ (given a fixed choice of $\wt g_\bullet$) is as in \eqref{hepair}, but with an extra term to fix all of the degrees of freedom in $\mch_{flav}$ to equal $\k{\wt g_\bullet}$: 
		\be H = \sum_i H_i, \qq H_i = - \sum_{\ub g \in \mcg} S^R_{\ub g,i} S^L_{\ub g, i+1} - \unit_l \tp |\wt g_\bullet\ran_i \lan \wt g_\bullet|_i \tp \unit_r.\ee 
		
		When restricted to the MNC subspace, $\k{\Psi_\o}$ looks exactly like the references states constructed for MNC $\o$. Using the technology developed previously, we can therefore construct a quantum circuit $\bar \mcq_\o$ which acts only on $\mch_{MNC}$, and which corrects any symmetric errors whose actions are restricted to $\mch_{MNC}$. The circuit $\bar \mcq_\o$ nevertheless does not by itself provide a way to implement phase recognition, since --- as we will see shortly --- more complicated types of errors are possible. 		
		
		\sss{edge mode counting and accidental symmetries}
		
		Since the number of edge modes is determined only by $|\mcg|$, it is natural to ask if in fact a $G$-SPT with MNC subgroup $\mcg$ is always secretly equivalent to a fine-tuned version of a $\mcg$-SPT with MNC factor set and an `accidental' enlarged symmetry group $G$. It turns out that this is sometimes --- but not generically --- the case. 
		
		If a $G$-SPT is to be equivalent to a fine-tuned version of a $\mcg$-SPT, one should be able to break the symmetry down as $G \leadsto \mcg$, without reducing the number of protected edge modes. The only natural embedding of $\mcg$ into $G$ that exists is the one given by the map $\pi_*$; thus this pattern of symmetry breaking is acheived by preserving the subgroup $\pi_*(\mcg) \subset G$, and breaking the rest of $G$. In order to determine the number of edge modes that remain after symmetry breaking, we need simply to compute the projective center of $\pi_*(\mcg)$, defined as 
		\be C^{\pi_*(\mcg)}_\o = \{ \ub g \in \pi_*(\mcg) \, : \, \l_\o(g,h) = 1 \, \, \forall \, \, \ub h \in \pi_*(\mcg) \}.\ee 
		The number of protected edge modes (on a given edge) is then determined by 
		\be {\rm \# \, edge \, \, modes } = \sqrt{\frac{|\mcg|}{| C^{\pi_*(\mcg)}_\o|}}.\ee 
		Depending on the scenario in question, the number of protected edge modes could be either unchanged (in which case $G$ should indeed be viewed as an`accidental' symmetry) or eliminated entirely. We will see how both possibilities can be realized in the following example.  
		
		\sss{Example: $G=\zz_{pq}^2$} \label{sec:znex}
		
		We return to our example of $G=\zz_{pq}^2$, with $p,q$ prime and the non-MNC cohomology class $\o = q$.  As a reminder, in this case we have $C = p\zz_{pq}^2 \cong \zz_q^2,\mcg\cong \zz_p^2$. 
		
		As in our discussion of the MNC case for $G=\zz_N^2$, we will choose the projective representation $V_{\ub g}$ of $\mcg$ as $V_{\ub g} = X^{\ub g_1} Z^{\ub g_2}$, where $X,Z$ are the $\zz_p$ clock and shift matrices. Thus the explicit form of $\mcr_g$ is 
		\be \mcr_g =  V_{\ub g}^*  \tp \wt R_{g} \tp V_{\ub g}=  X^{\ub g_1} Z^{\ub g_2} \tp \(  \sum_{\wt h \in \zz_q^2}Z^{\wt h_1 (\ub g_1 + q \ut g_1) /q} \tp Z^{\wt h_2 (\ub g_2+ q \ut g_2)/q}\) \tp X^{\ub g_1} Z^{-\ub g_2},\ee
		where our choice of lift $\t : C \ra G$ is manifested in the factors of $1/q$ appearing in the exponents of the operators constituting $\wt R_g$.

		On a given edge, our $G$-SPT has only $\sqrt{|\mcg|} = p$ protected zero modes. Is this SPT then secretely a $\zz_p^2$ SPT with an accidentally enhanced symmetry? To answer this question, we consider breaking the symmetry down as $G \leadsto \pi_*(\mcg) = 	q \zz_{pq}^2$ ($i.e.$ the generators of $G$ are broken, but their $q$-fold powers are not). The projective center of $\pi_*(\mcg)$ is 
		\be C_q^{q\zz_{pq}^2} = \frac p{\gcd(p,q)} \zz^2_p = \zz^2_{\gcd(p,q)} =\begin{dcases} \zz_1 & q\neq p \\ \zz_p^2 & q=p \end{dcases} \ee 
		where in the last equality we used that $p,q$ are prime. Thus the number of edge modes is unchanged upon breaking $\zz_{pq}^2 \leadsto \zz_p^2$ if $p\neq q$ (and hence in this case the full $G$ symmetry should be viewed as being accidental), while the edge modes are completely eliminated if $q=p$ (so that in this case $G$ cannot be broken to any subgroup without completely trivializing the theory). The latter fact is especially clear when one realizes that when $q=p$, $\pi_*(\mcg) = p \zz_{p^2}^2 = C$. 

		\ss{Picture of general SPT ground states}\label{ss:generalwfs} 
		
		We now need the counterpart to theorem \ref{thm:simpleU}, which established that for MNC $\o$, any $\k{\psi_\o} \in \spt_\o$ can be written as a collection of linear symmetry actions (`errors') acting on the reference state $\k{\Psi_\o}$. In what follows we will set $\wt g_\bullet = e$ to be the identity for simplicity. 

		We begin by characterizing a basis for operators which is particularly well-suited for the task at hand. Our previous operator decomposition formula can be generalized as follows:
		
		\begin{prop}
			Any single-site operator $\mco$ can be decomposed as 
			\be 
			\mco = \sum_{g,h\in G} C_{g,h}\mcr_g S^R_{\bar h} \mcw_{\wt h} ,\ee 
			where $C_{g,h}$ are complex coefficients, and the operators $\mcw_{\wt h}$ are defined as 
			\be  \mcw_{\wt h} \equiv \sum_{\wt l \in C} \unit_l \tp \kb{\wt l + \wt h}{\wt l} \tp V_{\G\inv_\o(\bar \d(\wt l, \wt h))}.\ee
			
		\end{prop}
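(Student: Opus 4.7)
The plan is as follows. The local Hilbert space decomposes as $\mch=\mch_l\tp\mch_{flav}\tp\mch_{re}$ with $\dim\mch=|G|=|\mcg|\cdot|C|$, so $\dim\End(\mch)=|G|^2$, which exactly matches the number of proposed basis operators indexed by $(g,h)\in G\times G$. It therefore suffices to establish linear independence, which I would do by computing Hilbert--Schmidt inner products and verifying that they form a nonsingular diagonal matrix.

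First I would unpack the action of $\mcr_g S^R_{\bar h}\mcw_{\wt h}$ on a computational basis vector $\k{\bar b_1,\wt b,\bar b_2}$. Applying $\mcw_{\wt h}$ produces $\k{\bar b_1}\tp\k{\wt b+\wt h}\tp V_{\G_\o^{-1}(\bar\d(\wt b,\wt h))}\k{\bar b_2}$; then $S^R_{\bar h}$ multiplies the re-leg by $V_{\bar h}$, and finally $\mcr_g$ acts as $V^*_{\ub g}$, $\c_{\t(\wt b+\wt h)}(g)$, and $V_{\ub g}$ respectively on the three factors. In particular, the matrix element $\bra{\cdot,\wt a,\cdot}\mcr_g S^R_{\bar h}\mcw_{\wt h}\ket{\cdot,\wt b,\cdot}$ vanishes unless $\wt a=\wt b+\wt h$, so the shift structure on $\mch_{flav}$ already exposes $\wt h$ as an invariant of the operator's support.

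Next I would compute $\Tr[(\mcr_g S^R_{\bar h}\mcw_{\wt h})^\da(\mcr_{g'}S^R_{\bar h'}\mcw_{\wt h'})]$ and show that it is a strictly positive multiple of $\d_{g,g'}\d_{h,h'}$. The flav shift pattern forces $\wt h=\wt h'$ at the outset. The trace over $\mch_l$ factors out as $\Tr[(V^*_{\ub g})^\da V^*_{\ub g'}]$, which by Schur orthogonality for projective representations together with the MNC character corollary of App.~\ref{app:math} (applied on $\mcg$) equals $\sqrt{|\mcg|}\,\d_{\ub g,\ub g'}$. The diagonal contribution on $\mch_{flav}$ becomes $\sum_{\wt l\in C}\c^*_{\t(\wt l+\wt h)}(g)\c_{\t(\wt l+\wt h)}(g')=\sum_{\wt l\in C}\c_{\t(\wt l)}(g'-g)$ after shifting $\wt l$, and since $\{\c_{\t(\wt l)}\}_{\wt l\in C}$ is a subgroup of $G^*$ of order $|C|$, this sum equals $|C|$ precisely when $\iota_*(g'-g)=0$, i.e.\ when $\wt g=\wt g'$. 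Finally, with $\ub g=\ub g'$, $\wt g=\wt g'$, and $\wt h=\wt h'$ fixed, the trace on $\mch_{re}$ reduces, for each fixed $\wt l$, to $\Tr[V^\da_{\bar h+\G_\o^{-1}(\bar\d(\wt l,\wt h))}V_{\bar h'+\G_\o^{-1}(\bar\d(\wt l,\wt h))}]$ up to $\wt l$-dependent phases, which by the projective Schur corollary equals $\sqrt{|\mcg|}\,\d_{\bar h,\bar h'}$ uniformly in $\wt l$. Collecting the three orthogonality relations gives the desired diagonal Gram matrix and hence linear independence.

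The main obstacle I anticipate is bookkeeping of phases from the cocycle $\o$ and the coboundary $\bar\d$ appearing inside $\mcw_{\wt h}$: one must verify that the $\wt l$-dependent phases generated by combining $V_{\ub g}V_{\bar h}V_{\G_\o^{-1}(\bar\d(\wt l,\wt h))}$ on the re-leg either factor out of the sum over $\wt l$ or cancel against those generated by $\wt R_g$ on the flav-leg, so that the trace cleanly splits into the three character-orthogonality sums above. The coboundary identity in Eq.~\eqref{ext_der}, together with the bicharacter property of $\l_\o$ and the fact that $\G_\o$ is an isomorphism when $\o|_{\mcg}$ is MNC, should be the right tools to execute this simplification. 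Modulo this phase bookkeeping, the argument is the natural lift of the MNC basis lemma Eq.~\eqref{opdecomp} to the factorized geometry $\mch_l\tp\mch_{flav}\tp\mch_{re}$ dictated by the short exact sequence~\eqref{ogses}.
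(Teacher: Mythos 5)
Your proposal is correct and follows essentially the same route as the paper's proof: dimension counting ($|G|^2$ operators in $\End(\mch)$) plus a Hilbert--Schmidt orthogonality computation that factors through the three tensor legs, using projective Schur orthogonality on $\mch_l,\mch_{re}$ and a character sum over $C^*$ on $\mch_{flav}$. The phase bookkeeping you flag resolves exactly as you anticipate: the Kronecker deltas produced by the $\mch_l$, $\mch_{re}$, and flavor traces are all $\wt l$-independent, so the residual $\wt l$-dependent cocycle phases only survive on the diagonal $h=h'$, where they are identically $1$ (the paper's $\l_{h,h,\wt l}=1$).
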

		
		\begin{proof}
			Consider working in the original basis, in which $R_g$ is diagonal and acts in the regular representation \eqref{regrep}. This representation is naturally associated with the basis $\{ \k g \, : \, g\in G\}$ for the local Hilbert space. Define the shift operators 
			\be X_g = \sum_{h \in G} \kb{h+g}h.\ee 
			It is then straightforward to check that the set $\{X_g R_h\, : \, g,h\in G\}$ generates all on-site operators. Indeed, this set clearly contains $|G|^2 = \dim(\mch)^2$ operators which are all linearly independent, as 
			\be \Tr[ X_g R_h R^\da_{h'} X_{g'}^\da ] = \Tr[X_{g-g'} R_{h-h'}]  = |G|\d_{g,g'} \d_{h,h'}.\ee 
			
			The $X_g$ and $R_h$ satisfy the commutation relation 
			\be R_g X_h = \c_g(h) X_h R_g.\ee 
			To find an operator basis better suited for the entangled pairs representation, one possibility is to simply transform the $X_h$ into the entangled pairs basis. We will do something that is nearly equivalent to this, by looking for operators that pick up the same phase $\c_g(h)$ upon commuting past $\mcr_g$. Define the operator 
			\be \mcx_h \equiv  \sum_{\wt l\in C}\unit_l \tp \kb{\wt l+\wt h}{\wt l} \tp V_{\G_\o\inv(\bar h)} V_{\G\inv_\o(\bar\d(\wt l,\wt h))}.\ee 
			Commuting this past $\mcr_g$, we find 
			\bea \label{xrcomm} \mcr_g \mcx_h & = \sum_{\wt l\in C} V_{\ub g}^* \tp \c_{\t(\wt l + \wt h)}(g) \kb{\wt l + \wt h}{\wt l} \tp V_{\ub g} V_{\G_\o\inv(\bar h)} V_{\G\inv_\o(\bar\d(\wt l,\wt h))} \\ 
			& = \sum_{\wt l\in C} V_{\ub g}^* \tp \c_{\t(\wt l) + \t(\wt h) - \d\t(\wt l,\wt h)}(g) \kb{\wt l + \wt h}{\wt l} \tp V_{\G_\o\inv(\bar h)} V_{\G\inv_\o(\bar\d(\wt l,\wt h))} V_{\ub g} \c^{\mcg}_{\bar g}(\ub h + \bar\d(\wt l,\wt h)) \\ 
			& = \c_{\bar g}^\mcg(\ub h) \c_{\t(\wt g)}(h) \mcx_h \mcr_g \\ 
			& = \c_g(h) \mcx_h \mcr_g.\eea 
			where the last line follows from our character decomposition formula \eqref{explicitchar}, and where in the third equality we used\footnote{For the example of $G = \zz_{pq}^2,\o =q$ used earlier, this string of equalities reads 
				\be \z_{pq}^{\d\t(\wt l,\wt h) \cdot g} = \z_{pq}^{\d\t(\wt l,\wt h)\cdot[g]_p} = \z_p^{(\d \t(\wt l,\wt h)/q) [g]_p},\ee 
				with $\ub g = [g]_p$ and $\d \t(\wt l, \wt h) \in q\zz_{pq}^2$.}
			\bea \c_{\d \t(\wt l,\wt h)}(g) & = \c_{\d \t(\wt l, \wt h)} (\pi\inv \pi (g)) = \c^\mcg_{\pi_*\inv(\d\t(\wt l, \wt h))}(\ub g) = \c^\mcg_{\bar \d(\wt l, \wt h)}(\ub g).\eea
			This suggests that $\{ \mcr_g\mcx_h \, : \, g,h\in G\}$ forms a complete basis of operators. We check this by calculating the trace $\Tr[\mcr_g \mcx_h \mcx_{h'}^\da \mcr_{g'}^\da]$. To do so we first write 
			\bea \mcx_h \mcx_{h'}^\da & = \sum_{\wt l\in C} \unit \tp \kb{\wt l + \wt h}{\wt l + \wt h'} \tp V_{\G_\o\inv(\bar h)} V_{\G\inv_\o(\bar\d(\wt l,\wt h))}V_{\G\inv_\o(\bar\d(\wt l,\wt h'))}^\da V_{\G_\o\inv(\bar h')}^\da \\ 
			& =  \sum_{\wt l \in C} \l_{h,h',\wt l}  \unit \tp \kb{\wt l + \wt h}{\wt l + \wt h'} \tp V_{\G\inv_\o(\bar h - \bar h' + \bar \d(\wt l, \wt h) - \d(\wt l, \wt h'))}, \eea 
			where we have defined the rather ugly-looking function 
			\be \l_{h,h',\wt l} \equiv \vs_{\G\inv_\o(\bar h' + \bar \d(\wt l,\wt h'))} \o(\G\inv_\o(\bar h + \bar \d(\wt l,\wt h)),-\G_\o\inv(\bar h'+\bar \d(\wt l,\wt h'))) \o(\G_\o\inv(\bar h),\G_\o\inv(\bar \d(\wt l, \wt h)) \o^*(\G_\o\inv(\bar h'),\G_\o\inv(\bar \d(\wt l, \wt h'))),\ee
			which satisfies $\l_{h,h,\wt l} = 1$.  
			Thus 
			\bea \Tr[\mcr_g \mcx_h \mcx_{h'}^\da \mcr_{g'}^\da] & =\d_{\wt h,\wt h'} \sum_{\wt l \in C}  \l_{h,h',\wt l} \c_{\t(\wt l)}(g-g')\Tr[V^*_{\ub  g} V^T_{\ub g'} \tp V_{\ub g} V^\da_{\ub g'} V_{\G\inv_\o(\bar h - \bar h')}] \\ 
			& = \d_{\wt h, \wt h'} \d_{\ub g,\ub g'} \sqrt{|\mcg|} \sum_{\wt l \in C}  \l_{h,h',\wt l}  \c_{\t (\wt l)}(\iota(\ut g) - \iota (\ut g')) \Tr[V_{\G\inv_\o(\bar h - \bar h')}]\\ 
			& = \d_{ \wt h, \wt h'} \d_{\bar h, \bar h'}\d_{\ub g, \ub g'} |\mcg| \sum_{\wt l \in C} \c^C_{\wt l}(\ut g - \ut g') \\ 
			& = \d_{h,h'} \d_{g,g'}|G|,  \eea 
			as desired. 
			
			We now simply realize that $\mcx_h$ can be broken up as 
			\be \label{mcx_breakup} \mcx_h = S^R_{\G_\o\inv(\ob h)} \mcw_{\wt h},\ee 
			The algebra $L^2(\mch)$ of onsite linear operators is then generated as 
			\be L^2(\mch) = \lan \mcr_g S^R_{\bar h} \mcw_{\wt h}  \, \, : \, \, g,h \in G \ran, \ee
			as claimed.  
		\end{proof}

		For our analysis of SPT ground states, we further need to know how {\it symmetric} operators can be decomposed. The following proposition follows easily from the commutation relation \eqref{xrcomm} and character orthogonality: 
		
		\begin{prop}
			Any multi-site symmetric operator $\mco_{sym}$ can be decmposed as  
			\be \label{mcosym2} \mco_{sym} = \sum_{\bfg,\bfh \in G^L} C_{\bfg,\bfh}  \d\(\sum_i h_i\) \bot_{i=1}^L  \mcr_{g_i}\mcx_{h_i}, \ee 
			where the $C_{\bfg,\bfh}$ are complex coefficients. 
		\end{prop}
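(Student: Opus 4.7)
The proof plan is to start from the single-site basis decomposition established in the previous proposition and then impose the global symmetry constraint using the key commutation relation $\mcr_g \mcx_h = \c_g(h)\, \mcx_h \mcr_g$ from Eq.~\eqref{xrcomm}. Since $\{\mcr_g \mcx_h : g,h \in G\}$ spans the on-site operator algebra $L^2(\mch)$, its $L$-fold tensor products span the full multi-site operator algebra $L^2(\mch^{\tp L})$, so any $\mco$ (symmetric or not) may be written as $\mco = \sum_{\bfg,\bfh \in G^L} C_{\bfg,\bfh} \bot_i \mcr_{g_i}\mcx_{h_i}$ for some unique coefficients $C_{\bfg,\bfh}$. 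The content of the proposition is therefore the selection rule $\sum_i h_i = e$ imposed by symmetry.

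Next I would impose the condition $\mcr_k^{\tp L} \mco_{sym} = \mco_{sym}\, \mcr_k^{\tp L}$ for every $k\in G$. Commuting $\mcr_k^{\tp L}$ through a basis element using Eq.~\eqref{xrcomm} site by site produces a single overall character factor:
\be
\mcr_k^{\tp L}\bot_i \mcr_{g_i}\mcx_{h_i} = \prod_i \c_k(h_i)\, \bot_i \mcr_{g_i}\mcx_{h_i} \,\mcr_k^{\tp L} = \c_k\!\(\sum_i h_i\)\, \bot_i \mcr_{g_i}\mcx_{h_i}\, \mcr_k^{\tp L},
\ee
where the last equality uses that $\c_k(\cdot)$ is a homomorphism. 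Substituting this into the symmetry condition gives
\be
\sum_{\bfg,\bfh}\(\c_k\!\(\sum_i h_i\) - 1\) C_{\bfg,\bfh}\, \bot_i \mcr_{g_i}\mcx_{h_i} = 0.
\ee

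The final step is to invoke the linear independence of the basis elements $\{\bot_i \mcr_{g_i}\mcx_{h_i}\}$, which was established in the proof of the preceding proposition via the trace-orthogonality computation $\Tr[\mcr_g \mcx_h \mcx^\da_{h'} \mcr^\da_{g'}] = |G|\, \d_{g,g'}\d_{h,h'}$, tensored across sites. Linear independence forces $(\c_k(\sum_i h_i) - 1) C_{\bfg,\bfh} = 0$ for every $k \in G$ and every $(\bfg,\bfh)$. By character orthogonality on the finite Abelian group $G$, the only element annihilated by all characters is the identity, so whenever $\sum_i h_i \neq e$ we must have $C_{\bfg,\bfh} = 0$. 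This is precisely the content of the delta function $\d(\sum_i h_i)$ appearing in the claimed decomposition.

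There is no substantial obstacle: the result is essentially a character-orthogonality calculation given the single-site basis and its commutation relation with $\mcr_k$. The one subtlety worth flagging explicitly in the write-up is that the overall character factor depends only on the total $\sum_i h_i$ and not on the individual $h_i$, which is what allows a clean delta-function constraint rather than a per-site condition; this relies on the fact that all $\mcr_k$'s in $\mcr_k^{\tp L}$ share the same group element $k$.
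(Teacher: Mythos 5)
Your proof is correct and follows essentially the same route as the paper: expand $\mco$ in the tensor-product basis $\{\bot_i \mcr_{g_i}\mcx_{h_i}\}$, commute $\mcr_k^{\tp L}$ through using Eq.~\eqref{xrcomm} to pick up the factor $\c_k(\sum_i h_i)$, and then use linear independence together with the fact that characters separate points on a finite Abelian group to force $\sum_i h_i = e$. The only difference is that you spell out the intermediate steps (uniqueness of coefficients, the per-$(\bfg,\bfh)$ vanishing condition) more explicitly than the paper does.
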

		
		\begin{proof}
			The proof is straightforward, and simply involves making use of the commutation relation \eqref{xrcomm}: for a general operator $\mco$, we have 
			\be \mco \mcr_g^{\tp L} = \mcr_g^{\tp L}  \sum_{\bfg,\bfh \in G^L} \c_g(\sum_i h_i)C_{\bfg,\bfh}  \bot_{i=1}^L  \mcr_{g_i}\mcx_{h_i}.\ee 
			Since all of the terms in this linear combination are linearly independent, character orthgoonality means that $\mco$ is symmetric iff $\sum_i h_i = e$. 
		\end{proof}
		
		We can now state the analogue of theorem \ref{thm:simpleU} for the case of general $\o$: 
		
		\begin{theorem}[relation between SPT states in the same phase, general $\o$] \label{thm:simpleUgeneral} 
			Let $\k{\psi_\o} \in \spt_\o$ be a ground state in the SPT phase identified with the cohomology class $\o$. Then for $\k{\Psi_\o}$ the canonical reference state defined in \eqref{can_ref_state_general}, we may always write
			\be\label{psicircgeneral} \k{\psi_\o} = \sum_{\bfg,\bfh\in G^L} C_{\bfg,\bfh} \mcr_\bfg S^R_{\bar \bfh} \mcw_{\wt \bfh}  \k{\Psi_\o},\ee 
			where the $C_{\bfg,\bfh}$ are complex coefficients satisfying 
			\be \sum_{\bfg,\bfh} |C_{\bfg,\bfh}|^2 = 1,\ee 
			and the $\bfh$ are restricted so that 
			\be \label{hconds} \sum_i \wt h_i = e,\qq \sum_i \bar h_i  = - \sum_i \rem(\wt\bfh)_i , \ee 
			with the first equality holding in $C$ and the second holding in $\mcg$, and where we have defined 
			\be \rem(\wt \bfh)_i \equiv \bar \d\( \sum_{j<i} \wt h_j,\wt h_i\)  \ee 
			to be the remainder in $\mcg$ accumulated when adding the components of $\wt \bfh$ up to the position $i$. 
		\end{theorem}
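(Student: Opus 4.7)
The plan is to follow closely the proof of Theorem~\ref{thm:simpleU}, but employing the broader operator basis \eqref{mcosym2} appropriate to the non-MNC setting. First, by definition of $\spt_\o$ there exists a constant-depth symmetric local unitary $\mcu_\psi$ with $\k{\psi_\o} = \mcu_\psi \k{\Psi_\o}$. Viewing $\mcu_\psi$ itself as a globally symmetric operator on $L$ sites and applying the decomposition \eqref{mcosym2} gives
\[
\mcu_\psi = \sum_{\bfg,\bfh \in G^L} C'_{\bfg,\bfh} \, \d\!\(\sum_i h_i\) \bigotimes_{i=1}^L \mcr_{g_i}\mcx_{h_i}.
\]
Substituting the factorization $\mcx_{h_i} = S^R_{\G_\o\inv(\bar h_i)}\mcw_{\wt h_i}$ from \eqref{mcx_breakup} and relabeling the dummy summation variable $\bar h_i \mt \G_\o(\bar h_i)$ (which is an automorphism of $\mcg$, since $\o$ is MNC when restricted to $\mcg$), one arrives at exactly the form $\sum_{\bfg,\bfh} C_{\bfg,\bfh}\, \mcr_\bfg \, S^R_{\bar\bfh}\, \mcw_{\wt\bfh}\, \k{\Psi_\o}$ claimed in \eqref{psicircgeneral}, with $C_{\bfg,\bfh}$ the relabeled coefficients. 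The normalization $\sum_{\bfg,\bfh}|C_{\bfg,\bfh}|^2 = 1$ then follows from the unitarity of $\mcu_\psi$ and the Hilbert--Schmidt orthonormality of the basis operators $\mcr_g\mcx_h$ established in the proof of \eqref{mcosym2}.

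The main technical task --- and the principal obstacle --- is translating the constraint $\sum_i h_i = e$ enforced by the Kronecker delta into the two conditions \eqref{hconds} on $\bar\bfh$ and $\wt\bfh$. Using the parametrization $h_i = \pi_*(\bar h_i) + \t(\wt h_i)$ and iterating the coboundary identity $\t(a) + \t(b) = \t(a+b) + \d\t(a,b)$, one finds
\[
\sum_i h_i = \pi_*\!\(\sum_i \bar h_i\) + \t\!\(\sum_i \wt h_i\) + \sum_{i\geq 2} \d\t\!\(\sum_{j<i}\wt h_j, \wt h_i\).
\]
Applying $\iota_*$ to both sides annihilates the $\pi_*$-image and every $\d\t$-coboundary (since $\iota_* \circ \pi_* = 0$ and $\iota_*(\d\t) = 0$, as established in App.~\ref{app:non_mnc}), immediately yielding $\sum_i \wt h_i = e$ in $C$ --- the first condition of \eqref{hconds}. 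With this imposed, the $\t(\sum_i \wt h_i)$ term drops out, and the remaining equation $\pi_*(\sum_i \bar h_i) = -\sum_{i\geq 2}\d\t(\sum_{j<i}\wt h_j, \wt h_i)$ has its RHS sitting in $\ker(\iota_*) = {\rm im}(\pi_*)$. Applying $\pi_*\inv$ and using the definitions $\bar\d \equiv \pi_*\inv \circ \d\t$ and $\rem(\wt\bfh)_i \equiv \bar\d(\sum_{j<i}\wt h_j,\wt h_i)$ then yields the second condition $\sum_i \bar h_i = -\sum_i \rem(\wt\bfh)_i$ in $\mcg$. The converse (that the two conditions \eqref{hconds} imply $\sum_i h_i = e$) follows by reading the same computation backwards.

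Unlike in the MNC proof, no elimination of $S^R$ operators against the string-operator stabilizers of $\k{\Psi_\o}$ is needed here, since the statement retains both the $S^R_{\bar\bfh}$ and $\mcw_{\wt\bfh}$ factors explicitly. The content of the theorem is thus essentially the operator-basis decomposition \eqref{mcosym2} combined with careful bookkeeping of how the global $G$-constraint splits across the $(\bar{\phantom{h}},\wt{\phantom{h}})$ factorization; the twisting by $\bar\d$ in \eqref{hconds} is precisely the residue of $\t$ being only a section of $\iota_*$ rather than a homomorphism, and this is where the non-MNC structure genuinely enters.
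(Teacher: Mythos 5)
Your proof is correct and follows essentially the same route as the paper's: apply the symmetric-operator decomposition \eqref{mcosym2} to the circuit $\mcu_\psi$, factor $\mcx_h = S^R_{\G_\o\inv(\bar h)}\mcw_{\wt h}$ via \eqref{mcx_breakup}, and translate the global constraint $\sum_i h_i = e$ into \eqref{hconds} using the twisted addition rule \eqref{dualtwistedgroup}. Your explicit derivation of \eqref{hconds} by applying $\iota_*$ and then $\pi_*\inv$ to the iterated coboundary identity is a more detailed spelling-out of the step the paper simply attributes to the addition rule, but it is the same argument.
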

		
		\begin{proof}
			The proof is a straightforward consequence of the decomposition \eqref{mcosym2}, with the conditions in \eqref{hconds} just coming from the fact that the sum over $\bfh$ in \eqref{mcosym2} is restricted to those vectors such that $\sum_i h_i = e$, together with the addition rule \eqref{dualtwistedgroup}. 
			
		\end{proof}

		Note that because we have chosen $\wt g_\bullet = e$, the $\mcw_{\wt \bfh}$ operators in \eqref{psicircgeneral} only act as shift matrices on $\mch_{flav}$ (that is, they act trivially on $\mch_{MNC}$), since their action on $\mch_{re}$ is controlled by $\bar \d(\wt g_\bullet ,\wt h_i)$, which vanishes for all $\wt h_i$ if $\wt g_\bullet = e$. 
		
		\ms

		As written in \eqref{psicircgeneral}, $\k{\psi_\o}$ is still not in a form that is amenable to be analyzed with the EC machinery developed in previous sections. This is because unlike in the case where $\o$ is MNC, here the errors acting on $\k{\Psi_\o}$ in \eqref{can_ref_state_general} do {\it not} consist solely of onsite symmetry operators. The problem here is ultimately the fault of the $\mcw_{\wt \bfh}$ operators, which entangle the flavor degrees of freedom on $\mch_{flav}$ with those on $\mch_{MNC}$. They arise becuase of the addition rule \eqref{dualtwistedgroup}, which says that shifting the flavor degrees of freedom by amounts $\wt h_i$ with $\sum_i \wt h_i $ trivial in $C$ is actually {\it not} in general a symmetry --- rather, one has a symmetry only if the sum $\sum_i \t(\wt h_i) \in \pi_*(\mcg)$ is trivial in $G$. When $\sum_i \t(\wt h_i) \neq e$, symmetry is restored only when one performs a compensating action on $\mch_{MNC}$, which is reflected in the constraint on $\bar \bfh$ in \eqref{hconds}. 
		
		This means that when we look only at $\mch_{MNC}$, the errors in \eqref{can_ref_state_general} do {\it not} appear to be symmetric, since $\sum \bar h_i$ is {\it not} in general equal to the identity. Thus directly acting on $\mch_{MNC}$ with the RG circuit $\bar \mcq_\o$ --- whose performance is predicated on the assumption that only symmetric errors are present --- will yield garbage. In order to use $\bar \mcq_\o$ to perform phase recognition, we need to convert the errors appearing in \eqref{psicircgeneral} into errors that {\it are} symmetric when restricted to $\mch_{MNC}$. Doing this is the task addressed in the following subsection.

		\ss{Removing the mixed errors}
		
		Our goal in this subsection is to construct a procedure by which the errors in \eqref{psicircgeneral} can be converted into a set of errors which are symmetric when restricted to $\mch_{MNC}$. 

		Suppose we are given the value of $\wt \bfh$ for a particular term in the linear combination of \eqref{psicircgeneral}. We can then proceed as follows. Starting from site $i=1$, we move along the chain and compute a running sum $\sum_{j=1}^i \t(\wt h_j)$. Each time this sum `overflows' in $C$ and generates a remainder of an amount $\rem(\wt \bfh)_i \in \mcg$, one acts on the wavefunction with the operator $S^R_{- \G\inv_\o(\rem(\wt \bfh)_i),i}$. Sweeping across the entire chain in this matter thus ensures that one is left with a state in which $\mch_{MNC}$ is acted on by $S^R_{\bar \bfh'},$ where $\sum_i\bar h_i'=e$. Adding up the values of $\wt h_i$ in a local way ensures that the $S^R_{- \G\inv_\o(\rem(\wt \bfh)_i),i}$ operators are inserted in such a way that the combined product of shift operators acting on $\k{\Psi_\o}$ always combine into charge-neutral blocks over distances of a correlation length $\xi$ (with $\xi$ set by the coefficients $C_{\bfg,\bfh}$ as usual). 

		The process just described is results in the following protocol to eliminate the mixed errors and perform phase recognition:	
		\begin{enumerate}
			\item Measure the flavor degrees of freedom in the computational basis of $\mch_{flav}$. Let the outcome of this measurement be $\wt \bfh$ (after measuring, the subsystem $\mch_{flav}$ can be discarded, as it will be decoupled from $\mch_{MNC}$ for the remainder of the protocol).
			\item Use an $O(L)$ amount of classical data processing to compute $\rem(\wt\bfh)$. 
			\item Apply $\mcu_{\rem}(\wt \bfh)$ to the measured state, where
			\be \mcu_{\rem}(\wt\bfh) \equiv \bot_{i=1}^L S^R_{-\G\inv_\o(\rem(\wt\bfh)_i),i}\ee
			is a product of single-site unitaries that converts the mixed errors into errors acting only on $\mch_{MNC}$. 
			\item Proceed by applying the error correction circuit $\bar\mcq_\o$ to $\mch_{MNC}$.
		\end{enumerate}
		
		While $\mcu_{\rem}(\wt\bfh)$ is simply a tensor product of single-qudit rotations, the data processing involved in computing $\rem(\wt\bfh)_i$ is non-local, since $\rem(\wt\bfh)_i$ depends on $\wt \bfh_j$ for all $j\leq i$. Also note that we do not need to post-select on the measurement outcomes $\wt \bfh$: any outcome is as good as any other for the purposes of performing phase recognition, as each set of terms with a fixed $\wt \bfh$ in the linear combination \eqref{psicircgeneral} contain all of the information about the universal aspects of the SPT phase. 
		
		That step 4 of the above protocol sufficies to perform phase recognition is established by the following proposition: 
		
		\begin{prop}
			For any $\k{\psi_\o} \in \spt_\o$, the state $\k{\psi_\o^{\wt \bfh}}$ obtained after measuring the flavor degrees of freedom to obtain the outcome $\wt\bfh$ and applying the corrective circuit $\mcu_{\rem}(\wt \bfh)$ can be written as 
			\be \label{finalpsi} \k{\psi_\o^{\wt \bfh}} = \sum_{\ub \bfg \in G^L} C_{\ub \bfg}(\wt \bfh) \mcr^\mcg_{\ub \bfg} \k{\Psi_\o},\ee
			where $\mcr_{\ub g}^\mcg$ is the action of the symmetry restricted to $\mch_{MNC}$: 
			\be \mcr^\mcg_{\ub g} = V_{\ub g}^* \tp \unit_{flav} \tp V_{\ub g}.\ee 
		\end{prop}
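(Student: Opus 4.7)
The plan is to substitute the decomposition of Theorem~\ref{thm:simpleUgeneral} into the definition of $\ket{\psi_\omega^{\wt\bfh}}$ and track how each of the three ingredients $\mcr_\bfg$, $S^R_{\bar\bfh}$, and $\mcw_{\wt\bfh}$ transforms under (i) flavor measurement and (ii) application of $\mcu_{\rem}(\wt\bfh)$. The organizing observation is that, with our choice $\wt g_\bullet = e$, the flavor factor of $\ket{\Psi_\omega}$ is the product state $\ket{e}^{\otimes L}$, so $\mcw_{\wt\bfh}$ collapses to the $\wt l = e$ term: it simply shifts the flavor state to $\ket{\wt\bfh}$ and acts as $V_{\Gamma_\omega^{-1}(\bar\delta(e,\wt h_i))} = V_e = \unit$ on $\mch_{re}$ at each site. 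Consequently, after acting with $\mcw_{\wt\bfh}$, the $\mch_{MNC}$ factor of each term in the superposition is untouched, while the flavor factor becomes $\ket{\wt\bfh}$.

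Next, I would compute the action of the remaining operators on this intermediate state. $S^R_{\bar\bfh}$ acts only on $\mch_{re}$, and $\mcr_\bfg = V^*_{\ub\bfg}\otimes \wt R_\bfg \otimes V_{\ub\bfg}$ splits into $\mcr^\mcg_{\ub\bfg}$ on $\mch_{MNC}$ times a diagonal phase $\prod_i \chi_{\tau(\wt h_i)}(g_i)$ applied to the flavor product state $\ket{\wt\bfh}$. Measuring the flavor subsystem in the computational basis with outcome $\wt\bfh$ then selects only those terms in \eqref{psicircgeneral} whose $\wt\bfh$-component agrees with the outcome; absorbing the $\wt R_\bfg$ phases into redefined coefficients $C'_{\ub\bfg,\ut\bfg,\bar\bfh}(\wt\bfh)$ yields the unnormalized state
\be
\sum_{\ub\bfg,\ut\bfg,\bar\bfh} C'_{\ub\bfg,\ut\bfg,\bar\bfh}(\wt\bfh)\,\mcr^\mcg_{\ub\bfg}\, S^R_{\bar\bfh}\,\ket{\Psi_\omega}_{MNC},
\ee
with $\bar\bfh$ still subject to $\sum_i \bar h_i = -\sum_i \rem(\wt\bfh)_i$ in $\mcg$.

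The role of $\mcu_{\rem}(\wt\bfh)$ is to repair precisely this residual constraint. Using the on-site fusion rule $S^R_{\ub a} S^R_{\ub b} = \omega_{\ub a,\ub b} S^R_{\ub a +\ub b}$ and pushing $\mcu_{\rem}$ past $\mcr^\mcg_{\ub\bfg}$ via $S^R_{\ub a}\mcr^\mcg_{\ub b} = \lambda_\omega(\ub a,\ub b)\,\mcr^\mcg_{\ub b} S^R_{\ub a}$, one finds that each $S^R_{\bar h_i,i}$ combines with $S^R_{-\Gamma_\omega^{-1}(\rem(\wt\bfh)_i),i}$ into a single $S^R_{\bar h''_i,i}$ (times a $\wt\bfh$-dependent phase), where the new indices now satisfy $\sum_i \bar h''_i = 0$ in $\mcg$ by construction of $\mcu_{\rem}$.

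The main remaining step, and the principal obstacle, is to show that a product $\prod_i S^R_{\bar h''_i,i}$ with total charge zero in $\mcg$ can be rewritten, when acting on $\ket{\Psi_\omega}_{MNC}$, as a product $\prod_i \mcr^\mcg_{\ub k_i,i}$ up to an overall phase. The mechanism is that the string operators $\mcs_{\ub g;i\to j}$ stabilize $\ket{\Psi_\omega}$ (the stabilization phase being trivial since $\wt g_\bullet = e$); combined with the factorization $\mcr^\mcg_{\ub g} = S^L_{\ub g} S^R_{\ub g}$, this lets one iteratively trade each $S^R_{\bar h''_i,i}$ for an $\mcr^\mcg_{\ub h''_i,i}$ at the cost of an $(S^L_{\bar h''_i,i})^{-1}$, which is then absorbed into a string-operator insertion that closes up thanks to $\sum_i \bar h''_i = 0$. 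Once this algebraic step is completed, the intermediate $S^L$ factors cancel entirely, the surviving bulk $\mcr_{\pi^{-1}(\cdot)}$ pieces of the string operators recombine with the existing $\mcr^\mcg_{\ub\bfg}$, and relabeling the summation variable gives $\ket{\psi_\omega^{\wt\bfh}} = \sum_{\ub\bfg} C_{\ub\bfg}(\wt\bfh)\,\mcr^\mcg_{\ub\bfg}\,\ket{\Psi_\omega}$. The hardest part is the careful bookkeeping of the $\omega$- and $\lambda_\omega$-phases generated in this cancellation; once these are tracked, all dependence on $\ut\bfg$ and $\bar\bfh$ is absorbed into $C_{\ub\bfg}(\wt\bfh)$ and the claim follows.
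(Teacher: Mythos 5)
Your proposal is correct and takes essentially the same route as the paper's proof: you note that $\mcw_{\wt\bfh}$ acts trivially on $\mch_{MNC}$ because $\wt g_\bullet=e$, that measurement collapses the $\wt\bfh$ sum, that $\mcu_{\rem}(\wt\bfh)$ restores $\mcg$-neutrality of the residual $S^R$ product, and that a neutral product of $S^R$ operators acting on $\k{\Psi_\o}$ can be traded for $\mcr^\mcg$ operators by absorbing the pieces into string operators that stabilize the reference state, exactly as in the proof of Theorem~\ref{thm:simpleU}. The paper's own write-up is considerably terser (it asserts the intermediate form ``by construction'' and defers the conversion step to that earlier proof), so your version simply makes explicit the phase bookkeeping the paper leaves implicit.
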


		\begin{proof} 

			By construction, we have 
			\be \label{newpsi} \k{\psi^{\wt \bfh}_\o} = \sum_{\bfg \in G^L, \bar\bfh\in \mcg^L} C_{\bfg,\bar\bfh}(\wt \bfh)\mcr_{\bfg} S^R_{\bar \bfh} \k{\Psi_\o^{\wt \bfh}},\ee 
			where the $ C_{\bfg,\bar\bfh}(\wt \bfh)$ are again some set of complex coefficients, and 
			\be \k{\Psi_\o^{\wt \bfh}} = \sum_{\bfg\in G^L} \Tr[A^{g_1}_{\wt h_1} \cdots A^{g_L}_{\wt h_L}] \ee 
			is the canonical reference state with the flavor degrees of freedom fixed to be $\wt \bfh$. Since the action of the corrective circuit $\mcu_{\rem}(\wt\bfh)$ yields a state in which the errors are symmetric when restricted to $\mch_{MNC}$, the coefficients $C_{\bfg,\bar\bfh}(\wt \bfh)$ appearing in \eqref{newpsi} are now constrained to satisfy $\sum_i \bar h_i = e$ (c.f. \eqref{hconds}). Thus when acting on $\k{\Psi^{\wt\bfh}_\o}$, $S^R_{\bar \bfh}$ can be replaced by a product of $\mcr^\mcg_{\bar \bfh}$ operators for the same reasons as discussed in the proof of theorem \ref{thm:simpleU}. When acting on $\k{\Psi^{\wt \bfh}_\o}$ the distinction between between $\mcr_{\bfg}$ and $\mcr^\mcg_{\ub \bfg}$ is not important, since the action of the two differs only by an unimportant overall phase. Thus there is no important dependence on $\wt\bfg$ in the sum, and so after changing notation for the complex coefficients appearing in the linear combination, we indeed arrive at \eqref{finalpsi}.

		\end{proof}
		
		That this allows us to perform phase recognition then follows from the fact that $\k{\psi_\o^{\wt \bfh}}$ has the same structure as the groud state wavefunction of a MNC SPT phase. This means that our previously developed RG circuit $\bar \mcq_\o$ can simply be applied directly to $\k{\psi_\o^{\wt \bfh}}$, with the remainder of the phase recognition procedure proceeding as in the MNC case.

		\bs 
		
	\end{widetext}

\end{document}